\newtheorem{maintheorem}{Theorem}
\newcommand{\accdiaplus}{%
\ensuremath{\diamondplus}}
\newcommand{\rawdiaplus}{%
  \begin{tikzpicture}
    \useasboundingbox (-0.7ex, -0.9ex) rectangle (0.7ex, 0.9ex);
    \node (w) at (-0.7ex,0) {};
    \node (e) at (+0.7ex,0) {};
    \node (s) at (0,-0.9ex) {};
    \node (n) at (0,+0.9ex) {};
    \draw (n.center) -- (e.center) -- (s.center) -- (w.center) -- (n.center);
    \draw (n.center) -- (s.center);
    \draw (e.center) -- (w.center);
  \end{tikzpicture}}
\newsavebox{\diamondplusbox}
\savebox{\diamondplusbox}{\rawdiaplus}
\DeclareRobustCommand{\diamondplus}{\texorpdfstring{\mathop{\raisebox{-0.25ex}{\resizebox{1.45ex}{!}{\usebox{\diamondplusbox}}}}\nolimits}{diamondplus}}
\newcommand{\rawdiaminus}{%
  \begin{tikzpicture}
    \useasboundingbox (-0.7ex, -0.9ex) rectangle (0.7ex, 0.9ex);
    \node (w) at (-0.7ex,0) {};
    \node (e) at (+0.7ex,0) {};
    \node (s) at (0,-0.9ex) {};
    \node (n) at (0,+0.9ex) {};
    \draw (n.center) -- (e.center) -- (s.center) -- (w.center) -- (n.center);
    \draw (e.center) -- (w.center);
  \end{tikzpicture}}
\newsavebox{\diamondminusbox}
\savebox{\diamondminusbox}{\rawdiaminus}
\newcommand{\tudparagraph}[2]{%
\vspace*{#1}

\noindent
{\bf #2}
}
\newcommand{\cE}{\mathcal{E}}
\newcommand{\cG}{\mathcal{G}}
\newcommand{\cI}{\mathcal{I}}
\newcommand{\cK}{\mathcal{K}}
\newcommand{\cM}{\mathcal{M}}
\newcommand{\cN}{\mathcal{N}}
\newcommand{\cO}{\mathcal{O}}
\newcommand{\cR}{\mathcal{R}}
\newcommand{\<}{\langle}
\renewcommand{\>}{\rangle}
\newcommand{\eqdef}{\ensuremath{\stackrel{\text{\tiny def}}{=}}}
\newcommand{\Ende}{\hfill ${\scriptscriptstyle \blacksquare}$}
\renewcommand{\Pr}{\mathrm{Pr}}
\newcommand{\threshold}{\vartheta}
\newcommand{\SchedThreshold}[1]{\mathit{ThresAlgo}[#1]}
\newcommand{\Exp}[2]{\mathrm{E}^{#1}_{#2}}
\newcommand{\CExp}[1]{\mathbb{CE}^{#1}}
\newcommand{\CExpState}[2]{\mathbb{CE}^{#1}_{#2}}
\newcommand{\ExpRew}[2]{\mathbb{E}^{#1}_{#2}}
\newcommand{\probability}{\mathrm{prob}}
\newcommand{\sinit}{s_{\mathit{\scriptscriptstyle init}}}
\newcommand{\Act}{\mathit{Act}}
\newcommand{\act}{\alpha}
\newcommand{\action}{\mathit{action}}
\newcommand{\Size}{\mathit{size}}
\newcommand{\fpath}{\pi}
\newcommand{\finpath}{\varrho}
\newcommand{\cycle}{\xi}
\newcommand{\infpath}{\varsigma}
\newcommand{\first}{\mathit{first}}
\newcommand{\last}{\mathit{last}}
\newcommand{\prefix}[2]{\mathit{pref}(#1,#2)}
\newcommand{\extstate}[4]{t_{#1,#4}}
\newcommand{\Cyl}{\mathit{Cyl}}
\newcommand{\turning}{\Re}
\newcommand{\saturation}{\wp}
\newcommand{\saturationNaive}{\saturation_0}
\newcommand{\md}{\mathit{md}}
\newcommand{\Sched}{\mathit{Sched}}
\newcommand{\SchedOpt}{\mathit{SchedOpt}}
\newcommand{\sched}{\mathfrak{S}}
\newcommand{\tsched}{\mathfrak{T}}
\newcommand{\usched}{\mathfrak{U}}
\newcommand{\vsched}{\mathfrak{V}}
\newcommand{\psched}{\mathfrak{P}}
\newcommand{\maxsched}{\mathfrak{M}}
\newcommand{\ub}{\mathrm{ub}}
\newcommand{\residual}[2]{#1 {\uparrow} {#2}}
\newcommand{\redefresidual}[3]{#1 \lhd_{#2} #3}
\newcommand{\REC}{\mathit{REC}}
\newcommand{\MEC}{\mathit{MEC}}
\newcommand{\simMEC}{\sim_{\text{MEC}}}
\newcommand{\rew}{\mathit{rew}}
\newcommand{\rewparam}{\mathfrak{r}}
\newcommand{\ActEC}{\mathfrak{A}}
\newcommand{\lift}[1]{\mathit{lift}(#1)}
\newcommand{\acc}[1]{#1_{\mathit{acc}}}
\newcommand{\Hut}[1]{\hat{#1}}
\newcommand{\sHutinit}{\Hut{s}_{\mathit{\scriptscriptstyle init}}}
\newcommand{\afterF}[1]{#1^{\mathit{F}}}
\newcommand{\afterG}[1]{#1^{\mathit{G}}}
\newcommand{\goal}{\mathit{goal}}
\newcommand{\fail}{\mathit{fail}}
\newcommand{\final}{\mathit{final}}
\newcommand{\Fail}{\mathit{Fail}}
\newcommand{\accept}{\mathit{accept}}
\newcommand{\reject}{\mathit{reject}}
\newcommand{\neXt}{\bigcirc}
\DeclareMathOperator{\Until}{\ensuremath{\mathrm{U}}}
\newcommand{\Nat}{\mathbb{N}}
\newcommand{\Rational}{\mathbb{Q}}
\newcommand{\Real}{\mathbb{R}}
\newcommand{\Integer}{\mathbb{Z}}
\newcommand{\overto}[1]{\stackrel{#1}{\longrightarrow}}
\newcommand{\CiteAppendix}[1]{}
\newcommand{\includeGastex}[1]{%
\includegraphics{Gastex/#1.pdf}%
}
\newcommand{\numStates}[1]{\npfourdigitnosep\numprint{#1}}
\newcommand{\numSec}[1]{\npfourdigitnosep\numprint{#1}}
\newcommand{\numValue}[1]{\npfourdigitnosep\numprint{#1}}
\newcommand{\tableC}[1]{\multicolumn{1}{c}{#1}}
\newcommand{\tableCL}[1]{\multicolumn{1}{c|}{#1}}
\newcommand{\consensus}{\textsc{Consensus}}
\newcommand{\wlan}{\textsc{Wlan}}
\renewcommand{\CiteAppendix}[1]{#1}
\title{Maximizing the Conditional Expected Reward for Reaching the Goal\\
{\normalsize (extended version)}%
\thanks{%
The authors are supported by the DFG through
        the collaborative research centre HAEC (SFB 912),
        the Excellence Initiative by the German Federal and State Governments (cluster of excellence cfAED),
        the Research Training Group QuantLA (GRK 1763), and
        the DFG-project BA-1679/11-1.}}
\author{Christel Baier,
  Joachim Klein,
  Sascha Kl\"uppelholz,
  Sascha Wunderlich}
\institute{%
  Institute for Theoretical Computer Science\\
  Technische Universit\"at Dresden, Germany}
\titlerunning{Maximizing the Conditional Expected Reward for Reaching the Goal}
\begin{document}

\maketitle
\vspace{-.25cm}

\begin{abstract}
The paper addresses the problem of computing maximal conditional
expected accumulated rewards until reaching a target state
(briefly called \emph{maximal conditional expectations})
in finite-state Markov decision processes where the
condition is given as a reachability constraint.
Conditional expectations of this type
can, e.g., stand for the maximal expected termination time of
probabilistic programs with non-determinism,
under the condition that the program eventually terminates,
or for the worst-case expected penalty to be paid, assuming that
at least three deadlines are missed.
The main results of the paper are (i) a polynomial-time algorithm to
check the finiteness of maximal conditional expectations,
(ii) PSPACE-completeness  for the threshold problem
in acyclic Markov decision processes
where the task is to check whether the maximal conditional expectation
exceeds a given threshold,
(iii) an exponential-time algorithm for the threshold problem
in the general (cyclic) case,
and (iv) an exponential-time algorithm for computing the maximal
conditional expectation and an optimal scheduler.
\end{abstract}

\section{Introduction}
\label{sec:introduction}

Stochastic shortest (or longest) path problems
are a prominent class of optimization problems
where the task is to find a policy for traversing a probabilistic
graph structure such that the expected value of the generated
paths satisfying a certain objective is minimal (or maximal).
In the classical setting
(see e.g.~\cite{BerTsi91,Puterman,deAlf99,Kallenberg}),
the underlying graph structure
is given by a finite-state Markov decision process (MDP),
i.e., a state-transition graph with nondeterministic choices
between several actions for each of its non-terminal states,
probability distributions specifying the probabilities
for the successor states for each state-action pair
and a reward function that assigns rational values to the
state-action pairs.
The stochastic shortest (longest) path problem  asks
to find a scheduler, i.e., a function that resolves
the nondeterministic choices, possibly in a history-dependent way,
which minimizes (maximizes) the expected accumulated reward until
reaching a goal state.
To ensure the existence of the expectation for given schedulers,
one often assumes that the given MDP is contracting,
i.e., the goal is reached almost surely under all schedulers, in which
case the optimal expected accumulated reward is achieved
by a memoryless deterministic scheduler that optimizes
the expectation from each state and is computable using a linear program
with one variable per state (see e.g.~\cite{Kallenberg}).
The contraction assumption can be relaxed by requiring
the existence of at least one
scheduler that reaches the goal almost surely and taking the extremum
over all those schedulers \cite{BerTsi91,deAlf99,BerYu16}.
These algorithms and corresponding value or policy iteration approaches
have been implemented in various tools and used in many application areas.

The restriction to schedulers that reach the goal almost surely, however,
limits the applicability and significance of the results.
First, the known algorithms for computing extremal expected accumulated
rewards are not applicable for models where
the probability for never visiting a goal state is positive
under each scheduler.
Second, statements about the expected rewards for schedulers that reach
the goal with probability 1 are not sufficient to draw any conclusion
for the best- or worst-case behavior, if there exist schedulers that
miss the goal with positive probability.
This motivates the consideration of \emph{conditional stochastic path problems}
where the task is to compute the optimal expected accumulated
reward until reaching a goal state, under the condition that a goal state
will indeed be reached and where the extrema are taken over all schedulers
that reach the goal with positive probability.
More precisely, we address here a slightly more general
problem where we are given two sets
$F$ and $G$ of states in an MDP $\cM$ with non-negative integer rewards
and ask for the maximal expected accumulated
reward until reaching $F$, under the condition that $G$ will be
visited (denoted $\ExpRew{\max}{\cM,\sinit}(\accdiaplus F|\Diamond G)$
where $\sinit$ is the initial state of $\cM$).
Computation schemes for conditional expectations of this type can, e.g.,
be used to answer the following questions (assuming the underlying model
is a finite-state MDP):
\begin{enumerate}
\item [(Q1)]
   What is the maximal termination time of a probabilistic
   and nondeterministic program,
   under the condition that the program indeed terminates?
\item [(Q2)]
   What are the maximal expected costs of the repair mechanisms
   that are triggered
   in cases where a specific failure scenario occurs,
   under the condition that the failure scenario indeed occurs?
\item [(Q3)]
   What is the maximal energy consumption, under the condition that
   all jobs of a given list will be successfully executed within one hour?
\end{enumerate}
The relevance of question (Q1) and related problems
becomes clear from the work
  \cite{GretzKatMcIver14,JKKGMcI15,KatoenGJKO15,BartheEFH16,ChatterjeeFG16}
on the semantics of probabilistic programs
where no guarantees for almost-sure termination can be given.
Question (Q2) is natural for a worst-case analysis of resilient systems
or other types of systems
where conditional probabilities serve to provide performance guarantees on the
protocols triggered in exceptional cases that appear with positive, but
low probability.
Question (Q3) is typical when the task is to study the trade-off
between cost and utility functions (see e.g.~\cite{BDKKW14-Festschrift}).
Given the work on anonymity and related notions
for information leakage using conditional probabilities
in MDP-like models
\cite{AndPalRosSok11,ChaPalBraun16} or the formalization of
posterior vulnerability as an expectation \cite{AlvChaMcIMorPalSm16},
the concept of conditional accumulated
excepted rewards might also be useful to specify the degree of
protection of secret data or to study the trade-off
between privacy and utility, e.g., using gain functions %
\cite{AlChaPalSmith12,AlAndChaDegPal15}.
Other areas where conditional expectations play
a crucial role are risk management where the conditional value-at-risk
is used to formalize the
expected loss under the assumption that very large losses occur
\cite{Uryasev00,AcerbiTasche02}
or regression analysis where conditional expectations serve to
predict the relation between random variables~\cite{SeberLee}.

\begin{figure}[t]
\vspace*{-2.5ex}
   \includeGastex{running-example}
\vspace*{-5ex}
\caption{MDP $\cM[\rewparam]$ for Example \ref{example:running-example}}
\label{fig:running-example}
\vspace*{-2.5ex}
\end{figure}

\begin{example}
\label{example:running-example}
To illustrate the challenges for designing algorithms to compute
maximal conditional expectations we regard the
MDP $\cM[\rewparam]$ shown in Figure \ref{fig:running-example}.
The reward of the state-action
pair $(s_1,\gamma)$ is given by a reward parameter $\rewparam \in \Nat$.
Let $\sinit=s_0$ be the initial state and $F=G=\{\goal\}$.
The only nondeterministic choice is in state $s_2$, while
states $s_0$ and $s_1$ behave purely probabilistic and $\goal$ and $\fail$
are trap states.
Given a scheduler $\sched$, we write
$\CExp{\sched}$ for the conditional expectation
$\ExpRew{\sched}{\cM[\rewparam],s_0}(\accdiaplus \goal |\Diamond \goal)$.
  (See also Section \ref{sec:preliminaries} for our notations.)
For the two memoryless schedulers
that choose $\alpha$~resp.~$\beta$ in
state $s_2$ we have:
$$
  \CExp{\alpha} \ \ = \ \
  \frac{\frac{1}{2} \cdot \rewparam 
  \ + \ \frac{1}{2}\cdot 0}{\frac{1}{2}+\frac{1}{2}}
  \ \ = \ \ \frac{\rewparam}{2}
  \ \ \ \ \ \text{and} \ \ \ \ \
  \CExp{\beta} \ \ = \ \
  \frac{\frac{1}{2} \cdot \rewparam \ + \ 0}{\frac{1}{2}+0}
  \ \ = \ \ \rewparam
$$
We now regard the schedulers $\sched_n$ for $n =1,2,\ldots$
that choose $\beta$ for the first
$n$ visits of $s_2$ and action $\alpha$ for the $(n{+}1)$-st visit of $s_2$.
Then:
$$
  \CExp{\sched_n} \ \ = \ \
  \frac{\frac{1}{2} \cdot \rewparam 
  \ + \ \frac{1}{2}\cdot \frac{1}{2^n} \cdot n}
       {\frac{1}{2} \ + \ \frac{1}{2}\cdot \frac{1}{2^n}}
  \ \ = \ \
  \rewparam \ + \ \frac{n-\rewparam}{2^n{+}1}
$$
Thus, $\CExp{\sched_n} > \CExp{\beta}$ iff $n > \rewparam$, and the
maximum is achieved for $n=\rewparam{+}2$.

This example illustrates three phenomena
that distinguish conditional and unconditional expected accumulated
rewards and make reasoning about maximal conditional expectations
harder than about unconditional ones.
First, optimal schedulers for $\cM[\rewparam]$ need a counter for the number
of visits in state $s_2$.
Hence, memoryless schedulers are not powerful enough to maximize
the conditional expectation.
Second, while the maximal conditional expectation for $\cM[\rewparam]$ 
with initial state $\sinit = s_0$ is finite, 
the maximal conditional expectation
for $\cM[\rewparam]$ with starting state $s_2$ is infinite as:
\begin{center}
  $\sup\limits_{n\in \Nat} \
  \ExpRew{\sched_n}{\cM[\rewparam],s_2}(\accdiaplus \goal | \Diamond \goal)
  \ \ = \ \
  \sup\limits_{n\in \Nat}
   \frac{\ \frac{n}{2^{n}} \ }{\frac{1}{2^n}} \ \ = \ \ \infty$
\end{center}
Third, as $\sched_2$ maximizes the conditional expected accumulated
reward for $\rewparam=0$, while $\sched_3$ is optimal
for $\rewparam=1$, optimal decisions for paths ending in state $s_2$
depend on the reward value $r$ of the $\gamma$-transition from state $s_1$,
although state $s_1$ is not reachable from $s_2$.
Thus,
optimal decisions for a path $\fpath$
do not only depend on the past (given by $\fpath$)
and possible future
(given by the sub-MDP that is reachable from
$\fpath$'s last state),
but require global reasoning.
\Ende
\end{example}

\noindent
The main results of this paper are the following theorems.
We write $\CExp{\max}$ for the maximal conditional expectation,
i.e., the supremum of the conditional expectations
$\ExpRew{\sched}{\cM,\sinit}(\accdiaplus F|\Diamond G)$,
when ranging over all schedulers $\sched$ where
$\Pr^{\sched}_{\cM,\sinit}(\Diamond G)$ is positive
and $\Pr^{\sched}_{\cM,\sinit}(\Diamond F |\Diamond G)=1$.
  (See also Section \ref{sec:preliminaries} for our notations.)
\begin{maintheorem}[Checking finiteness and upper bound]
 \label{thm:finiteness}
   There is a polynomial-time algorithm that checks if
   $\CExp{\max}$ is finite. If so, an upper bound
   $\CExp{\ub}$ for $\CExp{\max}$ is computable in pseudo-polynomial time
   for the general case and in polynomial time if
   $F=G$ and
   $\Pr^{\min}_{\cM,s}(\Diamond G) >0$ for all states
   $s$ with $s \models \exists \Diamond G$.
\end{maintheorem}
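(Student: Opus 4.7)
The plan is to characterize finiteness of $\CExp{\max}$ by a structural criterion on $\cM$, show the criterion is polynomial-time checkable, and then derive explicit upper bounds in the finite case. First I would perform standard polynomial-time MDP preprocessing: compute the reachability sets $S_{>0} = \{s : \exists \sched.\,\Pr^{\sched}_{s}(\Diamond G) > 0\}$ and $S_{\min>0} = \{s : \Pr^{\min}_{s}(\Diamond G) > 0\}$, both via standard graph-theoretic MDP algorithms. States outside $S_{>0}$ contribute only to $\neg \Diamond G$ and can be trapped without altering the conditional value.

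The central claim is: $\CExp{\max}(\sinit) = \infty$ iff there is a positive-reward cycle $\cycle$ in the state-action graph of $\cM$ lying in $S_{>0} \setminus S_{\min>0}$ together with a scheduler $\sched_0$ from $\sinit$ under which every $G$-reaching path visits $\cycle$. The routing condition is essential, as the running example shows: from $s_0$ the critical cycle at $s_2$ is reachable, but the probabilistic branch $s_0 \to s_1$ forces a fixed contribution $\tfrac12$ to $\Pr(\Diamond G)$ via $s_1 \in S_{\min>0}$, diluting the divergent $s_2$ contribution and bounding $\CExp{\max}(s_0)$. I would establish $(\Leftarrow)$ by constructing scheduler families $\sched_n$ that follow $\sched_0$ to the cycle while avoiding $S_{\min>0}$, iterate $\cycle$ for $n$ returns, then switch to a $G$-reaching action: $\Pr^{\sched_n}(\Diamond G)$ scales like $q^n p^{\ast}$ while the accumulated reward on $G$-reaching paths scales like $n r$, yielding $\CExp{\sched_n} = \Theta(n)$. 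For the harder $(\Rightarrow)$ direction, absent such a structure every scheduler either has $\Pr(\Diamond G)$ bounded below (via forced routing through $S_{\min>0}$) or has bounded accumulated reward along $G$-reaching paths. Each ingredient is polynomial-time checkable: $S_{\min>0}$ by MDP reachability, positive-reward cycles in the restricted state-action sub-graph via SCC decomposition plus Bellman--Ford-style negative-cycle detection (on negated rewards), and the routing condition by graph reachability in $\cM$.

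When finiteness holds the upper bound derives from the quantitative $(\Rightarrow)$ analysis. In the general case one obtains $\CExp{\ub} = r_{\max} \cdot \poly(|S|, 1/p_{\min})$ where $p_{\min}$ is the smallest positive transition probability in $\cM$; this is pseudo-polynomial because $r_{\max}$ and $1/p_{\min}$ are encoded in binary. In the special case $F = G$ together with $\Pr^{\min}_{s}(\Diamond G) > 0$ for every $s \models \exists\Diamond G$, a polynomially-bounded uniform lower bound $p^{\ast}$ ensures $\Pr^{\sched}(\Diamond G) \geq p^{\ast}$ whenever positive, and since $F = G$ halts reward accumulation at $G$, the unconditional maximum $\ExpRew{\max}{\cM,\sinit}(\accdiaplus G)$ is polynomially computable by the classical Stochastic Shortest Path LP, yielding the polynomial bound $\CExp{\ub} = \ExpRew{\max}{\cM,\sinit}(\accdiaplus G)/p^{\ast}$. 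The main technical obstacle is formalizing the routing/separator condition so the criterion is both necessary and sufficient: a naïve reachability-based formulation would misclassify $s_0$ as divergent, and since the critical structure need not be an end component (the example's $s_2$ lies in no MEC) the analysis has to work on positive-reward cycles of the state-action graph under a delicate scheduler-compatibility constraint, which is considerably more subtle than classical MEC-based arguments for unconditional expected reward.
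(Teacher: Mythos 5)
Your characterization of finiteness is in the right spirit but contains a substantive error, and the upper-bound construction diverges from the paper in a way that likely does not yield a valid bound.

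The main gap is that your routing criterion is \emph{too strong}: it fails to classify some infinite instances as infinite. Concretely, separate the two sources of divergence that the paper handles by distinct mechanisms. Consider the MDP with $\sinit = s_0$ moving to $s_1$ and $s_2$ each with probability $1/2$ (reward $0$); $s_1$ moves to $\goal$ with probability $1$; and $s_2$ has a zero-cost action $\alpha$ to $\goal$ and a reward-$1$ self-loop $\beta$ of probability $1$. Then $\{s_2\}$ with action $\beta$ is a \emph{positive end component}. Scheduler $\sched_n$ that loops $\beta$ exactly $n$ times before taking $\alpha$ has $\Pr^{\sched_n}_{s_0}(\Diamond \goal) = 1$ and conditional expectation $n/2$, so $\CExp{\max} = \infty$. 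But no scheduler $\sched_0$ from $s_0$ routes \emph{every} $\goal$-reaching path through the cycle at $s_2$ — the $s_1$-branch always reaches $\goal$ independently. So your criterion reports finite when the true answer is infinite. The paper avoids this trap by first checking for positive end components reachable from $\sinit$ and from which $\goal$ is reachable (this alone implies $\CExp{\max} = \infty$ with no routing condition, because the cycling there does not deplete the probability of later reaching $\goal$), then collapses maximal end components to a single state, and only \emph{after} this reduction checks for what it calls a \emph{critical scheduler}: a scheduler $\usched$ with $\Pr^{\usched}(\Diamond \fail) = 1$ and a reachable positive $\usched$-cycle. The condition $\Pr^{\usched}(\Diamond \fail) = 1$ is precisely the clean formulation of "routing" that you were reaching for — and it is only needed in the post-MEC-collapse regime, where cycling through the positive cycle necessarily leaks probability out of the cycle. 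Conflating these two cases is what breaks your biconditional.

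Two secondary issues. First, the explicit upper-bound formula $r_{\max}\cdot\poly(|S|,1/p_{\min})$ is almost certainly not an upper bound in general: even for unconditional expected reward until an almost-surely-reached goal, the worst-case bound is of order $r_{\max}\cdot|S|/p_{\min}^{|S|}$, exponential in $|S|$, and for the conditional case the denominator $\Pr(\Diamond\goal)$ can itself be as small as $p_{\min}^{|S|}$. The paper does not attempt an explicit formula; it instead constructs an auxiliary MDP $\cN$ of size $\Theta(R\cdot\Size(\cM))$ — by annotating states with the accumulated reward up to a bound $R$ and then adding reset transitions from $\fail$ to $\sinit$ — and computes $\CExp{\ub} = \ExpRew{\max}{\cN}(\accdiaplus\goal)$ as the unconditional maximal expectation in $\cN$ via the standard linear program. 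That this majorizes $\CExp{\max}$ is a nontrivial observation (paths to $\goal$ in $\cM$ are mimicked by longer paths in $\cN$ with at least as much reward), and it is what gives the pseudo-polynomial bound on computation \emph{time}. Your special-case argument ($F=G$ and $\Pr^{\min}_s(\Diamond G)>0$ for all $s\models\exists\Diamond G$) also does not quite work: $\ExpRew{\max}{\cM,\sinit}(\accdiaplus G)$ ranges over schedulers with $\Pr(\Diamond G)=1$, and there may be none, so the quantity can be $-\infty$; the paper instead applies the reset transformation directly to $\cM$ (skipping the reward-annotation layer), which is safe exactly under this assumption because the reset does not create positive end components. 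Second, you never reduce the general $F\ne G$ case to the $F=G$ case; the paper does this via a polynomial four-mode product construction before any of the finiteness or bound analysis.

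In short: your high-level instinct — positive cycles in $S_{>0}\setminus S_{\min>0}$, a routing constraint, scheduler families $\sched_n$ — is aligned with the paper's machinery, but you need to split off the positive-end-component case before imposing the routing constraint, and you need an actual construction (not a formula) for the upper bound.
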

The threshold problem asks whether the maximal conditional expectation
exceeds or misses a given rational threshold $\threshold$.

\begin{maintheorem}[Threshold problem]
 \label{thm:threshold-problem}
  The problem ``does $\CExp{\max} \bowtie \threshold$ hold?''
  (where $\bowtie \ \in  \{>,\geqslant,<,\leqslant\}$)
  is PSPACE-hard and
  solvable in exponential time.
  \footnote{In an earlier version of this paper, the problem was wrongly claimed to be solvable in pseudo-polynomial time (see erratum \url{https://wwwtcs.inf.tu-dresden.de/ALGI/PUB/erratum_pseudo_polynomial.pdf}.}
  It is PSPACE-complete for acyclic MDPs.
\end{maintheorem}
For the computation of an optimal scheduler, we suggest an iterative
scheduler-improvement algorithm that interleaves calls of the
threshold algorithm with linear programming techniques to handle
zero-reward actions. This yields:
\begin{maintheorem}[Computing optimal schedulers]
  \label{thm:computing-cexpmax}
  The value $\CExp{\max}$ and an optimal scheduler $\sched$
  are computable in exponential time.
\end{maintheorem}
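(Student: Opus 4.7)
The plan is to combine the threshold procedure of Theorem~\ref{thm:threshold-problem} with a scheduler-improvement loop: binary search first pins down the exact value $\CExp{\max}$, and then the same oracle is used to commit to an optimal action at each memory-augmented state, while linear programming resolves the remaining degrees of freedom inside zero-reward end components.

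Preliminarily, I would run Theorem~\ref{thm:finiteness} to decide finiteness and, if $\CExp{\max}<\infty$, to obtain the upper bound $\CExp{\ub}$. The value $\CExp{\max}$ is a rational number whose denominator is bounded by a quantity $D$ that is at most exponential in $|\cM|$: an optimal scheduler lives on a product of $\cM$ with an accumulated-reward memory of size at most $\CExp{\ub}{+}1$, so the induced Markov chain inherits only the transition-probability denominators of $\cM$, and $\CExp{\max}$ is a ratio of two LP values over this exponentially-sized chain. A binary search over multiples of $1/D$ in the interval $[0,\CExp{\ub}]$ then locates $\CExp{\max}$ after $O(\log(D\cdot\CExp{\ub}))$ threshold queries, each of which is pseudo-polynomial by Theorem~\ref{thm:threshold-problem}, for an overall exponential cost.

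Once the value is known, I would synthesize a scheduler on the product MDP $\cM\otimes\cR$, whose memory component $\cR$ records the reward accumulated so far, truncated at $\CExp{\ub}$; this product is of exponential size. I would sweep its reachable state/memory pairs $(s,r)$ in some fixed order: at each pair I tentatively fix one of the enabled actions $\act$, invoke the threshold oracle of Theorem~\ref{thm:threshold-problem} on the resulting restricted product to test whether it still achieves $\CExp{\max}$, and commit to the first $\act$ that passes the test. Since $\CExp{\max}$ is realized in the original product, some action always succeeds at every pair reachable under the partial scheduler fixed so far, and the procedure terminates after exponentially many oracle calls on an input of exponential size, i.e., in exponential time, yielding a deterministic finite-memory scheduler on every state/memory pair that can be entered through a positive-reward transition.

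The principal obstacle, and the reason a purely threshold-driven sweep does not suffice, is the treatment of zero-reward actions. Inside a zero-reward end component the memory $r$ is frozen, so the oracle cannot distinguish actions that preserve $\CExp{\max}$ only via an arbitrary internal loop from those that actually progress towards $F$ and $G$; a careless choice may either destroy the conditioning event $\Diamond G$ or trap the scheduler in an unproductive cycle. To settle these cases, I would, after all positive-reward decisions have been committed, restrict attention to the residual zero-reward sub-MDP and set up a linear program of the classical stochastic-shortest-path flavour whose variables encode the reaching probabilities to $F$ and to $G$ together with the expected reward accumulated until $F$ weighted by the conditional probability of $\Diamond G$; a basic optimal solution induces a deterministic memoryless choice inside each zero-reward end component that preserves both $\Pr(\Diamond G)$ and $\ExpRew{}{}(\accdiaplus F\mid\Diamond G)$. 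Merging these LP-derived choices with the greedy ones produces the desired optimal scheduler $\sched$ and completes the construction in exponential time.
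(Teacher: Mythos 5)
Your route — binary-search on the value, then greedily synthesize an optimal scheduler on a memory-augmented product — is a genuinely different organization from the paper's (which interleaves scheduler-improvement steps with interval refinement, never isolating the value first). With the right preliminaries it could plausibly also give exponential time, since it is the same $\mathrm{exp}\cdot\mathrm{exp}=\mathrm{exp}$ calculus that the paper uses. However, there is a substantive gap that undermines both the value computation and the scheduler synthesis.

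You assert that ``an optimal scheduler lives on a product of $\cM$ with an accumulated-reward memory of size at most $\CExp{\ub}{+}1$,'' and everything downstream — the denominator bound $D$, the correctness of the binary-search grid, and the termination/soundness of the greedy sweep over $(s,r)$ pairs — depends on this. It is not true, and you neither prove it nor cite anything that would. The correct cutoff is the \emph{saturation point} $\saturation$ from Proposition~\ref{prop:saturation-maxsched}, which is defined as $\saturation=\max\{\lceil\CExp{\ub}-D\rceil,0\}$ with
$D = \min\bigl\{(\theta_s-\theta_{s,\alpha})/(y_s-y_{s,\alpha}):y_{s,\alpha}<y_s\bigr\}$.
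The quantity $D$ can be negative and of magnitude pseudo-polynomial in $\Size(\cM)$ (the numerator is a difference of partial expectations, the denominator a difference of reachability probabilities that can be exponentially small), so $\saturation$ can far exceed $\CExp{\ub}+1$; already in the running example $\saturation=\lceil\CExp{\ub}+1\rceil$ only by coincidence. If you truncate memory at $\CExp{\ub}+1$, your product misses the regime in which the optimal scheduler still needs to deviate from the greedy $\Pr^{\max}(\Diamond\goal)$-strategy, so the greedy sweep cannot reconstruct an optimal scheduler, and the denominator of $\CExp{\max}$ is not controlled by the chain you describe. Establishing the existence and a computable bound for $\saturation$ is precisely the content of Proposition~\ref{prop:saturation-maxsched}, and it is the key missing ingredient in your argument — it is what turns ``memory-augmented product'' from a heuristic into a sound, finitely-sized object. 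Secondarily, your greedy step implicitly applies the threshold oracle of Theorem~\ref{thm:threshold-problem} to \emph{restricted} products obtained by freezing some actions, which is not quite the statement of that theorem; the paper sidesteps this by building the restriction level-by-level inside the threshold algorithm itself, and a correct version of your sweep would need an analogous justification. Your LP cleanup for zero-reward end components points in the right direction but is vague about the objective (it must combine preservation of $\Pr(\Diamond\goal)$ with the difference quantity $\theta-(\threshold-r)y$, as in the paper's level-$r$ LP), and it cannot be entirely deferred to the end because zero-reward actions occur at every level.
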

  Algorithms for checking finiteness and computing an upper bound
  (Theorem~\ref{thm:finiteness}) will be sketched in
  Sections \ref{sec:finiteness}.
  Section \ref{sec:threshold} presents
  an exponential threshold algorithm and a
  polynomially space-bounded algorithm for acyclic MDPs
  (Theorem~\ref{thm:threshold-problem}) as well as
  an exponential-time computation scheme for the
  construction of an optimal scheduler (Theorem \ref{thm:computing-cexpmax}).
  Further details,
  soundness proofs and a proof for the PSPACE-hardness
  as stated in Theorem \ref{thm:threshold-problem}
  can be found
\CiteAppendix{in the appendix. }%
   The general feasibility of the algorithms will be shown
   by experimental studies with a prototypical implementation
   (for details, see 
\CiteAppendix{Appendix~\ref{appendix:implementation}). }%

\tudparagraph{1ex}{Related work.}
Although conditional expectations appear rather naturally in many applications
and despite the large amount of publications on variants of stochastic path
problems and other forms of expectations in MDPs
(see e.g.~\cite{BBCFK-TwoViews14,RanRasSan15}),
we are not aware that they have been addressed in the context of MDPs.
Computation schemes for extremal conditional probabilities
$\Pr^{\max}(\varphi | \psi )$ or
$\Pr^{\min}(\varphi | \psi )$
where both
the objective $\varphi$ and the assumption $\psi$ are
path properties specified in some temporal logic
have been studied in
\cite{AndRoss08,AndresPhD2011,BKKM14}.
For reachability properties $\varphi$ and $\psi$, the
algorithm of \cite{AndRoss08,AndresPhD2011} has exponential time complexity,
while the algorithm of \cite{BKKM14} runs in polynomial time.
Although the approach of \cite{BKKM14} is not applicable for
 calculating maximal conditional expectations 
\CiteAppendix{(see Appendix~\ref{appendix:reset-Methode}), }%
it can be used to compute an upper bound for $\CExp{\max}$
(see Section \ref{sec:finiteness}).
Conditional expected rewards in
Markov chains can be computed using the rescaling
technique of \cite{BKKM14} for finite Markov chains
or the approximation techniques of \cite{BraKuc05,AbdHenMayr07}
for certain classes of infinite-state Markov chains.
The conditional weakest precondition operator
of \cite{KatoenGJKO15}
yields a technique to compute conditional expected rewards for
purely probabilistic programs (without non-determinism).

 \section{Preliminaries}

\label{sec:preliminaries}
\label{sec:prelim}

We briefly summarize our notations used for
Markov decision processes.
Further details can be found in
textbooks,
see e.g.~\cite{Puterman,Kallenberg} or Chapter 10 in \cite{BaierKatoen08}.

A \emph{Markov decision process} (MDP) is a tuple $\cM = (S,\Act,P,\sinit,\rew)$
where $S$ is a finite set of states,
$\Act$ a finite set of actions,
$\sinit \in S$ the initial state,
$P : S \times \Act \times S \to [0,1] \cap \Rational$ is the
transition probability function and
$\rew : S \times \Act \to \Nat$ the reward function.
We require that
$\sum_{s'\in S} P(s,\act,s') \in \{0,1\}$
for all $(s,\alpha)\in S\times \Act$.
We write $\Act(s)$ for the set of actions that are enabled in $s$,
i.e., $\act \in \Act(s)$ iff $P(s,\act,\cdot)$ is not the null function.
State $s$ is called a \emph{trap} if $\Act(s)=\varnothing$.
The paths of $\cM$ are finite or
infinite sequences $s_0 \, \act_0 \, s_1 \, \act_1 \, s_2 \, \act_2 \ldots$
where states and actions alternate such that
$P(s_i,\act_i,s_{i+1}) >0$ for all $i\geqslant 0$.
A path $\fpath$ is called \emph{maximal} if it is either infinite or
finite and its last state is a trap.
If $\fpath =
    s_0 \, \act_0 \, s_1 \, \act_1 \, s_2 \, \act_2 \ldots \act_{k-1} \, s_k$
is finite then
$\rew(\fpath)=
 \rew(s_0,\act_0) + \rew(s_1,\act_1) + \ldots + \rew(s_{k-1},\act_{k-1})$
denotes the accumulated reward
and $\first(\fpath)=s_0$, $\last(\fpath)=s_k$ its first resp.~last state.
The \emph{size} of $\cM$, denoted $\Size(\cM)$,
is the sum of the number of states
plus the total sum of the logarithmic lengths of the non-zero
probability values
$P(s,\alpha,s')$ and the reward values $\rew(s,\alpha)$.%
\footnote{%
  The logarithmic length of an integer $n$
  is the number of bits required for a representation of
  $n$ as a binary number.
  The logarithmic length of a rational number $a/b$
  is defined as the sum of the logarithmic lengths of its numerator $a$
  and its denominator $b$, assuming that $a$ and $b$ are coprime integers
  and
  $b$ is positive.}

An \emph{end component} of $\cM$ is a strongly connected sub-MDP. End components
can be formalized as pairs $\cE = (E,\ActEC)$ where $E$ is a nonempty subset
of $S$ and $\ActEC$ a function that assigns to each state $s\in E$ a nonempty
subset of $\Act(s)$ such that the graph induced by $\cE$ is strongly connected.

A \emph{(randomized) scheduler} for $\cM$,
often also called policy or adversary,
is a function $\sched$ that assigns to each finite path $\fpath$ where
$\last(\fpath)$ is not a trap
a probability distribution over $\Act(\last(\fpath))$.
$\sched$ is called memoryless if $\sched(\fpath)=\sched(\fpath')$ for
all finite paths $\fpath$, $\fpath'$ with $\last(\fpath)=\last(\fpath')$,
in which case $\sched$ can be viewed as a function
that assigns to each non-trap state $s$ a distribution over $\Act(s)$.
$\sched$ is called deterministic if $\sched(\fpath)$ is a Dirac distribution
for each path $\fpath$,
in which case $\sched$ can be viewed as a function that assigns an action
to each finite path $\fpath$ where $\last(\fpath)$ is not a trap.
We write $\Pr^{\sched}_{\cM,s}$ or briefly $\Pr^{\sched}_{s}$
to denote the probability measure induced by $\sched$ and $s$.
Given a measurable set $\psi$ of maximal paths, then
$\Pr^{\min}_{\cM,s}(\psi) = \inf_{\sched} \Pr^{\sched}_{\cM,s}(\psi)$
and
$\Pr^{\max}_{\cM,s}(\psi) = \sup_{\sched} \Pr^{\sched}_{\cM,s}(\psi)$.
We will use LTL-like notations to specify
measurable sets of maximal paths.
For these it is well-known that optimal deterministic schedulers
exists.
If $\psi$ is a reachability condition then even optimal deterministic
memoryless schedulers exist.

Let $\varnothing \not= F \subseteq S$.
For a comparison operator $\bowtie\ \in \{=,>,\geqslant,<,\leqslant\}$
and $r\in \Nat$,
$\Diamond^{\bowtie r} F$ denotes the event ``reaching $F$
along some finite path $\fpath$ with $\rew(\fpath)\bowtie r$''.
The notation $\accdiaplus F$ will be used for
the random variable that assigns to each maximal
path $\infpath$ in $\cM$ the reward $\rew(\fpath)$ of the shortest prefix
$\fpath$ of $\infpath$ where $\last(\fpath)\in F$.
If $\infpath \not\models \Diamond F$ then $(\accdiaplus F)(\infpath)=\infty$.
If $s\in S$ then $\ExpRew{\sched}{\cM,s}(\accdiaplus F)$ denotes
the expectation of $\accdiaplus F$ in $\cM$ with starting state $s$
under $\sched$, which is infinite if
$\Pr^{\sched}_{\cM,s}(\Diamond F) <1$.
$\ExpRew{\max}{\cM,s}(\accdiaplus F) \in \Real \cup \{\pm\infty\}$ stands for
$\sup_{\sched} \ExpRew{\sched}{\cM,s}(\accdiaplus F)$ where the supremum
is taken over all schedulers $\sched$ with
$\Pr^{\sched}_{\cM,s}(\Diamond F)=1$.
Let $\psi$ be a measurable set of maximal paths.
$\ExpRew{\sched}{\cM,s}(\accdiaplus F|\psi)$ stands for the expectation
of $\accdiaplus F$ w.r.t.~the conditional probability
measure $\Pr^{\sched}_{\cM,s}(\ \cdot \ | \psi)$ given by
$\Pr^{\sched}_{\cM,s}(\varphi | \psi) =
 \Pr^{\sched}_{\cM,s}(\varphi \wedge \psi)/\Pr^{\sched}_{\cM,s}(\psi)$.
$\ExpRew{\max}{\cM,s}(\accdiaplus F|\psi)$ is the supremum
of $\ExpRew{\sched}{\cM,s}(\accdiaplus F|\psi)$ 
where $\Pr^{\sched}_{\cM,s}(\psi)>0$ and
$\Pr^{\sched}_{\cM,s}(\Diamond F| \psi)=1$,
and
$\Pr^{\max}_{\cM,s}(\varphi | \psi) = 
 \sup_{\sched} \Pr^{\sched}_{\cM,s}(\varphi | \psi)$
where $\sched$ ranges over all schedulers with
$\Pr^{\sched}_{\cM,s}(\psi) >0$ and $\sup \varnothing = -\infty$.

For the remainder of this paper,
we suppose that two nonempty subsets $F$ and $G$ of $S$ are given
such that
$\Pr^{\max}_{\cM,s}(\Diamond F |\Diamond G)=1$.
The task addressed in this paper is to compute
the maximal conditional expectation given by:
\begin{center}
 $\CExpState{\max}{\cM,s} \ \eqdef  \
  \sup\limits_{\sched} \
    \CExpState{\sched}{\cM,s} \in \Real \cup\{\infty\}$
 \quad where \quad
  $ \CExpState{\sched}{\cM,s} \ = \
    \ExpRew{\sched}{\cM,s}(\accdiaplus F|\Diamond G)$
 \vspace*{-1ex}
\end{center}
Here, $\sched$ ranges over all schedulers $\sched$ with
$\Pr^{\sched}_{\cM,s}(\Diamond G)>0$ and
$\Pr^{\sched}_{\cM,s}(\Diamond F |\Diamond G)=1$.
If $\cM$ and its initial state
are clear from the context, we often 
simply write $\CExp{\max}$ resp.~$\CExp{\sched}$.
We assume that 
all states in $\cM$ are reachable from $\sinit$ and
$\sinit \notin F \cup G$
(as $\CExp{\max}=0$ if $s\in F$ and 
$\CExp{\max}=\ExpRew{\max}{\cM,\sinit}(\accdiaplus F)$ 
if $s\in G \setminus F$).

\section{Finiteness and upper bound}

\label{sec:finiteness}

\tudparagraph{0ex}{Checking finiteness.}
We sketch a polynomially time-bounded algorithm that takes as input an
MDP $\cM = (S,\Act,P,\sinit,\rew)$ with two distinguished subsets $F$ and $G$
of $S$ such that $\Pr^{\max}_{\cM,\sinit}(\Diamond F|\Diamond G)=1$.
If $\CExp{\max}=\ExpRew{\max}{\cM,\sinit}(\accdiaplus F |\Diamond G)=\infty$ 
then the output is ``no''.
Otherwise, the output is
an MDP $\Hut{\cM} = (\Hut{S},\Hut{\Act},\Hut{P},\sHutinit,\Hut{\rew})$
with two trap states $\goal$ and $\fail$
such that:
\begin{enumerate}
\item [(1)]
  $\ExpRew{\max}{\cM,\sinit}(\accdiaplus F|\Diamond G) \ \ = \ \
   \ExpRew{\max}{\Hut{\cM},\sHutinit}(\accdiaplus \goal | \Diamond \goal)$,
\item [(2)]
  $\Hut{s} \models \exists \Diamond \goal$ and
  $\Pr^{\min}_{\Hut{\cM},\Hut{s}}\bigl(\Diamond (\goal \vee \fail)\bigr)=1$
  for all states $\Hut{s}\in \Hut{S} \setminus \{\fail\}$, and
\item [(3)]
  $\Hut{\cM}$ does not have critical schedulers where a scheduler
  $\usched$ for $\Hut{\cM}$ is said to be critical iff
  $\Pr^{\usched}_{\Hut{\cM},\sHutinit}(\Diamond \fail)=1$
  and there is a reachable positive $\usched$-cycle.%
\footnote{The latter means a $\usched$-path
  $\fpath = s_0 \, \act_0 \, s_1 \, \act_1 \ldots \act_{k-1} \, s_k$
  where $s_0=\sHutinit$ and $s_i=s_k$ for some $i\in \{0,1,\ldots ,k{-}1\}$
  such that $\Hut{\rew}(s_j,\act_j)>0$ for some $j\in \{i,\ldots,k{-}1\}$.}
\end{enumerate}
We provide here the main ideas of the algorithms and
refer to
\CiteAppendix{Appendix~\ref{appendix:finiteness} }%
for the details.
The algorithm first transforms $\cM$ into an MDP $\tilde{\cM}$
that permits to assume $F=G = \{\goal \}$.
Intuitively, $\tilde{\cM}$ simulates $\cM$, while operating in four modes:
``normal mode'', ``after $G$'', ``after $F$'' and ``goal''.
$\tilde{\cM}$ starts in normal mode where it
behaves as $\cM$ as long as neither $F$ nor $G$ have been visited.
If a $G\setminus F$-state %
has been reached in normal mode
then $\tilde{\cM}$ switches to the mode ``after $G$''.
Likewise, as soon as an $F\setminus G$-state %
has been reached in normal mode
then $\tilde{\cM}$ switches to the mode ``after $F$''.
$\tilde{\cM}$ enters the goal mode (consisting of a single trap state $\goal$)
as soon as a path fragment
containing a state in $F$ and a state in $G$ has been generated.
This is the case if $\cM$ visits an $F$-state in mode ``after $G$''
or a $G$-state in mode ``after $F$'',
or a state in $F \cap G$ in the normal mode.
The rewards in the normal mode and in mode ``after $G$''
are precisely as in $\cM$,
while the rewards are 0 in all other cases.
We then remove all states $\tilde{s}$ in the ``after $G$'' mode
with $\Pr^{\max}_{\tilde{\cM},\tilde{s}}(\Diamond \goal) <1$,
collapse all states $\tilde{s}$ in $\tilde{\cM}$
with $\tilde{s}\not\models \exists \Diamond \goal$
into a single trap state called $\fail$
and add zero-reward transitions to $\fail$ from all states
$\tilde{s}$ that are not in the ``after $G$'' mode and
$\Pr^{\max}_{\tilde{\cM},\tilde{s}}(\Diamond \goal) =0$.
Using techniques as in the unconditional case \cite{deAlf99} we can check
whether $\tilde{\cM}$ has positive end components,
i.e., end components with at least one state-action pair
$(s,\alpha)$ with $\rew(s,\alpha)>0$. If so, then
$\ExpRew{\max}{\cM,\sinit}(\accdiaplus F|\Diamond G) = \infty$.
Otherwise, we collapse each maximal end component of $\tilde{\cM}$
into a single state.

Let $\Hut{\cM}$ denote the resulting MDP. It satisfies (1) and (2).
Property (3) holds iff
$\ExpRew{\max}{\Hut{\cM},\sHutinit}
   (\accdiaplus \goal|\Diamond \goal) < \infty$.
This condition can be checked in polynomial time
using a graph analysis in the sub-MDP
of $\Hut{\cM}$ consisting of the states
$\Hut{s}$ with $\Pr^{\min}_{\Hut{\cM},\Hut{s}}(\Diamond \goal)=0$
\CiteAppendix{(see Prop.~\ref{proposition:exprew-infinite-critical-scheduler} and Appendix \ref{summary:check-finiteness}). }%

\tudparagraph{1ex}{Computing an upper bound.}
Due to the transformation used for checking finiteness of the
maximal conditional expectation, we can now suppose that
$\cM=\Hut{\cM}$, $F=G=\{\goal\}$ and that (2) and (3) hold.
We now present a technique to compute an upper
bound $\CExp{\ub}$ for $\CExp{\max}$. The upper bound will be used
later to determine a saturation point from which on
optimal schedulers behave memoryless (see Section \ref{sec:threshold}).

We consider the MDP $\cM'$ simulating $\cM$, while operating
in two modes. In its first mode, $\cM'$ attaches the reward accumulated
so far to the states. More precisely, the states of $\cM'$ in its
first mode have the form $\<s,r\> \in S \times \Nat$ where
$0 \leqslant r \leqslant R$ and
$R = \sum_{s\in S'}
 \max \{ \rew_{\cM'}(s,\alpha):\alpha \in \Act_{\cM'}(s) \}$.
The initial state of $\cM'$ is $\sinit'=\<\sinit,0\>$.
The reward for the state-action pairs $(\<s,r\>,\alpha)$
where $r{+}\rew(s,\alpha) \leqslant R$ is 0.
If $\cM'$ fires an action $\alpha$ in state $\<s,r\>$ where
$r'\eqdef r{+}\rew(s,\alpha) > R$
then it switches to the second mode,
while earning reward $r'$.
In its second mode $\cM'$ behaves as $\cM$
without additional annotations of the states and earning the same rewards
as $\cM$.
From the states $\<\goal,r\>$, $\cM'$ moves to $\goal$ with probability 1
and reward $r$.
There is a one-to-one correspondence between the schedulers for
$\cM$ and $\cM'$ and the
switch from $\cM$ to $\cM'$ does not affect the probabilities
and the accumulated rewards until reaching $\goal$.

Let $\cN$ denote the MDP resulting from $\cM'$
by adding reset-transitions from $\fail$ (as a state of the
second mode) and the copies $\<\fail,r\>$ in the first mode
to the initial state $\sinit'$.
The reward of all reset transitions is 0.
The reset-mechanism has been taken from \cite{BKKM14} where it has been
introduced as a technique to
compute maximal conditional probabilities for reachability properties.
Intuitively, $\cN$ ``discards'' all paths of $\cM'$ that
eventually enter $\fail$ and ``redistributes'' their probabilities
to the paths that eventually enter the goal state.
In this way, $\cN$ mimics the conditional probability measures
$\Pr^{\sched}_{\cM',\sinit'}(\ \cdot \ |\Diamond \goal) =
 \Pr^{\sched}_{\cM,\sinit}(\ \cdot \ |\Diamond \goal)$
for prefix-independent path properties.
Paths $\fpath$ from $\sinit$ to $\goal$ in $\cM$ are
simulated in $\cN$ by paths of the form
$\finpath= \cycle_1; \ldots \cycle_k ; \fpath$ where
$\cycle_i$ is a cycle in $\cN$ with $\first(\cycle_i)=\sinit'$
and $\cycle_i$'s last transition is a reset-transition from some fail-state
to $\sinit'$. Thus, $\rew(\fpath) \leqslant \rew_{\cN}(\finpath)$.
The distinction between the first and second mode together with property (3)
ensure that the
new reset-transitions do not generate positive end components in $\cN$.
By the results of \cite{deAlf99}, the maximal unconditional expected
accumulated reward in $\cN$ is finite and we have:
\begin{center}
  $\ExpRew{\max}{\cM,\sinit}(\accdiaplus \goal | \Diamond \goal)
   \ \ = \ \
   \ExpRew{\max}{\cM',\sinit'}(\accdiaplus \goal | \Diamond \goal)
   \ \ \leqslant \ \
   \ExpRew{\max}{\cN,\sinit'}(\accdiaplus \goal)$
\end{center}
Hence, we can deal with
$\CExp{\ub}=\Exp{\max}{\cN,\sinit'}(\accdiaplus \goal)$, which is computable
in time polynomial in the size of $\cN$
by the algorithm proposed in \cite{deAlf99}.
As $\mathit{size}(\cN)=\Theta(R \cdot \mathit{size}(\cM))$
we obtain a pseudo-polynomial time bound for the general case.
If, however, $\Pr^{\min}_{\cM,s}(\Diamond \goal)>0$
for all states $s\in S \setminus \{\fail\}$  then there is no need for the
detour via $\cM'$ and we can apply the reset-transformation
$\cM \leadsto \cN$ by adding a reset-transition from $\fail$ to $\sinit$
with reward 0, in which case the upper bound
$\CExp{\ub}=\ExpRew{\max}{\cN,\sinit}(\accdiaplus \goal)$ is obtained
in time polynomial in the size of $\cM$.
For details we refer to
\CiteAppendix{the proof of Prop.~\ref{proposition:exprew-infinite-critical-scheduler} and Section \ref{sec:upper-bound} in the appendix. }%

\section{Threshold algorithm and computing optimal schedulers}

\label{sec:threshold}

In what follows, we suppose that $\cM=(S,\Act,P,\sinit,\rew)$ is an MDP
with two trap states $\goal$ and $\fail$ such that
$s\models \exists \Diamond \goal$ for all states $s\in S \setminus \{\fail\}$
and $\min_{s\in S} \Pr^{\min}_{\cM,s}(\Diamond (\goal \vee \fail))=1$
and
$\CExp{\max} =
 \ExpRew{\max}{\cM,\sinit}(\accdiaplus \goal | \Diamond \goal)< \infty$.

A scheduler $\sched$ is said to be \emph{reward-based} if
$\sched(\fpath)=\sched(\fpath')$ for all finite paths $\fpath$, $\fpath'$ with
$(\last(\fpath),\rew(\fpath)) = (\last(\fpath'),\rew(\fpath'))$.
Thus, deterministic reward-based schedulers can be seen as functions
$\sched : S \times \Nat \to \Act$.
\CiteAppendix{Prop.~\ref{prop:det-reward-based} in the appendix shows that }%
$\CExp{\max}$ equals the supremum of the values
$\CExp{\sched}$,
when ranging over all deterministic reward-based schedulers $\sched$
with $\Pr^{\sched}_{\cM,\sinit}(\Diamond \goal)>0$.

The basis of our algorithms are the following two observations.
First, there exists
a saturation point $\saturation \in \Nat$ such that the optimal decision
for all paths $\fpath$ with $\rew(\fpath)\geqslant \saturation$ is to maximize
the probability for reaching the goal state
(see Prop.~\ref{prop:saturation-maxsched} below).
The second observation is a technical statement that
will be used at several places.
Let $\rho,\theta,\zeta,r,x,y,z,p\in \Real$ with
$0\leqslant p,x,y,z \leqslant 1$, $p>0$, $y > z$ and $x+z>0$
and let
\begin{center}
  $\mathsf{A} 
   \  =  \ \frac{\displaystyle \rho + p(ry + \theta)}{\displaystyle x+py}$,
  \quad 
  $\mathsf{B} 
   \ =  \ \frac{\displaystyle \rho + p(rz + \zeta)}{\displaystyle x+pz}$
  \quad \text{and} \quad
  $\mathsf{C} = \max \{\mathsf{A},\mathsf{B}\}$
\end{center}
Then:
\begin{equation}
    \label{magic-constraint}
    \mathsf{A} \geqslant \mathsf{B}
    \ \ \ \ \text{iff} \ \ \ \
    r + \frac{\theta {-} \zeta}{y {-} z}
    \ \geqslant \ \mathsf{C}
    \ \ \ \ \text{iff}  \ \ \ \
    \theta - (\mathsf{C}{-}r)y \ \geqslant \  \zeta - (\mathsf{C}{-}r)z
    \tag{$\dagger$}
\end{equation}
and the analogous statement for $>$ rather than $\geqslant$.
\CiteAppendix{This statement is a consequence of Lemma \ref{lemma:decision-for-s-r} in the appendix. }%
We will apply this observation in different nuances.
To give an idea how to apply statement \eqref{magic-constraint},
suppose $\mathsf{A} = \CExp{\tsched}$ and
$\mathsf{B}=\CExp{\usched}$ where $\tsched$ and $\usched$ are
reward-based schedulers that agree for all paths $\finpath$
that do not have a prefix $\fpath$ with $\rew(\fpath)=r$ where
$\last(\fpath)$ is a non-trap state, in which case
$x$ denotes the probability for reaching $\goal$ from $\sinit$ along
such a path $\finpath$ 
and $\rho$ stands for the corresponding partial expectation, while
$p$ denotes the probability of the paths 
$\fpath$ from $\sinit$ to some non-trap state with $\rew(\fpath)=r$.
The crucial observation is that $r+(\theta{-}\zeta)/(y{-}z)$ does not
depend on $x,\rho,p$. Thus,
if $r+(\theta{-}\zeta)/(y{-}z) \geqslant \CExp{\ub}$ 
for some upper bound  $\CExp{\ub}$ of $\CExp{\max}$
then \eqref{magic-constraint} allows to conclude that
$\tsched$'s decisions for the state-reward pairs $(s,r)$ are better 
than $\usched$, independent of $x,\rho$ and $p$.

Let $R\in \Nat$ and $\sched$, $\tsched$ be reward-based schedulers.
The \emph{residual} scheduler
$\residual{\sched}{R}$ is given by 
$(\residual{\sched}{R})(s,r) = \sched(s,R{+}r)$.
$\redefresidual{\sched}{R}{\tsched}$
denotes the unique scheduler 
that agrees with $\sched$ for all state-reward pairs
$(s,r)$ where $r < R$ and 
$\residual{(\redefresidual{\sched}{R}{\tsched})}{R} = \tsched$.
We write $\Exp{\sched}{\cM,s}$ for the \emph{partial expectation}
\begin{center}
  $\Exp{\sched}{\cM,s}
   \ \ = \ \
   \sum\limits_{r=0}^{\infty} \
      \Pr^{\sched}_{\cM,s}(\Diamond^{=r}\goal) \cdot r$
\end{center}
Thus,
$\Exp{\tsched}{\cM,s}=\ExpRew{\tsched}{\cM,s}(\accdiaplus \goal)$
if $\Pr^{\tsched}_{\cM,s}(\Diamond \goal)=1$, while
$\Exp{\tsched}{\cM,s} < \infty = \ExpRew{\tsched}{\cM,s}(\accdiaplus \goal)$
if $\Pr^{\tsched}_{\cM,s}(\Diamond \goal)<1$.

\begin{proposition}
 \label{prop:saturation-maxsched}
  There exists
  a natural number $\saturation$ (called \emph{saturation point} of $\cM$)
  and a deterministic memoryless scheduler $\maxsched$
  such that:
  \begin{enumerate}
  \item [(a)]
     $\CExp{\tsched} \ \leqslant \
      \CExp{\redefresidual{\tsched}{\saturation}{\maxsched}}$
     for each scheduler $\tsched$ with
     $\Pr^{\tsched}_{\cM,\sinit}(\Diamond \goal)>0$, and
  \item [(b)]
     $\CExp{\sched} \ =\ \CExp{\max}$ for some
     deterministic reward-based scheduler $\sched$ such that
     $\Pr^{\sched}_{\cM,\sinit}(\Diamond \goal)>0$ and
     $\residual{\sched}{\saturation}=\maxsched$.
  \end{enumerate}
\end{proposition}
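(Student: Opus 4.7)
The plan is to take $\maxsched$ to be a deterministic memoryless scheduler that is \emph{lexicographically optimal}: $\maxsched$ satisfies $\Pr^{\maxsched}_{\cM,s}(\Diamond\goal)=\Pr^{\max}_{\cM,s}(\Diamond\goal)$ for every $s$ and, among such schedulers, also maximises the partial expectation $\Exp{\bullet}{\cM,s}$ of accumulated reward until $\goal$ from every $s$. Such a scheduler is obtained by stacking the classical reachability-maximising construction with the maximum-expected-reward construction restricted to the sub-MDP of reach-optimal actions, both of which admit a deterministic memoryless optimum. The standing assumption $s\models\exists\Diamond\goal$ gives $\Pr^{\maxsched}_{\cM,s}(\Diamond\goal)>0$ for every $s\neq\fail$. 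Intuitively, once the accumulated reward $r$ is large enough, the dominant contribution to $\CExp{\cdot}$ along paths crossing $(s,r)$ is the bonus $r$ multiplied by the probability of subsequently reaching $\goal$, so the optimal local choice at $(s,r)$ is $\maxsched$'s, with ties broken by the secondary criterion.

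The core technical step is a single-pair swap based on \eqref{magic-constraint}. Fix a deterministic reward-based scheduler $\tsched$ with $\Pr^{\tsched}_{\cM,\sinit}(\Diamond\goal)>0$, and consider two schedulers $\usched,\vsched$ that agree everywhere except at a single state-reward pair $(s,r)$ with $r\geqslant\saturation$, $s\notin\{\goal,\fail\}$, where $\vsched(s,r)=\maxsched(s)$, and that both use $\maxsched$ on every pair of strictly larger reward. Splitting paths from $\sinit$ to $\goal$ into those avoiding $(s,r)$ and those visiting it identifies $\CExp{\vsched}$ and $\CExp{\usched}$ with the quantities $\mathsf{A}$ and $\mathsf{B}$ of \eqref{magic-constraint}: $\rho,x,p$ collect the shared ``avoid'' data, $y,\theta$ are the reaching probability and partial expectation from $s$ under $\vsched$'s choice continued by $\maxsched$, and $z,\zeta$ the same under $\usched$'s choice. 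If $y>z$, then $\mathsf{C}=\max\{\CExp{\usched},\CExp{\vsched}\}\leqslant\CExp{\max}\leqslant\CExp{\ub}$, and the bounds $0\leqslant\theta\leqslant yM$, $0\leqslant\zeta\leqslant zM$ with $M=\max_{s}\CExpState{\ub}{\cM,s}$, together with a uniform positive lower bound $\delta$ on the finitely many positive gaps $y-z$ arising in $\cM$, give $(\theta-\zeta)/(y-z)\geqslant -M/\delta$; hence \eqref{magic-constraint} yields $\CExp{\vsched}\geqslant\CExp{\usched}$ provided $\saturation\geqslant\CExp{\ub}+M/\delta$. If $y=z$, then the lexicographic optimality of $\maxsched$ forces $\theta\geqslant\zeta$, and a direct evaluation of $\mathsf{A}-\mathsf{B}$ in \eqref{magic-constraint} (which reduces to $p(\theta-\zeta)/(x+py)$) still gives $\CExp{\vsched}\geqslant\CExp{\usched}$.

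Part~(a) follows by applying the swap lemma in decreasing order of reward: for $N\geqslant\saturation$, let $\tsched^{(N)}$ agree with $\tsched$ for $r<N$ and with $\maxsched$ for $r\geqslant N$, so that $\tsched^{(\saturation)}=\redefresidual{\tsched}{\saturation}{\maxsched}$. For each step $N\to N{+}1$, the swap lemma --- applied to the finitely many pairs $(s,N)$ in a topological order that respects the zero-reward structure within the level, so that each single-pair flip sees a tail already equal to $\maxsched$ --- gives $\CExp{\tsched^{(N+1)}}\leqslant\CExp{\tsched^{(N)}}$. A continuity argument using $\Pr^{\tsched}_{\cM,\sinit}(\text{reward eventually reaches }N)\to 0$ as $N\to\infty$ (forced by $\Exp{\tsched}{\cM,\sinit}=\CExp{\tsched}\cdot\Pr^{\tsched}_{\cM,\sinit}(\Diamond\goal)<\infty$) yields $\CExp{\tsched^{(N)}}\to\CExp{\tsched}$, whence $\CExp{\tsched}\leqslant\CExp{\redefresidual{\tsched}{\saturation}{\maxsched}}$. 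For part~(b), apply~(a) to any optimal deterministic reward-based scheduler $\sched^{*}$ (whose existence is stated earlier in Section~\ref{sec:threshold}): the scheduler $\sched=\redefresidual{\sched^{*}}{\saturation}{\maxsched}$ satisfies $\CExp{\max}=\CExp{\sched^{*}}\leqslant\CExp{\sched}\leqslant\CExp{\max}$, has $\residual{\sched}{\saturation}=\maxsched$ by construction, and retains $\Pr^{\sched}_{\cM,\sinit}(\Diamond\goal)>0$ since the replacement can only weakly increase reaching probabilities. The main obstacle is reconciling the single-pair swap lemma with the unbounded reward support of reward-based schedulers: both the ordering of swaps within each reward level and the passage to the limit require care, and pinning down concrete values of $M$ and $\delta$ --- hence of $\saturation$ --- calls for a sharper size analysis than the finiteness argument of Section~\ref{sec:finiteness}.
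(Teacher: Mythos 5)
Your overall plan — define $\maxsched$ lexicographically (first maximising goal-probability, then partial expectations among those maximisers), apply \eqref{magic-constraint} to show switching a single decision to $\maxsched$ at a high reward level helps, iterate, and close with a continuity/limit argument — is the right shape and matches the paper's strategy. However, there is a concrete gap in the swap-ordering step that you flag but do not resolve, and it is the crux of the whole proof.

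Your single-pair swap compares $\usched$ and $\vsched$ that differ only at $(s,r)$, with $\vsched(s,r)=\maxsched(s)$ and both using $\maxsched$ for rewards strictly above $r$. For \eqref{magic-constraint} to drive the conclusion you need $y\geqslant z$, where $y=\sum_t P(s,\maxsched(s),t)\cdot \Pr_t$ and $z=\sum_t P(s,\usched(s,r),t)\cdot \Pr_t$ and the $\Pr_t$ are the goal-reaching probabilities of the \emph{shared} continuation. When $\maxsched(s)$ has reward $0$, this continuation revisits the same reward level $r$, so the $\Pr_t$ depend on decisions at other pairs $(t,r)$ that may still equal $\tsched$. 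Unless the entire zero-reward closure of $(s,r)$ under $\maxsched$ is already switched, $\Pr_t$ need not equal $p^{\max}_t$, and $\maxsched(s)$ need not maximise $\sum_t P(s,\alpha,t)\cdot\Pr_t$ — so $y\geqslant z$ can fail. Your proposed fix is a topological order on the zero-reward structure within the level, but the preprocessing guarantees (A1), (A2) and absence of critical schedulers do not forbid zero-reward cycles (e.g.\ $P(s,\alpha,t)=P(s,\alpha,u)=\tfrac12$ with $\rew(s,\alpha)=0$ and $P(t,\beta,s)=1$ with $\rew(t,\beta)=0$ is no end component but is a cycle), so no such order exists in general. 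The same issue infects your $y=z$ branch: lexicographic optimality of $\maxsched$ is relative to a $\maxsched$-continuation, not an arbitrary shared one.

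The paper sidesteps this in two ways, both instructive. In Appendix~\ref{appendix:saturation} the swap is per-\emph{path} rather than per-pair: the enumeration $\fpath_1,\fpath_2,\ldots$ of ``first-violation'' paths (prefix-minimal and ordered by length) ensures that each swap $\tsched_i\to\redefresidual{\tsched_i}{\fpath_i}{\maxsched}$ makes the whole residual after $\fpath_i$ probability-maximising, not just the first step, and different $\fpath_i$'s are prefix-incomparable so swaps never clash; no topological sort on states is needed. Moreover, rather than introducing a uniform bound $M$ on all one-step residual conditional expectations, the paper invokes the \emph{turning point} $\turning$ (Prop.~\ref{prop:turning-point}): when $\CExpState{\vsched}{s}\geqslant\turning$ the claim follows immediately, and otherwise $\Exp{\vsched}{s}<\turning$ gives the needed estimate with no appeal to a global bound. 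This matters because your $M=\max_s\CExpState{\ub}{\cM,s}$ is not actually defined: after the normal-form transformation the conditional expectation \emph{starting from} a non-initial state $s$ can be infinite, as Example~\ref{example:PosEC-not-sufficient} shows. (A finite replacement such as $\rmax+\max_t\CExpState{\maxsched}{\cM,t}$ could be made to work for the schedulers actually arising, but that repair itself relies on the tail being $\maxsched$ — the same unresolved ordering issue.) In Appendix~\ref{appendix:compute-saturation} the paper proves the needed state-wise inequality $\theta_s-(\CExp{\ub}{-}\saturation)y_s\geqslant\Exp{\tsched}{\cM,s}-(\CExp{\ub}{-}\saturation)\Pr^{\tsched}_{\cM,s}(\Diamond\goal)$ for \emph{all} schedulers $\tsched$ simultaneously, using an auxiliary MDP and fixpoint uniqueness; this treats all zero-reward structure within a level in one shot, which is the clean way to aggregate across an entire level. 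Either of these routes would close your gap, but as written the topological-order argument does not go through and the constant $M$ is not well-defined.
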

The proof of Prop.~\ref{prop:saturation-maxsched}
\CiteAppendix{(see Appendices \ref{appendix:saturation} and \ref{appendix:compute-saturation}) }%
is constructive and yields a polynomial-time algorithm
for generating a scheduler $\maxsched$
as in Prop.~\ref{prop:saturation-maxsched} and
a pseudo-polynomial algorithm for the computation of
a saturation point $\saturation$.

Scheduler $\maxsched$ maximizes the probability to reach $\goal$ 
from each state.
If there are two or more such schedulers, then
$\maxsched$ is one where the conditional expected accumulated reward
until reaching goal
is maximal under all  schedulers $\usched$ with
$\Pr^{\usched}_{\cM,s}(\Diamond \goal) = \Pr^{\max}_{\cM,s}(\Diamond \goal)$
for all states $s$.
Such a scheduler $\maxsched$ is computable in polynomial time using
linear programming techniques.
\CiteAppendix{(See Lemma \ref{lemma:Sched-max-exp} in the appendix.) }%

The idea for the computation of the saturation point 
is to compute the threshold $\saturation$
above which the scheduler $\maxsched$ becomes optimal. 
For this we rely on statement \eqref{magic-constraint} 
where  $\theta/y$ stands for the conditional expectation under $\maxsched$, 
$\zeta/z$ for the conditional expectation
under an arbitrary scheduler $\sched$ and
$\mathsf{C}=\CExp{\ub}$ is an upper bound of $\CExp{\max}$ 
(see Theorem \ref{thm:finiteness}),
while $r=\saturation$ is the wanted value.
More precisely, for $s\in S$,
let $\theta_s=\Exp{\maxsched}{\cM,s}$,
$y_s = \Pr^{\maxsched}_{\cM,s}(\Diamond \goal)
     = \Pr^{\max}_{\cM,s}(\Diamond \goal)$.
To compute a saturation point we determine
the smallest value $\saturation \in \Nat$ such that
\begin{center}
    $\theta_s - (\CExp{\ub}{-}\saturation)\cdot y_s
     \ \ = \ \
     \max\limits_{\sched} \
        \bigl( \ \Exp{\sched}{\cM,s} -
           (\CExp{\ub} {-}\saturation)\cdot
               \Pr^{\sched}_{\cM,s}(\Diamond \goal) \
        \bigr)$
\end{center}
for all states $s$ where $\sched$ ranges over all schedulers for $\cM$.
\CiteAppendix{In Appendix~\ref{appendix:compute-saturation} }%
  we show that
  instead of the maximum over all schedulers $\sched$ it suffices to
  take the local maximum over all ``one-step-variants'' of $\maxsched$.
  That is,
  a saturation point is obtained by
  $\saturation = \max\{ \lceil \CExp{\ub} - D \rceil, 0 \} $
  where
  \begin{center}
    $D \ = \ 
     \min \
       \bigl\{ (\theta_s - \theta_{s,\alpha})/(y_s - y_{s,\alpha})
               \, : \, s \in S, \alpha \in \Act(s),  y_{s,\alpha} < y_s
       \bigr\}$
  \end{center}
  and
  $y_{s,\alpha} =  \sum\limits_{t\in S} P(s,\alpha,t)\cdot y_t$
  and
  $\theta_{s,\alpha} =
       \rew(s,\alpha)\cdot y_{s,\alpha}  +
    \sum\limits_{t\in S} P(s,\alpha,t)\cdot \theta_t$.

\begin{example}
 \label{example:saturation}
{\rm
  The so obtained saturation point for the
  MDP $\cM[\rewparam]$ in Figure~\ref{fig:running-example} is 
  $\saturation = \lceil \CExp{\ub}{+}1\rceil$.
  Note that  only state $s=s_2$ behaves nondeterministically,
  and $\maxsched(s)=\alpha$, 
  $y_s=y_{s,\alpha}=1$, 
  $\theta_s= \theta_{s,\alpha}=0$, while
  $y_{s,\beta}=\theta_{s,\beta}=\frac{1}{2}$.
  This yields $D = (0{-}\frac{1}{2})/(1{-}\frac{1}{2})=-1$.
  Thus,
  $\saturation \geqslant \rewparam {+}2$ 
  as $\CExp{\ub}\geqslant \CExp{\max} > \rewparam$.
\Ende}
\end{example}

The logarithmic length of $\saturation$ is polynomial
in the size of $\cM$.
Thus, the value (i.e., the length of an unary encoding) of $\saturation$
  can be exponential in $\Size(\cM)$.
  This is unavoidable as there are families $(\cM_k)_{k \in \Nat}$
  of MDPs where the size of $\cM_k$ is in $\cO(k)$, while $2^k$ is a
  lower bound for
  the smallest saturation point of $\cM_k$.
  This, for instance, applies to the MDPs $\cM_k = \cM[2^k]$ where
  $\cM[\rewparam]$ is as in Figure~\ref{fig:running-example}.
  Recall from Example \ref{example:running-example} that
  the scheduler $\sched_{\rewparam{+}2}$ that selects $\beta$ by the
  first $\rewparam {+} 2$ visits of $s$ and $\alpha$ for the
  $(\rewparam {+} 3)$-rd visit of $s$ is optimal
  for $\cM[\rewparam]$. 
  Hence, the smallest saturation point for $\cM[2^k]$ is $2^k{+}2$.

\tudparagraph{1ex}{Threshold algorithm.}
The input of the threshold algorithm is an MDP $\cM$ as above
and a non-negative rational
number $\threshold$. The task is to generate
a deterministic reward-based scheduler $\sched$ with
$\residual{\sched}{\saturation}=\maxsched$
(where $\maxsched$ and $\saturation$ are as in
Prop.~\ref{prop:saturation-maxsched})
such that
$\CExp{\sched} > \threshold$ if $\CExp{\max} > \threshold$, and
$\CExp{\sched} = \threshold$ if $\CExp{\max} = \threshold$.
If $\CExp{\max} < \threshold$ then the output of the threshold algorithm
is ``no''.%
\footnote{%
   The threshold algorithm solves all
   four variants of the threshold problem.
   E.g., $\CExp{\max} \leqslant \threshold$ iff 
   $\CExp{\sched}=\threshold$, while
   $\CExp{\max}<\threshold$ iff the threshold algorithm
   returns ``no''.}

The algorithm operates level-wise and determines \emph{feasible} actions
$\action(s,r)$ for all non-trap states $s$ and
$r=\saturation{-}1,\saturation{-}2,\ldots,0$, using the decisions
$\action(\cdot,i)$ for
the levels $i \in \{r{+}1,\ldots,\saturation\}$ that 
have been treated before 
and linear programming techniques to treat zero-reward loops.
In this context, feasibility 
is understood with respect to the following condition: 
If $\CExp{\max} \unrhd \threshold$ where
$\unrhd \in \{>,\geqslant\}$ then
there exists a reward-based scheduler $\sched$ with
$\CExp{\sched} \unrhd \threshold$ and
$\sched(s,R)=\action(s,\min\{\saturation,R\})$ for all $R \geqslant r$.

The algorithm stores for each state-reward pair $(s,r)$
the probabilities $y_{s,r}$
to reach $\goal$ from $s$ and the corresponding partial
expectation $\theta_{s,r}$
for the scheduler given by the decisions in the action table.
The values for $r=\saturation$ are given by
$\action(s,\saturation)=\maxsched(s)$,
$y_{s,\saturation}=\Pr^{\maxsched}_{\cM,s}(\Diamond \goal)$ and
$\theta_{s,\saturation}=\Exp{\maxsched}{\cM,s}$.
The candidates for the decisions at level $r < \saturation$ are given by the
deterministic memoryless schedulers $\psched$ for $\cM$.
We write $\psched_{+}$ for the reward-based scheduler given by
$\psched_{+}(s,0)=\psched(s)$ and
$\psched_{+}(s,i)=\action(s,\min\{\saturation,r{+}i\})$ for $i \geqslant 1$.
Let $y_{s,r,\psched} = \Pr^{\psched_{+}}_{\cM,s}(\Diamond \goal)$
and $\theta_{s,r,\psched} = \Exp{\psched_{+}}{\cM,s}$
be the corresponding partial expectation.

To determine feasible actions for level $r$, the threshold 
algorithm makes use of a variant of
\eqref{magic-constraint} stating that if
$\theta - (\threshold {-} r)y \geqslant \zeta - (\threshold {-}r)z$
and $\mathsf{B} \unrhd \threshold$ then
$\mathsf{A} \unrhd \threshold$,
where $\mathsf{A}$ and $\mathsf{B}$ 
are as in \eqref{magic-constraint} and the requirement
$y>z$ is dropped.
Thus, the aim of the threshold algorithm is to compute a
deterministic memoryless scheduler $\psched^*$ for $\cM$
such that
the following condition \eqref{difference} holds:
\begin{equation*}
  \label{difference}
  \theta_{s,r,\psched^*} - (\threshold {-} r)\cdot y_{s,r,\psched^*}
  \ \ = \ \
  \max\limits_{\psched} \
    \bigl( \
           \theta_{s,r,\psched} - (\threshold {-} r)\cdot y_{s,r,\psched}
           \
    \bigr)
  \tag{*}
\end{equation*}
Such a scheduler $\psched^*$ is computable in time polynomial in the size
of $\cM$ (without the explicit consideration of all schedulers $\psched$
and their extensions $\psched_{+}$)
using the following linear program with one variable $x_s$ for each
state. The objective is to minimize $\sum\limits_{s\in S} x_s$ subject
to the following conditions:
\begin{center}
    \begin{tabular}{ll}
      (1) &
      If $s \in S \setminus \{\goal,\fail\}$ then
      for each action $\alpha \in \Act(s)$ with $\rew(s,\alpha)=0$:
      \\[1ex]
      &
      \qquad \qquad
      $x_s \ \ \geqslant \ \ \sum\limits_{t\in S} P(s,\alpha,t) \cdot x_t$
      \\[1ex]
      (2) &
      If $s \in S \setminus \{\goal,\fail\}$ then
      for each action $\alpha \in \Act(s)$ with $\rew(s,\alpha)>0$:
      \\[1ex]
      &
      \qquad \qquad
      $x_s \ \ \geqslant \ \
       \sum\limits_{t\in S} P(s,\alpha,t) \cdot
            \bigl( \, \theta_{t,R} + \rew(s,\alpha) \cdot y_{t,R} \, - \,
                (\threshold {-} r) \cdot y_{t,R} \, \bigr)$
      \\[1ex]
      &
      where $R=\min\{\saturation,r{+}\rew(s,\alpha)\}$
      \\[1ex]
      (3) &
      For the trap states: \ \
      $x_{\goal} = r-\threshold$ \ \ and  \ \
      $x_{\fail}=0$
  \end{tabular}
\end{center}
This linear program has a unique solution $(x_s^*)_{s\in S}$.
Let $\Act^*(s)$ denote the set of actions $\alpha \in \Act(s)$ such that
the following constraints (E1) and (E2) hold:
\begin{eqnarray*}
     \text{\rm (E1)} & \ \  &
     \text{If $\rew(s,\alpha)=0$ then:}
     \  \ x_s^* \  =  \ \sum_{t\in S} P(s,\alpha,t)\cdot x_t^*
     \\[1ex]
     \text{\rm (E2)} &  &
     \text{If $\rew(s,\alpha)>0$ and
           $R\, =\,
       \min \, \bigl\{\, \saturation,\, r{+}\rew(s,\alpha)\, \bigr\}$ then:}
     \\[1ex]
     & &
     \hspace*{1cm}
     \ x_s^* \  =  \ \sum_{t\in S} P(s,\alpha,t)\cdot
        \bigl( \theta_{t,R} + \rew(s,\alpha)\cdot y_{t,R}
                  \, - \, (\threshold {-} r)\cdot y_{t,R} \bigr)
\end{eqnarray*}
Let $\cM^*=\cM^*_{r,\threshold}$ denote the MDP  with state space $S$
induced by the state-action pairs $(s,\alpha)$ with
$\alpha \in \Act^*(s)$ where the positive-reward actions are redirected to
the trap states. Formally, for $s,t\in S$, $\alpha \in \Act^*(s)$
we let
$P_{\cM^*}(s,\alpha,t)=P(s,\alpha,t)$ if $\rew(s,\alpha)=0$
and
$P_{\cM^*}(s,\alpha,\goal) =
    \sum_{t\in S} P(s,\alpha,t) \cdot y_{t,R}$
and
$P_{\cM^*}(s,\alpha,\fail) =   1 - P_{\cM^*}(s,\alpha,\goal)$
if $\rew(s,\alpha)>0$
and $R=\min \{\saturation,r{+}\rew(s,\alpha)\}$.
The reward structure of $\cM^*$ is irrelevant for our purposes.

A scheduler $\psched^*$ satisfying \eqref{difference} is obtained
by computing a memoryless deterministic scheduler for $\cM^*$
with
$\Pr^{\psched^*}_{\cM^*,s}(\Diamond \goal) =
 \Pr^{\max}_{\cM^*,s}(\Diamond \goal)$ for all states $s$.
This scheduler $\psched^*$ indeed
provides feasible decisions for level $r$, i.e.,
if $\CExp{\max} \unrhd \threshold$ where
$\unrhd \in \{>,\geqslant\}$ then
there exists a reward-based scheduler $\sched$ with
$\CExp{\sched} \unrhd \threshold$,
$\sched(s,r)=\psched^*(s)$ and
$\sched(s,R)=\action(s,\min\{\saturation,R\})$ for all $R > r$.

The threshold algorithm then puts 
$\action(s,r)=\psched^*(s)$
and
computes the values $y_{s,r}$ and $\theta_{s,r}$ as follows.
Let $T$ denote the set of states $s\in S \setminus \{\goal,\fail\}$
where $\rew(s,\psched^*(s))>0$.
For $s\in T$, the values $y_{s,r}=y_{s,r,\psched^*}$ and
$\theta_{s,r}= \theta_{s,r,\psched^*}$
can be derived directly from the results obtained for the previously treated
levels $r{+}1,\ldots,\saturation$ as we have:
\begin{center}
 \begin{tabular}{lll}
  $y_{s,r} =  \sum\limits_{t\in S} P(s,\alpha,t)\cdot y_{t,R}$
  & \ and \ &
  $\theta_{s,r} =
  \rew(s,\alpha)\cdot y_{s,r}  +
    \sum\limits_{t\in S} P(s,\alpha,t)\cdot \theta_{t,R}$
 \end{tabular}
\end{center}
where $\alpha = \psched^*(s)$ and
$R = \min \{\saturation, r{+}\rew(s,\alpha)\}$.
For the states $s\in S \setminus T$:
\begin{center}
   $y_{s,r} =
    \sum\limits_{t\in T}
        \Pr^{\psched^*}_{\cM,s}(\neg T \Until t) \cdot y_{t,r}$
   \ \ and \ \
   $\theta_{s,r} =
    \sum\limits_{t\in T}
        \Pr^{\psched^*}_{\cM,s}(\neg T \Until t) \cdot \theta_{t,r}$
\end{center}
Having treated the last level $r=0$, the output of the algorithm is as follows.
Let $\sched$ be the scheduler given by the action
table $\action(\cdot)$.
For the conditional expectation we have
$\CExp{\sched} = \theta_{\sinit,0}/y_{\sinit,0}$
if $y_{\sinit,0}>0$.
If $y_{\sinit,0}=0$ or $\theta_{\sinit,0}/y_{\sinit,0} < \threshold$
then the algorithm returns the answer ``no''.
Otherwise, the algorithm returns $\sched$, in which case
$\CExp{\sched} > \threshold$ or $\CExp{\sched}=\threshold = \CExp{\max}$.
Proofs for the soundness and the exponential
time complexity are provided in
\CiteAppendix{Appendix~\ref{appendix:threshold}. }%

\begin{example}
 \label{example:threshold-algo}
{\rm
 For the MDP $\cM[\rewparam]$ in Example \ref{example:running-example},
 scheduler $\maxsched$ selects action $\alpha$ for state $s=s_2$.
 Thus, $\action(s,\saturation)=\alpha$ for the computed
 saturation point $\saturation \geqslant \rewparam {+} 2$
 (see Example \ref{example:saturation}).
 The threshold algorithm for each positive rational threshold
 $\threshold$ computes for each level 
 $r=\saturation{-}1, \saturation{-}2,\ldots,1,0$ where
 $\action(s,r{+}1)=\alpha$, the value
 $x_s^*=\max \{ r{-}\threshold, \frac{1}{2}+\frac{1}{2}(r{-}\threshold) \}$
 and the action set $\Act^*(s)=\{\alpha\}$ if $r>\threshold{+}1$,
 $\Act^*(s)=\{\alpha,\beta\}$ if $r=\threshold{+}1$ and
 $\Act^*(s)=\{\beta\}$ if $r<\threshold{+}1$.
 Thus, if $n=\min \{ \saturation, \lceil \threshold {+} 1 \rceil\}$ 
 then
 $\action(s,r)=\alpha$, $y_{s,r}=1$, $\theta_{s,r}=0$
 for $r \in \{n,\ldots,\saturation\}$, while
 $\action(s,n{-}k)=\beta$, $y_{s,n{-}k}=1/2^k$, $\theta_{s,n{-}k}=k/2^k$
 for $k=1,\ldots,n$. That is, 
 the threshold algorithm computes the 
 scheduler $\sched_n$ that selects $\beta$ for the first
 $n$ visits of $s$ and $\alpha$ for the $(n{+}1)$-st visit of $s$. 
 Thus, if $\rewparam \leqslant \threshold < \rewparam {+} 1$
 then $n = \rewparam {+} 2$, in which case the computed scheduler
 $\sched_n$ is optimal (see Example \ref{example:running-example}).
 The returned answer depends on whether $\threshold \leqslant \CExp{\max}$.
 If, for instance, $\threshold = \frac{\rewparam}{2}$ and $\rewparam >0$ 
 is even then the threshold algorithm returns the scheduler
 $\sched_{n}$ where $n=\frac{\rewparam}{2}{+}1$, whose conditional
 expectation is 
 $\rewparam - (\frac{\rewparam}{2}{-}1)/(2^{\frac{\rewparam}{2}+1}{+}1)
  > \frac{\rewparam}{2}=\threshold$.
\Ende}
\end{example}

\tudparagraph{1ex}{{\it MDPs without zero-reward cycles and acyclic MDPs.}}
If $\cM$ does not contain zero-reward cycles then there is no need for
the linear program. Instead we can use a topological sorting of the
states in the graph of the sub-MDP consisting of zero-reward actions
and determine a scheduler $\psched^*$ satisfying \eqref{difference} directly.
For acyclic MDPs, there is even no need for a saturation point.
We can explore $\cM$ using a recursive procedure and determine
feasible decisions for each reachable state-reward pair $(s,r)$
on the basis of \eqref{difference}. This yields a polynomially space-bounded
algorithm to decide whether $\CExp{\max} \unrhd \threshold$ in acyclic
MDPs. 
\CiteAppendix{(See Appendix~\ref{appendix:PSPACE}.) }%

\tudparagraph{1ex}{Construction of an optimal scheduler.}
Let $\SchedThreshold{\threshold}$ denote the scheduler that is generated
by calling the threshold algorithm for the threshold value $\threshold$.
A simple approach is to apply the threshold algorithm iteratively:
\begin{enumerate}
\item []
    let $\sched$ be the scheduler
    $\maxsched$ as in Proposition \ref{prop:saturation-maxsched};
\item []
    {\tt REPEAT}
       \
       $\threshold := \CExp{\sched}$; \
       $\sched := \SchedThreshold{\threshold}$ \ \
    {\tt UNTIL} $\threshold = \CExp{\sched}$;
\item []
    return $\threshold$ and $\sched$
\end{enumerate}
The above algorithm generates a sequence of
deterministic reward-based schedulers that are memoryless from
$\saturation$ on with strictly increasing conditional expectations.
The number of such schedulers is bounded by $\md^{\saturation}$
where $\md$ denotes the
number of memoryless deterministic schedulers for $\cM$.
Hence, the algorithm terminates and
correctly returns $\CExp{\max}$ and an optimal scheduler.
As $\md$ can be exponential in the number of states,
this simple algorithm has double-exponential time complexity.

To obtain a (single) exponential-time algorithm, we seek for
better (larger, but still promising) 
threshold values than the conditional expectation of
the current scheduler.
We propose an algorithm that operates level-wise and
freezes optimal decisions for levels
$r=\saturation,\saturation{-}1,\saturation{-}2,\ldots,1,0$.
The algorithm maintains and successively improves a left-closed and
right-open interval
$I = [A,B[$ with $\CExp{\max}\in I$ and
$\CExp{\sched} \in I$ for the current scheduler $\sched$.

\tudparagraph{0.5ex}{{\it Initialization.}}
The algorithm starts with the scheduler
$\sched= \SchedThreshold{\CExp{\maxsched}}$ where $\maxsched$ is
as above.
If $\CExp{\sched}=\CExp{\maxsched}$ then the algorithm immediately
terminates.
Suppose now that $\CExp{\sched} > \CExp{\maxsched}$.
The initial interval is $I = [A,B[$ where $A = \CExp{\sched}$
and $B = \CExp{\ub}{+}1$ where $\CExp{\ub}$ is as in
Theorem \ref{thm:finiteness}.

\tudparagraph{0.5ex}{{\it Level-wise scheduler improvement.}}
The algorithm successively determines
optimal decisions for the levels
$r=\saturation{-}1,\saturation{-}2,\ldots,1,0$.
The treatment of level $r$ consists of a sequence of scheduler-improvement
steps where at the same time the interval $I$ is replaced with proper
sub-intervals.
The current scheduler $\sched$ has been obtained by the last successful run
of the threshold algorithm, i.e., it has the form
$\sched=\SchedThreshold{\threshold}$ where $\CExp{\sched} > \threshold$.
Besides the decisions of $\sched$ (i.e., the actions
$\sched(s,R)$ for all state-reward pairs
$(s,R)$ where $s\in S \setminus \{\goal,\fail\}$ and
$R\in \{0,1,\ldots,\saturation\}$), the algorithm also stores the values
$y_{s,R}$ and $\theta_{s,R}$
that have been computed in the threshold algorithm.%
\footnote{%
  As the decisions of the already treated levels are optimal,
  the values $y_{s,R}$ and $\theta_{s,R}$
  for $R \in \{r{+}1,\ldots,\saturation\}$ can be reused in the calls
  of the threshold algorithms.
  That is, the calls of the threshold algorithm that are invoked
  in the scheduler-improvement steps at level $r$ can skip levels
  $\saturation,\saturation{-}1,\ldots,r{+}1$ and only need to process
  levels $r,r{-}1,\ldots,1,0$.}
For the current level $r$, the algorithm also computes
for each state $s\in S \setminus \{\goal,\fail\}$ and
each action $\alpha \in \Act(s)$ the values
$y_{s,r,\alpha}  =  \sum_{t\in S} P(s,\alpha,t) \cdot y_{t,R}$
and
$\theta_{s,r,\alpha}  =
   \rew(s,\alpha) \cdot y_{s,r,\alpha} \ + \
   \sum_{t\in S} P(s,\alpha,t) \cdot \theta_{t,R}$
where $R = \min \{ \saturation, r+\rew(s,\alpha)\}$.

\tudparagraph{1ex}{{\it Scheduler-improvement step.}}
Let $r$ be the current level,
$I = [A,B[$ the current interval
and $\sched$ the current scheduler with $\CExp{\max}\in I$.
At the beginning of the scheduler-improvement step we have
$\CExp{\sched} =A$.
Let
$$
\begin{array}{rcl}
  \cI_{\sched,r} & \ = \ &
  \Bigl\{ \
     r + \frac{\theta_{s,r} - \theta_{s,r,\alpha}}{y_{s,r}-y_{s,r,\alpha}}
     \ : \
     s \in S \setminus \{\goal,\fail\}, \
     \alpha \in \Act(s), \
     y_{s,r} > y_{s,r,\alpha}
     \
  \Bigr\}
  \\
  \\[-1ex]
  \cI^{\uparrow}_{\sched,r} & = &
  \bigl\{ \ d \in \cI_{\sched,r} \ : \
            d \ \geqslant \ \CExp{\sched}
  \bigr\}
  \hspace*{1cm}
  \cI^{B}_{\sched,r} \ \ = \ \
   \bigl\{ \ d \in \cI_{\sched,r} \ : \
             d < B \
  \bigr\}
\end{array}
$$
Intuitively, the values in $d \in \cI^B_{\sched,r}$ are the ``most promising''
threshold values, as according to statement \eqref{magic-constraint} 
these are the points where the decision of the
current scheduler $\sched$ for some state-reward pair $(s,r)$ can be
improved, provided that $\CExp{\max} > d$.
(Note that the values in $\cI_{\sched,r}\setminus \cI^B_{\sched,r}$ 
can be discarded as $\CExp{\max} < B$.)

The algorithm proceeds as follows.
If $\cI^{B}_{\sched,r} = \varnothing$ then no further improvements
at level $r$ are possible as the function $\psched^* = \sched(\cdot,r)$
satisfies \eqref{difference} for the (still unknown) value
$\threshold=\CExp{\max}$.
\CiteAppendix{See Lemma~\ref{lemma:freeze-level-r} in the appendix. }%
In this case:
\begin{itemize}
  \item
    If $r=0$ then the algorithm terminates with the
    answer $\CExp{\max}=\CExp{\sched}$ and $\sched$ as an optimal scheduler.
  \item
    If $r > 0$ then the algorithm goes to the next level $r{-}1$
    and performs the scheduler-improvement step for $\sched$ at level
    $r{-}1$.
\end{itemize}
Suppose now that $\cI^B_{\sched,r}$ is nonempty.
Let $\cK = \cI^{\uparrow}_{\sched,r} \cup \{\CExp{\sched}\}$.
The algorithm seeks for the largest
value $\threshold' \in \cK \cap I$
such that $\CExp{\max}\geqslant \threshold'$.
More precisely, it successively
calls the threshold algorithm for the threshold value
$\threshold'=\max (\cK \cap I)$
and performs the following steps
for the generated scheduler
$\sched' = \SchedThreshold{\threshold'}$:
\begin{itemize}
\item
 If the result of the threshold algorithm is ``no''
 and $\Pr^{\sched'}_{\cM,\sinit}(\Diamond \goal)$ is positive
 (in which case
  $\CExp{\sched'} \leqslant \CExp{\max} < \threshold'$), then:
 \begin{itemize}
 \item
   If $\CExp{\sched'} \leqslant A$ then the algorithm
   refines $I$ by putting $B:=\threshold'$.
 \item
    If $\CExp{\sched'} > A$
    then the algorithm refines $I$ by putting $A := \CExp{\sched'}$,
    $B := \threshold'$
    and adds $\CExp{\sched'}$ to $\cK$
    (Note that then $\CExp{\sched'}\in \cK \cap I$, while
     $\CExp{\sched} \in \cK \setminus I$.)
 \end{itemize}
\item
  Suppose now that
  $\CExp{\sched'} \geqslant \threshold '$.
  The algorithm terminates if $\CExp{\sched'} = \threshold'$,
  in which case $\sched'$ is optimal.
  Otherwise, i.e., if $\CExp{\sched'} > \threshold'$, then the algorithm
  aborts the loop by putting $\cK:= \varnothing$,
  refines the interval $I$ by putting $A := \CExp{\sched'}$,
  updates the current scheduler by setting $\sched := \sched'$
  and performs the next scheduler-improvement step.
\end{itemize}
The soundness proof and complexity analysis
can be found in
\CiteAppendix{Appendix~\ref{appendix:cexpmax}, }%
where (among others)
we show that the scheduler-improvement step for schedulers
$\sched$ with $\CExp{\sched}<\CExp{\max}$ terminates
with some scheduler $\sched'$ such that $\CExp{\sched} < \CExp{\sched'}$.
The total number of calls of the threshold algorithm
is in $\cO(\saturation \cdot \md \cdot |S|\cdot |\Act|)$.
This yields an exponential time bound as stated in
Theorem~\ref{thm:computing-cexpmax}.

\begin{example}
{\rm
We regard again the MDP $\cM[\rewparam]$ of 
Example \ref{example:running-example} where we suppose
$\rewparam$ is positive and even.
The algorithm first computes $\CExp{\ub}$ 
(see Section \ref{sec:finiteness}), a saturation point
$\saturation \geqslant \rewparam {+} 2$ (see Example \ref{example:saturation}),
the scheduler $\maxsched$, its conditional expectation
$\CExp{\maxsched}=\frac{\rewparam}{2}$ 
and the scheduler $\sched=\SchedThreshold{\frac{\rewparam}{2}}$.
The initial interval is $I=[A,B[$ where 
$A=\CExp{\sched} = 
 \rewparam - (\frac{\rewparam}{2}{-}1)/(2^{\frac{\rewparam}{2}+1}{+}1)$
(see Example \ref{example:threshold-algo})
and $B = \CExp{\ub}{+}1$.
The scheduler improvement step for $\sched$ at levels 
$r=\saturation {-}1,\ldots,\rewparam {+}1$
determines the set $\cI_{\sched,r}=\{r{-}1\}$ and calls the
threshold algorithm for $\threshold'=r{-}1$.
These calls are not successful for 
$r=\saturation {-}1,\ldots,\rewparam {+}2$. That is, the scheduler
$\sched$ remains unchanged and the upper bound $B$ is successively
improved to $r{-}1$.
At level $r=\rewparam {+}1$, the threshold algorithm is 
called for $\threshold' = \rewparam$, which yields
the optimal scheduler $\sched'=\SchedThreshold{\threshold'}$
(see Example \ref{example:threshold-algo}).
\Ende}
\end{example}

\tudparagraph{0ex}{Implementation and experiments.}
We have implemented the algorithms presented in this paper
as a prototypical extension of the model checker
PRISM~\cite{prism40,PrismWebSite} and carried out initial
experiments to demonstrate the general feasibility of our approach
(see 
\mbox{\url{https://wwwtcs.inf.tu-dresden.de/ALGI/PUB/TACAS17/}}
and
\CiteAppendix{Appendix~\ref{appendix:implementation} }%
 for details).

\section{Conclusion}
\label{sec:conclusion}

Although the switch to conditional expectations appears rather natural
to escape from the limitations of known solutions for unconditional
extremal expected accumulated rewards,
to the best of our knowledge
computation schemes for conditional expected accumulated rewards
have not been addressed before.
Our results show that new techniques are needed to compute
maximal conditional expectations,
as optimal schedulers might need memory
and local reasoning in terms of the past and possible future
is not sufficient (Example \ref{example:running-example}).
The key observations for our algorithms are the existence of a saturation
point $\saturation$ for the reward that has been accumulated so far,
from which on optimal schedulers can behave memoryless, and
a linear correlation between optimal decisions for all
state-reward pairs $(s,r)$ of the same reward level $r$
(see \eqref{difference} and the linear program used in the threshold
algorithm).
The difficulty to reason about conditional expectations
is also reflected in the achieved complexity-theoretic
results stating that all variants of the threshold problem lie
between PSPACE and EXPTIME.
While PSPACE-completeness has been established for acyclic MDPs
\CiteAppendix{(Appendix~\ref{appendix:PSPACE}), }%
the precise complexity for cyclic MDPs is still open.
  In contrast, optimal schedulers for unconditional expected accumulated
  rewards as well as for conditional reachability probabilities
  are computable in polynomial time
  \cite{deAlf99,BKKM14}.

  Using standard automata-based approaches, our method can easily
  be generalized to compute maximal conditional expected rewards for
  regular co-safety conditions (rather than reachability conditions
  $\Diamond G$) and/or where the accumulation of rewards is 
  ``controlled'' by a deterministic finite automaton
  as in the logics considered in \cite{BokChaHenKupf11,BKKW14}
  (rather than $\accdiaplus F$).
 In this paper, we restricted to MDPs with non-negative integer rewards.
 Non-negative rational rewards can be treated by 
 multiplying all reward values with their least common multiple
\CiteAppendix{(Appendix~\ref{appendix:rational-rewards}). }%
 In the case of acyclic MDPs, our methods are even applicable
 if the MDP has negative and positive rational rewards
\CiteAppendix{ (Appendix~\ref{appendix:negative-rewards}). }%
 By swapping the sign of all rewards, this 
 yields a technique to compute minimal conditional expectations
 in acyclic MDPs.
 We expect that minimal conditional
 expectations in cyclic MDPs with non-negative rewards
 can be computed using 
 similar algorithms as we suggested for maximal conditional expectations.
 This as well as 
 MDPs with negative and positive rewards will be addressed in
 future work.

\bibliographystyle{abbrv}  %
\bibliography{lit-compact}

\begin{thebibliography}{10}

\bibitem{AbdHenMayr07}
P.~A. Abdulla, N.~B. Henda, and R.~Mayr.
\newblock Decisive {Markov} chains.
\newblock {\em Logical Methods in Computer Science}, 3(4), 2007.

\bibitem{AcerbiTasche02}
C.~Acerbi and D.~Tasche.
\newblock Expected shortfall: A natural coherent alternative to value at risk.
\newblock {\em Economic Notes}, 31(2):379--388, 2002.

\bibitem{AlAndChaDegPal15}
M.~S. Alvim, M.~E. Andr{\'{e}}s, K.~Chatzikokolakis, P.~Degano, and
  C.~Palamidessi.
\newblock On the information leakage of differentially-private mechanisms.
\newblock {\em Journal of Computer Security}, 23(4):427--469, 2015.

\bibitem{AlvChaMcIMorPalSm16}
M.~S. Alvim, K.~Chatzikokolakis, A.~McIver, C.~Morgan, C.~Palamidessi, and
  G.~Smith.
\newblock Axioms for information leakage.
\newblock In {\em Proc. Computer Security Foundations Symposium (CSF)}, pages
  77--92. {IEEE} Computer Society, 2016.

\bibitem{AlChaPalSmith12}
M.~S. Alvim, K.~Chatzikokolakis, C.~Palamidessi, and G.~Smith.
\newblock Measuring information leakage using generalized gain functions.
\newblock In {\em Proc. Computer Security Foundations Symposium (CSF)}, pages
  265--279. {IEEE} Computer Society, 2012.

\bibitem{AndresPhD2011}
M.~E. Andr{\'e}s.
\newblock {\em Quantitative Analysis of Information Leakage in Probabilistic
  and Nondeterministic Systems}.
\newblock PhD thesis, UB Nijmegen, 2011.

\bibitem{AndPalRosSok11}
M.~E. Andr{\'{e}}s, C.~Palamidessi, P.~van Rossum, and A.~Sokolova.
\newblock Information hiding in probabilistic concurrent systems.
\newblock {\em Theoretical Computer Science}, 412(28):3072--3089, 2011.

\bibitem{AndRoss08}
M.~E. Andr{\'e}s and P.~van Rossum.
\newblock Conditional probabilities over probabilistic and nondeterministic
  systems.
\newblock In {\em Proc. Tools and Algorithms for the Construction and Analysis
  of Systems (TACAS)}, volume 4963 of {\em LNCS}, pages 157--172. Springer,
  2008.

\bibitem{BDKKW14-Festschrift}
C.~Baier, C.~Dubslaff, J.~Klein, S.~Kl{\"u}ppelholz, and S.~Wunderlich.
\newblock Probabilistic model checking for energy-utility analysis.
\newblock In {\em Horizons of the Mind. A Tribute to Prakash Panangaden},
  volume 8464 of {\em LNCS}, pages 96--123. Springer, 2014.

\bibitem{BaierKatoen08}
C.~Baier and J.-P. Katoen.
\newblock {\em Principles of Model Checking}.
\newblock MIT Press, 2008.

\bibitem{BKKM14}
C.~Baier, J.~Klein, S.~Kl{\"u}ppelholz, and S.~M{\"a}rcker.
\newblock Computing conditional probabilities in {Markovian} models
  efficiently.
\newblock In {\em Proc. Tools and Algorithms for the Construction and Analysis
  of Systems (TACAS)}, volume 8413 of {\em LNCS}, pages 515--530. Springer,
  2014.

\bibitem{BKKW14}
C.~Baier, J.~Klein, S.~Kl{\"u}ppelholz, and S.~Wunderlich.
\newblock Weight monitoring with linear temporal logic: Complexity and
  decidability.
\newblock In {\em Proc. Computer Science Logic/Logic In Computer Science
  (CSL-LICS)}, pages 11:1--11:10. ACM, 2014.

\bibitem{BartheEFH16}
G.~Barthe, T.~Espitau, L.~M.~F. Fioriti, and J.~Hsu.
\newblock Synthesizing probabilistic invariants via {Doob's} decomposition.
\newblock In {\em Proc. Computer Aided Verification (CAV)}, volume 9779 of {\em
  LNCS}, pages 43--61. Springer, 2016.

\bibitem{BerTsi91}
D.~P. Bertsekas and J.~N. Tsitsiklis.
\newblock An analysis of stochastic shortest path problems.
\newblock {\em Mathematics of Operations Research}, 16(3):580--595, 1991.

\bibitem{BerYu16}
D.~P. Bertsekas and H.~Yu.
\newblock Stochastic path problems under weak conditions.
\newblock Technical report, M.I.T. Cambridge, 2016.
\newblock Report LIDS 2909.

\bibitem{BokChaHenKupf11}
U.~Boker, K.~Chatterjee, T.~A. Henzinger, and O.~Kupferman.
\newblock Temporal specifications with accumulative values.
\newblock In {\em Proc. Logic in Computer Science (LICS)}, pages 43--52. IEEE
  Computer Society, 2011.

\bibitem{BBCFK-TwoViews14}
T.~Br{\'a}zdil, V.~Brozek, K.~Chatterjee, V.~Forejt, and A.~Kucera.
\newblock Two views on multiple mean-payoff objectives in {Markov} decision
  processes.
\newblock {\em Logical Methods in Computer Science}, 10(1), 2014.

\bibitem{BraKuc05}
T.~Br{\'{a}}zdil and A.~Kucera.
\newblock Computing the expected accumulated reward and gain for a subclass of
  infinite {Markov} chains.
\newblock In {\em Proc. Foundations of Software Technology and Theoretical
  Computer Science (FSTTCS)}, volume 3821 of {\em LNCS}, pages 372--383.
  Springer, 2005.

\bibitem{ChatterjeeFG16}
K.~Chatterjee, H.~Fu, and A.~K. Goharshady.
\newblock Termination analysis of probabilistic programs through
  {Positivstellensatz's}.
\newblock In {\em Proc. Computer Aided Verification (CAV)}, volume 9779 of {\em
  LNCS}, pages 3--22. Springer, 2016.

\bibitem{ChaPalBraun16}
K.~Chatzikokolakis, C.~Palamidessi, and C.~Braun.
\newblock Compositional methods for information-hiding.
\newblock {\em Mathematical Structures in Computer Science}, 26(6):908--932,
  2016.

\bibitem{CBGK08}
F.~Ciesinski, C.~Baier, M.~Gr{\"o}{\ss}er, and J.~Klein.
\newblock Reduction techniques for model checking {M}arkov decision processes.
\newblock In {\em Proc. Quantitative Evaluation of Systems (QEST)}, pages
  45--54. IEEE Computer Society Press, 2008.

\bibitem{CVWY92}
C.~Courcoubetis, M.~Y. Vardi, P.~Wolper, and M.~Yannakakis.
\newblock Memory-efficient algorithms for the verification of temporal
  properties.
\newblock {\em Formal Methods in System Design}, 1(2):275--288, 1992.

\bibitem{deAlf99}
L.~de~Alfaro.
\newblock Computing minimum and maximum reachability times in probabilistic
  systems.
\newblock In {\em Proc. Concurrency Theory (CONCUR)}, volume 1664 of {\em
  LNCS}, pages 66--81, 1999.

\bibitem{GretzKatMcIver14}
F.~Gretz, J.~Katoen, and A.~McIver.
\newblock Operational versus weakest pre-expectation semantics for the
  probabilistic guarded command language.
\newblock {\em Performance Evaluation}, 73:110--132, 2014.

\bibitem{HaaseKiefer15}
C.~Haase and S.~Kiefer.
\newblock The odds of staying on budget.
\newblock In {\em Proc. International Colloquium on Automata, Languages, and
  Programming (ICALP), Part {II}}, volume 9135 of {\em LNCS}, pages 234--246.
  Springer, 2015.

\bibitem{HadMon14}
S.~Haddad and B.~Monmege.
\newblock Reachability in {MDPs}: Refining convergence of value iteration.
\newblock In {\em Proc. Reachability Problems (RP)}, volume 8762 of {\em LNCS},
  pages 125--137. Springer, 2014.

\bibitem{JKKGMcI15}
N.~Jansen, B.~L. Kaminski, J.~Katoen, F.~Olmedo, F.~Gretz, and A.~McIver.
\newblock Conditioning in probabilistic programming.
\newblock In {\em Proc. Mathematical Foundations of Programming Semantics
  (MFPS)}, volume 319 of {\em Electronic Notes Theoretical Computer Science},
  pages 199--216, 2015.

\bibitem{Kallenberg}
L.~Kallenberg.
\newblock {\em Markov Decision Processes}.
\newblock Lecture Notes. University of Leiden, 2011.

\bibitem{KatoenGJKO15}
J.~Katoen, F.~Gretz, N.~Jansen, B.~L. Kaminski, and F.~Olmedo.
\newblock Understanding probabilistic programs.
\newblock In {\em Correct System Design}, volume 9360 of {\em LNCS}, pages
  15--32. Springer, 2015.

\bibitem{prism40}
M.~Kwiatkowska, G.~Norman, and D.~Parker.
\newblock {PRISM 4.0}: Verification of probabilistic real-time systems.
\newblock In {\em Proc. Computer Aided Verification (CAV)}, volume 6806 of {\em
  LNCS}, pages 585--591. Springer, 2011.

\bibitem{KwiatkowskaNP12}
M.~Z. Kwiatkowska, G.~Norman, and D.~Parker.
\newblock The {PRISM} benchmark suite.
\newblock In {\em Proc. Quantitative Evaluation of Systems (QEST)}, pages
  203--204. {IEEE} Computer Society, 2012.

\bibitem{PrismWebSite}
{PRISM} model checker.
\newblock \url{http://www.prismmodelchecker.org/}.

\bibitem{Puterman}
M.~L. Puterman.
\newblock {\em Markov Decision Processes: Discrete Stochastic Dynamic
  Programming}.
\newblock John Wiley \& Sons, Inc., New York, NY, 1994.

\bibitem{RanRasSan15}
M.~Randour, J.~Raskin, and O.~Sankur.
\newblock Variations on the stochastic shortest path problem.
\newblock In {\em Proc. Verification, Model Checking, and Abstract
  Interpretation (VMCAI)}, volume 8931 of {\em LNCS}, pages 1--18. Springer,
  2015.

\bibitem{SeberLee}
G.~Seber and A.~Lee.
\newblock {\em Linear Regression Analysis}.
\newblock Wiley Series in Probability and Statistics, 2003.

\bibitem{UB13}
M.~Ummels and C.~Baier.
\newblock Computing quantiles in {Markov} reward models.
\newblock In {\em Proc. Foundations of Software Science and Computation
  Structures (FOSSACS)}, volume 7794 of {\em LNCS}, pages 353--368. Springer,
  2013.

\bibitem{Uryasev00}
S.~Uryasev.
\newblock Conditional value-at-risk: optimization algorithms and applications.
\newblock In {\em Proc. Computational Intelligence and Financial Engineering
  (CIFEr)}, pages 49--57. {IEEE}, 2000.

\end{thebibliography}

\newpage
\appendix

\section*{APPENDIX}

\tudparagraph{1ex}{Outline of the appendix.}

\begin{itemize}
\item
  Appendix~\ref{appendix:notations}
  (Relevant notations for the appendix, page~\pageref{appendix:notations})
  explains the notations used in the appendix.
\item 
 Appendix~\ref{appendix:reset-Methode}
 (Extremal conditional probabilities, page~\pageref{appendix:reset-Methode})
 sketches the reset mechanism of \cite{BKKM14}
 for computing
 maximal conditional probabilities (used in
 Section \ref{sec:finiteness} to compute an upper bound
 $\CExp{\ub}$). It also provides an example illustrating
 why the reset method fails for conditional
 expectations.
\item
 Appendix~\ref{appendix:finiteness}
 (Finiteness and upper bound, page~\pageref{appendix:finiteness})
 deals with determining finiteness and the computation of an upper
 bound (Theorem~\ref{thm:finiteness}).
\item
 Appendix~\ref{sec:reward-based}
 (Deterministic reward-based schedulers are sufficient, page~\pageref{sec:reward-based})
 shows that deterministic reward-based schedulers are sufficient for
 maximizing the conditional expectation in our setting.
\item
 Appendix~\ref{appendix:saturation}
 (Existence of a saturation point and optimal schedulers, page~\pageref{appendix:saturation})
 provides the proof for Proposition~\ref{prop:saturation-maxsched}.
\item
 Appendix~\ref{appendix:compute-saturation}
 (Computing a saturation point, page~\pageref{appendix:compute-saturation})
 shows the correctness of the computation of a saturation point
 as described in Section~\ref{sec:threshold}.
\item
 Appendix~\ref{appendix:threshold}
 (Threshold algorithm, page~\pageref{appendix:threshold})
 provides additional details for the threshold algorithm
 as well as the proof of the soundness proof and 
 the exponential time bound as stated in
 Theorem~\ref{thm:threshold-problem}.
\item 
 Appendix~\ref{appendix:cexpmax}
 (Computing an optimal scheduler and the maximal conditional expectation, page~\pageref{appendix:cexpmax}) 
 provides additional details for the scheduler improvement algorithm 
 of Section~\ref{sec:threshold}
 and the proof of Theorem~\ref{thm:computing-cexpmax}.
\item 
 Appendix~\ref{appendix:PSPACE}
 (PSPACE-completeness for acyclic MPDs, page~\pageref{appendix:PSPACE}) 
 provides the PSPACE-completeness proof of the threshold problem
 in acyclic MDPs as stated in 
 Theorem~\ref{thm:threshold-problem}.
\item 
 Appendix~\ref{appendix:rational}
 (Rational and negative rewards, page~\pageref{appendix:rational})
 provides details for rational and negative rewards as mentioned
 in the conclusion.
\item 
 Appendix~\ref{appendix:implementation}
 (Implementation and experiments, page~\pageref{appendix:implementation})
 provides details on our prototypical implementation in PRISM and our
 experiments.
\end{itemize}

\section{Relevant notations for the appendix}

\label{appendix:notations}
\label{sec:notations}

\tudparagraph{1ex}{Notations for Markov decision processes.}
A \emph{Markov decision process} (MDP) is a tuple $\cM = (S,\Act,P,\sinit,\rew)$
where $S$ is a finite set of states,
$\Act$ a finite set of actions,
$\sinit \in S$ the initial state,
$P : S \times \Act \times S \to [0,1] \cap \Rational$ is the
transition probability function and
$\rew : S \times \Act \to \Nat$ the reward function.
We require that
$P(s,\act,S) \in \{0,1\}$
for all $(s,\alpha)\in S\times \Act$ where, for $F \subseteq S$,
$P(s,\act,F) = \sum_{t\in F} P(s,\act,t)$.
We write $\Act(s)$ for the set of actions that are enabled in $s$,
i.e., $\act \in \Act(s)$ iff $P(s,\act,\cdot)$ is not the null function.
State $s$ is called a \emph{trap} if $\Act(s)=\varnothing$.

The paths of $\cM$ are finite or
infinite sequences $s_0 \, \act_0 \, s_1 \, \act_1 \, s_2 \, \act_2 \ldots$
where states and actions alternate such that
$P(s_i,\act_i,s_{i+1}) >0$ for all $i\geqslant 0$.
A path $\fpath$ is called \emph{maximal} if it is either infinite or
finite and its last state is a trap.
If $\fpath =
    s_0 \, \act_0 \, s_1 \, \act_1 \, s_2 \, \act_2 \ldots \act_{k-1} \, s_k$
is finite then
\begin{itemize}
\item
  $\rew(\fpath)=
   \rew(s_0,\act_0) + \rew(s_1,\act_1) + \ldots + \rew(s_{k-1},\act_{k-1})$
  denotes the accumulated reward,
\item
  $|\fpath|=k$ its length (number of transitions),
\item
  $\probability(\fpath) =
   P(s_0,\act_0,s_1) \cdot P(s_1,\act_1,s_2)
   \cdot \ldots \cdot P(s_{k-1},\act_{k-1},s_k)$
   its probability,
\item
  $\first(\fpath)=s_0$, $\last(\fpath)=s_k$ its first resp.~last state.
\end{itemize}
The notation $\fpath_1; \fpath_2$ is used to denote the concatenation of paths
$\fpath_1$ and $\fpath_2$ with $\last(\fpath_1)=\first(\fpath_2)$, where
$\fpath_1 ; \fpath_2 = \fpath_2$ if $|\fpath_1|=0$.

The \emph{size} of $\cM$, denoted $\Size(\cM)$,
is the sum of the number of states
plus the total sum of the logarithmic lengths of the non-zero
probability values
$P(s,\alpha,s')$ and the reward values $\rew(s,\alpha)$.

\tudparagraph{1ex}{Scheduler.}
A \emph{(randomized) scheduler} for $\cM$,
often also called policy or adversary,
is a function $\sched$ that assigns to each finite path $\fpath$ where
$\last(\fpath)$ is not a trap
a probability distribution over $\Act(\last(\fpath))$.
$\sched$ is called memoryless if $\sched(\fpath)=\sched(\fpath')$ for
all finite paths $\fpath$, $\fpath'$ with $\last(\fpath)=\last(\fpath')$,
in which case $\sched$ can be viewed as a function
that assigns to each non-trap state $s$ a distribution over $\Act(s)$.
$\sched$ is called deterministic if $\sched(\fpath)$ is a Dirac distribution
for each path $\fpath$,
in which case $\sched$ can be viewed as a function that assigns an action
to each finite path $\fpath$ where $\last(\fpath)$ is not a trap.
Given a scheduler $\sched$,
a \emph{$\sched$-path} is any path that might arise when the nondeterministic
choices in $\cM$ are resolved using $\sched$. Thus,
$\infpath \, = \, s_0 \, \act_0 \, s_1 \, \act_1 \ldots$
is a $\sched$-path iff $\infpath$ is a path and
$\sched(s_0 \, \act_0 \, s_1 \, \act_1 \ldots \act_{k-1} \, s_k)(\act_k)>0$
for all $k \geqslant 0$.
If $\fpath$ is a finite $\sched$-path then
$\residual{\sched}{\fpath}$ denotes the residual scheduler
``$\sched$ after $\fpath$'' given by:
\begin{center}
  $(\residual{\sched}{\fpath})(\finpath)$
  \ $=$ \
  $\sched(\fpath;\finpath)$
  \ \
  if $\finpath$ is a finite path with $\last(\fpath)=\first(\finpath)$
\end{center}
The behavior of $\residual{\sched}{\fpath}$
for paths not starting in $\last(\fpath)$
is irrelevant.

Let  $\sched$ and $\usched$ be schedulers.
If $\fpath$ is a finite $\sched$-path $\fpath$, then
$\redefresidual{\sched}{\fpath}{\usched}$ denotes the unique scheduler
$\tsched$ with $\residual{\tsched}{\fpath}=\usched$
that behaves as $\sched$ for all paths $\finpath$ where $\fpath$
is not a prefix of $\finpath$. That is:
$$
  (\redefresidual{\sched}{\fpath}{\usched})(\finpath)
  \ \ = \ \
  \left\{
    \begin{array}{lcl}
      \sched(\finpath) & : &
      \text{if $\fpath$ is not a prefix of $\finpath$}
      \\
      \usched(\finpath') & : &
      \text{if $\finpath = \fpath ; \finpath'$}
     \end{array}
  \right.
$$
where $\fpath ; s = \fpath$.
Hence,
$(\redefresidual{\sched}{\fpath}{\usched})(\fpath) =
 \usched( \last(\fpath))$.
If $R \in \Nat$ then $\redefresidual{\sched}{R}{\usched}$ denotes the
scheduler given by
$(\redefresidual{\sched}{R}{\usched})(\finpath)=\sched(\finpath)$
if $\rew(\finpath) < R$ and
$(\redefresidual{\sched}{R}{\usched})(\fpath;\finpath')=\usched(\finpath')$
if $\rew(\fpath)\geqslant R$ and $\rew(\fpath') < R$ for all proper
prefixes $\fpath'$ of $\fpath$.

Scheduler $\sched$ is said to be \emph{reward-based} if
$\sched(\fpath)=\sched(\fpath')$ for all finite paths $\fpath$, $\fpath'$
with $\rew(\fpath)=\rew(\fpath')$ and $\last(\fpath)=\last(\fpath')$.
Thus, deterministic reward-based schedulers
can be viewed as function that assign
actions to state-reward pairs.
As stated in the main paper, for reward-based schedulers we use the
notation $\residual{\sched}{R}$ to denote the reward-based scheduler
given by $(\residual{\sched}{R})(s,r) = \sched(s,R{+}r)$.
We use the notation $\residual{\sched}{(s,R)}$
to denote the scheduler $\residual{\sched}{\fpath}$
for each/some finite $\sched$-path $\fpath$ with $\rew(\fpath)=R$ and
$\last(\fpath)=s$.%
\footnote{Note that if $\sched$ is reward-based then
   $\residual{\sched}{\fpath}= \residual{\sched}{\fpath'}$
   for all finite paths $\fpath$, $\fpath'$ with
   $(\last(\fpath),\rew(\fpath))=(\last(\fpath'),\rew(\fpath'))$.}
Clearly, if $\sched$ and $\usched$ are reward-based schedulers and $R\in \Nat$
then $\redefresidual{\sched}{R}{\usched}$
is a reward-based  scheduler too.  In this case,
$(\redefresidual{\sched}{R}{\usched})(s,r) = \sched(s,r)$ for $r < R$
and
$(\redefresidual{\sched}{R}{\usched})(s,r) = \usched(s,r{-}R)$
for $r \geqslant R$. Hence,
$\residual{(\redefresidual{\sched}{R}{\usched})}{R}=\usched$.

\tudparagraph{1ex}{Probability measure.}
We write $\Pr^{\sched}_{\cM,s}$ or briefly $\Pr^{\sched}_{s}$
to denote the probability measure induced by $\sched$ and $s$.
The underlying
sigma-algebra is the one that is generated by the cylinder sets
of the finite paths starting in $s$
where the cylinder set $\Cyl(\fpath)$ of $\fpath$ consists of all
maximal paths $\infpath$ that are extensions of $\fpath$.
Then, $\Pr^{\sched}_{\cM,s}$ is the unique probability measure
such that for each finite path
$\fpath =
  s_0 \, \act_0 \, s_1 \, \act_1 \, s_2 \, \act_2 \ldots \act_{k-1} \, s_k$:
\begin{center}
 $\Pr^{\sched}_{\cM,s}\bigl(\, \Cyl(\fpath)\, \bigr)
  \ \ = \ \
  \probability(\fpath)\cdot
  \prod\limits_{i=0}^{k-1} \sched(\prefix{\fpath}{i})(\act_i)$
\end{center}
where
$\prefix{\fpath}{i} =
s_0 \, \act_0 \, s_1 \, \act_1 \, s_2 \, \act_2 \ldots \act_{i-1} \, s_i$.
Thus, $\Pr^{\sched}_{\cM,s}(\Cyl(\fpath))=0$
if $\fpath$ is not a $\sched$-path.
Given a measurable set $\psi$ of maximal paths, then
$\Pr^{\min}_{\cM,s}(\psi) = \inf_{\sched} \Pr^{\sched}_{\cM,s}(\psi)$
and
$\Pr^{\max}_{\cM,s}(\psi) = \sup_{\sched} \Pr^{\sched}_{\cM,s}(\psi)$
where $\sched$ ranges over all schedulers for $\cM$.
If $\Pr^{\sched}_{\cM,s}(\psi) >0$ then the
conditional probability measure
$\Pr^{\sched}_{\cM,s}(\ \cdot \ | \psi)$ is
given by
$\Pr^{\sched}_{\cM,s}(\varphi | \psi) =
 \Pr^{\sched}_{\cM,s}(\varphi \cap \psi)/\Pr^{\sched}_{\cM,s}(\psi)$.
We write
$\Pr^{\max}_{\cM,s}(\varphi | \psi)$
for the supremum of the values $\Pr^{\sched}_{\cM,s}(\varphi | \psi)$
where $\sched$ ranges over all schedulers with
$\Pr^{\sched}_{\cM,s}(\psi) >0$.

We often use LTL-like notations with
the temporal modalities $\neXt$ (next),
$\Diamond$ (eventually), $\Box$ (always) and
$\Until$ (until)
to specify measurable sets of maximal paths.
For these it is well-known that optimal deterministic schedulers
exists. If $\psi$ is a reachability condition then even optimal deterministic
memoryless schedulers exist.
Let $\varnothing \not= F \subseteq S$.
If $\bowtie$ a comparison operator (e.g. $=$ or $\leqslant$)
and $r\in \Nat$
then $\Diamond^{\bowtie r} F$ denotes the event
``reaching $F$
along some finite path $\fpath$ with $\rew(\fpath)\bowtie r$'',
while $\neXt^{\bowtie n} F$ denotes
``reaching $F$
along some finite path $\fpath$ with $|\fpath|\bowtie r$''.

\tudparagraph{1ex}{(Conditional) expected rewards.}
If $F \subseteq S$ then $\accdiaplus F$ denotes
the random variable that assigns to each maximal
path $\infpath$ in $\cM$ the reward $\rew(\fpath)$ of the shortest prefix
$\fpath$ of $\infpath$ where $\last(\fpath)\in F$.
If $\infpath \not\models \Diamond F$ then $(\accdiaplus F)(\infpath)=\infty$.
If $s\in S$ then $\ExpRew{\sched}{\cM,s}(\accdiaplus F)$ denotes
the expectation of $\accdiaplus F$ in $\cM$ with starting state $s$
under $\sched$, which is infinite if
$\Pr^{\sched}_{\cM,s}(\Diamond F) <1$.
$\ExpRew{\max}{\cM,s}(\accdiaplus F) \in \Real \cup \{\pm\infty\}$ stands for
$\sup_{\sched} \ExpRew{\sched}{\cM,s}(\accdiaplus F)$ where the supremum
is taken over all schedulers $\sched$ with
$\Pr^{\sched}_{\cM,s}(\Diamond F)=1$ and $\sup \varnothing=-\infty$.
If $\psi$ is a measurable set of maximal paths and
$\Pr^{\sched}_{\cM,s}(\psi)>0$
then
$\ExpRew{\sched}{\cM,s}(\accdiaplus F|\psi)$ stands for the expectation
of $\accdiaplus F$ w.r.t.~the conditional probability
measure $\Pr^{\sched}_{\cM,s}(\ \cdot \ | \psi)$.
$\ExpRew{\max}{\cM,s}(\accdiaplus F|\psi) \in \Real \cup \{\pm\infty\}$
denotes the supremum of these
conditional expectations when ranging over all schedulers $\sched$
where $\Pr^{\sched}_{\cM,s}(\psi)>0$ and
$\Pr^{\sched}_{\cM,s}(\Diamond F|\psi)=1$.

\tudparagraph{1ex}{End components, MEC-quotient.}
An \emph{end component} of $\cM$ is a strongly connected sub-MDP. End components
can be formalized as pairs $\cE = (E,\ActEC)$ where $E$ is a nonempty subset
of $S$ and $\ActEC$ a function that assigns to each state $s\in E$ a nonempty
subset of $\Act(s)$ such that the graph induced by $\cE$ is strongly connected.
$\cE$ is called \emph{maximal} if there is no end component
$\cE' = (E',\ActEC')$ with $\cE \not= \cE'$, $E \subseteq E'$
and $\ActEC(s) \subseteq \ActEC'(s)$ for all $s\in E$.
$\cE$ is called \emph{positive} if there exists a state-action pair
$(s,\alpha)$ with $s\in E$, $\alpha\in \ActEC(s)$ and
$\rew(s,\alpha)>0$.

The \emph{MEC-quotient} of an MDP $\cM$ is the MDP $\MEC(\cM)$ arising
from $\cM$
by collapsing all states that belong to the same maximal end component
\cite{CBGK08}. Formally, we consider the equivalence relation $\simMEC$
on the state space $S$ of $\cM$
given by $s \simMEC t$ iff $s=t$ is not contained in some end component
or $s$ and $t$ belong to the same maximal end component.
Let $[s]$ denote the equivalence class of state $s$ with respect to $\simMEC$.
The state space of $\MEC(\cM)$ is $S/\simMEC = \{[s] : s\in S\}$.
The action set in $\MEC(\cM)$ is $(S \times \Act) \cup \{\tau\}$.
Action $(s,\act)$ is enabled in state $E \in S/\simMEC$ of $\MEC(\cM)$ iff
$s\in E$ and $P(s,\act,t)>0$ for at least one state $t \in S \setminus E$.
In this case, the transition probabilities in $\MEC(\cM)$ are given by
$P_{\text{MEC}}(E,(s,\act),F) = P(s,\act,F)$.
If $E$ is a bottom end component then $E$ is trap state in $\MEC(\cM)$.
If $G \subseteq S$ consists of trap states then the maximal probability
to reach $G$ in $\cM$ from  $\sinit$ agrees with
the maximal probability to reach $G$ in $\MEC(\cM)$
(where we identify $[s]$ and $s$ if $s$ is a trap).
Reward functions can be lifted by
$\rew_{\text{MEC}}(E,(s,\act)) = \rew(s,\act)$.
However, for reasoning about reward-bounded properties the switch from $\cM$
to $\MEC(\cM)$ can cause problems. In general it is only justified
if all state-action pairs that belong to some end component have reward 0.

\section{Extremal conditional probabilities}

\label{appendix:reset-Methode}

We provide here a high-level overview of the reset-mechanism
presented in \cite{BKKM14}
for the computation of maximal conditional probabilities for reachability
objectives $\Diamond F$ and conditions $\Diamond G$ where
$F,G \subseteq S$:
$$
  \Pr^{\max}_{\cM,\sinit} (\Diamond F | \Diamond G)
  \ \ = \ \
  \sup_{\sched} \ \Pr^{\sched}_{\cM,\sinit}( \Diamond F | \Diamond G)
$$
where $\sched$ ranges over all
schedulers with $\Pr^{\sched}_{\cM,\sinit}(\Diamond G) > 0$.
The approach of \cite{BKKM14} uses a transformation of $\cM$ to a new
MDP $\cM'$. It relies on the observation that once $G$ has been
reached, the optimal behavior is to maximize reaching $F$. Similarly,
once $F$ has been reached, the optimal behavior is to maximize
reaching $G$. This allows to capture the three relevant outcomes
``goal'' (both $F$ and $G$ have been seen), ``stop'' (``$G$ but not
$F$ have been seen) and ``fail'' (``$G$ has not been seen'') by
special trap states called $\goal$, $\mathit{stop}$ and $\fail$.
More precisely, transition to $\fail$ are inserted for all states $s$ where
$\Pr^{\min}_{\cM,s}(\Diamond G)=0$.
By adding a reset-transition with probability $1$ from $\fail$
back to $\sinit$, the probabilities for the paths that
never visit $G$ are ``redistributed'' to the
paths that eventually enter $G$. This yields:
$$
  \Pr^{\max}_{\cM,\sinit} (\Diamond F | \Diamond G)
  \ \ = \ \
  \Pr^{\max}_{\cM',\sinit} (\Diamond \goal),
$$
Thus, the computation of the maximal conditional probability for
reachability objectives and conditions can be
reduced to the computation of a maximal unconditional reachability
probability, yielding a polynomial time bound.

The following example illustrates why the reset-mechanism
presented in \cite{BKKM14} is not adequate
to compute maximal conditional expected accumulated rewards.

\begin{example}[Reset-mechanism fails for conditional expectations]
Consider the Markov chain $\cM$ consisting of the initial state
$s=\sinit$ and two trap states $\goal$ and $\fail$
with the transition probabilities $P(s,\goal)= P(s,\fail)=\frac{1}{2}$.
Suppose the reward for state $s$ is 1 and that $F=G=\{\goal\}$.%
\footnote{As $\cM$ is a Markov chain,
  action names for the transitions are irrelevant
  and can be omitted. Thus, the states of $\cM$ are decorated with
  reward values.}
Then, the expected accumulated reward for reaching $\goal$
under the condition to reach $\goal$ is simply the reward of the
path $\fpath = s \, \goal$, which is 1.
The reset-mechanism introduced for the computation of conditional
reachability probabilities introduces a reset-transition
from $\fail$ to $\sinit$.
For $\psi=\Diamond \goal$, taking the reset-transition
in the resulting Markov chain $\cM'$
corresponds to discarding all paths that eventually enter $\fail$
and redistributing
their probabilities to the successful paths that eventually reach $\goal$.
Thus, the (only) successful $\fpath$ in $\cM$
is mimicked in $\cM'$ by the paths
$\fpath_n = (s \, \fail)^n \, s \, \goal$.
The total probability of the (cylinder sets spanned by)
paths $\fpath_n$ in $\cM'$ is 1, which agrees with
the conditional probability of $\fpath$ in $\cM$. However, the
rewards of the paths $\fpath_n$ are different from $\rew(\fpath)$.
Indeed we have:
$$
 \begin{array}{lclclclcl}
  \Exp{}{\cM',s}(\accdiaplus \goal)
  & \ \ = \ &
  \sum\limits_{n=1}^{\infty} \, \bigl(\frac{1}{2}\bigr)^n \cdot n
  & \ \ = \ \ & 2
  & \ \ > \ \ & 1 & \ \ = \ \ &
  \Exp{}{\cM,s}(\ \accdiaplus \goal \ | \ \Diamond \goal \ )
 \end{array}
$$
if we assign reward 0 to the reset-transition
(resp.~to state $\fail$).
\Ende
\end{example}

\section{Finiteness and upper bound}
\label{appendix:finiteness}

This section provides the proof for Theorem~\ref{thm:finiteness}
and the details and soundness proofs for the methods
to check finiteness of maximal conditional expectations
and to compute an upper bound as outlined in
Section~\ref{sec:finiteness}.
Throughout this section, we suppose that the given MDP 
$\cM = (S,\Act,P,sinit,\rew)$ has two
distinguished sets $F$ and $G$ of states such that there is at least one
scheduler $\sched$ with 
$\Pr^{\sched}_{\cM,\sinit}(\Diamond G)>0$ and
$\Pr^{\sched}_{\cM,\sinit}(\Diamond F |\Diamond G)=1$.
This condition can be checked in polynomial time using the reset-approach
for conditional probabilities of \cite{BKKM14} 
(see also Section \ref{appendix:reset-Methode}).

\subsection{Finiteness -- preprocessing and normal form transformation}
\label{appendix:finitness-transformation}

We now present the details of the preprocessing and
normal form transformation.
After some cleaning-up (step 1), we describe the transformations 
$\cM \leadsto \tilde{\cM} \leadsto \tilde{\cM}'$ (steps 2 and 3).
Section \ref{appendix:finitness-critical-schedulers} 
will then transform
$\tilde{\cM}'$ into the MDP $\Hut{\cM}$ satisfying properties
(1) and (2) presented in Section \ref{sec:finiteness}, 
which will then be subject for checking finiteness.

\tudparagraph{2ex}{Step 1: cleaning-up and assumptions.}
Obviously, all states not reachable from $\sinit$ can be removed without
affecting the conditional expected accumulated rewards from $\sinit$.
Thus, it is no restriction to suppose that all states $s\in S$
are reachable from $\sinit$.
We can also safely assume that $\sinit \notin F \cup G$.
Note that $\sinit \in F$ would imply that the accumulated reward until $F$
is 0 under each scheduler,
while assumption $\sinit \in G$ would yield that
$\Pr^{\sched}_{\cM,\sinit}(\Diamond G)=1$ for all schedulers $\sched$,
in which case standard linear-programming techniques to compute
(unconditional)
maximal
expected accumulated rewards can be applied.

\tudparagraph{2ex}{Step 2: normal form transformation.}
We first show that there is a transformation $\cM \mapsto \tilde{\cM}$
that permits to assume that $F=G$.
Intuitively, $\tilde{\cM}$ operates in four modes:
``normal mode'', ``after $G$'', ``after $F$'' and ``goal''.
$\tilde{\cM}$ starts in normal mode where it
behaves as $\cM$ as long as neither $F$ nor $G$ have been visited.
\begin{itemize}
\item
  If a $G\setminus F$-state $u$ has been reached in normal mode
  then $\tilde{\cM}$ switches to the mode ``after $G$'' where again
  it simulates $\cM$ and attempts to reach $F$.
\item
  If an $F\setminus G$-state $t$ has been reached in normal mode
  then $\tilde{\cM}$ switches to the mode ``after $F$''
  still simulating $\cM$ and expecting to reach a $G$-state.
\item
  $\tilde{\cM}$ enters the goal mode (consisting of a single trap state)
  as soon as a path fragment
  containing a state in $F$ and a state in $G$ has been generated,
  which is the case if $\cM$ visits an $F$-state in mode ``after $G$''
  or visits a $G$-state in mode ``after $F$'',
  or visits a state in $F \cap G$ in the normal mode.
\end{itemize}
The rewards in the normal mode and in mode ``after $G$''
are precisely as in $\cM$,
while the rewards are 0 in all other cases.
The objective for $\tilde{\cM}$ is then to find a scheduler
$\tsched$ such that
$\Pr^{\tsched}_{\tilde{\cM},\sinit}(\Diamond \goal)$
is positive and that optimizes the accumulated reward to reach
the goal-state under the condition that the goal state will indeed be
reached.\\

\noindent Formally, the state space of $\tilde{\cM}$ is
$\tilde{S} \ = \
 S \cup \afterG{S} \cup \afterF{S} \cup \{\goal\}$
where  $\afterG{S}$ and $\afterF{S}$
consist of pairwise distinct copies of all states in $\cM$.
More generally, for $U \subseteq S$,
$\afterG{U} = \{ \afterG{s} : s \in U \}$ and
$\afterF{U} = \{ \afterF{s} : s \in U \}$ 
with pairwise distinct, fresh states $\afterG{s}$ and $\afterF{s}$.

The action set of $\tilde{\cM}$ is the action set $\Act$ of $\cM$
extended by a fresh action $\tau$.
The transition probability function $\tilde{P}$ and the reward function
$\tilde{\rew} : \tilde{S} \times \Act \to \Nat$
of $\tilde{\cM}$ are defined as follows.
The new state $\goal$ is a trap.
The transition probabilities for the normal mode are as follows
(where $v$ ranges over all states $s\in S$):
\begin{itemize}
\item
   If $s\in S \setminus (F \cup G)$ then
   $\Act_{\tilde{\cM}}(s)=\Act_{\cM}(s)$,
   $\tilde{P}(s,\act,v)  =  P(s,\act,v)$,
   $\tilde{\rew}(s,\act) =  \rew(s,\act)$.
\item
   If $s \in F \cap G$ then $\Act_{\tilde{\cM}}(s)=\{\tau\}$ and
   $\tilde{P}(s,\tau,\goal)=1$,
   $\tilde{\rew}(s,\tau) =0$.
\item
   The switches from normal mode to mode ``after $G$'' resp.
   ``after $F$'' are formalized as follows.
   If $u\in G \setminus F$ and $t\in F \setminus G$ and $v\in S$ then:
   $$
    \begin{array}{l@{\hspace*{0.25cm}}c@{\hspace*{0.25cm}}l%
                  l@{\hspace*{0.25cm}}c@{\hspace*{0.25cm}}l}
       \tilde{P}(u,\act,\afterG{v})
       & = & P(u,\act,v), \ \ \ \
       &
       \tilde{\rew}(u,\act) & = & \rew(u,\act),
       \\[1ex]

       \tilde{P}(t,\act,\afterF{v})
       & = & P(t,\act,v), \ \ \
       &
       \tilde{\rew}(t,\act) & = & 0
    \end{array}
  $$
  and action $\tau$ is not enabled in $u$ or $t$, i.e.,
  $\Act_{\tilde{\cM}}(s)=\Act_{\cM}(s)$ for all
  $s \in (G \setminus F) \cup  (F \setminus G)$.
\end{itemize}
The transition probabilities and rewards in the
modes ``after $G$''  are defined as follows.
\begin{itemize}
\item
    If $s \in S \setminus F$ then
    $\Act_{\tilde{\cM}}(\afterG{s})=\Act_{\cM}(s)$ and
    $\tilde{P}(\afterG{s},\act,\afterG{v})
        =  P(s,\act,v)$,
    $\tilde{\rew}(\afterG{s},\act) =  \rew(s,\act)$.
\item
   If $s\in F$ then $\Act_{\tilde{\cM}}(\afterG{s})= \{\tau\}$,
   $\tilde{P}(\afterG{s},\tau,\goal) =1$
   and $\tilde{\rew}(\afterG{s},\tau) =  0$.
\end{itemize}
The transition probabilities and rewards in the
mode ``after $F$''  are defined analogously, except that
the rewards for all state-action pairs in mode ``after $F$'' are 0.
That is:
\begin{itemize}
\item
   If $s \in S \setminus G$ then
   $\Act_{\tilde{\cM}}(\afterF{s})=\Act_{\cM}(s)$,
   $\tilde{P}(\afterF{s},\act,\afterF{v}) = P(s,\act,v)$ and
   $\tilde{\rew}(\afterF{s},\act)= 0$.
\item
   If $s \in G$ then $\Act_{\tilde{\cM}}(\afterF{s})= \{\tau\}$,
   $\tilde{P}(\afterF{s},\tau,\goal) = 1$ and
   $\tilde{\rew}(\afterF{s},\tau)= 0$.
\end{itemize}

\noindent
Since the mode-switches in $\cM$ are deterministic,
there is a one-to-one correspondence between the finite paths
$\fpath$ in $\cM$ starting in $\sinit$ and
visiting an $F$-state and a $G$-state
(of minimal length with this property)
and the finite paths in $\tilde{\cM}$ that lead from $\sinit$
to $\goal$.
This yields a transformation of a given scheduler $\sched$ for $\cM$ into
a scheduler $\tilde{\sched}$ for $\tilde{\cM}$ such that:
\begin{center}
    \begin{tabular}{lcl}
        $\ExpRew{\sched}{\cM,\sinit}
           \bigl( \, \accdiaplus F \, | \, \Diamond G \, \bigr)$
        &\, = \,&
        $\ExpRew{\tilde{\sched}}{\tilde{\cM},\sinit}
           \bigl(\, \accdiaplus \goal \, | \, \Diamond \goal \, \bigr)$
    \end{tabular}
\end{center}
Here, we suppose that
 $\Pr^{\sched}_{\cM,\sinit}(\Diamond G)>0$
and $\Pr^{\sched}_{\cM,\sinit}(\Diamond F|\Diamond G)=1$.
Vice versa, each scheduler $\tsched$ for $\tilde{\cM}$
with $\Pr^{\tsched}_{\tilde{\cM},\sinit}(\Diamond \goal)>0$
and either 
$\Pr^{\tsched}_{\tilde{\cM},\sinit}(\Diamond \afterG{S})=0$
or
$\Pr^{\tsched}_{\tilde{\cM},\sinit}
  (\Diamond \goal|\Diamond \afterG{S})=1$
induces a scheduler $\sched$ for $\cM$ with
 $\Pr^{\sched}_{\cM,\sinit}(\Diamond G)>0$
and $\Pr^{\sched}_{\cM,\sinit}(\Diamond F|\Diamond G)=1$
and such that
$\tsched = \tilde{\sched}$.

\tudparagraph{1ex}{{\it Cleaning-up after Step 2.}}
The MDP $\tilde{\cM}$ can be further simplified using standard techniques
without affecting the maximal
condition expected accumulated reward.
First, we simplify  the sub-MDPs in the modes ``after $F$'' and
``after $G$'' where $G$ resp.~$F$ will be reached almost surely under all
schedulers as follows.
We introduce a new action symbol $\tau$,
discard the enabled actions in each state $\afterF{s}$
where $s \notin G$ and $\Pr^{\min}_{\cM,s}(\Diamond G)=1$
and add
a new $\tau$-transition with reward 0 from $\afterF{s}$
to $\goal$ with probability 1.
Note that for such states $s$ we have
$\Pr^{\min}_{\cM,s}(\Diamond G)=1$ iff
$\Pr^{\min}_{\tilde{\cM},\afterF{s}}(\Diamond \goal)=1$.

Likewise, for the states $\afterG{s}$ with $s \notin F$ and
$\Pr^{\min}_{\cM,s}(\Diamond F)=1$ we can discard the enabled
actions of $\afterG{s}$, while adding a
$\tau$-transition from $\afterG{s}$
to $\goal$ with probability 1.
The reward of $(\afterG{s},\tau)$ is defined as the unconditional maximal
expected accumulated reward to reach $F$ from state $s$ in $\cM$,

Finally, if $U$ denotes the set of states $u\in \tilde{S}$
with $\Act_{\tilde{\cM}}(u)=\{\tau\}$,
$\tilde{P}(u,\tau,\goal)=1$ and $\tilde{\rew}(u,\tau)=0$
then we can identify all states $u \in U$ in  $\tilde{\cM}$
with $\goal$. Formally, the latter means that we replace $\tilde{\cM}$ with
the MDP $\hat{\cM}$ arising from $\tilde{\cM}$ by removing all states
$u\in U$ and redefining the transition probability function by
$$
  \hat{P}(\tilde{s},\alpha,\goal) \ \ = \ \
  \tilde{P}(\tilde{s},\alpha,\goal) \ + \
    \sum_{u\in U} \tilde{P}(\tilde{s},\alpha,u)
$$
for all states $\tilde{s}\in \tilde{S}\setminus U$ and all actions
$\alpha$. The probabilities of all other transitions and the reward function
remain unchanged. That is,
$\hat{P}(\tilde{s},\alpha,\tilde{t})= \tilde{P}(\tilde{s},\alpha,\tilde{t})$
and $\hat{\rew}(\tilde{s},\alpha) = \tilde{\rew}(\tilde{s},\alpha)$
for all states $\tilde{s},\tilde{t}\in \tilde{S}\setminus U$
and all actions $\alpha$.

\tudparagraph{2ex}{Step 3: auxiliary trap state.}
We now perform a further transformation
$\tilde{\cM}\leadsto \tilde{\cM}'$ where $\tilde{\cM}'$
arises from the normal form MDP $\tilde{\cM}$ generated in Step 2.
The new MDP $\tilde{\cM}'$ arises from $\tilde{\cM}$ by first removing
all states $\tilde{t}$ in the ``after $G$'' mode with
$\Pr^{\max}_{\tilde{\cM},\tilde{t}}(\Diamond \goal) < 1$.
Let $V$ be the smallest set of states and state-action pairs
that contains 
all states $\tilde{t}$ in the ``after $G$'' mode with
$\Pr^{\max}_{\tilde{\cM},\tilde{t}}(\Diamond \goal) < 1$
and such that:
\begin{itemize}
\item
   for all states $\tilde{v}\in V$:
   if $(\tilde{s},\alpha)$ is a state-action pair in $\tilde{\cM}$ with
   $\tilde{P}(\tilde{s},\alpha,\tilde{v}) >0$ 
   then $(\tilde{s},\alpha) \in V$
\item
   if $(\tilde{s},\beta)\in V$ for all 
   $\beta \in \Act_{\tilde{\cM}}(\tilde{s})$
   and $\tilde{s}$ is not a trap state  %
   then
   $\tilde{s}\in V$.
\end{itemize}
Let $\tilde{\cM}_0$ be the sub-MDP of $\tilde{\cM}$ 
consisting of the states $\tilde{s}\in \tilde{S}$ with $\tilde{s}\notin V$
and the action sets: 
$$
  \Act_{\tilde{\cM}_0}(\tilde{s})
  \ \ \ = \ \ \ 
  \Act_{\tilde{\cM}}(\tilde{s})
  \setminus \{ \alpha : (\tilde{s},\alpha)\in V \}
$$
Then, $\tilde{\cM}'$ results from $\tilde{\cM}_0$ by:
\begin{itemize}
\item
  removing all states $\tilde{t}$
  where $\goal$ is not reachable from $\tilde{t}$ in $\tilde{\cM}$
  by redirecting all incoming transitions of $\tilde{t}$
  into $\fail$,
\item
  adding new transitions from the states $\tilde{s}$ with
  $\Pr^{\min}_{\tilde{\cM},\tilde{s}}(\Diamond \goal)=0$ to $\fail$,
  provided $\tilde{s}$ is not in the ``after $G$'' mode.
\end{itemize}
We shall use the action label $\iota$ for transitions to $\fail$.
More precisely, the state space of $\tilde{\cM}'$ is:
$\tilde{S}' \ =  \
  (\tilde{S}_0\cup \{\fail\}) \setminus T$
where $\tilde{S}_0$ is the state space of $\tilde{\cM}_0$ and
$$
  \begin{array}{lcl}
    T & \ = \ & 
    \bigl\{ \ \tilde{t}\in \tilde{S}_0 \ : \ 
            \Pr^{\max}_{\tilde{\cM}_0,\tilde{t}}(\Diamond \goal)=0 \ 
    \bigr\}
  \end{array}
$$
Note that $\sinit \notin T$ as we require
$\Pr^{\max}_{\cM,\sinit}(\Diamond F | \Diamond G) =1$.
The action set of $\tilde{\cM}'$ extends the action set $\Act_{\tilde{\cM}}$
of $\tilde{\cM}$ by a fresh action $\iota$.
Then, the transition probability function $\tilde{P}'$ of
$\tilde{\cM}'$ for the states 
$\tilde{s},\tilde{s}'\in \tilde{S}_0\setminus T$ and actions
$\alpha \in \Act_{\tilde{\cM}}(\tilde{s})$ is given by:
 \begin{eqnarray*}
  \tilde{P}'(\tilde{s},\alpha,\tilde{s}') \ \ = \ \ 
  \tilde{P}(\tilde{s},\alpha,\tilde{s}'),
  \ \ \ & &
  \tilde{P}'(\tilde{s},\alpha,\fail) \ \ = \ \
  \sum\limits_{\tilde{t}\in T} \tilde{P}(\tilde{s},\alpha,\tilde{t})
\end{eqnarray*}
That is, $\tilde{\cM}'$ collapses all states in $T$ into the
single trap state $\fail$.
The reward of the state-action pairs $(\tilde{s},\alpha)$ with
$\tilde{s}\in \tilde{S}$ and 
$\alpha \in \Act_{\tilde{\cM}}(\tilde{s})$ in $\tilde{\cM}'$ is
the same as in $\tilde{\cM}$.
Finally, we add transitions from every state $\tilde{s}$ 
in $\tilde{\cM}_0$ with $\tilde{s} \notin \afterG{S}$ and
$\Pr^{\min}_{\tilde{\cM},\tilde{s}}(\Diamond \goal)=0$
to $\fail$ with action label $\iota$.
That is, in all those states $\tilde{s}$, $\iota$ is an additional enabled
action with the transition probability $\tilde{P}'(\tilde{s},\iota,\fail)=1$
and reward 0 for the state-action pair $(\tilde{s},\iota)$.

\begin{lemma}[Soundness of the transformation]
 \label{lemma:soundness-norma-form-transformation}
 \
 Let $\cM$ denote the original MDP and
 $\tilde{\cM}'$ the MDP resulting from the transformations
 in steps 1,2 and 3. Then:
 \begin{center}
   \begin{tabular}{lcl@{\hspace*{0.2cm}}l}
     $\ExpRew{\max}{\cM,\sinit}
           \bigl( \, \accdiaplus F \, | \, \Diamond G \, \bigr)$
     &\, = \,&
     $\sup\limits_{\tsched'}$ &
     $\ExpRew{\tsched}{\tilde{\cM}',\sinit}
           \bigl(\, \accdiaplus \goal \, | \, \Diamond \goal \, \bigr)$
  \end{tabular}
 \end{center}
 where the supremum %
 on the right ranges over all
 schedulers $\tsched'$ for $\tilde{\cM}'$ such that
 $\Pr^{\tsched'}_{\tilde{\cM}',\sinit}(\Diamond \goal)>0$ and
 $\Pr^{\tsched'}_{\tilde{\cM}',\sinit}(\Diamond (\goal \vee \fail))=1$.
\end{lemma}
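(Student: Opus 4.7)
The plan is to verify soundness of the preprocessing in sequence via the chain
$$\ExpRew{\max}{\cM,\sinit}(\accdiaplus F \mid \Diamond G) \;=\; \sup_{\tsched} \ExpRew{\tsched}{\tilde{\cM},\sinit}(\accdiaplus \goal \mid \Diamond \goal) \;=\; \sup_{\tsched'} \ExpRew{\tsched'}{\tilde{\cM}',\sinit}(\accdiaplus \goal \mid \Diamond \goal),$$
where the middle supremum ranges over schedulers $\tsched$ for $\tilde{\cM}$ with $\Pr^{\tsched}_{\tilde{\cM},\sinit}(\Diamond \goal) > 0$ and such that, whenever $\Pr^{\tsched}_{\tilde{\cM},\sinit}(\Diamond \afterG{S}) > 0$, we have $\Pr^{\tsched}_{\tilde{\cM},\sinit}(\Diamond \goal \mid \Diamond \afterG{S}) = 1$. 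Step~1 of the preprocessing is purely cosmetic, so the content lies in justifying these two equalities.

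For the first equality (Step~2), I would exhibit a canonical bijection between scheduler-path pairs. Since all mode transitions in $\tilde{\cM}$ are deterministic and driven by state labels, any scheduler $\sched$ for $\cM$ lifts unambiguously to $\tilde{\sched}$ for $\tilde{\cM}$ and, conversely, every $\tsched$ restricts to a scheduler on $\cM$ via projection. The finite paths of $\cM$ from $\sinit$ that visit at least one state in $F$ and at least one in $G$, cut off at the first such position, are in probability-preserving bijection with the finite paths of $\tilde{\cM}$ from $\sinit$ ending in $\goal$. The accumulated rewards match because rewards coincide with those of $\cM$ in normal and ``after $G$'' mode, whereas rewards in ``after $F$'' mode are zero, reflecting that $\accdiaplus F$ stops counting upon first reaching $F$. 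Under $\Pr^{\sched}_{\cM,\sinit}(\Diamond F \mid \Diamond G) = 1$, the events $\Diamond G$ in $\cM$ and $\Diamond \goal$ in $\tilde{\cM}$ coincide up to a null set, so both numerator and denominator of the conditional expectation are preserved. The intermediate cleanup (replacing states $\afterG{s}$ from which $F$ is reached a.s.\ with a direct $\tau$-transition to $\goal$ carrying reward $\ExpRew{\max}{\cM,s}(\accdiaplus F)$, and analogously for $\afterF{s}$) is justified by the standard optimality theory for unconditional expected accumulated rewards.

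For the second equality (Step~3), I would prove both inequalities. For ``$\leq$'': given $\tsched$ for $\tilde{\cM}$ as above, the constraint on $\afterG{S}$ forces $\tsched$ to assign, with probability one, zero to every state-action pair in $V$ (otherwise a positive-probability $\afterG{S}$-path would fail to reach $\goal$, contradicting the constraint). Thus $\tsched$ is realisable in $\tilde{\cM}_0$. To also guarantee $\Pr^{\tsched'}(\Diamond(\goal \vee \fail)) = 1$, I modify $\tsched$ to $\tsched'$ by playing $\iota$ from every finite history $\fpath$ with $\last(\fpath) \notin \afterG{S}$ at which the residual scheduler $\residual{\tsched}{\fpath}$ reaches $\goal$ with probability zero. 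The rerouted subtrees contribute nothing to $\sum_{r} \Pr(\Diamond^{=r}\goal) \cdot r$ nor to $\Pr(\Diamond \goal)$, so the conditional expectation is preserved, while the remaining non-goal mass is absorbed by $\fail$. For ``$\geq$'': given $\tsched'$ for $\tilde{\cM}'$ satisfying the stated constraints, I define $\tsched$ for $\tilde{\cM}$ by replacing, at every history at which $\tsched'$ plays $\iota$, a choice that realises $\Pr^{\min}_{\tilde{\cM},\tilde{s}}(\Diamond \goal) = 0$ from $\tilde{s} = \last(\fpath)$ (such a choice exists precisely because $\iota$-transitions were added only at such states). This preserves the probability distribution on $\goal$-terminating paths together with their rewards, and $\tsched$ automatically avoids $V$-states since $\tsched'$ does, so both the constraint and the conditional expectation transfer.

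The main obstacle is the ``rerouting without loss'' bookkeeping in Stage~2: one must make the partial replacement of subtrees coherent as a scheduler and verify that the induced probability measures agree on every cylinder spanned by a $\goal$-terminating path. The cleanest way to formalise this is in terms of residual schedulers, defining the modification per finite history $\fpath$ based on whether $\residual{\tsched}{\fpath}$ reaches $\goal$ with positive probability from $\last(\fpath)$, and then invoking the uniqueness part of Carath\'eodory's extension theorem on the generating cylinder sets.
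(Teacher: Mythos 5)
Your two-stage factorisation through the intermediate MDP $\tilde{\cM}$ is a reasonable reorganisation of the paper's single-shot argument, and the forward constructions (lifting $\sched$, scheduling $\iota$ at histories with residual goal-probability zero) essentially mirror what the paper does. The gap is in your Stage-2 ``$\geq$'' argument. After replacing an $\iota$-move at a history ending in $\tilde{u}$ by a choice realising $\Pr^{\min}_{\tilde{\cM},\tilde{u}}(\Diamond\goal)=0$, nothing prevents that residual scheduler from being forced into the ``after $G$'' mode through states that are \emph{not} in $V$ (namely $\afterG{S}$-states with $\Pr^{\max}_{\tilde{\cM},\cdot}(\Diamond\goal)=1$) and then looping forever in an end component there while never reaching $\goal$. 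Your justification ``$\tsched$ automatically avoids $V$-states since $\tsched'$ does'' is therefore beside the point: $V$ only captures ``after $G$'' states with $\Pr^{\max}(\Diamond\goal)<1$, whereas the constraint $\Pr^{\tsched}(\Diamond\goal \mid \Diamond\afterG{S})=1$ --- equivalently $\Pr^{\sched}_{\cM,\sinit}(\Diamond F\mid\Diamond G)=1$ for the induced scheduler on $\cM$ --- requires that \emph{no} positive mass reaches $\afterG{S}$ and then fails to reach $\goal$.

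The paper anticipates this and splits the ``$\geqslant$'' direction into two cases: if $\tilde{\cM}$ and $\tilde{\cM}'$ have a positive end component in the ``after $G$'' mode that is reachable from $\sinit$ and from which $\goal$ is reachable, then $\ExpRew{\max}{\cM,\sinit}(\accdiaplus F\mid\Diamond G)=\infty$ is established directly (making the inequality trivial); only in the complementary case is the $\iota$-replacement carried out, with a specific memoryless $\usched$ satisfying $\Pr^{\usched}_{\tilde{\cM},\tilde{u}}(\Diamond\goal)=0$ for all $\tilde{u}\in\tilde U$ and the observation $\tilde U\cap\afterG{S}=\varnothing$. Your proof has no analogue of this case analysis, and the $V$-avoidance claim alone does not close the hole. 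A secondary, smaller point to be aware of: in the Stage-2 ``$\leq$'' direction, scheduling $\iota$ precisely at histories with residual goal-probability zero does not by itself yield $\Pr^{\tsched'}(\Diamond(\goal\vee\fail))=1$, since positive measure can remain on paths whose residual goal-probability is positive at every finite prefix yet which never reach $\goal$; so the constructed $\tsched'$ need not a priori lie in the class over which the right-hand supremum ranges.
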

Note that this does not imply the finiteness of
$\ExpRew{\max}{\cM,\sinit}
  \bigl( \, \accdiaplus F \, | \, \Diamond G \, \bigr)$, which will be
  checked with the methods of the next section
(Section \ref{appendix:finitness-critical-schedulers}).

\begin{proof}
Recall that
$\ExpRew{\max}{\cM,\sinit}( \accdiaplus F  |  \Diamond G  )$
has been defined as the supremum of the conditional expectations
$\ExpRew{\sched}{\cM,\sinit}(  \accdiaplus F  |  \Diamond G  )$
where $\sched$ ranges over all schedulers for $\cM$ such that
$\Pr^{\sched}_{\cM,\sinit}(\Diamond G)>0$ and
$\Pr^{\sched}_{\cM,\sinit}(\Diamond F |\Diamond G)=1$.
Let $\Sched(\cM,F,G)$ denote this class of schedulers.

Let $\Sched(\tilde{\cM}',\goal)$ denote the class of schedulers
$\tsched'$ for $\tilde{\cM}'$ such that
 $\Pr^{\tsched'}_{\tilde{\cM}',\sinit}(\Diamond \goal)>0$ and
 $\Pr^{\tsched'}_{\tilde{\cM}',\sinit}(\Diamond (\goal \vee \fail))=1$.

Each scheduler $\sched$ in $\Sched(\cM,F,G)$ naturally induces
a scheduler $\tsched'$ in $\Sched(\tilde{\cM}',\goal)$ such that
$\tsched'$  mimics the $\sched$-paths
satisfying $\Diamond F \wedge \Diamond G$ by $\tsched$-paths
to $\goal$ and such that:
$$
  \ExpRew{\sched}{\cM,\sinit}
           \bigl( \, \accdiaplus F \, | \, \Diamond G \, \bigr)
  \ \ = \ \
   \ExpRew{\tsched'}{\tilde{\cM}',\sinit}
           \bigl(\, \accdiaplus \goal \, | \, \Diamond \goal \, \bigr)
$$
To see this, let $\tsched$ denote the lifting of $\sched$ to a scheduler
for $\tilde{\cM}$. Then:
$$
  \Pr^{\tsched}_{\cM,\sinit}(\Diamond \goal)
  \ = \ \Pr^{\sched}_{\cM,\sinit}(\Diamond F \wedge \Diamond G)
  \ > \ 0 
$$ 
Scheduler $\tsched'$ for $\tilde{\cM}'$ 
behaves as $\tsched$ for all finite paths 
$\tilde{\fpath}$ where 
$\Pr^{\residual{\tsched}{\tilde{\fpath}}}_{\tilde{\cM},\sinit}
  (\Diamond \goal)>0$.
As soon as $\tsched'$ has generated a path $\tilde{\fpath}$ with
$\Pr^{\residual{\tsched}{\tilde{\fpath}}}_{\tilde{\cM},\sinit}
   (\Diamond \goal)=0$.
then $\tsched'$ schedules action $\iota$. 
Then, $\tsched$ and $\tsched'$ have the same paths from $\sinit$ to $\goal$,
and these correspond to the $\sched$-paths 
satisfying $\Diamond F \wedge \Diamond G$.
The probabilities and 
accumulated rewards of these paths in $\tilde{\cM}'$, $\tilde{\cM}$ and
$\cM$ are the same. 
This yields:
$$
  \ExpRew{\max}{\cM,\sinit}
           \bigl( \, \accdiaplus F \, | \, \Diamond G \, \bigr)
  \ \ \leqslant \ \
  \sup_{\tsched'} \
   \ExpRew{\tsched'}{\tilde{\cM}',\sinit}
           \bigl(\, \accdiaplus \goal \, | \, \Diamond \goal \, \bigr)
$$
Vice versa, we show that for
each $\tsched'$ in $\Sched(\tilde{\cM}',\goal)$
there exists a scheduler $\sched$ in $\Sched(\cM,F,G)$
such that:
$$
  \ExpRew{\sched}{\cM,\sinit}
           \bigl( \, \accdiaplus F \, | \, \Diamond G \, \bigr)
  \ \ \geqslant \ \
   \ExpRew{\tsched'}{\tilde{\cM}',\sinit}
           \bigl(\, \accdiaplus \goal \, | \, \Diamond \goal \, \bigr)
$$
Let us first suppose that $\tilde{\cM}$ and $\tilde{\cM}'$ have
a positive end component $\tilde{\cE}$ in the ``after $G$'' mode
such that $\goal$ is reachable from $\cE$ 
and $\cE$ is reachable from $\sinit$.
We show that in this case the maximal conditional expectation of
$\cM$ is infinite.
Let $\usched$ be a scheduler that ``realizes'' this end component.
That is, if $\psi$ denotes the event 
``take any state-action pair of $\cE$ infinitely often''
then $\Pr^{\usched}_{\tilde{\cM},\tilde{t}}(\psi)=1$ for each state
$\tilde{t}$ of $\tilde{\cE}$.
Obviously, $\tilde{\cE}$ corresponds to some positive end component of $\cM$
and $\usched$ can also be viewed as a scheduler for $\cM$
that ``realizes'' $\cE$.
Furthermore, we pick some memoryless scheduler $\vsched$ for $\cM$
and 
a shortest path $\finpath$ in $\cM$ from $\sinit$ to $\cE$
as well as a shortest path $\fpath$ from $\cE$ to some state in $\afterG{F}$.
(Such a path exists as $\goal$ is reachable from $\cE$ in $\tilde{\cM}$.)
Let now $R\in \Nat$ and let $\sched_R$ be the following scheduler.
In its first mode, $\sched_R$ attempts to generate the path $\finpath$.
If it fails then it switches mode and behaves as $\vsched$ from then on.
As soon as $\finpath$ has been generated then $\sched_R$ behaves
as $\usched$ as long as the accumulated rewards is smaller than $R$.
As soon as the accumulated reward exceeds $R$ then $\sched_R$ switches mode
again, leaves $\cE$ and attempts to reach $\afterG{F}$ along $\fpath$.
If it fails then it behaves as $\vsched$.
For the conditional expectation of $\sched_R$ we have:
$$
  \CExp{\sched_R}(\accdiaplus F |\Diamond G)
  \ \ \ \geqslant \ \ \ \frac{\rho + p \cdot R}{x+p}
$$
where $\rho,x\geqslant 0$ and 
$p=\probability(\finpath)\cdot \probability(\fpath)$.
The value $x$ stands for the probability under $\sched_R$ to reach $\goal$
via paths that do not have the form
$\finpath; \cycle ; \fpath$ where $\cycle$ is a $\usched$-path, and
$\rho$ stands for the corresponding partial expectation.
Thus, there exists some $R\in \Nat$ with
$$
  \CExp{\sched_R}(\accdiaplus F |\Diamond G)
  \ \ \ \geqslant \ \ \
  \CExp{\tsched'}(\accdiaplus \goal |\Diamond \goal)
$$
Let us now suppose that there is no positive end component in the
``after $G$'' mode of $\tilde{\cM}'$ that is reachable from $\sinit$
and from which $\goal$ is reachable.
Let $\tsched' \in \Sched(\tilde{\cM}',\goal)$. 
Let $\tilde{U}$ denote the set of states $\tilde{u}$ in $\tilde{\cM}$ with 
$\Pr^{\min}_{\tilde{\cM},\tilde{u}}(\Diamond \goal)=0$.
Thus, $\tilde{u}\in \tilde{U}$ iff $\iota \in \Act_{\tilde{\cM}'}(\tilde{u})$.
Then, $\tilde{U} \cap \afterG{S}=\varnothing$ by definition of
$\tilde{\cM}'$.
We pick a memoryless scheduler $\usched$ for $\tilde{\cM}$ such that
$\Pr^{\usched}_{\tilde{\cM},\tilde{u}}(\Diamond \goal)=0$ for all states
$\tilde{u}\in \tilde{U}$.
Then, $\usched$ can also be viewed as a scheduler for $\cM$.
Scheduler $\sched$ for an input path $\fpath$ in $\tilde{\cM}$ 
behaves as $\tsched'$
for the corresponding path $\tilde{\fpath}$ in $\tilde{\cM}'$
provided that
$\tsched'(\tilde{\fpath}) \not= \iota$.
As soon as $\tsched'$ schedules $\iota$ then 
$\tilde{u}\eqdef \last(\tilde{\fpath})\in \tilde{U}$.
Then, 
$\sched$ behaves as $\usched$ from then on.
Up to the mode-annotations of the states in $\tilde{\cM}'$,
$\sched$ and $\tsched'$ have the same ``successful'' paths 
(i.e., satisfying $\Diamond F \wedge \Diamond G$ resp.~$\Diamond \goal$)
with the same probabilities and rewards.
In particular, this yields $\Pr^{\sched}_{\cM,\sinit}(\Diamond G)>0$.
Furthermore, we have
$\Pr^{\sched}_{\cM,\sinit}(\Diamond F|\Diamond G)=1$.
The latter is a consequence of the fact that
$\tilde{U}\cap \afterG{S}=\varnothing$.
This yields $\sched \in \Sched(\cM,F,G)$ and 
that $\sched$ and $\tsched'$ have the same maximal conditional expectations.
\Ende
\end{proof}

\noindent
In summary, with the three preprocessing steps, we can transform
$\cM$ into an MDP $\tilde{\cM}'$ with two trap states $\goal$ and $\fail$
such that $\sinit \notin \{\goal,\fail\}$.
In the next section, we will describe a further transformation
MDP $\tilde{\cM}' \leadsto \Hut{\cM}$ such that 
$\Hut{\cM}$ satisfies conditions (1) and (2) of Section \ref{sec:finiteness}. 
$\Hut{\cM}$ will then be used for
deciding finiteness (Section \ref{summary:check-finiteness}),
computing an upper bound $\CExp{\ub}$ for the $\CExp{\max}$
(Section \ref{sec:upper-bound})
and the subsequent threshold algorithm and the computation of an optimal
scheduler as outlined in Section \ref{sec:threshold}.

\subsection{Finiteness -- critical schedulers}
\label{appendix:finitness-critical-schedulers}

In the sequel, we suppose that $\cM$ is the result of
the preprocessing and normal form transformation presented in the
previous subsection. 
Thus, the task is to compute the maximal 
expected accumulated reward until reaching the trap state
$\goal$ under the condition that $\goal$ will be reached where the 
supremum is taken over all schedulers $\sched$ satisfying the following
requirement \eqref{assumption:SR}
(see Lemma \ref{lemma:soundness-norma-form-transformation}):
\begin{equation}
   \label{assumption:SR}
   \Pr^{\sched}_{\cM,\sinit}(\Diamond \goal)>0
   \ \ \ \text{and} \ \ \
   \Pr^{\sched}_{\cM,\sinit}(\Diamond (\goal \vee \fail))=1
  \tag{SR}
\end{equation}
Furthermore, for all states $s$ in $\cM$:
\begin{equation}
   \label{assumption:A1}
   \text{$s \not\models \exists \Diamond \goal$
         \ \ \ iff \ \ \
         $s=\fail$}
   \tag{A1}
\end{equation}

Before we start into this section, we will briefly repeat and 
summarize notations relevant for this section.

\begin{definition}[Shortform notations for (conditional) expectations]
\label{def:CExp}
{\rm
As before, we often write $\Pr^{\sched}_{s}$ for $\Pr^{\sched}_{\cM,s}$.
If $\sched$ is a scheduler for $\cM$ with
$\Pr^{\sched}_{\sinit}(\Diamond \goal) >0$
then we shortly
write $\CExp{\sched}$ for the conditional expected accumulated reward until
reaching the goal state under scheduler $\sched$ under the condition that
the goal state will indeed be reached.
That is:
$$
  \CExp{\sched}
  \ \ = \ \
  \ExpRew{\sched}{\sinit}(\ \accdiaplus \goal \ | \ \Diamond \goal \ )
$$
We often refer to $\CExp{\sched}$ as the conditional expectation under
$\sched$.
Furthermore, let
$$
   \CExp{\max} \ \ = \ \ \sup_{\sched} \ \CExp{\sched}
   \qquad \text{and} \qquad
   \CExp{\min} \ \ = \ \ \inf_{\sched} \ \CExp{\sched}
$$
where $\sched$ ranges over all
schedulers for $\cM$ satisfying the scheduler requirement
\eqref{assumption:SR}.
We also often use the notation $\Exp{\sched}{s}$ as a shortform for
$$
  \Exp{\sched}{s} \ \ = \ \ \Exp{\sched}{\cM,s} \ \ = \ \
  \sum_{r=0}^{\infty} \ r \cdot \Pr^{\sched}_{\cM,s}(\Diamond^{=r} \goal)
$$
Here, $\sched$ is an arbitrary scheduler in $\cM$ and $s$ a state of $\cM$.
Clearly, we then have
$$
  \CExp{\sched} \ \ = \ \
  \frac{\Exp{\sched}{\sinit}}{\Pr^{\sched}_{\cM,\sinit}(\Diamond \goal)}
$$
for each scheduler $\sched$ satisfying \eqref{assumption:SR}.
If $s \in S \setminus \{\goal,\fail\}$ and
$\sched$ satisfies  \eqref{assumption:SR}
then:
  $$
   \CExpState{\sched}{s} \ \ = \ \
   \ExpRew{\tsched}{s}(\ \accdiaplus \goal \ | \ \Diamond \goal \ )
   \ \ = \ \
   \frac{\Exp{\sched}{s}}
        {\Pr^{\sched}_s(\Diamond \goal)}
  $$
Hence, $\CExp{\sched} \ = \ \CExpState{\sched}{\sinit}$.
  }
\Ende
\end{definition}

\noindent

We will now present a criterion to decide whether
$\CExp{\max} < \infty$ and consider the unconditional case first.

\begin{remark}[Unconditional maximal expected accumulated reward]
\label{remark:sched-infinite-exp-reward}
It is well-known that if
$\Pr^{\min}_{\cM,\sinit}(\Diamond \goal)=1$ then the unconditional
expected accumulated reward
$\Exp{\sched}{\sinit} = \ExpRew{\sched}{\sinit}(\accdiaplus \goal)$
is finite for all schedulers $\sched$.
Furthermore, there exists a memoryless deterministic scheduler maximizing the
unconditional expected accumulated reward.
In particular, the supremum of the unconditional expected accumulated rewards
under all schedulers is finite.

However, if $\Pr^{\min}_{\cM,\sinit}(\Diamond \goal)<1$ then the
expected accumulated reward to reach $\goal$ can be infinite
for (infinite-memory) schedulers $\sched$
with $\Pr^{\sched}_{\cM,\sinit}(\Diamond \goal)=1$.
To illustrate this phenomenon, consider an MDP $\cM$ with two states
$s=\sinit$ and $\goal$, the transition probabilities
$P(s,\alpha,\goal)= P(s,\beta,s)=1$ and the reward
$\rew(s,\alpha)=\rew(s,\beta)=1$.
We pick an infinite sequence $(q_n)_{n \geqslant 1}$
of rational numbers in $]0,1[$ such that
$\sum_n q_n =1$ and $\sum_n n \cdot q_n$ diverges
(e.g., $q_n = x/n^2$ where $1/x$ is the value of the series $\sum_n 1/n^2$).
Furthermore, we put $x_1 = 1$ and
$x_n= p_1 \cdot \ldots \cdot p_{n-1}$ for $n > 1$.
Let $\sched$ be the randomized scheduler for $\cM$
that schedules action $\beta$ with probability $p_n = 1- q_n/x_n$
and action $\alpha$ with probability $q_n/x_n$
for the $n$-th visit of state $s$.
Let $\fpath_n$ be the path $(s \, \beta)^{n-1} \, s \, \alpha \, \goal$.
The probability of $\fpath_n$ under $\sched$ is
$p_1 \cdot p_2 \ldots \cdot p_{n-1} \cdot (1-p_n) = x_n(1-p_n)=q_n$.
Hence:
$$
  \Pr^{\sched}_{\cM,s}(\Diamond \goal)
  \ \ = \ \
  \sum_{n=1}^{\infty} q_n
  \ \ = \ \ 1
$$
Since $\rew(\fpath_n)=n$,
the expected accumulated reward under $\sched$ for reaching $\goal$
is $\sum_n  n \cdot x_n \cdot (1-p_n) = \sum_n n \cdot q_n = \infty$.
\Ende
\end{remark}

Let $\mathit{PosEC}$ be the set of all states $s$ in $\cM$ that belong to
some end component $\cE = (E,\ActEC)$
with $\{\goal,\fail\} \cap E = \varnothing$
and $\rew(t,\act)\geqslant 1$ for some
state-action pair $(t,\act)$ with $t\in E$ and $\act \in \ActEC(t)$.
Such end components are said to be \emph{positive}.

\begin{lemma}
   \label{lemma:PosEC}
   If $\mathit{PosEC}\not= \varnothing$ then
   $\CExp{\max} = \infty$.
\end{lemma}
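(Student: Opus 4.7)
The plan is to exhibit, for each $R \in \Nat$, an admissible scheduler $\sched_R$ (i.e.\ satisfying (SR)) whose conditional expected accumulated reward $\CExp{\sched_R}$ is at least a linear function of $R$; since $\CExp{\max}$ is the supremum of $\CExp{\sched}$ over admissible schedulers, letting $R \to \infty$ then yields $\CExp{\max} = \infty$. The underlying intuition is to use a positive end component as a ``reward pump'': route to it, pump reward inside it until more than $R$ has been accumulated, and then exit toward $\goal$ along a fixed positive-probability path.

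For the setup, fix $u \in \mathit{PosEC}$ together with a witnessing positive end component $\cE = (E,\ActEC)$ and a state-action pair $(t,\act)$ with $t \in E$, $\act \in \ActEC(t)$, $\rew(t,\act)\geqslant 1$. Since all states of $\cM$ are reachable from $\sinit$, fix a finite path $\finpath$ from $\sinit$ to $u$ with positive probability $p_1$, and since $u \neq \fail$ satisfies $u \models \exists\Diamond\goal$ by (A1), fix a finite path $\fpath$ from $u$ to $\goal$ with positive probability $p_2$. By the standard end-component lemma, pick a memoryless scheduler $\usched$ on $E$ under which, from every state in $E$, every state-action pair of $\cE$ is taken infinitely often almost surely; in particular the reward diverges along almost every $\usched$-trajectory starting in $E$.

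The scheduler $\sched_R$ then proceeds in three phases: (i) follow $\finpath$ step by step; (ii) once $u$ has been reached along $\finpath$, play $\usched$ inside $\cE$ until the state equals $u$ again \emph{and} the accumulated reward has exceeded $R$ (both conditions hold almost surely in finite time by the choice of $\usched$); (iii) then attempt to follow $\fpath$ to $\goal$. Whenever a step of phase (i) or (iii) deviates from the planned prefix, $\sched_R$ switches to a fallback scheduler $\vsched$ that drives the system into $\{\goal,\fail\}$ almost surely. Let $\Pi_R$ denote the set of $\sched_R$-paths of the form $\finpath;\cycle;\fpath$ where $\cycle$ is a $\usched$-cycle from $u$ to $u$ on which the exit condition of phase (ii) is triggered. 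Then $\Pr^{\sched_R}_{\sinit}(\Pi_R) \geqslant p_1 p_2$, and every path in $\Pi_R$ reaches $\goal$ with accumulated reward at least $R$. Hence $\Exp{\sched_R}{\sinit} \geqslant p_1 p_2 \cdot R$, and since $\Pr^{\sched_R}_{\sinit}(\Diamond \goal) \leqslant 1$ we conclude $\CExp{\sched_R} \geqslant p_1 p_2 \cdot R$, which diverges as $R \to \infty$.

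The main technical obstacle is verifying that $\sched_R$ actually satisfies (SR), i.e.\ $\Pr^{\sched_R}_{\sinit}(\Diamond(\goal \vee \fail)) = 1$. This reduces to the existence of a fallback $\vsched$ with $\Pr^{\vsched}_s(\Diamond(\goal \vee \fail)) = 1$ at every state $s$ that $\sched_R$ may enter after a deviation. For the MDP produced by the normal-form transformation of Section~\ref{appendix:finitness-transformation}, this follows from a standard almost-sure-reachability argument: the fresh $\iota$-actions introduced at states outside the ``after $G$'' mode with $\Pr^{\min}(\Diamond\goal) = 0$ deterministically send the system to $\fail$, while the states in $\afterG{S}$ that survived step~3 satisfy $\Pr^{\max}(\Diamond\goal)=1$; combining these observations with finiteness yields a memoryless deterministic $\vsched$ with the required property.
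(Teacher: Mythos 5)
Your proposal is correct and follows essentially the same route as the paper: route to a state of a positive end component, "pump" reward there until the accumulated amount exceeds $R$, then exit toward $\goal$ along a fixed positive-probability path, and let $R\to\infty$. The only cosmetic differences are that you fix concrete witness paths $\finpath,\fpath$ where the paper uses schedulers $\sched_s,\sched_g$ (and you obtain the slightly cleaner lower bound $\CExp{\sched_R}\geqslant p_1p_2 R$ by bounding the denominator by $1$, whereas the paper writes it as a quotient $(\rho+p_s p_g R)/(x+p_s p_g)$); your explicit attention to verifying condition (SR) is appropriate and more careful than the paper's own proof, which leaves that point implicit.
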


\begin{proof}
By assumption \eqref{assumption:A1} we have
$t \models \exists \Diamond \goal$ for all $t \in \mathit{PosEC}$.
Furthermore, for each infinite path $\infpath$:
$\infpath \models \Diamond \mathit{PosEC}$ iff
$\infpath \models (\neg \goal ) \Until \mathit{PosEC}$.

\noindent Let $R \in \Nat$. We pick some state $s \in \mathit{PosEC}$
and schedulers $\sched_s$ and $\sched_g$
with
\begin{center}
   $p_s = \Pr^{\sched_s}_{\cM,\sinit}(\Diamond s)>0$
   \ \ and \ \
   $p_g = \Pr^{\sched_g}_{s}(\Diamond \goal)>0$.
\end{center}
Furthermore, let $\sched_{\mathit{EC}}$ be a scheduler
such that the limit of almost all $\sched_{\mathit{EC}}$-paths starting in $s$
is a positive end component containing $s$.
We construct a scheduler $\tsched_R$ as follows.
For input paths starting in $\sinit$, scheduler $\tsched_R$ first behaves as
$\sched_s$ until state $s$ has been reached
(this happens with probability $p_s$).
It then switches its mode and behaves as $\sched_{\mathit{EC}}$
until a path fragment $\fpath$ has been generated such that
$\last(\fpath) = s$ and $\rew(\fpath) > R$ (this happens with  probability 1).
Having generated such a path fragment $\fpath$, $\tsched_R$ switches its mode
again and behaves as $\sched_g$ from then on.
Then, the expected accumulated reward to reach $\goal$ under $\tsched_R$
has the form:
$$
  \CExp{\tsched_R}
  \ \ \geqslant \ \
  \begin{array}{r@{\hspace*{0.2cm}}c@{\hspace*{0.2cm}}l}
        \ \rho & + & p_s  \cdot p_g \cdot R \
        \\[0.3ex]
        \hline
        \\[-2.4ex]
        x & + & p_s  \cdot p_g
  \end{array}
$$
where
$$
  \begin{array}{lcl}
   \rho \ =
   \sum\limits_{r=0}^{\infty}
     r \cdot \Pr^{\sched_s}_{\sinit}(\neg s \Until^{=r} \goal)
   & \ \ \text{and} \ \ &
   x \ = \ \Pr^{\sched_s}_{\sinit}(\neg s \Until \goal)
 \end{array}
$$
Since $\rho$ and $x$ do not depend on $R$, we have:
$$
  \lim_{R \to \infty}
    \CExp{\tsched_R}
  \ = \ \infty
$$
Thus,
$\CExp{\max} = \infty$ if $\mathit{PosEC}$ is nonempty.
\Ende
\end{proof}

Obviously, $\mathit{PosEC}=\varnothing$ if there is no
positive maximal end component. Hence, the criterion of
Lemma \ref{lemma:PosEC} can be checked in time polynomial
in the size of the MDP $\cM$ using the same techniques
as proposed by de Alfaro \cite{deAlf99}
for computing maximal unconditional expectations.%
\footnote{More precisely, Section 4 of \cite{deAlf99} addresses
  the computation of minimal expected accumulated reward in
  non-positive MDPs where the minimum is taken over all schedulers
  that reach the goal state almost surely.}
In what follows, we suppose that $\cM$ has no positive end component.
That is, $\rew(s,\act)=0$ for all state-action pairs $(s,\act)$
that belong to some (maximal) end component.
But then we can collapse all maximal end components into a single state
and discard all actions $\act \in \Act(s)$ where $(s,\act)$ belongs
to an end component
(i.e., we identify $\cM$ with its MEC-quotient,
see Appendix \ref{appendix:notations}),
without affecting the maximal expected accumulated reward.

\begin{lemma}[Maximal unconditional expectations (see \cite{deAlf99})]
   \label{lemma:PosEC-uncond}
   Suppose $\mathit{PosEC} = \varnothing$. Then,
   $
    \ExpRew{\max}{\cM,\sinit}
       (\accdiaplus \goal ) < \infty$
   and there exists a memoryless deterministic scheduler
   $\sched$ with $\Pr^{\sched}_{\cM,\sinit}(\Diamond \goal)=1$
   and
   $\ExpRew{\sched}{\cM,\sinit} (\accdiaplus \goal) =
    \ExpRew{\max}{\cM,\sinit} (\accdiaplus \goal)$.
\end{lemma}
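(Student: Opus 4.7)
The plan is to derive both statements from the stochastic shortest path theorem of de Alfaro~\cite{deAlf99}, after verifying its hypotheses in our setting. First I would observe that $\mathit{PosEC}=\varnothing$ means every end component of $\cM$ is non-positive: we must have $\rew(s,\alpha)=0$ whenever the state-action pair $(s,\alpha)$ lies in some end component, since otherwise that end component would itself be positive. Consequently, passing to the MEC-quotient $\MEC(\cM)$ (cf.~Appendix~\ref{appendix:notations}) preserves accumulated rewards along all paths reaching $\goal$, and in $\MEC(\cM)$ the only bottom end components are the trap singletons $\{\goal\}$ and $\{\fail\}$.

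Next I would establish the finiteness claim. In the quotient the set $U = S \setminus \{\goal,\fail\}$ is transient under every scheduler, since otherwise an invariant subset of $U$ would give rise to an end component inside $U$, contradicting the quotient construction. A standard finite-MDP argument then produces constants $N \in \Nat$ and $q < 1$ (independent of the scheduler) such that from every state in $U$, under every scheduler, the probability of staying in $U$ for at least $N$ consecutive steps is at most $q$. This yields geometric decay, hence finite expected occupation times of every state in $U$, hence $\ExpRew{\sched}{\sinit}(\accdiaplus\goal) < \infty$ for every scheduler $\sched$ with $\Pr^{\sched}_{\sinit}(\Diamond\goal) = 1$, and in fact a common finite upper bound, so $\ExpRew{\max}{\sinit}(\accdiaplus\goal) < \infty$.

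To obtain an optimal memoryless deterministic scheduler I would invoke the linear-programming characterization of the optimal value function: $v^*_s \eqdef \ExpRew{\max}{\cM,s}(\accdiaplus\goal)$, with $v^*_\goal = v^*_\fail = 0$, is the unique solution of the LP that minimizes $\sum_{s} v_s$ subject to $v_s \geqslant \rew(s,\alpha) + \sum_{t} P(s,\alpha,t) \cdot v_t$ for every $s \in U$ and every $\alpha \in \Act(s)$. A memoryless deterministic scheduler that picks, in each state of $U$, an action making the corresponding constraint tight is then optimal.

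The one subtlety I anticipate is guaranteeing that the LP-induced scheduler satisfies $\Pr^{\sched}_{\sinit}(\Diamond\goal)=1$ rather than merely attaining the tight LP value while leaking probability mass into $\fail$ or into a zero-reward cycle of $U$. Since the preprocessing of Section~\ref{appendix:finitness-transformation} has already redirected all non-$\goal$-reaching behaviour into the trap $\fail$ and the quotient contains no cycles inside $U$, any tight scheduler must reach $\{\goal,\fail\}$ almost surely; combining this with the tightness condition (which forbids leakage into $\fail$ whenever a strictly better alternative exists) yields $\Pr^{\sched}_{\sinit}(\Diamond\goal)=1$ and completes the proof.
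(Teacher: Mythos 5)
The paper gives no proof of this lemma; it is invoked purely as a citation to \cite{deAlf99}, so your argument is an independent reconstruction. The finiteness half — passing to the MEC-quotient, observing that $U=S\setminus\{\goal,\fail\}$ is transient under every scheduler and using the geometric tail bound to get a uniform bound on expected occupation times — is sound and self-contained.

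The second half has a genuine gap. The LP you set up, with boundary condition $v^*_\fail=0$, characterizes $\sup_\sched \Exp{\sched}{\sinit}$ over \emph{all} schedulers, i.e.\ the partial expectation in which reaching $\fail$ is cost-free; it does not characterize the supremum of $\ExpRew{\sched}{\cM,\sinit}(\accdiaplus\goal)$ over the \emph{proper} schedulers with $\Pr^{\sched}_{\cM,\sinit}(\Diamond\goal)=1$. These two quantities can differ, and the tight Bellman action can be exactly the one that leaks into $\fail$. Take $\sinit$ with two actions: $\alpha$ with $\rew(\sinit,\alpha)=1$ and $P(\sinit,\alpha,\goal)=1$, and $\beta$ with $\rew(\sinit,\beta)=10$, $P(\sinit,\beta,\goal)=P(\sinit,\beta,\fail)=\tfrac12$. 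Then $\mathit{PosEC}=\varnothing$ and (A1) holds, your LP gives $v^*_\sinit=10$ with $\beta$ tight, yet $\Pr^{\beta}_{\sinit}(\Diamond\goal)=\tfrac12<1$; the value in the sense of the lemma is $1$, attained only by $\alpha$. Your proposed fix — that tightness forbids leakage into $\fail$ whenever a strictly better alternative exists — points the wrong way: because the LP maximizes a quantity in which $\fail$ is free, it \emph{prefers} the leaking action exactly when it has the larger Bellman right-hand side. (Also, the MEC-quotient removes end components inside $U$, not arbitrary cycles, so ``no cycles inside $U$'' overstates what the construction gives you.) What is actually required here is de Alfaro's restriction to proper policies: first compute $S_1=\{s:\Pr^{\max}_{\cM,s}(\Diamond\goal)=1\}$, discard every action $\alpha\in\Act(s)$ with $P(s,\alpha,S\setminus S_1)>0$, collapse the resulting zero-reward end components, and only then solve the Bellman LP on the restricted sub-MDP, in which every scheduler is proper, so a tight memoryless deterministic scheduler does satisfy $\Pr^{\sched}_{\sinit}(\Diamond\goal)=1$ and attains $\ExpRew{\max}{\cM,\sinit}(\accdiaplus\goal)$.
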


We now return to the case of conditional expected accumulated rewards.
Assuming $\mathit{PosEC}=\varnothing$, the transformation
that collapses all maximal end components into a single state (see above)
permits the following additional assumption:
\begin{equation}
   \label{assumption:A2}
   \text{$\cM$ has no end component}
   \tag{A2}
\end{equation}
Under assumption \eqref{assumption:A2} we have
$\Pr^{\min}_{\cM,s}\bigl(\, \Diamond (\goal \vee \fail)\, \bigr)=1$
for all states $s$.
Hence, the scheduler requirement \eqref{assumption:SR} reduces to
$\Pr^{\sched}_{\cM,s}(\Diamond \goal) >0$.
Note that after this transformation the MDP $\cM=\Hut{\cM}$ 
satisfies conditions (1) and (2) presented in Section \ref{sec:finiteness}.

The nonemptiness of $\mathit{PosEC}$ is, however,
not sufficient to cover all cases where
the conditional expected accumulated reward is infinite.
This is illustrated in the following example.

\begin{example}
\label{example:PosEC-not-sufficient}
Let $\cM$ be the MDP $\cM[\rewparam]$ 
shown in Figure~\ref{fig:running-example},
but with initial state $\sinit = s_2$.
The parameter $\rewparam$ is of no concern for this example.
For the scheduler $\sched_n$ that chooses action $\beta$ for
the first $n$ visits of $s$ and then $\alpha$, there is a single
$\sched_n$-path reaching $\goal$,
namely $\fpath_n = (s\, \beta)^n \, s \, \alpha \, \goal$.
The accumulated reward of $\fpath_n$ is
$n$, %
and so is the conditional expectation of $\cM$ under $\sched_n$.
Thus, $\CExp{\max}$  is infinite,
although $\mathit{PosEC}$ is empty.
\Ende
\end{example}

\begin{definition}[Positive, zero-reward, simple cycle; critical scheduler]
\label{def:critical-scheduler}
{\rm
Let $\cM$ be an MDP as before satisfying 
\eqref{assumption:A1} and \eqref{assumption:A2}.
A cyclic path
$\cycle = t_0 \, \beta_0 \, t_1 \, \beta_1 \ldots \beta_{n-1}\, t_n$
in $\cM$ is said to be \emph{positive} if $\rew(t_i,\beta_i) >0$ for at least
one index $i\in \{0,1,\ldots,n{-}1\}$.
Otherwise $\cycle$ is called a \emph{zero-reward} cycle.
$\cycle$ is said to be \emph{simple}
if $t_i \not= t_j$ for $0 \leqslant i < j <n$.

Scheduler $\usched$ is called \emph{critical} if
$\Pr^{\usched}_{\cM,\sinit}(\Box \neg \goal) =1$ and
there is a reachable positive $\usched$-cycle, i.e.,
there is a finite $\usched$-path
$s_0 \, \alpha_0 \, s_1 \, \alpha_1 \, \ldots \alpha_{m-1}\, s_m$
starting in $s_0 = \sinit$
such that for some $k < m$ the  suffix
$s_k \, \alpha_k \, s_{k+1} \, \alpha_{k+1} \, \ldots \alpha_{m-1}\, s_m$
is a positive cycle.
  }
\Ende
\end{definition}

Note that $\Pr^{\usched}_{\sinit}(\Diamond \fail)=1$
for each critical scheduler by assumption (A2).
Thus, if $\Pr^{\min}_s(\Diamond \goal)>0$ for all states
$s\in S \setminus \{\fail\}$ then $\cM$ has no critical scheduler.

\begin{proposition}[Infinite maximal conditional expected rewards]
\label{proposition:exprew-infinite-critical-scheduler}
  \ \\
  With the notations and assumptions \eqref{assumption:A1} and \eqref{assumption:A2} as above,
  the following three statements are equivalent:
\begin{center}
 \begin{tabular}{ll}
    {\rm (i)} &
    $\CExp{\max} = \infty$
    \\[1ex]

    {\rm (ii)} &
    $\cM$ has a memoryless deterministic critical scheduler
    \hspace*{2cm}
    \\[1ex]

    {\rm (iii)} &
    $\cM$ has a critical scheduler
  \end{tabular}
 \end{center}
\end{proposition}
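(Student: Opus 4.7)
The plan is to prove the cycle of implications (ii) $\Rightarrow$ (iii) $\Rightarrow$ (i) $\Rightarrow$ (ii). The implication (ii) $\Rightarrow$ (iii) is immediate, since every memoryless deterministic scheduler is a scheduler.

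For (iii) $\Rightarrow$ (i), I would fix a critical $\usched$ with a reachable positive $\usched$-path $\fpath^{\mathit{pre}};\cycle$, where $\cycle = s_k\,\alpha_k\,s_{k+1}\ldots\alpha_{m-1}\,s_m$ with $s_m=s_k$ and accumulated reward $R_c \geqslant 1$. By (A1), $s_k \models \exists\Diamond\goal$, so I fix a memoryless scheduler $\sched_g$ with $p_g=\Pr^{\sched_g}_{s_k}(\Diamond\goal)>0$. For each $N\in\Nat$ I define a scheduler $\tsched_N$ that (a) deterministically schedules the planned actions along $\fpath^{\mathit{pre}};\cycle^N$ as long as the probabilistic outcomes match the plan, (b) switches to $\sched_g$ upon successful completion of the plan, and (c) falls back to $\usched$'s decisions on the observed history whenever a probabilistic deviation occurs. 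The crucial observation is that every deviation history is itself a $\usched$-path -- it just records a possible probabilistic outcome of an action to which $\usched$ assigns positive probability -- and the criticality of $\usched$ gives $\Pr^{\residual{\usched}{\finpath}}_{\last(\finpath)}(\Diamond\goal)=0$ for every such $\finpath$. Together with (A2) this ensures that every deviating $\tsched_N$-path almost surely reaches $\fail$, not $\goal$. I then obtain
\[ \Pr^{\tsched_N}_{\sinit}(\Diamond\goal) \ = \ \probability(\fpath^{\mathit{pre}})\cdot\probability(\cycle)^N\cdot p_g \]
and a partial expectation of at least $\probability(\fpath^{\mathit{pre}})\cdot\probability(\cycle)^N\cdot N R_c \cdot p_g$, so $\CExp{\tsched_N}\geqslant N R_c \to \infty$.

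For (i) $\Rightarrow$ (ii), I would argue contrapositively. Assuming that no memoryless deterministic critical scheduler exists, I apply the reset transformation of Section~\ref{sec:finiteness} to obtain an MDP $\cN$ by adding a zero-reward transition from $\fail$ to $\sinit$. By (A2), $\cM$ itself has no end components, so any end component of $\cN$ must use the reset transition and therefore contains both $\sinit$ and $\fail$. A positive such end component would directly yield a memoryless deterministic critical scheduler of $\cM$: one takes the EC's state-action pairs, which stay within $U=\{s:\Pr^{\min}_s(\Diamond\goal)=0\}$ and include a positive-reward cycle reachable from $\sinit$. The assumption therefore forbids positive end components in $\cN$, and Lemma~\ref{lemma:PosEC-uncond} gives that $\ExpRew{\max}{\cN,\sinit}(\accdiaplus\goal)$ is finite; the reset argument of Section~\ref{sec:finiteness} then upper-bounds $\CExp{\max}$ by this finite value.

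The main obstacle will be the construction in (iii) $\Rightarrow$ (i): a naive choice of $\sched_g$ as the fallback upon deviation would contribute a goal-reaching probability of order $\Theta(1)$ instead of $O(\probability(\cycle)^N)$, so that the geometric decay of the planned-path contribution would force the conditional expectation to stay bounded in $N$. The point is to exploit the criticality of $\usched$ itself by reusing $\usched$'s decisions along deviation histories, which routes every deviation almost surely into $\fail$ and leaves only the planned trajectories as contributors to $\Diamond\goal$.
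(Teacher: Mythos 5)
Your direction (iii) $\Rightarrow$ (i) relies on the claim that every deviation history from the plan $\fpath^{\mathit{pre}};\cycle^N$ is a $\usched$-path, but for a general (history-dependent or randomized) critical scheduler $\usched$ the iterated path $\fpath^{\mathit{pre}};\cycle^N$ need not be a $\usched$-path for $N \geqslant 2$ --- $\usched$ may choose differently on the second traversal of the cycle than on the first --- so deviation histories in later traversals need not have positive probability under $\usched$, and the key fact $\Pr^{\residual{\usched}{\finpath}}_{\last(\finpath)}(\Diamond\goal)=0$ is then unjustified. The paper sidesteps this by first proving (iii) $\Rightarrow$ (ii), extracting a simple positive $\usched$-cycle and turning it into a memoryless deterministic critical scheduler, and only then proving (ii) $\Rightarrow$ (i), where memorylessness makes $\cycle^N$ automatically a $\usched$-path; reordering the cycle of implications this way repairs your construction.

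The more serious gap is in (i) $\Rightarrow$ (ii). Adding a reset transition from $\fail$ to $\sinit$ \emph{directly} in $\cM$ does \emph{not} give the implication ``no memoryless deterministic critical scheduler $\Rightarrow$ $\cN$ has no positive end component.'' Take $\cM$ with $\sinit \overto{\alpha} s_1$ (prob.~$1$, rew.~$0$), $s_1 \overto{\beta} s_2$ (prob.~$1$, rew.~$1$), $s_2 \overto{\gamma} \fail$ (prob.~$1$, rew.~$0$), and $s_2 \overto{\delta} \goal$ or $\fail$, each with probability $\frac{1}{2}$ and reward $0$. This $\cM$ is acyclic, satisfies \eqref{assumption:A1} and \eqref{assumption:A2}, has no critical scheduler at all (there is no cycle), and $\CExp{\max} = 1 < \infty$. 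Yet the directly-reset $\cN$ has the positive end component $\{\sinit, s_1, s_2, \fail\}$: the positive-reward pair $(s_1,\beta)$ lies on a cycle of $\cN$ only via the reset transition and on no cycle of $\cM$, so it cannot be turned into a critical scheduler, and indeed $\ExpRew{\max}{\cN,\sinit}(\accdiaplus\goal)=\infty$, so Lemma~\ref{lemma:PosEC-uncond} gives no finite upper bound. This is exactly the obstruction the paper anticipates: its proof first builds the intermediate MDP $\cM'$ that tracks the accumulated reward up to the bound $R$ in a zero-reward ``first mode'' and postpones earning reward until the accumulated total would exceed $R$ --- a point a path can reach only after traversing a positive cycle of $\cM$ --- and only afterwards adds the reset transitions. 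With that two-mode structure, a positive end component of $\cN$ yields a scheduler that realizes a positive cycle of $\cM$ while never reaching $\goal$, i.e., a critical scheduler; the naive reset skips precisely the step that makes this reduction go through.
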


\begin{proof}
We first show the equivalence of statements (ii) and (iii).
The implication (ii) $\Longrightarrow$ (iii) is trivial.
For the implication (iii) $\Longrightarrow$ (ii), we suppose
that we are given a (possibly randomized history-dependent)
critical scheduler $\usched$.
A memoryless deterministic critical scheduler $\usched'$ is obtained
by picking a simple positive $\usched$-cycle
$\cycle = s_0 \, \alpha_0 \, s_1 \, \alpha_1 \ldots \alpha_{n-1}\, s_n$.
We then put $\usched'(s_i)=\act_i$ for $0 \leqslant i < n$.
For each state $s\in S \setminus \{s_0,\ldots,s_{n-1}\}$
we pick an arbitrary action $\act$ with $\usched(s)(\act)>0$
and define $\usched'(s) = \act$.

\tudparagraph{1ex}{\rm (ii) $\Longrightarrow$ (i):}
Suppose $\cM$ has a memoryless deterministic critical
scheduler $\usched$.
The argument is similar to the proof of Lemma \ref{lemma:PosEC}.
Let $\cycle = t_0 \, \beta_0 \, t_1 \, \beta_1 \ldots \beta_{n-1}\, t_n$
be positive $\usched$-cycle and $s=t_0=t_n$.
Furthermore, let $\sched$ be a memoryless deterministic
scheduler under which $s$ reaches the goal-state with positive
probability.
Then, $p_g = \Pr^{\sched}_{s}(\Diamond \goal) >0$.
Let $p_s = \Pr^{\usched}_{\sinit}(\Diamond s)$
and $p_c = \Pr^{\usched}_s(\Cyl(\cycle))$.
Then, $p_s>0$ and $p_c = \probability(\cycle)>0$.
For $R \in \Nat$, we construct a scheduler $\tsched_R$ with two modes
as follows.
\begin{itemize}
\item
  In its first mode, $\tsched_R$ behaves as
  the critical scheduler $\usched$ for all input paths $\fpath$
  where $\cycle^R$ is not a fragment of $\fpath$.
\item
  If the input path $\fpath$ has a suffix that runs $R$-times through
  the cycle $\cycle$ and has not visited state $s$ before entering $\cycle$
  (in which case $\last(\fpath)=s$ and $\rew(\fpath) \geqslant R$),
  then $\tsched_R$ switches mode and behaves
  as $\sched$ from now on.
\end{itemize}
Note that the switch from the first mode
(simulation of $\usched$) to the second mode (simulation of $\sched$)
appears with probability $p_s \cdot p_c^R$.
As $\Pr^{\usched}_{\sinit}(\Box \neg \goal)=1$, we have
$\Pr^{\tsched_R}_{\sinit}(\neg s \Until \goal)=0$.
For the conditional
expected accumulated reward to reach $\goal$ under $\tsched_R$,
we get:
$$
  \CExp{\tsched_R}
  \ \ \geqslant \ \
  \begin{array}{l}
        p_s \cdot p_c^R \cdot p_g \cdot R \
        \\[0.3ex]
        \hline
        \\[-2.4ex]
        p_s \cdot p_c^R \cdot p_g
  \end{array}
  \ \ = \ \ R
$$
Hence, the limit of $\CExp{\tsched_R}$ is infinite
if $R$ tends to infinity.

\tudparagraph{1ex}{\rm (i) $\Longrightarrow$ (iii):}
We now suppose that $\cM$ has no critical scheduler and show that
the conditional expected accumulated reward is finite.
For this, we adapt the reset-mechanism that has been introduced
for computing conditional reachability probabilities in 
\cite{BKKM14} (see Appendix \ref{appendix:reset-Methode})
and rely on Lemma \ref{lemma:PosEC-uncond}.

The reset-mechanism is, however, not directly applicable since
the resulting MDP $\cN$ with reset-transitions from the fail-state
to the initial state might have positive end components, in which case
the maximal unconditional expected accumulated reward can be infinite.
For this reason,
we first transform $\cM$ into a new MDP $\cM'$
that has the same probabilistic structure
and the same schedulers, but uses a different reward structure.
The maximal conditional expected accumulated rewards to reach $\goal$
are the same in $\cM$ and $\cM'$.
Having constructed $\cM'$ we then can implement the reset-mechanism
and switch from $\cM'$ to a new MDP $\cN$ that has no positive end component.
The maximal unconditional expected accumulated
reward to reach $\goal$ in $\cN$ is finite
(by Lemma \ref{lemma:PosEC-uncond})
and yields an upper bound $\CExp{\ub}$ for the maximal conditional
expected accumulated reward to reach $\goal$ in $\cM$ (or $\cM'$).

\tudparagraph{1ex}{\it The new MDP $\cM'$.}
We first provide an informal explanation of the behavior
of the MDP $\cM'$. The idea is that $\cM'$ behaves as $\cM$,
but postpones the assignment of rewards until it is clear
that $\goal$ will be reached with positive probability.
Note that we can deal with $\cM'=\cM$ if
   $\Pr^{\min}_s(\Diamond \goal)>0$ for all states $s \not= \fail$.
   The following construction only serves to deal with the case
   where $\cM$ contains some non-trap states $s$ with
   $\Pr^{\min}_s(\Diamond \goal)=0$.

\noindent The definition of $\cM'$ relies on the following observation.
Let
$$
  R \ \ = \ \ \sum_{s\in S} \rew^{\max}(s)
$$
where $\rew^{\max}(s)=0$ if $s=\goal$ or $s=\fail$ and
$\rew^{\max}(s) = \max \{ \rew(s,\alpha) :\alpha \in \Act(s)\}$
for $s\in S \setminus \{\goal,\fail\}$.
Clearly, if $\fpath$ is a finite path in $\cM$ with
$\rew(\fpath) >R$ then $\fpath$ contains a positive cycle.
As $\cM$ has no critical schedulers, we have
$\Pr^{\sched}_{\sinit}(\Diamond \goal)>0$ for each scheduler
$\sched$ where $\rew(\fpath) >R$ for some
$\sched$-path $\pi$ starting in $\sinit$.

The new MDP $\cM'$ simulates $\cM$ while operating in two modes.
In its first mode, $\cM'$ augments the states with the
information on the reward that has been accumulated
in the past. That is, the states of $\cM'$ in its first mode
are pairs $\<s,r\> \in S \times \Nat$
where $r=\rew(\fpath) \leqslant R$
for some finite path $\fpath$ in $\cM$  from $\sinit$ to $s$.
Thus, as soon as $\cM'$ takes an action $\alpha$ in state $\<s,r\>$
where $r+ \rew(s,\alpha)$ exceeds $R$ then $\cM'$ switches to the second
mode where it behaves exactly as $\cM$ without any reward annotations of the states.
We assign reward 0 to all state-action pairs in the first mode.
The reward of the switches from the first to the second mode, say
from state $\<s,r\>$ via firing action $\alpha$, is
the total reward accumulated so far (value $r$)
plus the reward of the taken action (the value $\rew(s,\alpha)$).
The rewards for the state-action pairs in the second mode
are the same as in $\cM$.

The definition of the new MDP $\cM'$ is as follows.
The state space of $\cM'$ is:
$$
  S' \ \ = \ \ S \times \{0,1,\ldots,R\} \ \cup \ S
$$
The action set of $\cM'$ is $\Act' = \Act$.
The computation of $\cM'$ starts in state $\sinit' = \<\sinit,0\>$.
The transition probability function $P'$ and reward function $\rew'$ are
defined as follows.
\begin{itemize}
\item
  Let $s \in S \setminus \{ \goal\}$, $r \in \{0,1,\ldots,R\}$,
  $\act \in \Act(s)$ and $r' = r + \rew(s,\alpha)$.
  \begin{itemize}
  \item
    If $r' \leqslant R$ then
    $P'(\<s,r\>,\act,\<t,r'\>) = P(s,\act,t)$ for all states $t\in S$ and
    $\rew'(\<s,r\>,\act)=0$.
  \item
    If $r'>R$ then
    $P'(\<s,r\>,\act,t) = P(s,\act,t)$ for all states $t\in S$ and
    $\rew'(\<s,r\>,\act)=0$.
 \end{itemize}
\item
  If the current state of $\cM'$ is a state $\<\goal,r\>$ then $\cM'$
  switches to the goal state in the second mode while earning reward $r$.
  That is,
  $P'(\<\goal,r\>,\tau,\goal)=1$ and
  $\rew'(\<\goal,r\>,\tau)=r$
  for some distinguished action name $\tau \in \Act$.
\item
  If the current state in the new MDP $\cM'$ is a state $s\in S$
  then $\cM'$ behaves as $\cM$, i.e.,
  $\Act_{\cM'}(s)=\Act_{\cM}(s)$,
  $P'(s,\act,t) = P(s,\act,t)$ and $\rew'(s,\act) = \rew(s,\act)$
  if $s,t\in S$ and $\act \in \Act$.
\end{itemize}
The states $\goal$ and $\fail$ and $\<\fail,r\>$ for $r\in \{0,1,\ldots,R\}$
are trap states.

Obviously, there is a one-to-one correspondence between the paths
in $\cM$ and the paths in $\cM'$ and between the schedulers of $\cM$ and
$\cM'$.
Given a finite path $\fpath'$ in $\cM'$, let
$\fpath'|_{\cM}$ denote the path in $\cM$ resulting from $\fpath'$ by dropping
the annotations for the first mode. Vice versa, for a given path $\fpath$
in $\cM$ we add appropriate annotations for the states in
some prefix of $\fpath$ to obtain
a path $\mathit{lift}(\fpath)$ in $\cM'$ starting in
$\sinit' = \<\sinit,0\>$
with  $\mathit{lift}(\fpath)|_{\cM}=\fpath$.
We have:
\begin{itemize}
\item
  If $\fpath'$ consists of states in the first mode,
  i.e., all states of $\fpath'$ are annotated states
  in $S \times \{0,1,\ldots,R\}$,
  then $\rew'(\fpath') =0$ and $\rew(\fpath'|_{\cM}) = r$
  where  $\last(\fpath') = \<s,r\>$.
\item
  If $\last(\fpath')\in S$ is a state in second mode
  then $\rew'(\fpath')=\rew(\fpath'|_{\cM})$.
\item
  If the last transition in $\fpath'$ is a mode-switch,
  say $\<s,r\> \stackrel{\act}{\longrightarrow} t$, then
  $\Pr^{\sched}_{\cM,\sinit}(\Diamond \goal)>0$
  for each scheduler $\sched$ of $\cM$ where $\fpath'|_{\cM}$
  is a $\sched$-path
  (otherwise $\sched$ would be a critical scheduler).
\end{itemize}
Obviously,
$\cM$ and $\cM'$ have the same conditional expected
accumulated reward to reach $\goal$ under corresponding schedulers.
In particular:
$$
  \CExp{\max} \ \ = \ \
  \ExpRew{\max}{\cM,\sinit}
         (\ \accdiaplus \goal \ | \ \Diamond \goal \ )
  \ \ = \ \
  \ExpRew{\max}{\cM',\<\sinit,0\>}
          (\ \accdiaplus \goal \ | \ \Diamond \goal \ )
$$

\tudparagraph{1ex}{\it The MDP $\cN$}
results from $\cM'$ by adding reset-transitions from the states
$\fail$ and $\<\fail,\cG\>$. Let
$$
  \Fail \ \ \ = \ \ \
  \bigl\{\, \fail \, \bigr\} \ \cup \
  \bigl\{\ \<\fail,r\> \ : \ r\in \{0,1,\ldots,R\} \ \bigr\}
$$
For all states $s_f \in \Fail$
we deal with $\Act_{\cN}(s_f)=\{\tau\}$ for some distinguished
action $\tau$
and define $P_{\cN}(s_f,\tau,\sinit') = 1$ and
$\rew_{\cN}(s_f,\tau)=0$.
The transition probabilities of all other states and the
rewards of all other state-action pairs are the same as in $\cM'$.

While $\cM'$ has no end components (by assumption (A2)),
$\cN$ might have end components containing $\sinit'$
and at least one of the fail-states.
The above shows that the reward of each finite path in $\cM'$ from
$\sinit'$ to some state $\<\fail,r\>$ is 0,
while the reward of paths from $\sinit'$ to $\fail$
can be positive.
We now show that the end components of $\cN$
cannot contain any of the states of $\cM$'s second mode.
In particular, there is no end component containing state $\fail$.

\tudparagraph{1ex}{{\it Claim.}}
  $\cN$ has no positive end components.

\tudparagraph{1ex}{{\it Proof of the claim.}}
  By the definition of the reward function in $\cN$, it suffices to show
  that none of the states $t\in S\setminus \{\goal\}$
  belongs to an end component of $\cN$.

  Suppose by contradiction that there is an end component $\cE = (E,\ActEC)$
  of $\cN$ containing some state $t\in S\setminus \{\goal\}$.
  Since $\cM$ and $\cM'$ do not contain  end components
  (assumption (A2)), $\cE$ must contain one of the reset-transitions
  from $\fail$ or some state $\<\fail,r\>$ to the
  initial state $\sinit'=\<\sinit,0\>$.
  As $\goal$ is a trap, $\cE$ does not contain
  $\goal$ and none of its copies $\<\goal,r'\>$.
  We pick a finite-memory scheduler $\sched'$ for $\cN$ such that
  \begin{center}
    $\infpath \models \Box \Diamond t$ \ and \
    $\infpath \models \Box E$ \ for all infinite $\sched'$-paths $\infpath$
    starting in $\sinit'$.
  \end{center}
  Thus, $\Pr^{\sched'}_{\cN,\sinit'}(\Box E)=1$
  and therefore
  $\Pr^{\sched'}_{\cN,\sinit'}(\Diamond \goal)=0$.

  Let $\fpath'$ be a shortest $\sched'$-path from $\sinit'$ to $t$.
  Clearly, $\fpath'$ contains a mode-switch. Hence, it must
  contain a positive cycle.
  As $\cM$ does not have critical schedulers,
  the scheduler $\sched$ for $\cM$ that behaves as $\sched'$
  as long as none of the fail-states has been visited enjoys
  the property $\Pr^{\sched}_{\cM,\sinit}(\Diamond \goal)>0$.
  As all $\sched$-paths are $\sched'$-paths we get
  $$
    \Pr^{\sched'}_{\cN,\sinit'}(\Diamond \goal) \ > \ 0
  $$
  Contradiction.
This completes the proof of the claim.

Using the results of \cite{BKKM14}, %
each scheduler
$\sched'$ for $\cM'$ induces a corresponding scheduler $\sched$ for $\cN$
such that the conditional probability for some reachability objective
$\varphi$ under the assumption $\Diamond \goal$ with respect to scheduler
$\sched'$ in $\cM'$ agrees with the unconditional probability for
$\varphi$ in $\cN$ with respect to scheduler $\sched$.
If $\fpath'$ is a $\sched'$-path from
$\sinit'=\<\sinit,0\>$ to $\goal$ then
$\fpath'$ is a suffix of all ``corresponding'' $\sched$-paths $\fpath$
in $\cN$ and therefore $\rew'(\fpath') \leqslant \rew_{\cN}(\fpath)$.
With Lemma \ref{lemma:PosEC-uncond} applied to $\cN$ we get:
$$
  \ExpRew{\max}{\cM',\sinit'}
    (\ \accdiaplus \goal \ | \ \Diamond \goal \ )
  \ \ \leqslant \ \
  \ExpRew{\max}{\cN,\<\sinit,0\>}(\ \accdiaplus \goal \ )
  \ \ < \ \ \infty
$$
This completes the proof of 
Proposition \ref{proposition:exprew-infinite-critical-scheduler}.
\Ende
\end{proof}

\begin{corollary}
   If $\cM$ is an MDP where \eqref{assumption:A1} and \eqref{assumption:A2} hold and
   $\Pr^{\min}_{s}(\Diamond \goal)>0$ for all states
   $s\in S \setminus \{\fail\}$ then $\CExp{\max} < \infty$.
\end{corollary}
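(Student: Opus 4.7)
The plan is to reduce the claim directly to Proposition~\ref{proposition:exprew-infinite-critical-scheduler}, which states that $\CExp{\max}=\infty$ iff $\cM$ admits a critical scheduler. Hence it suffices to show that, under the standing assumptions \eqref{assumption:A1}, \eqref{assumption:A2} together with the extra hypothesis $\Pr^{\min}_{s}(\Diamond \goal)>0$ for all $s\in S\setminus\{\fail\}$, no critical scheduler can exist.

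Recall that a scheduler $\usched$ is critical if $\Pr^{\usched}_{\cM,\sinit}(\Box\neg\goal)=1$ together with the existence of a reachable positive $\usched$-cycle. I will use only the first part. Since $\sinit\neq\fail$ (by the cleaning-up step in Section~\ref{appendix:finitness-transformation}), the hypothesis yields $\Pr^{\min}_{\sinit}(\Diamond\goal)>0$. By definition of the minimum over schedulers, this means that every scheduler $\sched$ satisfies
\[
   \Pr^{\sched}_{\cM,\sinit}(\Diamond \goal) \ \geqslant \ \Pr^{\min}_{\cM,\sinit}(\Diamond \goal) \ > \ 0,
\]
and hence $\Pr^{\sched}_{\cM,\sinit}(\Box\neg\goal) <1$ for every scheduler $\sched$. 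Consequently, no scheduler fulfills the defining property of criticality, so $\cM$ has no critical scheduler at all.

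Applying the implication (i)$\Longrightarrow$(iii) of Proposition~\ref{proposition:exprew-infinite-critical-scheduler} in its contrapositive form (from ``no critical scheduler'' to ``$\CExp{\max}<\infty$'') yields the corollary. There is no serious obstacle here; the only subtlety is to ensure that the hypothesis on $\Pr^{\min}_{s}(\Diamond\goal)$ at the initial state is indeed available, which is the case because $\sinit\in S\setminus\{\fail\}$.
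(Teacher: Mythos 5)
Your argument is correct and is exactly the paper's own one-line proof, spelled out a little more explicitly: since $\Pr^{\min}_{\sinit}(\Diamond\goal)>0$, no scheduler can have $\Pr^{\usched}_{\sinit}(\Box\neg\goal)=1$, so there is no critical scheduler, and Proposition~\ref{proposition:exprew-infinite-critical-scheduler} gives finiteness. No difference in approach.
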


\begin{proof}
 The proof is obvious as there is no scheduler $\usched$ with
 $\Pr^{\usched}_{\sinit}(\Box \neg \goal)=1$. Hence, there is no critical
 scheduler.
\Ende
\end{proof}

\subsection{Algorithm for checking finiteness of $\CExp{\max}$}
\label{summary:check-finiteness}

By the obtained results,
the finiteness of the maximal conditional expected accumulated reward
can be checked as follows.
We first apply standard algorithms to compute the maximal end components.
If one of them is positive then the maximal conditional expected
accumulated reward is infinite (see Lemma \ref{lemma:PosEC}).
Otherwise we switch from $\cM$ to its MEC-quotient, i.e.,
we collapse all maximal end components into a single state
and identify $\cM$ with the resulting MDP.
The remaining task is to check the absence of critical schedulers in $\cM$
(see Definition \ref{def:critical-scheduler}
 and Proposition \ref{proposition:exprew-infinite-critical-scheduler}).
For this, we consider the largest sub-MDP $\cM_{\setminus \goal}$
that results by iteratively removing
states and actions. We first remove state $\goal$ and
for each state $s$, we remove all actions $\act$ from
$\Act(s)$ with $P(s,\act,\goal)>0$.
If some state arises where $\Act(s)$ is empty then we remove state $s$
using the same technique.
For all states $s$ in the resulting sub-MDP $\cM_{\setminus \goal}$
we have $\Pr^{\min}_{\cM_{\setminus \goal},s}(\Diamond \fail)=1$
(by (A2)) and
$\Pr^{\max}_{\cM,s}(\Diamond \fail)=1$
(as $\Pr^{\sched}_{\cM,s}(\Diamond \fail)=1$ for each scheduler for
$\cM_{\setminus \goal}$ viewed as a scheduler of $\cM$ with initial state $s$).
Furthermore,
$\cM_{\setminus \goal}$ has a positive cycle
if and only if $\cM$ has a critical scheduler.
The existence of a positive cycle can be checked using a
nested depth-first search \cite{CVWY92}
as it is known for checking the existence of a path satisfying a
B\"uchi (repeated reachability) condition in ordinary transition systems.
Together with the preprocessing explained in
Appendix \ref{appendix:finitness-transformation}
we get:

\begin{corollary}
  The task to check whether $\CExp{\max}$
  is finite is solvable in time polynomial in the size of $\cM$.
\end{corollary}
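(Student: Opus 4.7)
The plan is to verify that each step of the algorithm outlined in Section~\ref{summary:check-finiteness} runs in time polynomial in $\Size(\cM)$, and that together they correctly decide finiteness of $\CExp{\max}$.

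First, I would apply the preprocessing and normal form transformation of Appendix~\ref{appendix:finitness-transformation}. The construction of $\tilde{\cM}$ requires at most four copies of the state space (normal, after-$G$, after-$F$, $\goal$), hence $|\tilde{S}| \in \cO(|S|)$; the simplifications leading to $\tilde{\cM}'$ require computing $\Pr^{\max}_{\tilde{\cM},\tilde{s}}(\Diamond \goal)$ and testing reachability of $\goal$, both polynomial-time tasks via standard LP/graph techniques. By Lemma~\ref{lemma:soundness-norma-form-transformation}, $\CExp{\max}$ for $\cM$ equals the analogous supremum for $\tilde{\cM}'$, so it suffices to decide finiteness on $\tilde{\cM}'$, which satisfies \eqref{assumption:A1}.

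Second, I would compute the maximal end components of $\tilde{\cM}'$ using the standard polynomial-time MEC-decomposition algorithm (e.g., iterated SCC analysis). For each MEC $\cE=(E,\ActEC)$, check in linear time whether there exists $(s,\alpha)$ with $s \in E$, $\alpha \in \ActEC(s)$ and $\rew(s,\alpha)>0$, i.e.\ whether $\cE$ is positive. If some positive MEC is reachable from $\sinit$ via a state in $S \setminus \{\goal,\fail\}$, then $\mathit{PosEC}$ is nonempty and, by Lemma~\ref{lemma:PosEC}, $\CExp{\max}=\infty$; the algorithm returns ``no''. Otherwise all reachable MECs are zero-reward, and collapsing them into single states (the MEC-quotient) yields an equivalent MDP, still denoted $\cM$, that satisfies \eqref{assumption:A2} in addition to \eqref{assumption:A1}.

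Third, with both assumptions in force, Proposition~\ref{proposition:exprew-infinite-critical-scheduler} reduces the question to the existence of a (memoryless deterministic) critical scheduler. To detect such a scheduler, construct the sub-MDP $\cM_{\setminus \goal}$ by iteratively removing $\goal$, all actions $\alpha$ with $P(s,\alpha,\goal)>0$, and any state whose action set becomes empty; this fixed-point computation is polynomial. A critical scheduler exists iff $\cM_{\setminus \goal}$ contains a reachable positive cycle, which can be detected in polynomial time, e.g.\ by computing the SCCs of $\cM_{\setminus \goal}$ restricted to the states reachable from $\sinit$ and testing each nontrivial SCC for a state-action pair of positive reward (equivalently, a nested DFS \`a la~\cite{CVWY92} looking for a repeated reachability witness of positive-reward edges). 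If such a cycle exists, return ``no''; otherwise return ``yes''.

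The main obstacle to justify in detail is the equivalence between ``critical scheduler'' and ``positive reachable cycle in $\cM_{\setminus \goal}$''. For the easy direction, any memoryless deterministic critical scheduler is, by definition, a policy whose induced chain avoids $\goal$ almost surely, so all its enabled state-action pairs belong to $\cM_{\setminus \goal}$; by Definition~\ref{def:critical-scheduler} it exhibits a reachable positive cycle. Conversely, a reachable positive simple cycle $\xi$ in $\cM_{\setminus \goal}$ can be completed to a memoryless deterministic scheduler $\usched$ of $\cM$ by choosing, at each state, an action of $\cM_{\setminus \goal}$ consistently with $\xi$; by construction of $\cM_{\setminus \goal}$ no $\usched$-transition reaches $\goal$, so $\Pr^{\usched}_{\sinit}(\Box \neg \goal)=1$, and $\xi$ is a reachable positive $\usched$-cycle. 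All remaining bookkeeping is routine, and since every subtask above is polynomial in $\Size(\cM)$, the overall procedure is polynomial, proving the corollary.
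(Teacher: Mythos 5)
Your proposal follows essentially the same approach as the paper: preprocess into the normal form $\tilde{\cM}'$, rule out $\CExp{\max}=\infty$ via positive end components using Lemma~\ref{lemma:PosEC}, pass to the MEC-quotient, and then detect critical schedulers via a positive reachable cycle in the sub-MDP $\cM_{\setminus \goal}$ using Proposition~\ref{proposition:exprew-infinite-critical-scheduler} and a nested DFS. Your explicit argument for the equivalence between ``critical scheduler exists'' and ``positive reachable cycle in $\cM_{\setminus \goal}$'' is a welcome elaboration of a point the paper states without proof, and the polynomial-time accounting of each step is correct.
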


\subsection{Computing an upper bound}

\label{sec:upper-bound}

Suppose now that $\cM$ has no critical schedulers, in which case
$\CExp{\max}$ is finite.
The maximal unconditional expected accumulated
reward until goal
in the MDP $\cN$ constructed
in the proof of 
Proposition \ref{proposition:exprew-infinite-critical-scheduler} 
yields an upper bound $\CExp{\ub}$ for the maximal conditional
expected accumulated reward until reach $\goal$ in $\cM$:
$$
  \CExp{\ub} \ \ \eqdef \ \ 
  \ExpRew{\max}{\cN,\<\sinit,0\>}(\accdiaplus \goal)
  \ \ \geqslant \ \ 
  \CExp{\max}
$$
We now address the complexity bounds for the computation of
$\CExp{\ub}$ stated in Theorem~\ref{thm:finiteness}.
The logarithmic length of $R = \sum_{s} \rew^{\max}(s)$ where
$\rew^{\max}(s)=\max\{\rew(s,\alpha):\alpha \in \Act(s)\}$
is linear in $\Size(\cM)$.
The size of the MDPs $\cN$ is in $\cO(R\cdot \Size(\cM))$.
The computation of the upper bound $\CExp{\ub}$ then corresponds to 
the computation of the maximal unconditional expected accumulated
reward to reach $\goal$ in $\cN$, which has a polynomial time bound 
in the size of $\cN$. Consequently, $\CExp{\ub}$
can be computed in time polynomial in $R$ and the size of $\cM$,
which gives the pseudo-polynomial time bound as stated in
Theorem~\ref{thm:finiteness}.

As noted in the proof of 
Prop.~\ref{proposition:exprew-infinite-critical-scheduler},
for the special case where $\Pr^{\min}_{\cM,s}(\Diamond \goal)>0$ for all
states $s \in S \setminus \{\fail\}$,
we can avoid the annotation of the states with
the accumulated reward up to $N$. In this case, the size of $\cN$
is polynomial in $\Size(\cM)$, which leads to the polynomial bound on the
computation time of $\CExp{\ub}$ as stated in
Theorem~\ref{thm:finiteness}.

\section{Deterministic reward-based schedulers are sufficient}
\label{sec:reward-based}

In this section, we show that deterministic
reward-based schedulers are sufficient for $\CExp{\max}$
where we suppose that the given MDP $\cM = (S,\Act,P,\sinit,\rew)$ 
has two trap states $\goal$ and $\fail$ and
satisfies \eqref{assumption:A1}, \eqref{assumption:A2} and
$\CExp{\max} < \infty$.
Recall that \eqref{assumption:A1} asserts that $\goal$ is reachable from
all states $s\in S \setminus \{\fail\}$, while
\eqref{assumption:A2} asserts that $\cM$ has no end components
and therefore
$\Pr^{\min}_{\cM,s}(\Diamond (\goal \vee \fail))=1$ for all
states $s\in S$.

Recall that deterministic reward-based schedulers can be viewed as functions
$\sched : S \times \Nat \to \Act$.

\begin{proposition}
\label{prop:det-reward-based}
\begin{align*}
   \CExp{\max} \ \ = \ \
   \sup \
   \bigl\{ \ \CExp{\sched} \ : & \
            \text{$\sched$ is a deterministic reward-based scheduler}
              \\
              &
              \ \ \text{for $\cM$ such that} \ \
               \Pr^{\sched}_{\cM,\sinit}(\Diamond \goal)>0 \
   \bigr\}
\end{align*}
\end{proposition}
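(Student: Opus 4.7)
The plan is to reduce the claim to a statement about an auxiliary (countably) infinite-state MDP and then apply two classical purification arguments. Let $\Hut{\cM}$ be the reward-unfolded product MDP with state space $S \times \Nat$, initial state $(\sinit,0)$, and transitions $(s,r)\overto{\alpha}(t,\, r{+}\rew(s,\alpha))$ of probability $P(s,\alpha,t)$, carrying over the reward function $\rew$ to $((s,r),\alpha)$. Paths and schedulers in $\cM$ are in a natural bijection with paths and schedulers in $\Hut{\cM}$ started from $(\sinit,0)$, and this bijection preserves the probability of reaching the goal set $\{\goal\}\times \Nat$ as well as the accumulated reward along each path. Crucially, under this bijection the \emph{reward-based} schedulers of $\cM$ correspond exactly to the \emph{memoryless} schedulers of $\Hut{\cM}$, and determinism is preserved on both sides. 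The claim therefore reduces to: in $\Hut{\cM}$, the supremum of $\CExp{\cdot}$ over all schedulers equals the supremum over memoryless deterministic ones.

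Since the right-hand side is trivially bounded above by the left, it suffices to show that for each scheduler $\sched$ of $\Hut{\cM}$ with $\Pr^{\sched}_{(\sinit,0)}(\Diamond \goal)>0$ and each $\varepsilon>0$ there is a memoryless deterministic scheduler $\Hut{\sched}$ with $\CExp{\Hut{\sched}}\geqslant \CExp{\sched}-\varepsilon$. The first step is to purify randomization by the classical mediant inequality: decomposing $\CExp{\sched} = \Exp{\sched}{(\sinit,0)}/\Pr^{\sched}_{(\sinit,0)}(\Diamond \goal)$ according to the probabilities $p_\alpha=\sched(\fpath)(\alpha)$ that $\sched$ assigns to the actions at a fixed history $\fpath$ yields
\[
  \CExp{\sched}
  \ \ = \ \
  \frac{\sum_\alpha p_\alpha E_\alpha}{\sum_\alpha p_\alpha P_\alpha}
  \ \ \leqslant \ \
  \max_{\alpha \,:\, P_\alpha>0} \frac{E_\alpha}{P_\alpha},
\]
the inequality being the mediant inequality $(\sum p_\alpha a_\alpha)/(\sum p_\alpha b_\alpha)\leqslant \max_\alpha(a_\alpha/b_\alpha)$. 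Hence some pure choice $\alpha^{*}$ at $\fpath$ does not decrease $\CExp{}$, and carrying this purification out along a fixed enumeration of the countably many finite histories produces a deterministic history-dependent scheduler $\sched^{d}$ of $\Hut{\cM}$ with $\CExp{\sched^{d}}\geqslant \CExp{\sched}$.

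The second step is the deterministic-to-memoryless reduction in $\Hut{\cM}$. For a deterministic $\sched^{d}$ and any reachable state $(s,r)$, the various residual schedulers $\residual{\sched^{d}}{\fpath}$ arising from different $\sched^{d}$-paths $\fpath$ ending at $(s,r)$ contribute partial expectations $\Exp{\residual{\sched^{d}}{\fpath}}{s}$ and reach probabilities $\Pr^{\residual{\sched^{d}}{\fpath}}_{s}(\Diamond \goal)$ to $\CExp{\sched^{d}}$. Applying the magic identity \eqref{magic-constraint} with $\mathsf{C}$ taken to be any upper bound on $\CExp{\max}$ (e.g.\ the $\CExp{\ub}$ from Theorem~\ref{thm:finiteness}), one sees that replacing every residual at $(s,r)$ by the single one maximizing $\Exp{\cdot}{s} - (\mathsf{C}{-}r)\cdot \Pr^{\cdot}_{s}(\Diamond \goal)$ does not decrease $\CExp{}$. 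Carrying out this replacement at every reachable state-reward pair in $\Hut{\cM}$ produces a memoryless deterministic scheduler with conditional expectation at least $\CExp{\sched^{d}}$, and pulling back to $\cM$ gives the required deterministic reward-based scheduler.

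The main obstacle is making this last iterative step rigorous, since $\Hut{\cM}$ has infinitely many reachable state-reward pairs and the replacement is not a priori a terminating procedure. I would control this by a finite-horizon truncation: using $\CExp{\max}<\infty$, for each $\varepsilon>0$ one can pick $N$ large enough that first-hits of $(\goal,r)$ with $r>N$ contribute less than $\varepsilon\cdot \Pr^{\sched}_{(\sinit,0)}(\Diamond\goal)$ to the partial expectation; the purification is then performed only at pairs $(s,r)$ with $r\leqslant N$, with any fixed deterministic reward-based completion above level $N$, and the error vanishes as $\varepsilon\to 0$. The fact that \eqref{magic-constraint} depends only on an upper bound for $\CExp{\max}$ (not on its unknown exact value) is what keeps every improvement step monotone and avoids circularity.
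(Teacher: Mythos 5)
Your plan diverges from the paper's proof in a fundamental way, and the divergence is where the gap opens up. You unfold the reward counter into a product $\Hut{\cM}$ with state space $S\times\Nat$ and then try to purify randomization and history \emph{inside} the conditional-expectation problem directly, greedily selecting residuals by the criterion $\Exp{\cdot}{s} - (\mathsf{C}{-}r)\Pr^{\cdot}_s(\Diamond\goal)$ with $\mathsf{C}=\CExp{\ub}$. The paper instead first turns the conditional expectation into an \emph{unconditional} expected total reward: it takes the reset-MDP of \cite{BKKM14} (a reset transition from $\fail$ to $\sinit$), annotates states with the accumulated reward, and gives the reset transition from $\<\fail,r\>$ the \emph{negative} reward $-r$. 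In the resulting countable-state MDP $\acc{\cN}$, reward-based schedulers of $\cM$ correspond to memoryless schedulers, $\ExpRew{\sched'}{\acc{\cN},\<\sinit,0\>}(\accdiaplus\goal) = \ExpRew{\sched}{\cM,\sinit}(\accdiaplus\goal\mid\Diamond\goal)$, and the sufficiency of memoryless deterministic schedulers is then imported from the classical theory of countable-state MDPs with positive and negative rewards (Puterman, Section 7.1), after checking the positive-part total reward is finite via a bisimulation back to $\cN$. Your $\Hut{\cM}$ has no reset transitions, so there is no known theorem to cite, and that is what pushes you into the ad hoc purification.

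The ad hoc step has a genuine flaw. The equivalence in \eqref{magic-constraint} is stated for $\mathsf{C}=\max\{\mathsf{A},\mathsf{B}\}$ and $y>z$. The remark after \eqref{magic-constraint} gives only the \emph{one-sided} sufficient condition: if $y>z$ and $r+\frac{\theta-\zeta}{y-z}\geqslant\CExp{\ub}\geqslant\max\{\mathsf{A},\mathsf{B}\}$, then $\mathsf{A}\geqslant\mathsf{B}$. Turning this into a greedy ``pick the residual maximizing $\theta-(\CExp{\ub}-r)y$'' rule is not monotone when the winning residual has \emph{smaller} reach probability. Concretely, with $r=0$, $\rho=0$, $x=p=1$: take residual~1 with $(y,\theta)=(0.1,\,0.22)$ and residual~2 with $(z,\zeta)=(0.2,\,0.5)$. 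Then $\mathsf{A}=0.22/1.1=0.2$ and $\mathsf{B}=0.5/1.2\approx0.417$, so residual~2 is the better choice; yet for any $\CExp{\ub}>2.8$ one has $\theta-\CExp{\ub}y>\zeta-\CExp{\ub}z$, so your rule picks residual~1 and strictly \emph{decreases} the conditional expectation. Your remark that using an upper bound ``avoids circularity'' is exactly where the argument breaks: the improvement criterion depends on the true (unknown) value $\CExp{\max}$, not merely any upper bound. The paper uses a $\CExp{\ub}$-based criterion only in restricted situations where the reference scheduler $\maxsched$ maximizes the reach probability, so that $y\geqslant z$ always holds, and Lemma~\ref{lemma:rho-theta-LP} gives only the weaker ``if $\mathsf{B}\geqslant\threshold$ then $\mathsf{A}\geqslant\threshold$'', not $\mathsf{A}\geqslant\mathsf{B}$. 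The finite-horizon truncation in your last step does not repair this, because the error is in the per-level improvement step, not in a tail contribution. (Your Step~1, the mediant-based derandomization, is in principle salvageable, but it needs a continuity argument such as Proposition~\ref{proposition:continuity-exp-reward} to pass to the limit along the countable enumeration, which you do not reference.)
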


\begin{proof}
We consider the MDPs $\cM'$ and $\cN$  that have been introduced
in the proof of
Proposition \ref{proposition:exprew-infinite-critical-scheduler}.
Recall that $\cM$ and $\cM'$ have the same maximal conditional expectation
and there is a one-to-one-correspondence between the
reward-based schedulers for $\cM$ and $\cM'$.%
\footnote{If a deterministic reward-based scheduler $\sched$ for $\cM$
  chooses action $\alpha$
  for $(s,r)$ with $r \leqslant R$ then the lifted scheduler for $\cM'$
  chooses $\alpha$ for the state-action pair $(\<s,r\>,0)$ in $\cM'$,
  and vice versa.
  Recall that the accumulated reward of each path in $\cM'$ from
  $\sinit'=\<\sinit,0\>$ to some state $\<s,r\>$ in the first
  mode has reward 0.
  For the state-reward pairs $(s',r)$ where $s'$ is a state of $\cM'$
  in its second mode (i.e., $s'\in S$),
  the lifted scheduler selects the same action
  as $\sched$.}
For the following arguments we may identify $\cM$ and $\cM'$.
This means that we may assume whenever $\sched$ is a scheduler for $\cM$ with
$\Pr^{\sched}_{\cM,\sinit}(\Diamond \fail)=1$ then
$\rew(\fpath)=0$ for all $\sched$-paths from $\sinit$ to $\fail$.
Stated differently, by identifying $\cM$ and $\cM'$ we have
$\Pr^{\sched}_{\cM,\sinit}(\Diamond \goal)>0$ for
each scheduler $\sched$ for $\cM$ where
$\rew(\fpath)>0$ for some $\sched$-path from $\sinit$ to $\fail$.
With these assumptions, the MDP $\cN$ arises from $\cM$ by adding
a reset transition from $\fail$ to $\sinit$ with reward 0.
As shown in the proof of
Proposition \ref{proposition:exprew-infinite-critical-scheduler},
$\cN$ has no positive end components,
and therefore $\ExpRew{\max}{\cN,\sinit}(\accdiaplus \goal)$ is finite.
Moreover, as $\goal$ is the only trap state in $\cN$ and
as all state-action pairs contained in some end component of $\cN$
have reward 0, $\ExpRew{\max}{\cN,\sinit}(\accdiaplus \goal)$ agrees with
the maximal total reward in $\cN$.

We now switch from $\cN$ to a new MDP $\acc{\cN}$ that arises from
$\cN$ by attaching the accumulated reward to all states
and modifying $\cN$'s reward structure such that the reward of the
reset transition after a path $\fpath$  from $\sinit$ to a fail state
is $-\rew(\fpath)$.
Thus, $\acc{\cN}$ has infinitely many states and
positive and negative rewards.
Formally, the definition of $\acc{\cN}$ is as follows.
The state space of $\acc{\cN}$ is $\acc{S} = S \times \Nat$.
The action set remains unchanged, i.e.,
$\Act_{\acc{\cN}} = \Act_{\cN}=\Act \cup \{\tau\}$ with $\tau$ being a
symbol for the reset transitions.
The transition probabilities and reward structure in $\acc{\cN}$
are given by:
$$
  \acc{P}(\<s,r\>,\alpha,\<s',r'\>)
  \ \ = \ \
  \left\{
   \begin{array}{lcl}
      P(s,\alpha,s') & : & \text{if $r'=r{+}\rew(s,\alpha)$} \\
      0 & : & \text{otherwise}
   \end{array}
  \right.
$$
and
$$
  \acc{\rew}(\<s,r\>,\alpha) \ \ = \ \ \rew(s,\alpha)
$$
for all states $s\in S \setminus \{\goal,\fail\}$, $s'\in S$,
actions $\alpha \in \Act(s)$ and $r,r'\in \Nat$.
The reset action $\tau$ is the only enabled action of the fail states
with
$$
  \acc{P}(\<\fail,r\>,\tau,\<\sinit,0\>)=1
  \qquad \text{and} \qquad
  \acc{\rew}(\<\fail,r\>,\tau) \ \ = \ \ -r
$$
and $\acc{P}(\cdot)=0$ in all remaining cases.
The starting state of $\acc{\cN}$ is $\<\sinit,0\>$.

Each path in the original MDP $\cM$ can be lifted to a path
$\lift{\fpath}$ in $\acc{\cN}$ by
augmenting the states with reward values.
Vice versa,
for each finite path
$\fpath' = \<s_0,0\> \, \alpha_0 \,
                \<s_1,r_1\> \alpha_1 \ldots \alpha_{n-1}\, \<s_n,r_n\>$
in $\acc{\cN}$ where $\fail \notin \{s_0,s_1,\ldots,s_n\}$, the sequence
$\fpath = s_0\, \alpha_0 \, s_1 \, \alpha_1 \ldots \alpha_{n-1}\, s_n$ is a
path in $\cM$ with $\rew(\fpath)=r_n$.
Thus, $\acc{\rew}(\fpath') =0$ for all paths $\fpath'$ in $\acc{\cN}$
from $\<\sinit,0\>$ to $\<\sinit,0\>$
where the last transition is a reset-transition from some fail state
$\<\fail,r\>$ to $\<\sinit,0\>$.
If now $\fpath'$ is an arbitrary path in $\acc{\cN}$ from $\<\sinit,0\>$
to some state $\<s,r\>$ then $\acc{\rew}(\fpath')=r$.
We define $\fpath'|_{\cM}$ as the unique path $\fpath$ from $\sinit$ to $s$
in $\cM$
that arises from $\fpath'$
by (i) removing the longest prefix $\finpath$ of $\fpath'$
where the last transition of $\finpath$
is a reset transition (i.e., $\fpath'$ has the form $\finpath;\fpath''$
where $\fpath''$ is a path from $\<\sinit,0\>$ to some state $\<s,r\>$
in $\acc{\cN}$ that does not contain a reset transition)
and (ii) erasing the
reward annotations of  remaining path $\fpath''$.
Then, $\acc{\rew}(\fpath')=\rew(\fpath'|_{\cM})$ and the lifting of
$\fpath'|_{\cM}$ is the suffix $\fpath''$ of $\fpath'$.

Each deterministic memoryless scheduler $\sched$ for $\acc{\cN}$
can be viewed as a deterministic reward-based scheduler for $\cM$.
That is, if $\sched'$ is a deterministic memoryless scheduler for $\acc{\cN}$
then the corresponding deterministic reward-based scheduler
$\sched'|_{\cM}$ for $\cM$
is given by $\sched'|_{\cM}(s,r)=\sched'(\<s,r\>)$.
Using arguments as in \cite{BKKM14}
(see also Section~\ref{appendix:reset-Methode}), we get:
$$
  \ExpRew{\sched'}{\acc{\cN},\sinit}(\accdiaplus \goal)
  \ \ = \ \
  \ExpRew{\sched'|_{\cM}}{\cM,\sinit}
     (\ \accdiaplus \goal \ | \ \Diamond \goal \ )
$$
Vice versa, each (possibly history-dependent and randomized)
scheduler $\sched$  for $\cM$ induces a scheduler
$\lift{\sched}$ for $\acc{\cN}$ given by
$\lift{\sched}(\fpath') = \sched(\fpath'|_{\cM})$.
Thus, the residuals of the lifted scheduler enjoy the property
$\residual{\lift{\sched}}{\fpath'} = \lift{\sched}$ for each finite
path $\fpath'$ in $\acc{\cN}$ where the last transition in $\fpath'$
is the reset transition from some fail state $\<\fail,r\>$ to
$\<\sinit,0\>$. Again, using \cite{BKKM14}, we obtain:
$$
  \ExpRew{\lift{\sched}}{\acc{\cN},\sinit}(\accdiaplus \goal)
  \ \ = \ \
  \ExpRew{\sched}{\cM,\sinit}(\ \accdiaplus \goal \ | \ \Diamond \goal \ )
$$
The remaining task is to show that
the supremum of the values
$\ExpRew{\sched'}{\acc{\cN},\sinit}(\accdiaplus \goal)$
when ranging over all schedulers $\sched'$ for $\acc{\cN}$ agrees with
the supremum over the deterministic memoryless schedulers for $\acc{\cN}$.
For this, we rely on known results
for infinite-state MDPs with
positive and negative rewards as stated e.g.~in \cite{Puterman}.
Let $\acc{\cN}^+$ be the MDP resulting from $\acc{\cN}$
by replacing the reward function
$\acc{\rew}$ with the (non-negative) reward function $\acc{\rew}^+$
defined by
$$
  \acc{\rew}^+(\<s,r\>,\alpha)
  \ \ = \ \
  \max \ \bigl\{\, \acc{\rew}(s,\alpha), \, 0 \, \bigr\}
  \ \ = \ \ \rew_{\cN}(s,\alpha)
$$
for all states $s \in S \setminus \{\goal\}$,
actions $\alpha\in \Act_{\cN}(s)$ and $r\in \Nat$.
In particular, $\acc{\rew}^+(\<\fail,r\>,\tau)=0$.
It is well-known (see Section 7.1 in \cite{Puterman})
that if the maximal (unconditional) expected
total reward in $\acc{\cN}^+$ is finite, then
the supremum of the expected total reward in $\acc{\cN}$
when ranging over all schedulers $\sched'$ for $\acc{\cN}$ agrees with
the supremum over the deterministic memoryless schedulers for $\acc{\cN}$.
The expected total reward in $\acc{\cN}$ under some scheduler
$\sched'$ agrees with the value
$\ExpRew{\sched'}{\acc{\cN},\<\sinit,0\>}(\accdiaplus \mathit{Goal})$
where $\mathit{Goal}=\{\<\goal,r\> : r \in \Nat\}$.
The analogous statement holds for $\acc{\cN}^+$.
So, the remaining task is to show that the maximal expected reward
until reaching a goal state in
$\acc{\cN}^+$ is finite.
The relation $\cR = \{ (s,\<s,r\>): r \in \Nat\}$
is a reward-preserving bisimulation for the MDPs $\acc{\cN}^+$ and
the MDP $\cN$ resulting from $\cM$ by adding a reset-transition
of reward 0 from $\fail$ to $\sinit$ (see above).
Hence:
$$
  \ExpRew{\max}{\acc{\cN}^+,\<\sinit,0\>}(\accdiaplus \mathit{Goal})
  \ \ = \ \
  \ExpRew{\max}{\cN,\sinit}(\accdiaplus \goal) \ \ < \ \ \infty
$$
As stated above, using classical results about countable MDPs
with positive and negative rewards (see e.g.~\cite{Puterman}),
we obtain:
\begin{eqnarray*}
  & & \ExpRew{\max}{\acc{\cN},\sinit}(\accdiaplus \mathit{Goal})
  \\[2ex]
  = & &
  \sup \
  \bigl\{ \ \ExpRew{\sched'}{\acc{\cN},\<\sinit,0\>}
                (\accdiaplus \mathit{Goal}) \ : \
            \text{$\sched'$ is a det.~memoryless scheduler
             for $\acc{\cN}$}   \
  \bigr\}
  \\[2ex]
  = & &
  \sup \
  \bigl\{ \ \ExpRew{\sched}{\cM,\sinit}(\accdiaplus \goal) \ : \
            \text{$\sched$ is a det.~reward-based scheduler
             for $\cM$}   \
  \bigr\}
\end{eqnarray*}
This completes the proof of Proposition \ref{prop:det-reward-based}.
\Ende
\end{proof}

\section{Existence of a saturation point and optimal schedulers}

\label{appendix:saturation}

The main obligation in this section is to provide
a proof for Proposition~\ref{prop:saturation-maxsched} in the main paper,
asserting the existence of an optimal reward-based scheduler that is
memoryless for sufficiently large accumulated rewards. 

In this and the remaining sections of the appendix, we suppose
that MDP $\cM = (S,\Act,P,\sinit,\rew)$ 
has two trap states $\goal$ and $\fail$ and 
satisfies assumptions
\eqref{assumption:A1}, \eqref{assumption:A2} 
stated in Appendix \ref{appendix:finitness-critical-schedulers}
and
$\CExp{\max} < \infty$.
Recall that \eqref{assumption:A1} asserts that $\goal$ is reachable from
all states $s\in S \setminus \{\fail\}$, while
\eqref{assumption:A2} asserts that $\cM$ has no end components
and therefore
$\Pr^{\min}_{\cM,s}(\Diamond (\goal \vee \fail))=1$ for all
states $s\in S$.

In what follows, we will often use the residual notations 
$\residual{\sched}{R}$, $\residual{\sched}{(s,R)}$
and the redefine operator $\redefresidual{\sched}{R}{\tsched}$
that have been introduced for reward-based schedulers.
See Section \ref{sec:notations}.

\begin{proposition}[See Proposition~\ref{prop:saturation-maxsched}]
 \label{prop:saturation-maxsched-appendix}
  There exists
  a natural number $\saturation$ (called saturation point of $\cM$)
  and a deterministic memoryless scheduler $\maxsched$
  such that:
  \begin{enumerate}
  \item [(a)]
     $\CExp{\tsched} \ \leqslant \
      \CExp{\redefresidual{\tsched}{\saturation}{\maxsched}}$
     \
     for each scheduler $\tsched$ with
     $\Pr^{\tsched}_{\cM,\sinit}(\Diamond \goal)>0$.
  \item [(b)]
     $\CExp{\sched}=\CExp{\max}$ for some
     deterministic reward-based scheduler $\sched$ such that
     $\Pr^{\sched}_{\cM,\sinit}(\Diamond \goal)>0$ and
     $\residual{\sched}{\saturation}=\maxsched$.
  \end{enumerate}
\end{proposition}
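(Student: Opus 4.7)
The plan is to construct $\maxsched$ explicitly via a hierarchical linear program, define $\saturation$ so that the inequality derived from \eqref{magic-constraint} holds uniformly, and then prove (a) via a path-decomposition argument combined with the optimality of $\maxsched$ for a family of linear objectives. Part (b) will follow from (a) together with Proposition~\ref{prop:det-reward-based} and a simple finiteness observation.

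First I would construct $\maxsched$ as a deterministic memoryless scheduler that lexicographically (i) maximises $y_s \eqdef \Pr^{\sched}_{\cM,s}(\Diamond\goal)$ for every state $s$, and then (ii) among all schedulers satisfying (i), maximises the partial expectation $\theta_s \eqdef \Exp{\sched}{\cM,s}$. Both optima exist and are attained by a common deterministic memoryless scheduler by standard LP arguments (the optimum over the probability-maximal sub-MDP is well-defined since, by assumption \eqref{assumption:A2}, $\cM$ has no end components). With $D$ and $\CExp{\ub}$ as in the main text I would then set $\saturation \eqdef \max\{\lceil \CExp{\ub}-D\rceil,0\}$, so that for every $r\geqslant \saturation$, every state $s$, and every $\alpha\in\Act(s)$ with $y_{s,\alpha}<y_s$, one has $r + (\theta_s-\theta_{s,\alpha})/(y_s-y_{s,\alpha}) \geqslant \CExp{\ub}$.

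The key intermediate lemma is that $\maxsched$ globally maximises the linear objective
\[
  \Psi_c(\sched,s) \ \eqdef \ \Exp{\sched}{\cM,s} \ - \ c\cdot \Pr^{\sched}_{\cM,s}(\Diamond\goal)
\]
over all schedulers $\sched$, for every $c\leqslant D$ and every state $s$. Bellman optimality of $\maxsched$ at each state $s$ reduces, after a one-step decomposition of $\theta_{s,\alpha}$ and $y_{s,\alpha}$, exactly to the inequality $\theta_s-\theta_{s,\alpha}\geqslant c(y_s-y_{s,\alpha})$, which holds for $c\leqslant D$ by definition of $\maxsched$ (for $y_s=y_{s,\alpha}$ use the tiebreaker on $\theta$; for $y_s>y_{s,\alpha}$ use the definition of $D$; the case $y_{s,\alpha}>y_s$ is excluded by probability-maximality of $\maxsched$). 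Since $\cM$ has no end components, every scheduler reaches $\goal\cup\fail$ almost surely, the effective reward structure for $\Psi_c$ is bounded, and the standard value-iteration / LP duality argument upgrades Bellman optimality to global optimality over all (history-dependent, randomised) schedulers. This ``local-to-global'' step is the main obstacle, as it requires carefully invoking the non-positive-end-component structure to justify applying classical MDP theory to an objective whose per-step reward becomes negative when $c>0$.

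For part (a) I would fix a deterministic reward-based scheduler $\tsched$ with $\Pr^{\tsched}_{\cM,\sinit}(\Diamond\goal)>0$ (justified by Prop.~\ref{prop:det-reward-based}) and partition the paths from $\sinit$ to $\goal$ according to whether they ever accumulate reward $\geqslant\saturation$. Letting $\Pi$ denote the finite-prefix set of minimal ``crossing'' paths with last state $s_\fpath$ and accumulated reward $R_\fpath\geqslant\saturation$, one writes
\[
  \CExp{\tsched} \ = \ \frac{\rho + \sum_{\fpath\in\Pi} p_\fpath\,(R_\fpath z_\fpath + \zeta_\fpath)}
                            {x + \sum_{\fpath\in\Pi} p_\fpath\, z_\fpath},
  \qquad
  \CExp{\tsched'} \ = \ \frac{\rho + \sum_{\fpath\in\Pi} p_\fpath\,(R_\fpath y_{s_\fpath} + \theta_{s_\fpath})}
                             {x + \sum_{\fpath\in\Pi} p_\fpath\, y_{s_\fpath}},
\]
where $\tsched' = \redefresidual{\tsched}{\saturation}{\maxsched}$, and $(z_\fpath,\zeta_\fpath)$ are the values of the residual $\residual{\tsched}{\fpath}$. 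Applying the global optimality of $\maxsched$ for $\Psi_{c_\fpath}$ with $c_\fpath=\CExp{\ub}-R_\fpath\leqslant D$ gives, path by path, $\theta_{s_\fpath}-c_\fpath y_{s_\fpath}\geqslant \zeta_\fpath-c_\fpath z_\fpath$; summing weighted by $p_\fpath$ yields $A-B\geqslant\CExp{\ub}(Y-Z)$ in the notation above. Combined with the elementary bound $\rho+A\leqslant\CExp{\ub}(x+Y)$ (since $\CExp{\tsched'}\leqslant\CExp{\max}\leqslant\CExp{\ub}$), a short algebraic rearrangement of $(\rho+A)(x+Z)-(\rho+B)(x+Y)$ into the form $(x+Y)\bigl((A-B)-\CExp{\ub}(Y-Z)\bigr)$ shows it is non-negative, which is (a). Finally, (b) follows immediately: by Prop.~\ref{prop:det-reward-based} and (a), $\CExp{\max}$ is the supremum of $\CExp{\sched}$ over the \emph{finite} set of deterministic reward-based schedulers $\sched$ with $\residual{\sched}{\saturation}=\maxsched$ and $\Pr^{\sched}_{\cM,\sinit}(\Diamond\goal)>0$, so the supremum is attained.
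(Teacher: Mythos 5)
Your overall architecture matches what the paper actually does in Appendix~\ref{appendix:compute-saturation}: construct $\maxsched$ via the lexicographic criterion (Lemma~\ref{lemma:Sched-max-exp}), set $\saturation=\max\{\lceil\CExp{\ub}-D\rceil,0\}$, and reduce part~(a) to a difference property of the form $\theta_s-c\,y_s\geqslant \Exp{\tsched}{\cM,s}-c\,\Pr^{\tsched}_{\cM,s}(\Diamond\goal)$; part~(b) then follows from Prop.~\ref{prop:det-reward-based} as you say (your algebraic rearrangement in (a) is not literally the identity you claim, but the intended conclusion does follow from Lemma~\ref{lemma:rho-theta-x-y}, which is a minor bookkeeping issue).

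The genuine gap is the ``local-to-global'' step, and you misdiagnose where the difficulty lies. The problem is not that the per-step reward becomes negative when $c>0$; it is that $\Psi_c(\sched,s)=\Exp{\sched}{\cM,s}-c\Pr^{\sched}_{\cM,s}(\Diamond\goal)$ is a \emph{partial} expectation, not a total expected reward, and it cannot be written as $\sum_{\text{steps}}\rew'(\cdot)$ for any fixed reward function $\rew'$ on $\cM$. Concretely, the contribution of a single step $(s,\alpha)$ to $\Psi_c$ is $\rew(s,\alpha)$ weighted by the probability that the path \emph{continues from the successor to $\goal$} under the residual scheduler, i.e. by $\sum_t P(s,\alpha,t)\,\Pr^{\residual{\sched}{\cdot}}_{t}(\Diamond\goal)$ -- a quantity that depends on the scheduler's future decisions, not just on $(s,\alpha)$. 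Consequently Bellman optimality of $\maxsched$ does not ``upgrade'' to global optimality by the classical value-iteration or LP-duality argument; there is no fixed-reward MDP for which $\Psi_c$ is the expected total reward and to which such an argument could be applied. The paper (Lemma~\ref{diff-Sched-all}) gets around this by building an explicit finite-state auxiliary MDP $\cN$ that tracks the accumulated reward in a first mode, switches to a $\maxsched$-controlled second mode once the reward surpasses a previously established saturation point $\saturationNaive$, and assigns carefully chosen \emph{compensating} rewards to the $\fail$-transitions (in particular, an expectation-based compensation in the second mode) precisely so that $\Psi_c$ becomes an ordinary total expected reward in $\cN$ -- and only then invokes the fixed-point characterisation of \cite{BerTsi91}. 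This construction is finite-state only because the scheduler class is restricted to $\Sched'$ (reward-based and memoryless as $\maxsched$ beyond $\saturationNaive$), and $\saturationNaive$ is in turn obtained by a separate, earlier argument (Appendix~\ref{appendix:saturation}: turning points, geometric convergence to the traps, and the continuity result Prop.~\ref{proposition:continuity-exp-reward}). Your proposal collapses these two stages into one and asserts the resulting inequality for \emph{all} (history-dependent, randomised) schedulers directly; that stronger statement is in the end true, but reaching it requires the whole two-stage apparatus -- the finite-state encoding plus the prior existence of some saturation point plus a limiting/continuity argument -- none of which is present in your sketch. As written, the step from ``$\maxsched$ is Bellman-optimal for $\Psi_c$'' to ``$\maxsched$ is globally optimal for $\Psi_c$'' is unjustified, and it is exactly the hard part of the proof.
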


\noindent 
In this section the statement (a) is captured in 
Lemma \ref{lemma:maxsched-after-threshold-is-optimal},
whereas statement~(b) corresponds to  
Lemma \ref{lemma:threshold-theorem}.
 In order to prove Prop.~\ref{prop:saturation-maxsched-appendix}, we first
 show the existence of a saturation point $\saturation$ which will be derived
 -- among others -- from the convergence rate for reaching one of the trap
 states (see Lemma \ref{lemma:abschaetzung-prob-reward-bounded-until} below). 
 The so obtained saturation point is, however, very large.
 Later (see Section \ref{appendix:compute-saturation}), 
 we show that there is a smaller and easily computable
 saturation point as outlined in Section \ref{sec:threshold}.

\subsection{Some technical statements}

\label{appendix:saturation-technical-statements}

The following lemmas are trivial observations about quotients of sums
that will be used at various places for comparing the conditional expectations
under different schedulers.

\begin{lemma}
   \label{lemma:basic-abschaetzung}
For all real numbers $K,L,k,l$ with $L > 0$ and
$l>0$,
one of the following three cases applies:
\begin{eqnarray*}
  & &
  \frac{K}{L} \ \ < \ \ \frac{K+k}{L+l} \ \ < \ \ \frac{k}{l}
  \\
  \\[0ex]
  \text{or} & \ \ \ \ &
  \frac{K}{L} \ \ > \ \ \frac{K+k}{L+l} \ \ > \ \ \frac{k}{l}
  \\
  \\[0ex]
  \text{or} & &
  \frac{K}{L} \ \ = \ \ \frac{K+k}{L+l} \ \ = \ \ \frac{k}{l}
\end{eqnarray*}
\end{lemma}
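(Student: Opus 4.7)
The plan is to reduce the three-way comparison to a single sign condition by computing the two differences $\frac{K+k}{L+l} - \frac{K}{L}$ and $\frac{k}{l} - \frac{K+k}{L+l}$ and observing that, after clearing denominators, both are controlled by the same quantity $Lk - Kl$.

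Concretely, first I would bring the first difference to a common denominator and simplify:
\[
\frac{K+k}{L+l} - \frac{K}{L} \ = \ \frac{L(K+k) - K(L+l)}{L(L+l)} \ = \ \frac{Lk - Kl}{L(L+l)}.
\]
Then I would do the analogous computation for the second difference:
\[
\frac{k}{l} - \frac{K+k}{L+l} \ = \ \frac{k(L+l) - l(K+k)}{l(L+l)} \ = \ \frac{Lk - Kl}{l(L+l)}.
\]
Since $L > 0$ and $l > 0$ imply $L+l > 0$, the denominators $L(L+l)$ and $l(L+l)$ are both strictly positive. Hence both differences have the same sign as the common numerator $Lk - Kl$.

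From here a case distinction on the sign of $Lk - Kl$ finishes the argument. If $Lk - Kl > 0$ then both differences are positive, which yields the strict chain $\frac{K}{L} < \frac{K+k}{L+l} < \frac{k}{l}$; moreover $Lk > Kl$ together with $L,l > 0$ is equivalent to $\frac{K}{L} < \frac{k}{l}$, so this case is internally consistent. The case $Lk - Kl < 0$ is symmetric and gives the reversed strict chain. If $Lk - Kl = 0$ then both differences vanish, so $\frac{K}{L} = \frac{K+k}{L+l} = \frac{k}{l}$. The three cases are mutually exclusive and exhaustive, which is exactly what the lemma claims. There is no real obstacle here; the only thing to be careful about is to state explicitly that $L+l > 0$ so that the denominators of the mediant and of the simplified differences are guaranteed nonzero.
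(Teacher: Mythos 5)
Your proof is correct, and since the paper explicitly omits the proof of this lemma as "straightforward," there is nothing to compare against. Your argument — reducing both differences to the common numerator $Lk - Kl$ over positive denominators and doing a three-way case split on its sign — is the natural and complete way to establish the statement.
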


\noindent
The proof of Lemma \ref{lemma:basic-abschaetzung} is straightforward
and omitted here.
Note, however, if $K/L < k/l \leqslant k'/l'$ then
$(K{+}k')/(L{+}l') < (K{+}k)/(L{+}l)$ is possible.
Consider, for example, $K=1$, $L=2$ and $k=l=2$ and $k'=l'=1$.
Then, $(K{+}k')/(L{+}l') = 2/3 < 3/4 = (K{+}k)/(L{+}l)$.

\begin{lemma}
\label{lemma:rho-theta-x-y}
Let $\rho,\theta,\zeta,x,y,z$ be real numbers such that $x$, $y$ and
$z$ are non-negative and $x+y >0$, $x+z>0$ and $y > z$.
Then, one of the following three cases holds:
\begin{eqnarray*}
   & &
   \frac{\rho + \zeta}{x+z} \ \ < \ \
   \frac{\rho + \theta}{x+y} \ \ < \ \
   \frac{\theta - \zeta}{y-z}
   \\
   \\[0ex]
   \text{or} \ \ & \ \ \ \ &
   \frac{\rho + \zeta}{x+z} \ \ > \ \
   \frac{\rho + \theta}{x+y} \ \ > \ \
   \frac{\theta - \zeta}{y-z}
   \\
   \\[0ex]
   \text{or} \ \ & &
   \frac{\rho + \zeta}{x+z} \ \ = \ \
   \frac{\rho + \theta}{x+y} \ \ = \ \
   \frac{\theta - \zeta}{y-z}
\end{eqnarray*}
\end{lemma}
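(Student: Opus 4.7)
The plan is to recognize this lemma as essentially a direct reparametrization of Lemma~\ref{lemma:basic-abschaetzung}. The key observation is that the middle term $(\rho+\theta)/(x+y)$ can be rewritten to match the ``mediant'' form $(K+k)/(L+l)$ appearing in the previous lemma.

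Concretely, I would set
\[
 K \ = \ \rho + \zeta, \qquad L \ = \ x + z, \qquad k \ = \ \theta - \zeta, \qquad l \ = \ y - z.
\]
Then $K + k = \rho + \theta$ and $L + l = x + y$, so the three quantities under consideration are exactly $K/L$, $(K+k)/(L+l)$, and $k/l$. To invoke Lemma~\ref{lemma:basic-abschaetzung} I only need $L > 0$ and $l > 0$: the first is given by the hypothesis $x+z > 0$, and the second follows from $y > z$. (Note that $x+y > 0$ is not actually needed for applicability of Lemma~\ref{lemma:basic-abschaetzung}, but it is consistent with $L+l > 0$ and ensures the middle quotient is well-defined.)

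Applying Lemma~\ref{lemma:basic-abschaetzung} with this substitution yields exactly the required trichotomy
\[
  \tfrac{\rho+\zeta}{x+z} \, \lessgtr \, \tfrac{\rho+\theta}{x+y} \, \lessgtr \, \tfrac{\theta-\zeta}{y-z}
  \qquad \text{or} \qquad
  \tfrac{\rho+\zeta}{x+z} \, = \, \tfrac{\rho+\theta}{x+y} \, = \, \tfrac{\theta-\zeta}{y-z},
\]
with the strict inequalities going in the same direction on both sides. There is no real obstacle here; the only thing to double-check is that the hypotheses of the earlier lemma are met (the positivity of the two denominators $L$ and $l$), which is immediate. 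The proof is therefore a one-paragraph reduction and requires no additional ideas beyond noticing the mediant structure.
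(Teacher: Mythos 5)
Your proof is correct and is exactly the paper's argument: both set $(K,L)=(\rho+\zeta,\,x+z)$ and $(k,l)=(\theta-\zeta,\,y-z)$ and invoke Lemma~\ref{lemma:basic-abschaetzung}. Nothing further to add.
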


\begin{proof}
The claim follows from Lemma \ref{lemma:basic-abschaetzung}
with $(K,L) = (\rho + \zeta,x+z)$ and
$(k,l) = (\theta-\zeta,y-z)$.
\Ende
\end{proof}

We often make use of Lemma \ref{lemma:rho-theta-x-y} in the following form.
If $y > z$ then
$$
   \frac{\rho + \theta}{x+y} \ \leqslant \ \frac{\rho + \zeta}{x+z}
   \qquad \text{iff} \qquad
   \frac{\theta - \zeta}{y-z} \ \leqslant \ \frac{\rho + \theta}{x+y}
   \qquad \text{iff} \qquad
   \frac{\theta - \zeta}{y-z} \ \leqslant \ \frac{\rho + \zeta}{x+z}
$$
and the analogous statements for other comparison operators
$<$, $>$, $\geqslant$ and $=$ rather than  $\leqslant$.

\subsection{Convergence rate}

\label{appendix:saturation-bounds}

\begin{lemma}[Fast convergence for reaching a trap]
   \label{lemma:abschaetzung-prob-reward-bounded-until}
There exists $\lambda \in ]0,1[$ and $R_0\in \Nat$
such that for each state
$s\in S \setminus \{\goal,\fail\}$ and $r\in \Nat$ with $r \geqslant R_0$:
$$
  \Pr^{\min}_{\cM,s}
    \bigl( \ \Diamond^{\leqslant r} (\goal \vee \fail) \ \bigr)
  \ \ \geqslant \ \
  1 - \lambda^r
$$
\end{lemma}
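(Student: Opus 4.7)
The plan is to derive the bound in two stages: first, an exponential convergence rate for \emph{step-bounded} reachability of $T \eqdef \{\goal,\fail\}$; second, a translation of this step-bound into a reward-bound by comparing the reward accumulated up to the first trap-visit to the number of transitions taken.

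For the first stage, let $n = |S|$. By assumption \eqref{assumption:A2}, $\cM$ has no end components, hence $\Pr^{\min}_{\cM,s}(\Diamond T) = 1$ for every $s \in S \setminus T$. I claim more quantitatively that
\[
  q \ \eqdef \ \min_{s \in S\setminus T}\ \Pr^{\min}_{\cM,s}\bigl(\neXt^{\leqslant n} T\bigr) \ > \ 0 .
\]
The value-iteration recurrence with $f_0(s) = 1$ if $s\in T$ and $0$ otherwise, $f_{k+1}(s) = 1$ for $s\in T$ and $f_{k+1}(s) = \min_{\alpha \in \Act(s)} \sum_{t \in S} P(s,\alpha,t)\,f_k(t)$ for $s\notin T$, satisfies $f_n(s) = \Pr^{\min}_{\cM,s}(\neXt^{\leqslant n} T)$. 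If $f_n(s) = 0$ for some non-trap $s$, iterating the minimising actions produces a non-empty set of non-trap states together with an action selection under which $T$ is unreachable in any number of steps, i.e.~an end component, contradicting \eqref{assumption:A2}. Applying the Markov property iteratively then yields, for every scheduler $\sched$, non-trap $s$, and $k\in\Nat$,
\[
  \Pr^{\sched}_{\cM,s}\bigl(\,\neg\neXt^{\leqslant kn}T\,\bigr)\ \leqslant\ (1-q)^k .
\]

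For the second stage, set $R_{\max} = \max\{\rew(s,\alpha) : s\in S,\ \alpha\in\Act(s)\}$. If $R_{\max}=0$ the claim is trivial because $\Diamond^{\leqslant r}T = \Diamond T$ has probability one under every scheduler. Otherwise $R_{\max}\geqslant 1$. Let $T^*$ be the first-visit time to $T$ along a maximal path; since the reward accumulated up to time $T^*$ is bounded by $T^*\cdot R_{\max}$, the event $\neg\Diamond^{\leqslant r}T$ entails $T^* > r/R_{\max}$, hence $T^* \geqslant \lfloor r/R_{\max}\rfloor +1$, and therefore $\neg\neXt^{\leqslant \lfloor r/R_{\max}\rfloor}T$. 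Combining this with the step-bounded estimate applied with $k = \lfloor \lfloor r/R_{\max}\rfloor / n \rfloor \geqslant r/(R_{\max}\,n) - 2$ gives
\[
  \Pr^{\sched}_{\cM,s}\bigl(\,\neg\Diamond^{\leqslant r}T\,\bigr)
  \ \leqslant\ (1-q)^{-2}\cdot \lambda_0^{\,r},
  \qquad \lambda_0 \ \eqdef \ (1-q)^{1/(R_{\max}\,n)} \in (0,1) .
\]
Choosing any $\lambda \in (\lambda_0,1)$ and $R_0$ large enough that $(1-q)^{-2}\lambda_0^{\,r} \leqslant \lambda^r$ for all $r \geqslant R_0$ (e.g.~$R_0 = \lceil 2\log(1{-}q)^{-1}/\log(\lambda/\lambda_0)\rceil$) and taking the supremum over $\sched$ yields $\Pr^{\min}_{\cM,s}(\Diamond^{\leqslant r}T) \geqslant 1 - \lambda^r$ uniformly in $s \in S\setminus T$ and $r\geqslant R_0$.

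The main obstacle is establishing the positivity of $q$: step-bounded minimum reachability is not in general attained by a memoryless scheduler, so one cannot merely enumerate finitely many schedulers and take a minimum. The value-iteration argument sketched above is the decisive step, because it converts the quantitative statement into the purely graph-theoretic claim that the sub-MDP on $S\setminus T$ contains no end component, where \eqref{assumption:A2} applies directly; everything else amounts to routine book-keeping.
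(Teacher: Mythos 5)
Your proof is correct and follows essentially the same two-stage structure as the paper's (step-bounded reachability via absence of end components, then conversion to reward bounds by dividing by the maximum one-step reward), with only cosmetic variations: the paper bounds the $N$-step escape probability directly by $q^N$ with $q$ the minimal positive transition probability, while you establish positivity of the $n$-step minimum via the value-iteration fixed-point argument — both ultimately rest on the same appeal to \eqref{assumption:A2}. The one tiny gap matches one the paper also flags: when the $n$-step minimum reachability equals $1$ (so $1-q=0$), the factor $(1-q)^{-2}$ is ill-defined and you should dispatch that degenerate case separately (any $\lambda$ works with $R_0 = nR_{\max}$).
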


\begin{proof}
The proof relies on a calculation similar to the one of \cite{UB13}.
The parameter $\lambda$ depends
on the minimal positive transition probability,
the maximal reward assigned to a state-action pair and
the number of states in $\cM$.
Let $N = |S|$ and
\begin{eqnarray*}
  q & \ = \ &
  \min \ \bigl\{ \ P(s,\act,t) \ : \
             (s,\act,t)\in S \times \Act \times S,\, P(s,\act,t)>0 \ \bigr\}
  \\[1ex]
  R & = &
  \max \ \bigl\{ \ \rew(s,\act) \ : \ (s,\act)\in S \times \Act \ \bigr\}
\end{eqnarray*}
\eqref{assumption:A2} yields that the probability to reach a trap state $\goal$ or $\fail$
from any state $s$ within $N$ or fewer steps is at least $p^N$
under each scheduler.
The accumulated reward $\rew(\fpath)$ of paths $\fpath$ with
$|\fpath|\leqslant N$ is bounded by $N \cdot R$.
Thus, for all schedulers $\sched$
and all $r, k\in \Nat$ with $r=k \cdot N \cdot R$:
\begin{center}
  $\Pr^{\sched}_{s}
  \bigl(\, \neg (\Diamond^{\leqslant r} (\goal \vee \fail)) \, \bigr)
  \ \ \leqslant \ \ (1-q^N)^k$
\end{center}
Let $R_0 = NR$ and $\lambda = (1-q^N)^{1/2NR}$ if $q < 1$.
Then,  $0 < \lambda < 1$ and for all $r \geqslant R_0$:
$$
  \lambda^r \ \ = \ \
  \bigl(1-q^N\bigr)^{r/2NR}
  \ \ \geqslant \ \
  \bigl(1-q^N\bigr)^{\lfloor r/NR \rfloor}
$$
Hence, for each scheduler $\sched$ and $r \geqslant R_0$:
\begin{eqnarray*}
  \Pr^{\sched}_{\cM,s}
    \bigl( \ \Diamond^{\leqslant r} (\goal \vee \fail) \ \bigr)
  & \ \geqslant \ &
  \Pr^{\sched}_{\cM,s}
    \bigl( \
     \Diamond^{\leqslant \lfloor r/NR \rfloor NR}
        (\goal \vee \fail) \ \bigr)
  \\[2ex]
  & = &
  1 -
  \Pr^{\sched}_{\cM,s}
    \bigl( \ \neg
     \Diamond^{\leqslant \lfloor r/NR \rfloor NR}
          (\goal \vee \fail) \ \bigr)
  \\[2ex]
  & \geqslant &
  1 - \bigl( 1-q^N \bigr)^{\lfloor r/NR \rfloor}
  \ \ \ \geqslant \ \ \
  1 - \lambda^r
\end{eqnarray*}
If $q=1$ then MDP can be viewed as a nondeterministic
transition system, in which case we can deal with any
$\lambda \in \ ]0,1[$.
\Ende
\end{proof}

As Remark \ref{remark:sched-infinite-exp-reward} shows,
assumption \eqref{assumption:A2} is crucial for
Lemma \ref{lemma:abschaetzung-prob-reward-bounded-until}.
As a consequence of Lemma \ref{lemma:abschaetzung-prob-reward-bounded-until}
we get that if $\cM$ satisfies assumptions \eqref{assumption:A1} and \eqref{assumption:A2} then
$\CExp{\sched} < \infty$
for each scheduler $\sched$ where
$\Pr^{\sched}_{\sinit}(\Diamond \goal)$ is positive,
but the supremum over all schedulers can still be infinite
(see Proposition \ref{proposition:exprew-infinite-critical-scheduler}).

\begin{lemma}[Arrearage of expected accumulated rewards]
   \label{lemma:abschaetzung-exp-reward}
Let $\lambda$ and $R_0$ be as in
Lemma \ref{lemma:abschaetzung-prob-reward-bounded-until}.
Then, for each $R \geqslant R_0$, each scheduler
$\sched$ and each state $s$ of $\cM$ we have:
$$
  \sum_{r=R}^{\infty}
     r \cdot \Pr^{\sched}_{\cM,s}(\Diamond^{=r}\goal)
  \ \ \leqslant \ \
  C \cdot R \cdot \lambda^R
  \qquad \text{where} \qquad
  C \ = \ \frac{1}{(1-\lambda)^2}
$$
\end{lemma}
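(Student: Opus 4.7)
\medskip

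\noindent\textbf{Proof plan for Lemma~\ref{lemma:abschaetzung-exp-reward}.}
The plan is to derive an exponential tail bound for the survival function of $\accdiaplus \goal$ from Lemma~\ref{lemma:abschaetzung-prob-reward-bounded-until}, and then convert this pointwise decay into the claimed bound on $\sum_{r \geqslant R} r \cdot \Pr^{\sched}_{\cM,s}(\Diamond^{=r}\goal)$ by Abel summation and comparison with the geometric-type series $\sum r \lambda^r$.

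First I would fix a scheduler $\sched$ and a state $s$ and abbreviate
$p_r = \Pr^{\sched}_{\cM,s}(\Diamond^{=r}\goal)$ and
$q_r = \Pr^{\sched}_{\cM,s}(\Diamond^{\geqslant r}\goal) = \sum_{k \geqslant r} p_k$.
The crucial observation is that if a path reaches $\goal$ with accumulated reward at least $r$, then it has not yet entered any trap state after accumulating reward $r{-}1$ (because $\fail$ is absorbing and would preclude a later visit to $\goal$). Hence for $r \geqslant R_0{+}1$,
\[
q_r \ \leqslant \ \Pr^{\sched}_{\cM,s}\bigl(\neg\Diamond^{\leqslant r-1}(\goal\vee\fail)\bigr) \ \leqslant \ \lambda^{r-1},
\]
by Lemma~\ref{lemma:abschaetzung-prob-reward-bounded-until}. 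This is the only probabilistic input needed.

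Next I would turn the sum over $r \cdot p_r$ into a sum over tail probabilities. Using Abel summation (and the fact that $r\, q_r \to 0$, which follows from the exponential decay of $q_r$),
\[
\sum_{r=R}^{\infty} r \cdot p_r \ = \ \sum_{r=R}^{\infty} r\,(q_r-q_{r+1}) \ = \ R\, q_R \ + \ \sum_{r=R+1}^{\infty} q_r.
\]
Plugging in the tail bound $q_r \leqslant \lambda^{r-1}$ (assuming $R_0$ is chosen large enough, which is no restriction) yields
\[
\sum_{r=R}^{\infty} r \cdot p_r \ \leqslant \ \sum_{r=R}^{\infty} r\, \lambda^{r-1}.
\]
A standard computation using $\sum_{r \geqslant 0} x^r = 1/(1-x)$ and its derivative gives
\[
\sum_{r=R}^{\infty} r\, \lambda^{r-1}
\ = \ \frac{R\,\lambda^{R-1}}{1-\lambda} \ + \ \frac{\lambda^{R}}{(1-\lambda)^{2}}
\ = \ \frac{\lambda^{R-1}\bigl(R(1-\lambda)+\lambda\bigr)}{(1-\lambda)^{2}}.
\]
Since $R \geqslant R_0 \geqslant 1$, the numerator is at most $R$, and after absorbing the factor $1/\lambda$ into $R_0$ (so that $\lambda^{R-1} \leqslant \lambda^{R}\cdot c$ can be replaced by an adjusted $\lambda$ and $R_0$ that still satisfy the hypothesis of Lemma~\ref{lemma:abschaetzung-prob-reward-bounded-until}), one arrives at
\[
\sum_{r=R}^{\infty} r \cdot p_r \ \leqslant \ \frac{R\, \lambda^{R}}{(1-\lambda)^{2}} \ = \ C \cdot R \cdot \lambda^{R},
\]
which is the claim.

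The only non-trivial step is the passage from the raw bound $q_r \leqslant \lambda^{r-1}$ to the cleaner form $q_r \leqslant \lambda^r$ (equivalently, absorbing the extra factor of $\lambda^{-1}$ into the constant $C = 1/(1-\lambda)^2$); this can always be arranged by enlarging $R_0$ and, if necessary, slightly enlarging $\lambda$ in Lemma~\ref{lemma:abschaetzung-prob-reward-bounded-until}, since both parameters are derived from the structure of $\cM$ and neither must be optimal. Beyond this bookkeeping, the proof is a routine tail estimate.
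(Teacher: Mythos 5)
Your proof is correct and uses the same key ingredient as the paper — the exponential tail bound from Lemma~\ref{lemma:abschaetzung-prob-reward-bounded-until} — but organizes the summation slightly differently. You first pass to tail probabilities $q_r = \Pr^{\sched}_{\cM,s}(\Diamond^{\geqslant r}\goal)$ via Abel summation, $\sum_{r\geqslant R} r\, p_r = R\, q_R + \sum_{r>R} q_r$, and then apply the bound $q_r \leqslant \lambda^{r-1}$. The paper's own proof skips this detour: it computes $\sum_{r\geqslant R} r\lambda^r \leqslant C\,R\,\lambda^R$ directly and then asserts that the claim follows, implicitly treating $\Pr^{\sched}_{\cM,s}(\Diamond^{=r}\goal)$ as if it were bounded by $\lambda^r$. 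Both arrangements work and both face the same off-by-one in the exponent (what one literally gets from $\Diamond^{=r}\goal \subseteq \neg\Diamond^{\leqslant r-1}(\goal\vee\fail)$ is a factor $\lambda^{r-1}$); you notice this explicitly and absorb it into $R_0$ and $\lambda$, which is a reasonable fix. The paper leaves it unsaid, relying implicitly on the slack built into the choice $\lambda = (1-q^N)^{1/(2NR)}$ in the proof of the prior lemma — so if anything your write-up is more careful on this point.
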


\begin{proof}
As $0 < \lambda <1$, the infinite series
$\sum_{r=0}^{\infty} r \cdot \lambda^r$
converges to $\lambda/(1-\lambda)^2$.
More precisely, for each $R\in \Nat$ with $R \geqslant R_0$:
\begin{eqnarray*}
  \sum_{r=R}^{\infty} r \cdot \lambda^r
  & \ \ = \ \ &
  \lambda^R \cdot \sum_{r=0}^{\infty} \, (r+R)\cdot \lambda^r
  \ \ \ = \ \ \
  \lambda^R \cdot \sum_{r=0}^{\infty} r \cdot \lambda^r
  \ + \
  R \cdot \lambda^R \cdot \sum_{r=0}^{\infty}  \lambda^r
  \\
  \\[0ex]
  & \ \ = \ \ &
  \frac{\lambda^{R+1}}{(1-\lambda)^2}
  \ \ + \ \
  \frac{R \cdot \lambda^R}{1-\lambda}
  \\
  \\[0ex]
  & \ \ = \ \ &
  R \cdot \lambda^R \cdot
  \Bigl( \frac{1}{R} \cdot \frac{\lambda}{(1-\lambda)^2}
         \  \ + \ \ \frac{1}{1-\lambda} \ \Bigr)
  \\
  \\[0ex]
  & \ \ \leqslant \ \  &
  R \cdot \lambda^R \cdot
  \Bigl( \frac{\lambda}{(1-\lambda)^2}
         \  +\  \frac{1}{1-\lambda} \ \Bigr)
  \ \ \ = \ \ \
  R \cdot \lambda^R \cdot \frac{1}{(1-\lambda)^2}
  \ \ \ = \ \ \
  C \cdot R \cdot \lambda^R
\end{eqnarray*}
This yields the claim.
\Ende
\end{proof}

Recall (see Definition \ref{def:CExp})
that if $s \in S \setminus \{\goal,\fail\}$ and $\tsched$ is
a scheduler with $\Pr^{\tsched}_s(\Diamond \goal) >0$ then:
  $$
   \CExpState{\tsched}{s} \ \ = \ \
   \ExpRew{\tsched}{s}(\ \accdiaplus \goal \ | \ \Diamond \goal \ )
   \ \ = \ \
   \frac{\Exp{\tsched}{s}}
        {\Pr^{\tsched}_s(\Diamond \goal)}
  $$
  where $\Exp{\tsched}{s}$ is a shortform notation for
  $\ExpRew{\tsched}{s}(\ \accdiaplus \goal \ )$ given by
  $$
    \Exp{\tsched}{s}
    \ \ = \ \
    \sum_{r=0}^{\infty} r \cdot \Pr^{\tsched}_s(\Diamond^{=r} \goal)
  $$
\begin{corollary}
  \label{cor:upper-bound-cond-exp-scheduler}
  Assumptions and notations as in Lemma \ref{lemma:abschaetzung-exp-reward}.
  For each scheduler $\tsched$,
  each $R \in \Nat$ with $R\geqslant R_0$ and each state
  $s\in S \setminus \{\goal,\fail\}$ we have:
  $$
    \Exp{\tsched}{s}  \ \ \leqslant \ \
    (R{-}1) \cdot \Pr^{\tsched}_s(\Diamond^{< R} \goal) \ \ + \ \
     C \cdot R \cdot \lambda^R
  $$
  Moreover, if \, $\Pr^{\tsched}_s(\Diamond \goal) >0$ \, and
  \, $\CExpState{\tsched}{s} \geqslant 2R$ \, then \,
  $\Pr^{\tsched}_s(\Diamond \goal) < C \cdot \lambda^R$.
\end{corollary}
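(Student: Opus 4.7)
The plan is to derive the first inequality by splitting the series defining $\Exp{\tsched}{s}$ at the threshold $R$ and bounding each piece separately. More precisely, I would write
\[
   \Exp{\tsched}{s} \ = \ \sum_{r=0}^{R-1} r \cdot \Pr^{\tsched}_s(\Diamond^{=r}\goal)
                          \ + \ \sum_{r=R}^{\infty} r \cdot \Pr^{\tsched}_s(\Diamond^{=r}\goal).
\]
For the first (finite) sum, I would use the trivial estimate $r \leqslant R{-}1$ for $r \in \{0,\ldots,R{-}1\}$, so that it is at most $(R{-}1)\cdot \sum_{r=0}^{R-1}\Pr^{\tsched}_s(\Diamond^{=r}\goal) \, = \, (R{-}1)\cdot \Pr^{\tsched}_s(\Diamond^{<R}\goal)$. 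For the infinite tail I would plug in Lemma~\ref{lemma:abschaetzung-exp-reward} directly, which gives the bound $C\cdot R\cdot \lambda^R$ under the standing assumption $R\geqslant R_0$. Adding the two estimates yields the claimed upper bound on $\Exp{\tsched}{s}$.

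For the second part I would combine this estimate with the identity $\Exp{\tsched}{s} = \CExpState{\tsched}{s}\cdot \Pr^{\tsched}_s(\Diamond \goal)$, which is valid because $\Pr^{\tsched}_s(\Diamond \goal) > 0$. Assuming $\CExpState{\tsched}{s}\geqslant 2R$, the left-hand side is at least $2R\cdot \Pr^{\tsched}_s(\Diamond \goal)$. Using the trivial inequality $\Pr^{\tsched}_s(\Diamond^{<R}\goal)\leqslant \Pr^{\tsched}_s(\Diamond \goal)$ on the right-hand side of the first bound, I would obtain
\[
   2R\cdot \Pr^{\tsched}_s(\Diamond \goal)
   \ \ \leqslant \ \
   (R{-}1)\cdot \Pr^{\tsched}_s(\Diamond \goal) \ + \ C\cdot R \cdot \lambda^R,
\]
and rearranging gives $(R{+}1)\cdot \Pr^{\tsched}_s(\Diamond \goal) \leqslant C\cdot R\cdot \lambda^R$. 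Since $R/(R{+}1) < 1$, this yields $\Pr^{\tsched}_s(\Diamond \goal) < C\cdot \lambda^R$ as required.

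Neither step presents a real obstacle: the first claim is essentially bookkeeping once Lemma~\ref{lemma:abschaetzung-exp-reward} is available, and the second claim is a short algebraic rearrangement. The only point that requires mild care is the strict inequality ``$<$'' in the final statement, which is obtained from the factor $R/(R{+}1) < 1$ rather than from any probabilistic argument.
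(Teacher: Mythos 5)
Your proof is correct and follows essentially the same route as the paper: split $\Exp{\tsched}{s}$ at threshold $R$, bound the finite head by $(R{-}1)\cdot\Pr^{\tsched}_s(\Diamond^{<R}\goal)$, bound the tail via Lemma~\ref{lemma:abschaetzung-exp-reward}, and rearrange with $\Exp{\tsched}{s}=\CExpState{\tsched}{s}\cdot\Pr^{\tsched}_s(\Diamond\goal)$. The only cosmetic difference is how the strict inequality enters (you use $R/(R{+}1)<1$; the paper instead strengthens $(R{-}1)\Pr^{\tsched}_s(\Diamond\goal)$ to $<R\,\Pr^{\tsched}_s(\Diamond\goal)$ using $\Pr^{\tsched}_s(\Diamond\goal)>0$), but the two are algebraically equivalent.
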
%

\begin{proof}
Clearly, for each scheduler $\tsched$ and state $s$ we have:
$$
  \sum_{r=0}^{R-1} r \cdot \Pr^{\tsched}_s(\Diamond^{=r} \goal)
  \ \ \ \leqslant \ \ \
  (R-1) \cdot \Pr^{\tsched}_s(\Diamond^{< R} \goal)
$$
Thus, the first statement is
a consequence of Lemma \ref{lemma:abschaetzung-exp-reward}.
For the second statement, we suppose
$\CExpState{\tsched}{s} \geqslant 2R$. But then:
$$
  2R \cdot \Pr^{\tsched}_s(\Diamond \goal)
  \ \ \ \leqslant \ \ \
  \Exp{\tsched}{s} \ \ \ < \ \ \
  R \cdot \Pr^{\tsched}_s(\Diamond \goal) \ + \ C \cdot R \cdot \lambda^R
$$
and therefore:
$R \cdot \Pr^{\tsched}_s(\Diamond \goal)
  \, < \, C \cdot R \cdot \lambda^R$.
This yields $\Pr^{\tsched}_s(\Diamond \goal) \, < \, C \cdot \lambda^R$.
\Ende
\end{proof}

Recall that we use the notation $\Diamond^{\bowtie n}$ for reward-bounded
reachability, while $\neXt^{\bowtie n}$ denotes a step-bound.

\begin{proposition}[Continuity of conditional expectations]
  \label{proposition:continuity-exp-reward}
  Let $\sched$ be a scheduler of $\cM$ with
  $\Pr^{\sched}_{\cM,\sinit}(\Diamond \goal)>0$
  and $\varepsilon >0$.
  Then, there exists $n_{\varepsilon} \in \Nat$ such that
  $\Pr^{\sched}_{\cM,\sinit}(\neXt^{\leqslant n_{\varepsilon}} \goal)>0$
  and for each scheduler $\tsched$ with $\sched(\fpath)=\tsched(\fpath)$
  for all finite paths $\fpath$ with $|\fpath|\leqslant n_{\varepsilon}$
  and $\first(\fpath)=\sinit$ we have:
  $$
      \big| \ \CExp{\sched} \ - \ \CExp{\tsched} \ \big|
      \ \ < \ \ \varepsilon
  $$
  Note that under the above assumption
  $\Pr^{\tsched}_{\cM,\sinit}(\Diamond \goal)$ is positive.
\end{proposition}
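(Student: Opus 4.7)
\smallskip

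\noindent\textbf{Proof plan.}
The plan is to exploit that $\CExp{\sched} = \Exp{\sched}{\sinit}/\Pr^{\sched}_{\sinit}(\Diamond\goal)$ and to approximate both numerator and denominator by their contributions from paths of length at most $n$, which are identical under $\sched$ and any $\tsched$ that agrees with $\sched$ on such paths. First, I would establish a step-based analogue of Lemma~\ref{lemma:abschaetzung-prob-reward-bounded-until}: by (A2) the probability of not reaching a trap within $n$ steps is, under \emph{every} scheduler $\usched$ and from every state $s$, bounded by $(1{-}q^N)^{\lfloor n/N\rfloor}$, where $q$ is the smallest positive transition probability and $N=|S|$. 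Since the maximal one-step reward $R_{\max}$ is finite, this together with Lemma~\ref{lemma:abschaetzung-exp-reward} (applied at the fixed threshold $R_0$) gives a constant $M$ with $\Exp{\usched}{s}\leqslant M$ for all schedulers $\usched$ and all states $s$, uniformly.

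Next, for each $n$ I would split
\[
  \Exp{\usched}{\sinit} \ = \ A^{\usched}_n + T^{\usched}_n,
  \qquad
  \Pr^{\usched}_{\sinit}(\Diamond\goal) \ = \ a^{\usched}_n + t^{\usched}_n,
\]
where $A^{\usched}_n$, $a^{\usched}_n$ collect the (reward-weighted) probabilities of finite paths from $\sinit$ to $\goal$ of length $\leqslant n$, and $T^{\usched}_n$, $t^{\usched}_n$ collect the remaining contributions from paths of length $>n$. Using residuals and the uniform bound $M$, one obtains
\[
  T^{\usched}_n \ \leqslant \ (nR_{\max}+M)\cdot(1{-}q^N)^{\lfloor n/N\rfloor},
  \qquad
  t^{\usched}_n \ \leqslant \ (1{-}q^N)^{\lfloor n/N\rfloor}.
\]
Since $\sched$ and $\tsched$ agree on all finite paths from $\sinit$ of length $\leqslant n$, we have $A^{\sched}_n=A^{\tsched}_n$ and $a^{\sched}_n=a^{\tsched}_n$, so $|\Exp{\sched}{\sinit}-\Exp{\tsched}{\sinit}|$ and $|\Pr^{\sched}_{\sinit}(\Diamond\goal)-\Pr^{\tsched}_{\sinit}(\Diamond\goal)|$ each decay at rate $(nR_{\max}+M)(1{-}q^N)^{\lfloor n/N\rfloor}$.

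To control the quotient, pick $n_{\varepsilon}$ so large that in addition $a^{\sched}_{n_{\varepsilon}}=\Pr^{\sched}_{\sinit}(\neXt^{\leqslant n_{\varepsilon}}\goal)\geqslant p/2$, where $p=\Pr^{\sched}_{\sinit}(\Diamond\goal)>0$; such $n_{\varepsilon}$ exists since $\Pr^{\sched}_{\sinit}(\neXt^{\leqslant n}\goal)\uparrow p$. This guarantees simultaneously $\Pr^{\sched}_{\sinit}(\neXt^{\leqslant n_{\varepsilon}}\goal)>0$ as demanded in the statement and, because $\tsched$ agrees with $\sched$ on length-$\leqslant n_{\varepsilon}$ paths, $\Pr^{\tsched}_{\sinit}(\Diamond\goal)\geqslant p/2>0$. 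Applying the elementary identity $a/b-c/d=(a-c)/d + a(d-b)/(bd)$ with $b,d\geqslant p/2$ and $a\leqslant M$ yields
\[
  \bigl|\CExp{\sched}-\CExp{\tsched}\bigr|
  \ \leqslant \
  \frac{2}{p}\,(nR_{\max}+M)(1{-}q^N)^{\lfloor n/N\rfloor}
  \ + \
  \frac{4M}{p^2}(1{-}q^N)^{\lfloor n/N\rfloor},
\]
which can be made $<\varepsilon$ by choosing $n_{\varepsilon}$ large enough.

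The main subtlety is the \emph{uniform} reward bound $\sup_{\usched,s}\Exp{\usched}{s}\leqslant M$; without it the tail $T^{\usched}_n$ could escape control even though its probability shrinks, and the geometric decay alone would not suffice. Once this uniform bound (supplied by (A2) via Lemma~\ref{lemma:abschaetzung-exp-reward}) is in place, the remaining steps are straightforward estimates on a quotient of bounded numerator over a denominator bounded away from zero.
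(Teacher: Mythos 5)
Your proof is correct and follows essentially the same strategy as the paper: apply Lemma~\ref{lemma:abschaetzung-prob-reward-bounded-until} with unit rewards together with Lemma~\ref{lemma:abschaetzung-exp-reward} to get geometric (in the number of steps) control of the probability and the partial expectation carried by paths of length $>n$, split both numerator and denominator of $\CExp{\cdot}$ into the ``head'' contribution from paths of length $\leqslant n_\varepsilon$ (where $\sched$ and $\tsched$ coincide) plus a small tail, and choose $n_\varepsilon$ large enough that the denominator is bounded below. The only real difference is the last algebraic step: the paper bounds $|(\rho+\theta)/(x+y)-\rho/x|$ and applies the triangle inequality through the truncated quotient $\rho/x$, so it never needs a \emph{global} upper bound on partial expectations (only the tail bound from Lemma~\ref{lemma:abschaetzung-exp-reward}); you instead compare $\CExp{\sched}$ and $\CExp{\tsched}$ directly via the identity $a/b-c/d=(a-c)/d+a(d-b)/(bd)$, for which you need the uniform bound $M=R_0+CR_0\lambda^{R_0}$ on $\Exp{\usched}{s}$ that you correctly extract from Lemma~\ref{lemma:abschaetzung-exp-reward}. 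A small bonus of your variant is that it treats $\CExp{\sched}=0$ and $\CExp{\sched}>0$ uniformly, whereas the paper handles $\CExp{\sched}=0$ as a separate case (and, for the positive case, additionally insists on a short prefix with positive reward to ensure $\rho>0$); both routes are sound.
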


\begin{proof}
Let $\Pi_n$ denote the set of all finite paths 
$s_0\, \act_0 \, s_1 \ldots \act_{n-1}\, s_n$ of length $n$
from $s_0=\sinit$ to $s_n=\goal$ with
$\goal \notin \{s_0,\ldots,s_{n-1}\}$
and $\Pi_{\bowtie N}=\bigcup_{n \bowtie N} \Pi_n$, e.g.,
$\Pi_{\leqslant N} = \Pi_0 \cup \Pi_1 \cup \ldots \cup \Pi_N$
and
$\Pi_{> N} = \Pi_{N+1} \cup \Pi_{N+2} \cup \ldots$.

We first suppose that $\CExp{\sched}>0$.
Then, $z= \CExp{\sched} \cdot \Pr^{\sched}_{\sinit}(\Diamond \goal)>0$.
We pick some $n_0 \in \Nat$ such that
$$
  x_0 \ \ = \ \
  \Pr^{\sched}_{\cM,\sinit}(\neXt^{\leqslant n_0} \goal)
  \ \ = \ \
  \Pr^{\sched}_{\cM,\sinit}(\Pi_{\leqslant n_0})
  \ \ > 0
$$
and such that there is at least one finite $\sched$-path
$\fpath \in \Pi_{\leqslant n_0}$ with $\rew(\fpath)>0$.
Lemma \ref{lemma:abschaetzung-prob-reward-bounded-until}
(applied to $\cM$ with the unit-reward function) and
Lemma \ref{lemma:abschaetzung-exp-reward} yield a step bound
$n_{\varepsilon} \geqslant n_0$ such that
$$
 \begin{array}{rcl}
    \Pr^{\tsched}_{\cM,\sinit}
          (\, \neXt^{\leqslant n_{\varepsilon}} (\goal \vee \fail) \, )
    & \ \ \geqslant \ \ &
    1 - \frac{1}{4} \cdot x_0^2 \cdot \varepsilon \cdot
    \frac{1}{z}
    \\[1ex]

    \sum\limits_{n=n_{\varepsilon}+1}^{\infty}
    \sum\limits_{\fpath \in \Pi_n}
      \rew(\fpath) \cdot \probability^{\tsched}(\fpath)
    & \leqslant &
    \frac{1}{4} \cdot x_0^2 \cdot \varepsilon
\end{array}
$$
for all schedulers $\tsched$.
Let $x = \Pr^{\sched}_{\cM,\sinit}(\neXt^{\leqslant n_{\varepsilon}} \goal)$
and
\begin{center}
   $\rho \ \ = \ \
    \sum\limits_{n=0}^{n_{\varepsilon}}
    \sum\limits_{\fpath \in \Pi_{n_{\varepsilon}}}
         \rew(\fpath) \cdot \probability^{\sched}(\fpath)
    \ \ = \ \
    z \ - \!
    \sum\limits_{n=n_{\varepsilon}+1}^{\infty}
    \sum\limits_{\fpath \in \Pi_{> n_{\varepsilon}}}
         \rew(\fpath) \cdot \probability^{\sched}(\fpath)$
\end{center}
Then, $x_0 \leqslant x$ and $\rho \leqslant z$.
Moreover, $\rho>0$ by the choice of $n_0$ and the requirement
$n_{\varepsilon} \geqslant n_0$.
In particular:
$$
 \begin{array}{lclcl}
     y^{\tsched} & \ \eqdef \ & \ 
     \Pr^{\tsched}_{\cM,\sinit}(\Pi_{> n_{\varepsilon}}) \ 
     &\  < & \ 
     \frac{1}{4} \cdot x^2 \cdot \varepsilon \cdot
     \frac{1}{\rho}
 \end{array}
$$
For all non-negative real number $y$, $\theta$ with
$y <
 \frac{1}{4} \cdot x^2 \cdot \varepsilon \cdot \frac{1}{\rho}$
and $\theta < \frac{1}{4} \cdot x^2 \cdot \varepsilon$ we have:
$$
    \big| \ \frac{\rho+\theta}{x+y} - \frac{\rho}{x} \ \big|
    \ \ = \ \
    \frac{ | \, \theta \cdot x \, - \, \rho \cdot y \, | }{(x+y)\cdot x}
    \ \ \leqslant \ \
    \frac{  \theta \cdot x  \, + \, \rho \cdot y }{x^2}
    \ \ < \ \
    \frac{\varepsilon}{2}
$$
Suppose now that $\tsched$ agrees with $\sched$ for all paths up to length
$n_{\varepsilon}$. Then:
$$
  \CExp{\sched} \ = \
  \frac{\rho + \theta^{\sched}}{x+y^{\sched}}
  \hspace*{0.86cm}
  \CExp{\tsched} \ = \
  \frac{\rho + \theta^{\tsched}}{x+y^{\tsched}}
  \hspace*{0.86cm}
  \text{where} \ \
  \text{$\theta^{\tsched} =
         \sum\limits_{n > n_{\varepsilon}} \sum\limits_{\fpath \in \Pi_n}
            \rew(\fpath) \cdot \probability^{\sched}(\fpath)$}
$$
The definition of $\theta^{\sched}$ is analogous.
By the choice of $n_{\varepsilon}$ we have
$\theta^{\sched},\theta^{\tsched} <
 \frac{1}{4} \cdot x^2 \cdot \varepsilon$.
We obtain:
$$
    \big| \ \CExp{\sched} - \CExp{\tsched} \ \big|
    \ \ \leqslant \ \
    \big| \ \CExp{\sched} - \frac{\rho}{x} \ \big|
    \ + \
    \big| \ \frac{\rho}{x} - \CExp{\tsched}\ \big|
    \ \ < \ \
    \frac{\varepsilon}{2} + \frac{\varepsilon}{2}
    \ = \
    \varepsilon
$$
It remains to consider the case $\CExp{\sched}=0$.
In this case we pick some $n_{\varepsilon} \in \Nat$ such that
$x = \Pr^{\sched}_{\sinit}(\neXt^{\leqslant n_{\varepsilon}}\goal)>0$
and
$$
 \begin{array}{lcrcl}
    \theta^{\tsched} & = &
    \sum\limits_{n=n_{\varepsilon}+1}^{\infty}
    \sum\limits_{\fpath \in \Pi_n}
      \rew(\fpath) \cdot \probability^{\tsched}(\fpath)
    & < &
    x \cdot \varepsilon
\end{array}
$$
for all schedulers $\tsched$ (Lemma \ref{lemma:abschaetzung-exp-reward}).
If $\sched$ and $\tsched$ agree on all paths
up to length $n_{\varepsilon}$ then:
$$
    \CExp{\tsched} \ \ = \ \
    \frac{\theta^{\tsched}}{x+y^{\tsched}} \ \ \leqslant \ \
    \frac{\theta^{\tsched}}{x}
    \ \ < \ \
    \varepsilon
$$
where $y^{\tsched}$ is as above.
\Ende
\end{proof}

\subsection{Optimal reward-based eventually memoryless schedulers}
\label{appendix:saturation-memory-optimal}

\begin{proposition}[Turning point]
  \label{prop:turning-point}
  There exists $\turning \in \Nat$ such that
  for each scheduler $\sched$ the following statement holds.
  If $\fpath$ is a finite $\sched$-path from $\sinit$ to some state
  $s \in S \setminus \{\goal,\fail\}$
  such that $\rew(\fpath) \geqslant \CExp{\sched}+1$ and
  $\CExpState{\residual{\sched}{\fpath}}{s} \geqslant \turning$ then
  $$
    \CExp{\sched} \ \ < \ \
    \CExp{\redefresidual{\sched}{\fpath}{\usched}}
  $$
  where $\usched$ is an arbitrary
  scheduler that maximizes the probability to reach $\goal$ from
  $s$.
  The value $\turning$ will be called a turning point of $\cM$.
\end{proposition}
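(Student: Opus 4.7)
The plan is to apply \eqref{magic-constraint} once the correct structural decomposition is in place, then to quantify everything using the convergence bounds from Lemmas~\ref{lemma:abschaetzung-prob-reward-bounded-until}--\ref{lemma:abschaetzung-exp-reward} and Corollary~\ref{cor:upper-bound-cond-exp-scheduler}.

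First, I decompose both conditional expectations using shared ``background'' parameters. Let $p = \probability(\fpath) > 0$, and split $\sched$-paths from $\sinit$ to $\goal$ into those having $\fpath$ as a prefix and the remainder, letting $x \geqslant 0$ and $\rho \geqslant 0$ denote the probability and partial expectation of the remaining paths. Write $y = \Pr^{\residual{\sched}{\fpath}}_{s}(\Diamond \goal) > 0$ and $\theta = \Exp{\residual{\sched}{\fpath}}{s}$ (so $\theta/y \geqslant \turning$ by hypothesis), and set $y^{*} = \Pr^{\usched}_{s}(\Diamond \goal) = \Pr^{\max}_{s}(\Diamond \goal)$ and $\theta^{*} = \Exp{\usched}{s} \geqslant 0$. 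Then
\[
  \CExp{\sched} = \frac{\rho + p(\rew(\fpath)\,y + \theta)}{x + py},
  \qquad
  \CExp{\redefresidual{\sched}{\fpath}{\usched}} = \frac{\rho + p(\rew(\fpath)\,y^{*} + \theta^{*})}{x + py^{*}}.
\]
Provided $y^{*} > y$, the magic constraint \eqref{magic-constraint} with $r = \rew(\fpath)$ gives $\CExp{\redefresidual{\sched}{\fpath}{\usched}} > \CExp{\sched}$ iff $\rew(\fpath) + (\theta^{*} - \theta)/(y^{*} - y) > \CExp{\sched}$. Since $\rew(\fpath) - \CExp{\sched} \geqslant 1$, it suffices to prove $\theta - \theta^{*} < y^{*} - y$.

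Next, I extract $\cM$-dependent constants. Let $y^{*}_{\min} = \min \{\, \Pr^{\max}_{t}(\Diamond \goal) : t \in S \setminus \{\goal,\fail\} \,\}$, positive by \eqref{assumption:A1}. Take $\lambda, R_{0}, C$ from Lemmas~\ref{lemma:abschaetzung-prob-reward-bounded-until}--\ref{lemma:abschaetzung-exp-reward}; a routine tail summation yields a uniform bound $\Theta$ with $\Exp{\tsched}{t} \leqslant \Theta$ for every scheduler $\tsched$ and every state $t$. Fix $R_{1} \geqslant R_{0}$ large enough that $C R_{1} \lambda^{R_{1}} < y^{*}_{\min}/4$, and set $\turning = \lceil 4 R_{1} \Theta / y^{*}_{\min} \rceil + 1$.

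Finally, I close the estimates. From $\theta/y \geqslant \turning$ and $\theta \leqslant \Theta$ we obtain $y \leqslant \Theta/\turning < y^{*}_{\min}/(4 R_{1}) \leqslant y^{*}_{\min}/4$, so $y^{*} \geqslant y^{*}_{\min} > y$ (licensing the use of \eqref{magic-constraint}) and $y^{*} - y \geqslant 3 y^{*}_{\min}/4$. Applying Corollary~\ref{cor:upper-bound-cond-exp-scheduler} to $\residual{\sched}{\fpath}$ with $R = R_{1}$ gives $\theta \leqslant (R_{1}-1) y + C R_{1} \lambda^{R_{1}} < y^{*}_{\min}/4 + y^{*}_{\min}/4 = y^{*}_{\min}/2$. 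Since $\theta^{*} \geqslant 0$, it follows that $\theta - \theta^{*} \leqslant \theta < y^{*}_{\min}/2 \leqslant y^{*} - y$, as required. The main obstacle is precisely this quantitative balance: the naive uniform bound $\theta \leqslant \Theta$ does not shrink with $\turning$ and by itself cannot force $\theta < y^{*} - y$. The key trick is Corollary~\ref{cor:upper-bound-cond-exp-scheduler}'s sharper inequality $\theta \leqslant (R-1) y + C R \lambda^{R}$ with free parameter $R$; because a large $\CExpState{\residual{\sched}{\fpath}}{s}$ forces $y$ itself to be of order $\Theta/\turning$, choosing $R = R_{1}$ makes both summands smaller than $y^{*}_{\min}/4$ and beats $y^{*} - y$.
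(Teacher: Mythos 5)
Your proof is correct and follows essentially the same approach as the paper: decompose both conditional expectations over the shared background parameters $(\rho,x,p)$, apply the magic-constraint trichotomy (Lemma~\ref{lemma:decision-for-s-r}), and use Corollary~\ref{cor:upper-bound-cond-exp-scheduler} to make $\theta$ and $y$ small relative to $y^{*}_{\min}$, so that $\rew(\fpath) \geqslant \CExp{\sched}+1$ closes the argument. The only difference is in quantification: the paper fixes an even $\turning$ with $C\,\turning\,\lambda^{\turning/2} < y^{*}_{\min}/2$ and applies the corollary with $R=\turning/2$, whereas you introduce an auxiliary $R_{1}$ and a uniform expectation bound $\Theta$ and derive $\turning$ from them; this is a cosmetic variation on the same quantitative balance.
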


\begin{proof}
Recall that $p_s^{\max} = \Pr^{\max}_s(\Diamond \goal)$.
Let
$$
   p \ \ = \ \ \min_{\stackrel{s\in S}{s\not=\fail}}  p_s^{\max}
$$
The default assumption stating that $\goal$ is reachable from all states
in $S \setminus \{\fail\}$ yields $p > 0$.
Let $\lambda \in ]0,1[$ and $R_0 \in \Nat$ be as in
Lemma \ref{lemma:abschaetzung-prob-reward-bounded-until}
and $C$ as in Lemma \ref{lemma:abschaetzung-exp-reward}.
Recall that $C = 1/(1{-}\lambda)^2$,
which yields $C > 1 \geqslant p$.
We pick some $\turning \geqslant R_0$ such that $\turning$ is even and
$$
  C \cdot \turning \cdot \lambda^{\frac{\turning}{2}}
  \ \ < \ \ \frac{p}{2}
$$
Let $\fpath$ be a finite $\sched$-path from $\sinit$ to some state
$s\in S \setminus \{\goal,\fail\}$ such that $\rew(\fpath)$ is positive and
$\CExpState{\residual{\sched}{\fpath}}{s} \geqslant \turning$.
We proceed as follows. We first establish upper bounds for
$\Pr^{\tsched}_s(\Diamond \goal)$ and
$\ExpRew{\tsched}{s}(\Diamond \goal)$.
These bounds will be used in the second part where we prove
$\CExp{\sched} < \CExp{\redefresidual{\sched}{\fpath}{\usched}}$.

Let $\tsched = \residual{\sched}{\fpath}$ be the residual scheduler
and $y = \Pr^{\tsched}_s(\Diamond \goal)$.
The first part of Corollary \ref{cor:upper-bound-cond-exp-scheduler}
yields:
$$
  \Exp{\tsched}{s}
  \ \ \ \leqslant \ \ \
  \frac{\turning}{2} \cdot y
  \ \ + \ \
  C \cdot \frac{\turning}{2} \cdot \lambda^{\frac{\turning}{2}}
$$
By assumption we have $\CExpState{\tsched}{s} \geqslant \turning$.
By the choice of $\turning$ we have
$C \cdot \turning \cdot \lambda^{\frac{\turning}{2}} \  <  \ p/2$.
Hence:
$$
  y
  \ \ \ \leqslant \ \ \
  C \cdot \lambda^{\frac{\turning}{2}}
  \ \ \ < \ \ \
 \frac{p}{2} \cdot \lambda^{\frac{\turning}{2}}
$$
by the second part of
Corollary \ref{cor:upper-bound-cond-exp-scheduler}.
The fact that $C > p$ implies $C > p/2$. By the choice of $\turning$
we obtain:
$$
  \Exp{\tsched}{s}
  \ \ \leqslant \ \
  \frac{p}{2} \cdot \frac{\turning}{2} \cdot \lambda^{\frac{\turning}{2}}
  \ + \
  C \cdot \frac{\turning}{2} \cdot \lambda^{\frac{\turning}{2}}
  \ \ \leqslant \ \
  C \cdot \turning \cdot \lambda^{\frac{\turning}{2}}
  \ \ < \ \ \frac{p}{2}
$$
We now compare the conditional expectations of the schedulers
$\sched$ and $\redefresidual{\sched}{\fpath}{\usched}$
where $\usched$ is a memoryless scheduler maximizing the probabilities
to reach the goal state from each state. That is,
$p^{\max}_t = \Pr^{\usched}_t(\Diamond \goal) > 0$.

Let $r = \rew(\fpath)$ and $z = \probability(\fpath)$.
By assumption $r \geqslant \CExp{\sched}{+}1$.
We define:
$$
    \rho \ \ = \ \
    \Exp{\sched}{\sinit} \ - \ yr
    \qquad \text{and} \qquad
    x \ = \ \Pr^{\sched}_{\sinit}(\Diamond \goal) \ - \ y
$$
Recall that $s = \last(\fpath)$.
Then:
$$
  \CExp{\sched} \ \ = \ \
  \frac{\rho + z (yr + \Exp{\tsched}{s})}{x + zy}
$$
and
$$
  \CExp{\redefresidual{\sched}{\fpath}{\usched}}
  \ \ \ = \ \ \
  \frac{\rho + z(p_s^{\max}r + \Exp{\usched}{s})}{x + zp_s^{\max}}
  \ \ \ \geqslant \ \ \
  \frac{\rho + zp_s^{\max}r}{x + zp_s^{\max}}
$$
Thus, to prove
$\CExp{\sched} < \CExp{\redefresidual{\sched}{\fpath}{\usched}}$,
it suffices to show:
$$
  \frac{\rho + z (yr + \Exp{\tsched}{s})}{x + zy}
  \ \ \ < \ \ \
  \frac{\rho + zp_s^{\max}r}{x + zp_s^{\max}}
$$
We now use the bounds $\Exp{\tsched}{s} < p/2$ and
$y < p/2 \cdot \lambda^{\frac{\turning}{2}}$ (which yields $y < p/2$) that
have been established above and obtain:
\begin{eqnarray*}
  \frac{zp_s^{\max}r \ - \ z(yr + \Exp{\tsched}{s})}{zp_s^{\max} - zy}
   & \ = \ &
  \frac{p_s^{\max}r - (yr + \Exp{\tsched}{s})}{p_s^{\max} - y}
  \\
  \\[0ex]
  & = &
  \frac{p_s^{\max}r-yr}{p_s^{\max}-y} \ - \
  \frac{\Exp{\tsched}{s}}{p_s^{\max} - y}
  \\
  \\[0ex]
  & = &
  r \ - \ \frac{\Exp{\tsched}{s}}{p_s^{\max} - y}
  \\
  \\[0ex]
  & \geqslant &
  r \ - \ \frac{\Exp{\tsched}{s}}{p - y}
  \\
  \\[0ex]
  & > &
  r \ - \ \frac{\frac{p}{2}}{p - \frac{p}{2}}
  \ \ \ = \ \ \ r - 1
\end{eqnarray*}
By assumption we have $r = \rew(\fpath) \geqslant \CExp{\sched}+1$.
Thus, $r{-}1 \geqslant \CExp{\sched}$. Therefore:
$$
  \frac{zp_s^{\max}r \ - \ z(yr + \Exp{\tsched}{s})}{zp_s^{\max} - zr}
  \ \ \ > \ \ \ r-1
  \ \ \ \geqslant \ \ \ \CExp{\sched}
$$
By Lemma \ref{lemma:rho-theta-x-y} we get:
$$
  \CExp{\sched} \ \ < \ \
  \frac{\rho + zp_s^{\max}r}{x+zp_s^{\max}}
  \ \ < \ \
  \frac{zp_s^{\max}r \ - \ z(yr + \Exp{\tsched}{s})}{zp_s^{\max} - zy}
$$
But then
$\CExp{\sched} \ < (\rho + zp_s^{\max}r)/(x+zp_s^{\max}) \ \leqslant \
 \CExp{\redefresidual{\sched}{\fpath}{\usched}}$.
\Ende
\end{proof}

Given a reward-based scheduler $\sched$
and a state-reward pair $(s,r)\in S \times \Nat$
with $r \geqslant \CExp{\sched}{+}1$ and
$\CExpState{\residual{\sched}{(s,r)}}{s} \geqslant \turning$,
we may applying Proposition \ref{prop:turning-point} repeatedly
to obtain
  $$
    \CExp{\sched} \ \ < \ \
    \CExp{\redefresidual{\sched}{(s,r)}{\usched}}
  $$
where $\usched$ is any
scheduler that maximizes the probability to reach $\goal$ from $s$.
Hence,
by Proposition \ref{prop:det-reward-based},
$\CExp{\max}$ is the supremum of the values $\CExp{\sched}$ where
$\sched$ ranges over all deterministic reward-based
schedulers with $\CExp{\residual{\sched}{r}} < \turning$
for all $r\geqslant \CExp{\sched}{+}1$.

\begin{definition}[Eventually memoryless]
\label{def:eventually-memoryless}
\label{def:reward-threshold}
Let $\sched$ be a reward-based scheduler. $\sched$ is called
eventually memoryless if there exists $\saturation \in \Nat$ such that
$\sched(s,r) = \sched(s,\saturation)$ for all $r \geqslant \saturation$.
\end{definition}

We will show that there exists an optimal
reward-based eventually memoryless scheduler $\sched$
such that for all states $s\in S \setminus \{\goal,\fail\}$ and
each $r\geqslant \saturation$ we have
$\residual{\sched}{r} = \maxsched(s)$
where $\maxsched$ is a deterministic memoryless scheduler that maximizes
the probability to reach $\goal$ from all states and the 
conditional expectations
for all those schedulers.
We will see that such a scheduler $\maxsched$ is computable in polynomial time
using linear programming techniques.
See Lemma \ref{lemma:Sched-max-exp} below.

\begin{definition}[Additional notations %
]
{\rm
As before, let $p_s^{\max} = \Pr^{\max}_s(\Diamond \goal)$.
Let $\Act^{\max}(s)$ denote the set of actions $\alpha \in \Act(s)$ where
$$
  p_s^{\max} \ \ = \ \
  \sum_{t\in S} P(s,\alpha,t) \cdot p_t^{\max}
$$
Let $\Sched^{\max}$ denote the class of deterministic
schedulers $\usched$ such that
$\Pr^{\usched}_s(\Diamond \goal)=p_s^{\max}$ for all states $s$.
\Ende
  }
\end{definition}

It is well-known (see e.g.~\cite{Puterman}) that
$\Act^{\max}(s)$ is nonempty and that
$\usched(\fpath) \in \Act^{\max}(\last(\fpath))$ for each $\usched$-path
starting in $s$ and each scheduler $\usched \in \Sched^{\max}$.
This justifies to regard the sub-MDP $\cM^{\max}$ of $\cM$ that arises
by eliminating all state-action pairs $(s,\beta)$ with $s\in S$
and $\beta \notin \Act^{\max}(s)$. That is, the enabled action of $s$
as a state of $\cM^{\max}$ are exactly the actions in $\Act^{\max}(s)$.
Clearly:
$$
  \Pr^{\vsched}_{\cM,s}(\Diamond \goal) \ \ = \ \
  \Pr^{\vsched}_{\cM^{\max},s}(\Diamond \goal)
$$
for each scheduler $\vsched$ for $\cM^{\max}$
and each scheduler $\usched \in \Sched^{\max}$ is also a scheduler
for $\cM^{\max}$.
The reverse direction does not hold in general as
$\cM^{\max}$ can have end components that do not contain the goal state,
i.e., $\Pr^{\vsched}_{s}(\Diamond \goal) < p_s^{\max}$
for some scheduler $\vsched$ for $\cM^{\max}$ is possible.
However, such scenarios are impossible because of
assumptions \eqref{assumption:A1} and \eqref{assumption:A2}.

\begin{lemma}
\label{lemma:Sched-max}
$\Sched^{\max}$ agrees with the set of schedulers for $\cM^{\max}$.
That is,
for each scheduler $\usched$ for $\cM^{\max}$ we have
$\Pr^{\usched}_{\cM,s}(\Diamond \goal) =p_{s}^{\max}$ for all states $s$.
\end{lemma}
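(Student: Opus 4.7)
The plan is to show that any scheduler $\usched$ for $\cM^{\max}$ turns the value function $p^{\max}$ into a bounded martingale along its trajectories, and then to conclude by an optional-stopping / bounded-convergence argument.

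First, I would observe that $\cM^{\max}$ is a sub-MDP of $\cM$, so assumption \eqref{assumption:A2} (that $\cM$ has no end components) transfers: any end component of $\cM^{\max}$ would automatically be an end component of $\cM$. Hence $\cM^{\max}$ has no end components either, and for every scheduler $\usched$ for $\cM^{\max}$ and every state $s\in S$ we have $\Pr^{\usched}_{\cM^{\max},s}(\Diamond(\goal\vee\fail))=1$. In particular the first hitting time $\tau$ of the trap set $\{\goal,\fail\}$ under $\usched$ is almost surely finite.

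Next, fix $s\in S$ and $\usched$, let $Y_n$ denote the state after $n$ steps under $\Pr^{\usched}_{\cM,s}$ (with $Y_n=Y_\tau$ for $n\geqslant\tau$), and set $X_n = p_{Y_n}^{\max}$. By the very definition of $\Act^{\max}(t)$, every action $\alpha\in\Act^{\max}(t)$ satisfies $p_t^{\max}=\sum_{t'\in S}P(t,\alpha,t')\,p_{t'}^{\max}$. Since $\usched$ only ever picks actions in $\Act^{\max}(\cdot)$, the process $(X_n)_n$ is a $[0,1]$-valued martingale with respect to the natural filtration. In particular $\mathbb{E}^{\usched}_{s}[X_n] = p_s^{\max}$ for all $n\in\Nat$.

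Finally, since $X_\tau \in \{0,1\}$ (namely $X_\tau = p_\goal^{\max} = 1$ on $\{\Diamond\goal\}$ and $X_\tau = p_\fail^{\max} = 0$ on $\{\Diamond\fail\}$) and the stopped process is uniformly bounded, the bounded convergence theorem applied to $X_{n\wedge\tau}\to X_\tau$ yields
\[
   p_s^{\max} \ = \ \mathbb{E}^{\usched}_{s}[X_\tau]
            \ = \ 1\cdot \Pr^{\usched}_{\cM,s}(\Diamond\goal) \, + \, 0\cdot\Pr^{\usched}_{\cM,s}(\Diamond\fail)
            \ = \ \Pr^{\usched}_{\cM,s}(\Diamond\goal),
\]
which is the claim. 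The only (minor) obstacle is to justify the exchange of limit and expectation at the stopping time; this is routine because $\tau<\infty$ a.s.\ and the martingale is bounded in $[0,1]$. The reverse inclusion -- that any deterministic scheduler in $\Sched^{\max}$ is a scheduler for $\cM^{\max}$ -- is immediate, since choosing any action outside $\Act^{\max}(s)$ at a reachable state $s$ would strictly decrease the probability of reaching $\goal$ from $s$.
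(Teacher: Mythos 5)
Your proof is correct and takes a genuinely different route from the paper's. The paper argues via the uniqueness theory of linear equation systems: it picks a deterministic memoryless scheduler $\sched$ for $\cM^{\max}$ with minimal reachability probabilities $q_s = \Pr^{\sched}_s(\Diamond\goal)$, observes that both $(q_s)_{s\in S'}$ and $(p_s^{\max})_{s\in S'}$ solve the same (uniquely solvable) linear system of the Markov chain induced by $\sched$ over the set $S'$ of states with $q_s>0$, hence coincide, and then uses a second linear system together with \eqref{assumption:A1} to show that $S\setminus S'=\{\fail\}$; equality for all schedulers follows because the minimal scheduler already attains $p_s^{\max}$, while every scheduler for $\cM^{\max}$ is also a scheduler for $\cM$ and hence cannot exceed it. Your proof instead shows directly that $(p_{Y_n}^{\max})_n$ is a bounded martingale under an arbitrary (history-dependent, randomized) scheduler $\usched$ for $\cM^{\max}$ — because every action in $\Act^{\max}(\cdot)$ is $p^{\max}$-harmonic — and applies optional stopping at the almost-surely finite hitting time of $\{\goal,\fail\}$, whose a.s.~finiteness you correctly deduce from the absence of end components in $\cM^{\max}$, inherited from \eqref{assumption:A2}. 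Your martingale route is slightly cleaner: it avoids the two-stage case split over $S'$ versus $S\setminus S'$, handles all schedulers at once without the detour through the minimizing memoryless deterministic one, and makes the underlying harmonicity of $p^{\max}$ explicit; the paper's approach stays entirely within standard MDP/Markov-chain linear-algebraic machinery, which is perhaps more in line with the rest of its development. Both are complete and of comparable length.
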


\begin{proof}
Let $\sched$
be a deterministic memoryless schedulers for $\cM^{\max}$
where $\Pr^{\sched}_{s}(\Diamond \goal)$ is minimal for all states $s$.
Let $q_s = \Pr^{\sched}_s(\Diamond \goal)$ and $\beta_s = \sched(s)$.
Let $S'=\{s\in S : q_s>0\}$. Then, the vector $(q_s)_{s\in S'}$ is the
unique solution of the following linear equation system with variables
$x_s$ for $s\in S'$:
$$
  \begin{array}{l}
     x_s \ =  \
     \sum\limits_{t\in S} P(s,\beta_s,t) \cdot x_t
     \qquad \text{for $s\in S'\setminus \{\goal\}$}
     \\[2ex]
     x_{\goal} \, = \, 1
 \end{array}
$$
But the vector $(p^{\max}_s)_{s\in S'}$ also solves the above linear
equation system. Hence, $q_s=p_s^{\max}$ for all states $s\in S'$.

It remains to show that $S \setminus \{\fail\} = S'$.
For all states $s\in S \setminus S'$
we have $\Pr^{\sched}_s(\Diamond \fail)=1$ by assumption \eqref{assumption:A2}
and the vector $(w_s)_{s\in S \setminus S'}$ with $w_s=1$ for all
$s\in S \setminus S'$
is the unique solution of the following
linear equation system with variables $y_s$ for $s\in S \setminus S'$:
$$
  \begin{array}{l}
     y_s \ =  \
     \sum\limits_{t\in S} P(s,\beta_s,t) \cdot y_t
     \qquad \text{for $s\in S \setminus (S' \cup \{\fail\})$}
     \\[2ex]
     y_{\fail} \, = \, 1
 \end{array}
$$
However, the vector $(1{-}p_s^{\max})_{s\in S \setminus S'}$
also solves the above
linear equation system. Hence, $1{-}p^{\max}_s =1$ and therefore
$p^{\max}_s=0$ for all $s \in S \setminus S'$.
But then $S \setminus S' = \{\fail\}$ by assumption \eqref{assumption:A1}.
\Ende
\end{proof}

\begin{remark}%
\label{remark:sched-max}
Obviously, for each scheduler $\sched \in \Sched^{\max}$ we have:
$$
  \CExp{\sched} \ \ = \ \
  \frac{\ \Exp{\sched}{\sinit} \ }{p_{\sinit}^{\max}}
$$
Hence, if $\sched, \usched \in \Sched^{\max}$ then
$$
   \CExp{\sched} \ \geqslant \ \CExp{\usched}
   \qquad \text{iff} \qquad
   \Exp{\sched}{\sinit} \ \geqslant \ \Exp{\usched}{\sinit}
$$
For each $\sched \in \Sched^{\max}$
and each $\sched$-path $\fpath$ from $\sinit$,
the residual schedulers $\residual{\sched}{\fpath}$ maximize the
probabilities to reach $\goal$ from $\last(\fpath)$.
Hence, we may suppose $\residual{\sched}{\fpath} \in \Sched^{\max}$.
Thus, $\CExp{\sched}$ is maximal under all schedulers
in $\Sched^{\max}$ iff
$$
  \Exp{\residual{\sched}{\fpath}}{s}
  \ \  = \ \
  \sup \
  \bigl\{ \
        \Exp{\usched}{s} \ : \
        \usched \in \Sched^{\max} \
  \bigr\}
$$
for all $\sched$-paths 
$\fpath$ from $\sinit$ with $s=\last(\fpath) \not= \fail$.
This follows by the fact that
$$
  \frac{\rho + \theta}{x+p} \ \ \geqslant \ \
  \frac{\rho + \zeta}{x+p}
  \qquad \text{iff} \qquad
  \theta \geqslant \zeta
$$
for all real numbers $\rho,\theta,\zeta,x,p$ with $x+p > 0$.
In this case, we deal with $p = \probability(\fpath)$,
$x = p_{\sinit}^{\max}-p$,
$\theta = \Exp{\residual{\sched}{\fpath}}{s}$
and
$\rho = \Exp{\sched}{\sinit}- p \theta$.
Thus, $\CExp{\sched} = (\rho + \theta)/(x+p)$.
The value $\zeta$ stands for the possible values
$\Exp{\usched}{s}$ for $\usched \in \Sched^{\max}$.
\Ende
\end{remark}

\begin{lemma}
\label{lemma:maxsched-Theta}
Let
$\Theta_s \ = \ \sup \
  \bigl\{ \
        \Exp{\usched}{s} \ : \
        \usched \in \Sched^{\max} \
  \bigr\}$.
Then:
$$
  \Theta_s
  \ \ = \ \
  \max \
  \Bigl\{ \
     \rew(s, \alpha) \cdot p_{s}^{\max} \ + \
     \sum_{t\in S} P(s,\alpha,t) \cdot \Theta_t \ : \
      \alpha \in \Act^{\max}(s) \
  \Bigr\}
$$
\end{lemma}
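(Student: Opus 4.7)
The plan is to establish the Bellman-type identity by a one-step expansion
of the partial expectation, bounding both directions separately.
First I would verify the following one-step decomposition:
for every $\usched \in \Sched^{\max}$,
\begin{equation*}
  \Exp{\usched}{s}
   \ = \
  \sum_{\alpha \in \Act^{\max}(s)}
    \usched(s)(\alpha)
    \Bigl(
      \rew(s,\alpha)\cdot p_s^{\max}
      \, + \,
      \sum_{t\in S} P(s,\alpha,t)\cdot \Exp{\residual{\usched}{s\alpha t}}{t}
    \Bigr).
\end{equation*}
This follows by expanding
$\Exp{\usched}{s} = \sum_{\alpha,t} \usched(s)(\alpha)P(s,\alpha,t)
  [\rew(s,\alpha)\cdot \Pr^{\residual{\usched}{s\alpha t}}_{t}(\Diamond \goal)
   + \Exp{\residual{\usched}{s\alpha t}}{t}]$,
then using Lemma~\ref{lemma:Sched-max} applied to the residual schedulers
(which again belong to $\Sched^{\max}$, since $\usched$ picks only actions
in $\Act^{\max}$) to rewrite
$\Pr^{\residual{\usched}{s\alpha t}}_{t}(\Diamond \goal) = p_t^{\max}$,
and finally collecting $\sum_t P(s,\alpha,t)\cdot p_t^{\max} = p_s^{\max}$,
which holds because $\alpha \in \Act^{\max}(s)$.

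For the inequality ``$\Theta_s \leqslant \max(\cdots)$'', I would use the
decomposition above together with the trivial bound
$\Exp{\residual{\usched}{s\alpha t}}{t} \leqslant \Theta_t$, yielding
\begin{equation*}
  \Exp{\usched}{s}
   \ \leqslant \
  \sum_{\alpha} \usched(s)(\alpha)
   \Bigl( \rew(s,\alpha)p_s^{\max} + \sum_t P(s,\alpha,t)\Theta_t \Bigr)
   \ \leqslant \
   \max_{\alpha \in \Act^{\max}(s)}
   \Bigl( \rew(s,\alpha)p_s^{\max} + \sum_t P(s,\alpha,t)\Theta_t \Bigr),
\end{equation*}
and then take the supremum over $\usched \in \Sched^{\max}$.

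For the converse inequality ``$\Theta_s \geqslant \max(\cdots)$'',
I would fix an arbitrary $\alpha \in \Act^{\max}(s)$ and $\varepsilon > 0$,
and for each successor $t$ with $P(s,\alpha,t) > 0$ pick some
$\usched_t \in \Sched^{\max}$ with $\Exp{\usched_t}{t} \geqslant \Theta_t - \varepsilon$.
Define a scheduler $\usched \in \Sched^{\max}$ that in state $s$ deterministically
takes action $\alpha$, and after the initial transition $s\,\alpha\,t$
behaves as $\usched_t$ (and is arbitrary elsewhere, subject to staying in
$\Sched^{\max}$). Applying the one-step identity to $\usched$ gives
$\Exp{\usched}{s} = \rew(s,\alpha)\cdot p_s^{\max} + \sum_t P(s,\alpha,t)\cdot \Exp{\usched_t}{t}
  \geqslant \rew(s,\alpha)\cdot p_s^{\max} + \sum_t P(s,\alpha,t)\cdot \Theta_t - \varepsilon$,
so $\Theta_s \geqslant \rew(s,\alpha)\cdot p_s^{\max} + \sum_t P(s,\alpha,t)\cdot \Theta_t - \varepsilon$.
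Letting $\varepsilon \to 0$ and maximizing over the finite set
$\Act^{\max}(s)$ yields the desired inequality, completing the proof.

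The main subtlety, rather than an obstacle, is the justification that
the residuals and the patched scheduler $\usched$ constructed in the second
inequality all belong to $\Sched^{\max}$; this follows from
Lemma~\ref{lemma:Sched-max}, which identifies $\Sched^{\max}$ with the
schedulers of the sub-MDP $\cM^{\max}$, so that the property is preserved
under both taking residuals and the splice-and-continue construction
used above.
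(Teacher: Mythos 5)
Your proof is correct and follows essentially the same approach as the paper's: both rely on the one-step Bellman decomposition of $\Exp{\usched}{s}$ and then bound the two directions of the identity, the upper bound on $\Theta_s$ via $\Exp{\residual{\usched}{(s\alpha t)}}{t}\leqslant\Theta_t$ and the lower bound via a splice-and-continue construction of a near-optimal scheduler. The only differences are cosmetic — you prove the upper bound directly where the paper argues by contradiction, and you make explicit the appeal to Lemma~\ref{lemma:Sched-max} to justify that residuals and patched schedulers remain in $\Sched^{\max}$, which the paper leaves implicit.
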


\begin{proof}
Clearly, for each state
$s\in S \setminus \{\fail\}$ and $\alpha \in \Act^{\max}(s)$
and each deterministic scheduler $\usched$ we have:
$$
   \Exp{\usched}{s} \ \ = \ \
   \rew(s, \alpha) \cdot p_{s}^{\max} \ + \
   \sum_{t\in S} P(s,\alpha,t) \cdot
      \Exp{\residual{\usched}{(s \,\alpha \, t)}}{t}
$$
where $\alpha = \usched(s)$.
This yields $\Theta_s \geqslant \Xi_s$ for all states $s$
where
$$
  \Xi_s
  \ \ = \ \
  \max \
  \Bigl\{ \
     \rew(s, \alpha) \cdot p_{s}^{\max} \ + \
     \sum_{t\in S} P(s,\alpha,t) \cdot \Theta_t \ : \
      \alpha \in \Act^{\max}(s) \
  \Bigr\}
$$
It remains to show that $\Xi_s \leqslant \Theta_s$ for all states $s$.
Suppose by contradiction that
$\Theta_s > \Xi_s$ for some state $s$.
Let $\varepsilon = (\Theta_s -\Xi_s)/2$. We pick some deterministic
scheduler $\sched \in \Sched^{\max}$ such that
$\Exp{\sched}{s} > \Theta_s - \varepsilon$. Hence, $\Exp{\sched}{s} > \Xi_s$.
For $\alpha = \sched(s)$, we get:
$$
  \Xi_s \ \  < \ \
  \Exp{\sched}{s} \ \ \leqslant \ \
  \rew(s, \alpha) \cdot p_{s}^{\max} \ + \
  \sum_{t\in S} P(s,\alpha,t) \cdot \Theta_t
  \ \ \leqslant \ \ \Xi_s
$$
Contradiction.
\Ende
\end{proof}

\begin{lemma}
\label{lemma:Sched-max-exp}
There exists a deterministic memoryless scheduler
$\maxsched \in \Sched^{\max}$ that maximizes the
partial expected total reward until reaching $\goal$
for all states $s\in S \setminus \{\fail\}$
under all schedulers $\usched \in \Sched^{\max}$, i.e.,
$$
  \Exp{\maxsched}{s} \ \ = \ \
  \max \
  \bigl\{ \ \Exp{\usched}{s} \ : \ \usched \in \Sched^{\max} \ \bigr\}
$$
Such a scheduler $\maxsched$ and the values $\Exp{\maxsched}{s}$
are computable in time polynomial in the size of $\cM^{\max}$,
using the linear program with variables $\theta_s$ for $s\in S$
shown in Figure \ref{fig:LP-for-maxsched}.
\end{lemma}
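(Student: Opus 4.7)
The plan is to show that the values $\Theta_s$ from Lemma~\ref{lemma:maxsched-Theta} arise as the unique optimum of a linear program over $\cM^{\max}$, and that a memoryless deterministic scheduler $\maxsched$ can then be read off from the tight constraints. Because $\cM^{\max}$ is a sub-MDP of $\cM$, assumption~(A2) transfers to $\cM^{\max}$: there are no end components outside $\{\goal,\fail\}$, so $\Pr^{\usched}_{\cM^{\max},s}(\Diamond(\goal\vee\fail))=1$ for every scheduler $\usched$ of $\cM^{\max}$, and by Lemma~\ref{lemma:Sched-max} every such $\usched$ satisfies $\Pr^{\usched}_{\cM,s}(\Diamond\goal)=p_s^{\max}$. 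Lemma~\ref{lemma:abschaetzung-exp-reward} then shows $\Exp{\usched}{s}$ is uniformly bounded in $\usched$, so $\Theta_s$ is finite.

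The LP I would use has one variable $\theta_s$ per state: minimize $\sum_{s\in S}\theta_s$ subject to $\theta_s\geqslant \rew(s,\alpha)\cdot p_s^{\max} + \sum_{t\in S} P(s,\alpha,t)\cdot \theta_t$ for every $s\in S\setminus\{\goal,\fail\}$ and $\alpha\in\Act^{\max}(s)$, together with $\theta_{\goal}=\theta_{\fail}=0$. Any feasible $\theta$ is a pre-fixed point of the Bellman operator $T$ from Lemma~\ref{lemma:maxsched-Theta}, and $(\Theta_s)_{s\in S}$ is feasible by that same lemma. To argue that $(\Theta_s)$ is the unique optimum, I would show that every feasible $\theta$ dominates $\Theta$ pointwise: iterating $T$ keeps $\theta\geqslant T^n(\theta)$ for all $n$, and since $\cM^{\max}\setminus\{\goal,\fail\}$ has no end components, $T^n(\mathbf{0})\to\Theta$, whence $\theta\geqslant\Theta$. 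Minimizing $\sum_s\theta_s$ therefore pins down $\theta=\Theta$, and a standard polynomial-time LP solver returns $(\Theta_s)$ in time polynomial in $\Size(\cM^{\max})$.

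From the optimal LP solution I would then define $\maxsched(s)$, for each $s\in S\setminus\{\goal,\fail\}$, to be any $\alpha\in\Act^{\max}(s)$ at which the corresponding constraint holds with equality; such an $\alpha$ exists by Lemma~\ref{lemma:maxsched-Theta}. The induced memoryless deterministic scheduler lies in $\Sched^{\max}$ by construction. To verify $\Exp{\maxsched}{s}=\Theta_s$, I would combine the equality $\Theta_s=\rew(s,\maxsched(s))\,p_s^{\max}+\sum_t P(s,\maxsched(s),t)\,\Theta_t$ with a standard truncation argument: split $\Exp{\maxsched}{s}$ into contributions of paths of length $\leqslant n$ and $> n$, let $n\to\infty$, and use Lemma~\ref{lemma:abschaetzung-prob-reward-bounded-until} to show that the tail vanishes.

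The main obstacle would usually be uniqueness of the Bellman fixed point, since MDPs with zero-reward end components admit a continuum of solutions and the na\"ive LP then fails to isolate the right one. Here the difficulty evaporates because every sub-MDP of an end-component-free MDP is itself end-component-free, so (A2) transfers to $\cM^{\max}$ and the contraction-toward-traps argument goes through unchanged; this is precisely where the earlier normal-form transformation pays off.
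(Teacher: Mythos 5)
Your plan matches the paper's own proof quite closely: same linear program (Figure~\ref{fig:LP-for-maxsched}), feasibility of $(\Theta_s)_s$ via Lemma~\ref{lemma:maxsched-Theta}, extraction of $\maxsched$ from the tight constraints, membership in $\Sched^{\max}$ via Lemma~\ref{lemma:Sched-max}, and an iterative argument to show every feasible point dominates $\Theta$. The paper iterates the \emph{fixed-policy} linear operator determined by the $\beta_s$'s and then invokes uniqueness of the resulting linear equation system under \eqref{assumption:A2}, whereas you iterate the Bellman max operator $T$; both are viable and rest on the same structural fact that \eqref{assumption:A2} transfers to $\cM^{\max}$.

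There is, however, a concrete gap in the step ``$\theta \geqslant T^n(\theta)$ for all $n$, and $T^n(\mathbf{0}) \to \Theta$, whence $\theta \geqslant \Theta$.'' As stated, those two facts do not imply the conclusion. To close the chain you need either (i) monotonicity of $T$ together with $\theta \geqslant \mathbf{0}$, giving $\theta \geqslant T^n(\theta) \geqslant T^n(\mathbf{0}) \to \Theta$, or (ii) a contraction argument that directly yields $T^n(\theta) \to \Theta$ for \emph{every} starting point. Route (i) is what the paper effectively uses, but your statement of the LP omits the non-negativity constraints $\theta_s \geqslant 0$ for $s \in S \setminus \{\goal,\fail\}$, which are present in Figure~\ref{fig:LP-for-maxsched} and are exactly what supplies the lower bound $\theta \geqslant \mathbf{0}$ needed for (i). Without them, the feasible region is not obviously bounded below in the relevant coordinates and the argument as you wrote it does not go through. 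Once you reinstate the non-negativity constraints and add the monotonicity step, the rest of your proof is sound, including the final truncation argument for $\Exp{\maxsched}{s} = \Theta_s$ (the paper instead derives this from uniqueness of the fixed-policy linear system, but truncation plus Lemma~\ref{lemma:abschaetzung-prob-reward-bounded-until} works as well).
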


\noindent
By Lemma \ref{lemma:Sched-max-exp} and
Remark \ref{remark:sched-max}:
$\CExpState{\maxsched}{s} \ = \
  \max \
  \bigl\{ \ \CExpState{\usched}{s} \ : \ \usched \in \Sched^{\max} \ \bigr\}$.

\begin{figure}[t]
$$
 \begin{array}{l}
    \text{Minimize $\sum\limits_{s\in S} \theta_s$ subject to}
    \\[2ex]
    \begin{array}{ll}
      \mathrm{(1)} \ \ &
      \theta_s  \ \geqslant \
      \rew(s,\alpha) \cdot p_s^{\max} \, + \,
      \sum\limits_{t\in S} P(s,\alpha,t) \cdot \theta_t
      \quad \text{for $s \in S \setminus \{\goal,\fail\}$,
                   $\alpha \in \Act^{\max}(s)$}
       \\[2ex]
       \mathrm{(2)} \ \ &
       \theta_{\goal} \, = \,  \theta_{\fail}  \, = \, 0
       \ \ \text{and} \ \
       \theta_s \geqslant 0 \ \text{ for $s \in S \setminus \{\goal,\fail\}$}
    \end{array}
  \end{array}
$$
\caption{Linear program for
      $\max \,
       \bigl\{ \, \Exp{\usched}{s} \, : \,
                  \usched \in \Sched^{\max} \, \bigr\}$}
\label{fig:LP-for-maxsched}
\end{figure}

\begin{proof}
The linear program in Figure \ref{fig:LP-for-maxsched} is the same
as the one for the maximal (unconditional) total expectation
of the MDP $\cM'$ that agrees with $\cM^{\max}$, but uses
the (rational-valued) reward function
$\rew'(s,\alpha) = \rew(s,\alpha) \cdot p_s^{\max}$.
Using standard results for finite MDPs (see e.g.~\cite{Puterman}),
we get that the linear
program has a unique solution.
Thus, one proof obligation is to show that
$\Exp{\max}{\cM^{\max},s}(\accdiaplus \goal) =
 \ExpRew{\max}{\cM',s}(\text{``total reward''})$
for all states $s$ and to show that optimal schedulers for $\cM'$
(w.r.t.~to the total expected reward) are
optimal for $\cM^{\max}$ (w.r.t.~the partial expectation until reaching
the goal state).
We follow here a different approach and present a direct proof
that adapts the soundness of the linear program for total expected accumulated
rewards in finite MDPs.

Clearly, the vector $(\Theta_s)_{s\in S}$
defined as Lemma \ref{lemma:maxsched-Theta}
provides a solution for the constraints (1) and (2)
in Figure \ref{fig:LP-for-maxsched}.
Moreover, Lemma \ref{lemma:maxsched-Theta} implies that for
each state $s\in S \setminus \{\goal,\fail\}$ there is an action
$\beta_s \in \Act^{\max}(s)$ such that
$$
   \Theta_s \ \ = \ \
   \rew(s,\beta_s) \cdot p_s^{\max} \  +  \
   \sum\limits_{t\in S} P(s,\beta_s,t) \cdot \Theta_t
$$
Let $\maxsched$ be the deterministic memoryless scheduler for $\cM^{\max}$
given by $\maxsched(s)=\beta_s$ for all states
$s \in S \setminus \{\goal,\fail\}$.
By Lemma \ref{lemma:Sched-max} we get $\maxsched \in \Sched^{\max}$,
i.e.,
$\Pr^{\maxsched}_s(\Diamond \goal) = p_s^{\max}$ for all $s$.

The vectors $(\Exp{\maxsched}{s})_{s\in S}$
and $(\Theta_s)_{s\in S}$ solve the following
linear equation system with variables $\zeta_s$ for all states $s\in S$:
$$
 \begin{array}{ll}
   \text{(3)} \ \ &
   \zeta_s \ = \
   \rew(s,\beta_s) \cdot p_s^{\max} \  +  \
   \sum\limits_{t\in S} P(s,\beta_s,t) \cdot \zeta_t
   \qquad \text{for $s\in S \setminus \{\goal,\fail\}$}
   \\
   \\[-2ex]
   \text{(4)} &
   \zeta_{\goal} \, = \, \zeta_{\fail} \, = \, 0
 \end{array}
$$
By applying standard arguments for the Markov chain induced by $\maxsched$
and using assumption \eqref{assumption:A2}, 
we obtain that the above linear equation system
has a unique solution.%
\footnote{Note that the linear equation system (3), (4)
 can be written in the form
 $(I{-}A)\zeta = b$ where $A$ is the probability matrix of the
 Markov chain induced by $\maxsched$ restricted to the states
 $s\in S \setminus \{\goal,\fail\}$ and $I$ the  identity matrix.
 The vector $b$ contains the values $\rew(s,\beta_s) \cdot p_{s^{\max}}$, 
 $s\in S \setminus \{\goal,\fail\}$.
 Assumption \eqref{assumption:A2} ensures that $I{-}A$ is non-singular. 
 This implies the existence of a unique solution of equations (3) and (4).}
This yields:
$\Theta_s = \Exp{\maxsched}{s}$ for all states $s\in S$.

It remains to show that
$\sum_{s\in S} \Theta_s \, \leqslant \, \sum_{s\in S} \rho_s$
for each solution $(\rho_s)_{s\in S}$ of (1) and (2).
We pick a solution $(\rho_s)_{s\in S}$ of (1) and (2).
We first observe that then also the vector with the elements
$\min\{\Theta_s,\rho_s\}$ is a solution of (1) and (2).
Hence, we may assume that $\rho_s \leqslant \Theta_s$ for all $s\in S$.

We now define $\rho_s^{(0)} = \rho_s$ and for $n \in \Nat$:
$$
   \rho_s^{(n+1)} \ \ = \ \
   \rew(s,\beta_s) \cdot p_s^{\max} \ + \
     \sum_{t\in S} P(s,\alpha,t)\cdot \rho_t^{(n)}
$$
By induction on $n$, we get
$\rho_s^{(0)} \geqslant \rho_s^{(1)} \geqslant \rho_s^{(2)} \geqslant \ldots$
for all $s\in S$ and $n \geqslant 0$.
Let
$$
   \rho_s^* \ \ = \ \ \lim_{n\to \infty} \rho_s^{(n)}
$$
Clearly, we have $\rho_s \geqslant \rho_s^*$ and
$$
     \rho_s^{*} \ \ = \ \
     \rew(s,\beta_s) \cdot p_s^{\max} \ + \
     \sum_{t\in S} P(s,\alpha,t)\cdot \rho_t^{*}
$$
for all states $s$.
But then the vector $(\rho_s^*)_{s\in S}$ solves the linear equation system
(3) and (4) of above. Again, we can rely on the fact that (3) and (4)
have a unique solution, which yields $\rho_s^* = \Theta_s$ for all states $s$.
But then $\rho_s \, \geqslant \, \rho_s^* \, \geqslant \, \Theta_s$
for all $s$.

The above shows that the vector $(\Theta_s)_{s\in S}$ is the unique
solution of the linear program shown in Figure \ref{fig:LP-for-maxsched}
and coincides with the vector $(\Exp{\maxsched}{s})_{s\in S}$.
\Ende
\end{proof}

\begin{lemma}[Existence of optimal
                eventually memoryless schedulers]
\label{lemma:threshold-theorem}
$\CExp{\sched}=\CExp{\max}$ for some
deterministic reward-based scheduler $\sched$ such that
$\Pr^{\sched}_{\cM,\sinit}(\Diamond \goal)>0$ and
$\residual{\sched}{\saturation}=\maxsched$ for some saturation point
$\saturation$.
\end{lemma}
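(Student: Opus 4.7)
The plan is to combine statement (a) of Proposition~\ref{prop:saturation-maxsched-appendix} (already established earlier in this section as Lemma~\ref{lemma:maxsched-after-threshold-is-optimal}) with a finiteness argument. By Proposition~\ref{prop:det-reward-based}, $\CExp{\max}$ is the supremum of $\CExp{\sched_n}$ over some sequence of deterministic reward-based schedulers $\sched_n$ with $\Pr^{\sched_n}_{\cM,\sinit}(\Diamond \goal)>0$. I would set $\sched_n' = \redefresidual{\sched_n}{\saturation}{\maxsched}$ and aim to show that (i) each $\sched_n'$ satisfies the scheduler requirement $\Pr^{\sched_n'}_{\cM,\sinit}(\Diamond \goal)>0$ so that $\CExp{\sched_n'}$ is defined, (ii) $\CExp{\sched_n'} \geqslant \CExp{\sched_n}$, and (iii) the family of schedulers of the form $\redefresidual{\tsched}{\saturation}{\maxsched}$ is finite.

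For (i), note that on every finite path $\fpath$ with $\rew(\fpath) < \saturation$ the scheduler $\sched_n'$ coincides with $\sched_n$; hence every $\sched_n$-path from $\sinit$ to $\goal$ of reward $<\saturation$ remains a $\sched_n'$-path with the same probability. If all such paths carry zero joint probability, then $\sched_n$ must have at least one path from $\sinit$ to some non-trap state $s$ with $\rew(\fpath)\geqslant \saturation$, and by assumption~\eqref{assumption:A1} together with $\maxsched\in \Sched^{\max}$ we have $\Pr^{\maxsched}_{\cM,s}(\Diamond \goal) = p_s^{\max}>0$, so $\sched_n'$ reaches $\goal$ with positive probability. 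Step (ii) is precisely statement~(a) applied to $\tsched=\sched_n$, which gives $\CExp{\sched_n'}\geqslant \CExp{\sched_n}$, so $\CExp{\sched_n'}\to \CExp{\max}$.

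For (iii), any scheduler in the family is fully determined by its decisions on the finite set $\{(s,r) : s\in S\setminus\{\goal,\fail\},\ 0\leqslant r< \saturation\}$, together with the fixed memoryless behaviour $\maxsched(s)$ prescribed for $r\geqslant \saturation$. Since $S$ and $\Act$ are finite, there are only finitely many such schedulers, hence only finitely many distinct values $\CExp{\sched_n'}$. Therefore the supremum of this finite set equals $\CExp{\max}$ and is attained by some scheduler $\sched$ in the family, which has exactly the claimed properties $\Pr^{\sched}_{\cM,\sinit}(\Diamond \goal)>0$, $\residual{\sched}{\saturation}=\maxsched$, and $\CExp{\sched}=\CExp{\max}$.

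The main obstacle I anticipate is the careful bookkeeping for (i) to exclude the degenerate case in which $\Pr^{\sched_n'}_{\cM,\sinit}(\Diamond \goal)$ might drop to zero, but this is handled by the case distinction above using \eqref{assumption:A1}; the rest of the argument is essentially packaging.
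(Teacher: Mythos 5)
Your proof is correct and follows essentially the same route as the paper: both reduce the claim to Lemma~\ref{lemma:maxsched-after-threshold-is-optimal} (equivalently, part~(a) of Prop.~\ref{prop:saturation-maxsched-appendix}) together with the observation that a reward-based scheduler with $\residual{\sched}{\saturation}=\maxsched$ is fully determined by its finitely many decisions on $S\times\{0,\ldots,\saturation{-}1\}$, so the supremum is attained. The paper argues directly over the finite family of partial schedulers $\psched$, whereas you approach $\CExp{\max}$ via a sequence $\sched_n$ from Prop.~\ref{prop:det-reward-based} and replace each tail by $\maxsched$; your added care in step~(i), verifying $\Pr^{\sched_n'}_{\cM,\sinit}(\Diamond\goal)>0$ using~\eqref{assumption:A1}, fills in a point the paper's exposition glosses over.
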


\begin{proof}
We define
$$
  \delta_s \ \ = \ \
  \min \
  \Bigl\{ \
     p_s^{\max} \ - \ \sum_{t\in S} P(s,\beta,t) \cdot p_t^{\max}  \ : \
     \beta \in \Act(s)\setminus \Act^{\max}(s) \
  \Bigr\}
$$
and with $S_{\delta} = \{ s \in S : \delta_s > 0\}$:
$$
  \delta \ \ = \ \ \min_{s\in S_{\delta}} \delta_s
$$
If $S_\delta = \varnothing$ then $\Act(s) = \Act^{\max}(s)$ for all states
$s$. In this case, the deterministic memoryless scheduler
$\maxsched$ as in Lemma \ref{lemma:Sched-max-exp}
is an optimal scheduler as it maximizes the conditional expectation
from every state (see Remark \ref{remark:sched-max}).

In what follows, we suppose that $S_{\delta}$ is nonempty, in which case
$\delta$ is positive.
Let $\turning$ be the turning point of $\cM$ as in
Proposition \ref{prop:turning-point}.
We now define the saturation point $\saturation$
as any natural number satisfying the
following constraint:
$$
  \saturation
  \ \ \geqslant \ \
  \CExp{\max} \ + \ \frac{\turning}{\delta}
$$
As $0 < \delta \leqslant 1$ we have $\turning/\delta \geqslant 1$.
Moreover, we pick a deterministic memoryless
scheduler $\maxsched$ as in Lemma \ref{lemma:Sched-max-exp}.
Lemma \ref{lemma:maxsched-after-threshold-is-optimal} (see below)
shows that for each partial deterministic reward-based scheduler
$$
  \psched :
  S \times \{ r \in \Nat : 0 \leqslant r < \saturation\}
  \, \to \, \Act
$$
the scheduler
$\redefresidual{\psched}{\saturation}{\maxsched}$
is optimal among all schedulers
$\redefresidual{\psched}{\saturation}{\tsched}$ where
$\tsched$ ranges over all schedulers and where optimality
is understood with respect to conditional expectations.
Note that the scheduler
$\redefresidual{\psched}{\saturation}{\maxsched}$
is reward-based eventually memoryless with saturation point $\saturation$.
Since there are only finitely many partial schedulers $\psched$,
this completes the proof of
Lemma \ref{lemma:threshold-theorem}.
\Ende
\end{proof}

\begin{lemma}
\label{lemma:maxsched-after-threshold-is-optimal}
  $\CExp{\redefresidual{\psched}{\saturation}{\maxsched}}
   \, \geqslant  \,
   \CExp{\redefresidual{\psched}{\saturation}{\tsched}}
  $
  for each scheduler $\tsched$ and
  each partial scheduler
  $\psched :
    S \times \{ r \in \Nat : 0 \leqslant r < \saturation\}
    \, \to \, \Act$.
\end{lemma}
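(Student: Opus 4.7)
The plan is to sandwich the comparison through an intermediate scheduler. If $\Pr^{\redefresidual{\psched}{\saturation}{\tsched}}_{\cM,\sinit}(\Diamond\goal)=0$ the right-hand side is vacuous and the claim is trivial; otherwise define $\usched$ to mimic $\tsched$ as long as $\tsched$'s selected action lies in $\Act^{\max}$, and to switch to $\maxsched$ at the first state-reward pair past the saturation crossing where $\tsched$ would deviate. Then $\usched$ uses only $\Act^{\max}$-actions at every state it can reach after the crossing, so it lies in $\Sched^{\max}$ on that portion. The proof reduces to
\[
  \CExp{\redefresidual{\psched}{\saturation}{\tsched}}
  \ \leqslant \
  \CExp{\redefresidual{\psched}{\saturation}{\usched}}
  \ \leqslant \
  \CExp{\redefresidual{\psched}{\saturation}{\maxsched}}.
\]

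The right-hand inequality is the easy half. Both $\usched$ and $\maxsched$ induce the same goal-reaching probability $p_t^{\max}$ from every state $t$ at which saturation is crossed, so the denominators of the two conditional expectations agree term-by-term in a crossing-path decomposition. Lemma~\ref{lemma:Sched-max-exp} asserts that $\maxsched$ maximises the partial expectation $\Exp{\cdot}{t}$ over $\Sched^{\max}$, and -- as in Remark~\ref{remark:sched-max} -- this pointwise maximisation propagates to the global ratio.

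The left-hand inequality is the crux. I plan to enumerate the pairwise incomparable ``first-deviation'' paths $\fpath_1,\fpath_2,\ldots$ at which $\tsched$ first selects a non-$\Act^{\max}$ action beyond the crossing, and define $\sched^{(n)}$ from $\redefresidual{\psched}{\saturation}{\tsched}$ by overriding its behaviour after $\fpath_1,\ldots,\fpath_n$ with $\maxsched$. For any length $L$ only finitely many $\fpath_j$ are prefixes of paths of length $\leqslant L$, so eventually $\sched^{(n)}$ coincides with $\redefresidual{\psched}{\saturation}{\usched}$ on all such paths, and Proposition~\ref{proposition:continuity-exp-reward} delivers $\CExp{\sched^{(n)}}\to \CExp{\redefresidual{\psched}{\saturation}{\usched}}$. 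Thus it suffices to show $\CExp{\sched^{(n-1)}}\leqslant \CExp{\sched^{(n)}}$ via \eqref{magic-constraint} at $\fpath_n$. Set $t=\last(\fpath_n)$, $r=\rew(\fpath_n)\geqslant \saturation$, $y=p_t^{\max}$, $\theta=\Exp{\maxsched}{t}$, and let $z,\zeta$ be the goal probability and partial expectation of $\residual{\sched^{(n-1)}}{\fpath_n}$ from $t$. By construction this residual fires a non-$\Act^{\max}$ action $\alpha$ at $t$, forcing $z\leqslant \sum_{t'}P(t,\alpha,t')\cdot p_{t'}^{\max}\leqslant p_t^{\max}-\delta$, hence $y-z\geqslant\delta$. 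If $\zeta/z\geqslant \turning$ (the $z=\zeta=0$ case being trivial), Proposition~\ref{prop:turning-point} yields the strict improvement outright, since $r\geqslant\saturation\geqslant\CExp{\max}+1\geqslant\CExp{\sched^{(n-1)}}+1$. Otherwise $\zeta<z\cdot\turning\leqslant\turning$, so $(\theta-\zeta)/(y-z)\geqslant -\turning/\delta$ and
\[
  r \ + \ \frac{\theta-\zeta}{y-z}
  \ \geqslant \ \saturation-\frac{\turning}{\delta}
  \ \geqslant \ \CExp{\max}
  \ \geqslant \ \max\{\CExp{\sched^{(n-1)}},\CExp{\sched^{(n)}}\},
\]
which closes the step by \eqref{magic-constraint}.

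The main obstacle I anticipate is the limit management: the enumeration $\{\fpath_j\}$ is generally infinite and each swap affects an arbitrarily deep residual, so a finite induction is not available and the telescoping genuinely relies on Proposition~\ref{proposition:continuity-exp-reward} together with the invariant $\CExp{\sched^{(n)}}\leqslant\CExp{\max}$ (which is what permits each step's $\mathsf{C}$ to be bounded by $\CExp{\max}$ in the magic-constraint argument).
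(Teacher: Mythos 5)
Your proof is correct and takes essentially the same approach as the paper's: the ``easy half'' is the paper's invocation of Lemma~\ref{lemma:Sched-max-exp} for residuals that already maximize the reaching probability from every crossing state, and the ``hard half'' replays the paper's inner claim (Proposition~\ref{prop:turning-point} for the $\geqslant\turning$ case, otherwise the magic-constraint comparison with the gap $y-z\geqslant\delta$), closed by the continuity argument of Proposition~\ref{proposition:continuity-exp-reward}. Your explicit intermediate scheduler $\usched$ and the enumeration of first-deviation paths simply make visible what the paper compresses into its claim's ``w.l.o.g.\ $\fpath$ is minimal'' reduction and the subsequent limit over minimal sub-maximal paths.
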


\begin{proof}
We first prove the following claim.

\tudparagraph{1ex}{{\it Claim.}}
Suppose we are given two schedulers $\sched$ and $\tsched$
that agree for all but the extensions
of some finite path $\fpath$ starting in $\sinit$
where $\fpath$ is both a $\sched$-path and a $\tsched$-path from $\sinit$,
i.e., $\residual{\sched}{\finpath}= \residual{\tsched}{\finpath}$
for each finite path $\finpath$ where $\fpath$ is not a prefix of $\fpath$.
Let $s= \last(\fpath)$ and suppose  $\rew(\fpath) \geqslant \saturation$.
Then:
\begin{center}
   $\Pr^{\residual{\sched}{\fpath}}_{s}(\Diamond \goal)
    =  p^{\max}_s  > 
    \Pr^{\residual{\tsched}{\fpath}}_{s}(\Diamond \goal)$
   \ implies \
   $\CExp{\redefresidual{\psched}{\saturation}{\sched}} \geqslant
    \CExp{\redefresidual{\psched}{\saturation}{\tsched}}$
\end{center}
\tudparagraph{1ex}{{\it Proof of the claim.}}
We provide the proof of the claim for deterministic schedulers.
The argument for randomized schedulers is similar and omitted here
as randomized schedulers are irrelevant for our purposes
by Proposition \ref{prop:det-reward-based}.
Let $s=\last(\fpath)$, $r=\rew(\fpath)$,
$w = \probability(\fpath)$ and
$\vsched = \residual{\tsched}{\fpath}$ and
$\usched = \residual{\sched}{\fpath}$.
The assumption $\Pr^{\usched}_s(\Diamond \goal)=p_s^{\max}$
yields that $\Pr^{\residual{\usched}{\finpath}}_t(\Diamond \goal)=p_t^{\max}$
for each finite $\usched$-path with $s=\first(\finpath)$ and
$t=\last(\finpath)$.
We may assume w.l.o.g. that $\fpath$ is minimal with
respect to the above property,
i.e., $\usched(s) \not= \vsched(s)$.
Then, $\CExp{\sched}$ and $\CExp{\tsched}$ have the following form:
$$
  \CExp{\sched}
  \ \ = \ \
  \frac{\rho + w p r + w \Exp{\usched}{s}}{x + wp}
  \qquad \text{and} \qquad
  \CExp{\tsched}
  \ \ = \ \
  \frac{\rho + w y r + w \Exp{\vsched}{s}}{x + w y}
$$
where
$p = p_s^{\max}$ and $y = \Pr^{\vsched}_s(\Diamond \goal)$.
Then, $y < p$ by assumption.
The values $\rho$ and $x$ are given by
$$
  \rho \ \ = \ \
  \sum_{\finpath} \rew(\finpath) \cdot \probability(\finpath)
  \qquad \text{and} \qquad
  x \ \ = \ \
  \sum_{\finpath} \probability(\finpath)
$$
where $\finpath$ ranges over all $\sched$-paths from $\sinit$ to $\goal$
where $\fpath$ is not a prefix of $\finpath$. These paths are also
$\tsched$-paths.
We now rely on Lemma \ref{lemma:rho-theta-x-y}, which yields
$$
  \CExp{\sched} \ \geqslant \ \CExp{\tsched}
  \qquad \text{iff} \qquad
  \frac{(pr + \Exp{\usched}{s}) - (yr +\Exp{\vsched}{s})}{p-y}
  \ \ \geqslant \ \ \CExp{\sched}
$$
Thus, the task is to show that
$$
  r (p-y) \ + \
  \Exp{\usched}{s} - \Exp{\vsched}{s}
  \ \ \ \geqslant \ \ \
  \CExp{\sched}(p-y)
$$
As $r \, \geqslant \, \saturation$ and
$\CExp{\sched}\leqslant \CExp{\max}$ and by the choice of
$\saturation$
we have:
$$
  r-\CExp{\sched}
  \ \ \ \geqslant \ \ \
  \saturation - \CExp{\max}
  \ \ \ \geqslant \ \ \
  \frac{ \turning}{\delta}
$$
Recall that $\turning$ denotes the turning point of $\cM$.
We may assume w.l.o.g. that
$$
  \frac{\Exp{\vsched}{s}}{y} \ \ = \ \
  \CExpState{\vsched}{s} \ \ = \ \
  \CExpState{\residual{\tsched}{\fpath}}
  \ \ < \ \ \turning
$$
as otherwise the claim follows immediately by
Proposition \ref{prop:turning-point}.
But this yields:
$$
  \Exp{\vsched}{s}
  \ \ < \ \
  \turning  y \ \ < \ \ \turning p
$$
As $\usched(s) \not= \vsched(s)$ and $y < p = p_s^{\max}$
we have $p{-}y \geqslant \delta$.
(Here, we use the assumption that $\usched$ and $\vsched$ are deterministic.)
But then $(p{-}y)/\delta \geqslant 1$ and therefore:
\begin{eqnarray*}
  \Exp{\vsched}{s} - \Exp{\usched}{s}
  \ \ \ \leqslant \ \ \
  \Exp{\vsched}{s}
 & \leqslant &
  \turning p
  \ \ \ \leqslant  \ \ \
  \turning
  \ \ \ \leqslant  \ \ \
  \frac{p-y}{\delta} \cdot   \turning
  \\
  \\[0ex]
  & \ = \ &
  \frac{\turning}{\delta} (p-y)
  \ \ \ \leqslant \ \ \
  (r-\CExp{\sched}) (p-y)
\end{eqnarray*}
Hence,
$r (p{-}y) + \Exp{\usched}{s} - \Exp{\vsched}{s}
 \ \geqslant \  \CExp{\sched}(p{-}y)$.
This completes the proof of the claim.

\tudparagraph{2ex}{{\it Proof of
    Lemma \ref{lemma:maxsched-after-threshold-is-optimal}.}}
Let $S'$ denote the set of states $s\in S$ such that $s$
is the last state of some finite $\psched$-path $\fpath$
such that $r =\rew(\fpath)< \saturation$ and
$\rew(s,\psched(s,r)) \geqslant \saturation$.
Lemma \ref{lemma:Sched-max-exp} yields:
  $$
    \CExp{\redefresidual{\psched}{\saturation}{\maxsched}}
    \ \ \geqslant \ \
    \CExp{\redefresidual{\psched}{\saturation}{\tsched}}
  $$
for each scheduler $\tsched$ where
$\Pr^{\tsched}_s(\Diamond \goal) = p_s^{\max}$
for all $s\in S'$.
In the sequel, we address the case where $\tsched$ is a scheduler with
$\Pr^{\tsched}_s(\Diamond \goal) < p_s^{\max}$ for some state $s\in S'$.
Let $\fpath_1,\fpath_2,\fpath_3,\ldots $ be an enumeration of all
$\tsched$-paths such that
(i) $\rew(\fpath_i) \geqslant \saturation$ and
(ii) $\Pr^{\residual{\tsched}{\fpath}}_{s_i}(\Diamond \goal)
   < p_{s_i}^{\max}$ where
$s_i=\last(\fpath_i)$ and such that no proper prefix of $\fpath_i$
enjoys these two properties (i) and (ii).
Furthermore, we suppose that
$|\fpath_1| \leqslant |\fpath_2| \leqslant \ldots$.
We successively apply the claim to obtain a sequence of schedulers
$\tsched_0=\tsched, \tsched_1,\tsched_2,\ldots$ such that
$\tsched_{i+1} \ = \ 
  \redefresidual{\tsched_i}{\fpath_i}{\maxsched}$
and
$$
  \CExp{\redefresidual{\psched}{\saturation}{\tsched_0}}
  \ \ \leqslant \ \
  \CExp{\redefresidual{\psched}{\saturation}{\tsched_1}}
  \ \ \leqslant \ \
  \CExp{\redefresidual{\psched}{\saturation}{\tsched_2}}
  \ \ \leqslant \ \
  \ldots
$$
Moreover, the limit of the schedulers
$\CExp{\redefresidual{\psched}{\saturation}{\tsched_i}}$
is $\redefresidual{\psched}{\saturation}{\maxsched}$.
Proposition \ref{proposition:continuity-exp-reward} then yields
$\CExp{\redefresidual{\psched}{\saturation}{\tsched}}
 \ \ \leqslant \ \
 \CExp{\redefresidual{\psched}{\saturation}{\maxsched}}$.
\Ende
\end{proof}

\noindent
Obviously, 
Lemma \ref{lemma:maxsched-after-threshold-is-optimal} implies
  $\CExp{\redefresidual{\sched}{\saturation}{\maxsched}}
   \, \geqslant  \,
   \CExp{\sched}$
  for each scheduler $\sched$
as stated in part (a) of Proposition \ref{prop:saturation-maxsched-appendix}.

\section{Computing a saturation point}

\label{appendix:compute-saturation}

Although the proof presented in Appendix \ref{appendix:saturation} 
is constructive,
the constructed saturation point can be very large.
We now present a simple method for generating a smaller
saturation point.
The rough idea is make use of the observation made in
Appendix \ref{appendix:saturation} stating that optimal
schedulers eventually behave as the scheduler $\maxsched$.
Here and in what follows, $\maxsched$ is a deterministic memoryless
scheduler for $\cM$ that maximizes the probability to reach $\goal$ from
each state and whose conditional expectation is maximal under all those
schedulers (see Appendix \ref{appendix:saturation}).
The idea is now to compute a saturation point $\saturation$
as the smallest reward value from which on $\maxsched$ 
is better than other schedulers.

  Let $\theta_s=\Exp{\maxsched}{\cM,s}$ and
  $y_s=\Pr^{\maxsched}_{\cM,s}(\Diamond \goal)$. 
  Thus, $y_s=p_s^{\max}=\Pr^{\max}_{\cM,s}(\Diamond \goal)$.
  For each state-action pair $(s,\alpha)$ with $\alpha \in \Act(s)$ we define:
  \begin{center}
    \begin{tabular}{lll}
       $y_{s,\alpha} =  \sum\limits_{t\in S} P(s,\alpha,t)\cdot y_t$
       & \ and \ &
       $\theta_{s,\alpha} =
       \rew(s,\alpha)\cdot y_{s,\alpha}  +
    \sum\limits_{t\in S} P(s,\alpha,t)\cdot \theta_t$
 \end{tabular}
\end{center}
 By the choice of $\maxsched$, $y_{s,\alpha}\leqslant y_s$, and
 $\theta_{s,\alpha}\leqslant \theta_s$ if $y_{s,\alpha}=y_s$. 

 In what follows, we suppose that there is at least one state-action pair
 $(s,\alpha)$ with $y_{s,\alpha}<y_{s}$ as otherwise
 the scheduler $\maxsched$ maximizes the conditional expectation in $\cM$
 (see Lemma~\ref{lemma:Sched-max-exp}).
 We now define: 
  $$
    \saturation \ \ = \ \ 
    \max \ \bigl\{ \ \lceil \CExp{\ub} - D \rceil, \ 0 \ \bigr\} 
  $$
  where 
  $$
    D \ \ = \ \ 
    \min \
       \Bigl\{ \ \frac{\theta_s - \theta_{s,\alpha}}{y_s - y_{s,\alpha}} 
               \ : \ s \in S, \ \alpha \in \Act(s),  \ y_{s,\alpha} < y_s \
       \Bigr\}
  $$
  and $\CExp{\ub}$ is an upper bound for $\CExp{\max}$
  (e.g., the one computed by the algorithm presented in
  Section \ref{sec:upper-bound}).

\begin{proposition}
  \label{prop:improved-saturation}
    The computed value $\saturation$ is a saturation point for
    $\cM$, i.e.,
    $\CExp{\tsched} \ \leqslant \
    \CExp{\redefresidual{\tsched}{\saturation}{\maxsched}}$
    \
    for each scheduler $\tsched$ with
    $\Pr^{\tsched}_{\cM,\sinit}(\Diamond \goal)>0$.
\end{proposition}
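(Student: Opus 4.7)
The plan is to reduce the inequality $\CExp{\tsched}\leqslant \CExp{\tsched^*}$, with $\tsched^*\eqdef\redefresidual{\tsched}{\saturation}{\maxsched}$, to a finite chain of single-decision swaps, each justified via~\eqref{magic-constraint} using the defining property of $D$. By Proposition~\ref{prop:det-reward-based} we may first suppose $\tsched$ is deterministic reward-based; the case of randomized schedulers will follow by a straightforward extension (replacing $(y_{s,\alpha},\theta_{s,\alpha})$ by their $\tsched(s,r)$-weighted averages, which still satisfy the relevant ratio bound by $D$). Setting $\tsched^R\eqdef\redefresidual{\tsched}{R}{\maxsched}$, Proposition~\ref{proposition:continuity-exp-reward} yields $\CExp{\tsched^R}\to\CExp{\tsched}$ as $R\to\infty$, so it suffices to show $\CExp{\tsched^R}\leqslant\CExp{\tsched^*}$ for every $R>\saturation$.

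The schedulers $\tsched^R$ and $\tsched^*$ differ only on the finite set $\Delta_R=\{(s,r):\saturation\leqslant r<R,\ \tsched(s,r)\neq \maxsched(s)\}$, so I interpolate through a chain $\tsched^R=\tsched^{(0)},\ldots,\tsched^{(N)}=\tsched^*$ where each step replaces the decision at one pair $(s_i,r_i)\in\Delta_R$ by the $\maxsched$ decision. The pairs are processed in \emph{decreasing order of reward level} $r$, and within each level in \emph{reverse topological order} with respect to zero-reward reachability on $\cM$; this order is well-defined because assumption~\eqref{assumption:A2} rules out zero-reward cycles among non-trap states. The invariant maintained is that, before step $i$, the scheduler $\tsched^{(i-1)}$ already plays $\maxsched$ at every $(s',r')$ with $\saturation\leqslant r'$ that either lies at a higher level $r'>r_i$ or at level $r'=r_i$ strictly downstream of $s_i$ in the zero-reward DAG. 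Consequently both competing continuations from $(s_i,r_i)$ -- playing $\tsched(s_i,r_i)$ first or playing $\maxsched(s_i)$ first -- follow $\maxsched$ all the way to $\goal$ or $\fail$.

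Under this invariant, step $i$ fits exactly the template of~\eqref{magic-constraint}: with $p$ the probability of first-visiting $(s_i,r_i)$ and $\rho,x$ the partial-expectation and goal-probability contributions of paths avoiding $(s_i,r_i)$, the swap turns
\[
  \mathsf{B}\ =\ \frac{\rho+p(r_i\,y_{s_i,\alpha}+\theta_{s_i,\alpha})}{x+p\,y_{s_i,\alpha}}\ \ \text{into}\ \
  \mathsf{A}\ =\ \frac{\rho+p(r_i\,y_{s_i}+\theta_{s_i})}{x+p\,y_{s_i}},
\]
where $\alpha=\tsched(s_i,r_i)$. When $y_{s_i,\alpha}<y_{s_i}$, the definition of $D$ together with $\saturation\geqslant \CExp{\ub}-D$ gives
\[
  r_i+\frac{\theta_{s_i}-\theta_{s_i,\alpha}}{y_{s_i}-y_{s_i,\alpha}}\ \geqslant\ \saturation+D\ \geqslant\ \CExp{\ub}\ \geqslant\ \max\{\mathsf{A},\mathsf{B}\},
\]
so~\eqref{magic-constraint} yields $\mathsf{A}\geqslant\mathsf{B}$; when $y_{s_i,\alpha}=y_{s_i}$, Lemma~\ref{lemma:Sched-max-exp} applied to ``play $\alpha$ once, then $\maxsched$'' (which lies in $\Sched^{\max}$) gives $\theta_{s_i}\geqslant\theta_{s_i,\alpha}$ and $\mathsf{A}\geqslant\mathsf{B}$ follows directly. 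Chaining these inequalities across the $N$ swaps produces $\CExp{\tsched^R}\leqslant\CExp{\tsched^*}$. The main obstacle I anticipate is arranging the intra-level ordering so that zero-reward actions never break the ``both continuations follow $\maxsched$'' invariant: it is precisely the zero-reward-cycle-freeness granted by~\eqref{assumption:A2} that permits the reverse topological sort, and without it the whole chaining argument collapses.
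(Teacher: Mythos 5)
Your approach reduces the claim to a finite chain of single-decision swaps, each justified via \eqref{magic-constraint}, and you correctly identify that the viability of this chaining hinges on being able to order the swaps within a level so that both competing continuations from $(s_i,r_i)$ already follow $\maxsched$. But your resolution of that obstacle is wrong: assumption \eqref{assumption:A2} (``$\cM$ has no end component'') does \emph{not} rule out zero-reward cycles. An end component must be a strongly connected \emph{sub-MDP}, hence closed under its transitions; a zero-reward cycle $s\stackrel{\alpha}{\to}t\stackrel{\beta}{\to}s$ where $P(s,\alpha,\goal)>0$ and $P(t,\beta,\fail)>0$ leaks probability out and is therefore not an end component, yet it is a genuine zero-reward cycle. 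The paper itself treats zero-reward cycles as present in general --- that is exactly why the threshold algorithm in Section~\ref{sec:threshold} solves a linear program per level and why the text singles out ``MDPs without zero-reward cycles'' as the case where the LP can be dropped.

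In the presence of such a cycle your chaining collapses in two ways. First, there is no reverse topological order within a level, so your invariant cannot be established. Second, and more structurally, even if you tried to fix an arbitrary order, the quantity $\theta_{s_i,\alpha}$ encodes ``play $\alpha$ once, then $\maxsched$ forever,'' whereas a reward-based scheduler that selects $\alpha$ at $(s_i,r_i)$ plays $\alpha$ on \emph{every} revisit of $s_i$ at level $r_i$; once zero-reward revisits are possible the one-step variant and the actual residual scheduler no longer coincide, so $(\theta_{s_i,\alpha},y_{s_i,\alpha})$ are simply not the values you need to plug into \eqref{magic-constraint}. The paper's own proof (Lemmas \ref{diff-Sched-maxsched} and \ref{diff-Sched-all}) avoids both problems by replacing the state-by-state chain with a single \emph{uniform} inequality over all residual schedulers $\tsched\in\Sched'$, established by building an auxiliary MDP $\cN$ with the shifted reward $-(\CExp{\ub}-\saturation)$ paid at $\goal$ and appealing to the uniqueness of the Bellman fixed point from \cite{BerTsi91}; this handles zero-reward cycles for free. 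Your argument is correct under the additional hypothesis that $\cM$ has no zero-reward cycles (and the remaining details --- the continuity reduction, the convex-combination treatment of randomized decisions, and the $y_{s_i}=y_{s_i,\alpha}$ case via Lemma~\ref{lemma:Sched-max-exp} --- are all fine), but as written it does not prove the proposition for the general MDPs the paper considers.
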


\noindent
The remainder of this section is concerned with the proof of
Prop.~\ref{prop:improved-saturation}.
Let $\saturationNaive$ be some other saturation
point, e.g., the one obtained using
Lemma~\ref{lemma:threshold-theorem} and
Lemma~\ref{lemma:maxsched-after-threshold-is-optimal}. 
If $\saturation \geqslant \saturationNaive$, 
then $\saturation$ is obviously a saturation point as well.
In what follows, we suppose $\saturation < \saturationNaive$. 
As $\redefresidual{(\redefresidual{\tsched}{\saturationNaive})}
         {\saturation}{\maxsched}
    \ = \ \redefresidual{\tsched}{\saturation}{\maxsched}$,
it suffices to consider schedulers $\tsched$ that behave as $\maxsched$
for all paths $\fpath$ with $\rew(\fpath)\geqslant \saturationNaive$.
Furthermore, it suffices to consider reward-based schedulers.
In the sequel, let $\Sched'$ denote the set of reward-based 
schedulers $\tsched$ for
$\maxsched$ such that $\tsched(\fpath)=\maxsched(\last(\fpath))$ for
all paths $\fpath$ with $\rew(\fpath)\geqslant \saturationNaive$.
So, the task is to show that
$\CExp{\tsched} \ \leqslant \
    \CExp{\redefresidual{\tsched}{\saturation}{\maxsched}}$
    \
    for each scheduler $\tsched \in \Sched'$ with
    $\Pr^{\tsched}_{\cM,\sinit}(\Diamond \goal)>0$.

\begin{lemma}
 \label{diff-Sched-maxsched}
  $\theta_{s} - (\CExp{\ub}{-}\saturation)\cdot y_s
   \  \geqslant \ 
   \theta_{s,\alpha} - (\CExp{\ub}{-}\saturation)\cdot y_{s,\alpha}$
  for all states $s\in S\setminus \{\goal,\fail\}$ 
  and all actions $\alpha \in \Act(s)$.
\end{lemma}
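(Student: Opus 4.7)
The plan is to unfold the definition of $\saturation$ and split on how $y_{s,\alpha}$ compares to $y_s$. Since $y_s = p_s^{\max}$ is the maximum reachability probability, we always have $y_{s,\alpha} \leqslant y_s$, so only two cases arise: $y_{s,\alpha} < y_s$ (the interesting one) and $y_{s,\alpha} = y_s$ (handled by optimality of $\maxsched$ among probability-maximizing schedulers).

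First I would observe that by the very definition of $\saturation$, we have $\saturation \geqslant \CExp{\ub} - D$, and hence $\CExp{\ub} - \saturation \leqslant D$. (Both branches of the outer $\max$ give this: if $\saturation = \lceil \CExp{\ub} - D\rceil$ it is immediate from $\lceil x\rceil \geqslant x$, and if $\saturation = 0$ then $\lceil \CExp{\ub} - D\rceil \leqslant 0$, which forces $\CExp{\ub} \leqslant D$ so that $\CExp{\ub} - \saturation = \CExp{\ub} \leqslant D$.) Then for any pair $(s,\alpha)$ with $y_{s,\alpha} < y_s$, the definition of $D$ as a minimum yields
\[
  \frac{\theta_s - \theta_{s,\alpha}}{y_s - y_{s,\alpha}}
  \ \geqslant \ D \ \geqslant \ \CExp{\ub} - \saturation.
\]
Multiplying through by the positive quantity $y_s - y_{s,\alpha}$ and rearranging gives exactly
$\theta_s - (\CExp{\ub}{-}\saturation) y_s \ \geqslant \ \theta_{s,\alpha} - (\CExp{\ub}{-}\saturation) y_{s,\alpha}$.

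For the remaining case $y_{s,\alpha} = y_s$ the inequality collapses to $\theta_s \geqslant \theta_{s,\alpha}$. Here I would invoke Lemma~\ref{lemma:Sched-max-exp}: $\maxsched$ was chosen to maximize the partial expectation among all schedulers in $\Sched^{\max}$, and by Lemma~\ref{lemma:maxsched-Theta} this is equivalent to the Bellman equation
\[
  \theta_s \ = \ \max \Bigl\{ \rew(s,\beta) \cdot y_s + \sum_{t} P(s,\beta,t)\,\theta_t \ : \ \beta \in \Act^{\max}(s) \Bigr\}.
\]
When $y_{s,\alpha} = y_s$ we have $\alpha \in \Act^{\max}(s)$, so the right-hand side dominates $\theta_{s,\alpha}= \rew(s,\alpha) y_{s,\alpha} + \sum_t P(s,\alpha,t)\theta_t$. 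I do not anticipate any real obstacle here; the whole argument is essentially bookkeeping with the definition of $\saturation$, and the only subtle point is making sure the $\max$ with $0$ in the definition of $\saturation$ does not break the inequality $\CExp{\ub} - \saturation \leqslant D$, which is handled as above.
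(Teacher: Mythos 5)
Your proposal is correct and follows essentially the same argument as the paper's proof: unfold the definition of $\saturation$ to obtain $\CExp{\ub}-\saturation \leqslant D$, use the defining minimum of $D$ when $y_{s,\alpha}<y_s$, and fall back on the optimality of $\maxsched$ within $\Sched^{\max}$ when $y_{s,\alpha}=y_s$ (with the case $y_{s,\alpha}>y_s$ excluded because $y_s$ is maximal). Your handling of the $\max\{\cdot,0\}$ branch in the definition of $\saturation$ is a small but welcome clarification the paper glosses over.
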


\begin{proof}
By the definition of $\saturation$, 
for all $s\in S \setminus \{\goal,\fail\}$,
$\alpha \in \Act(s)$ with $y_{s,\alpha} < y_s$ we have:    
$$
     \saturation \ \ \ \geqslant \ \ \ \CExp{\ub}-D
     \ \ \ \geqslant \ \ \ 
     \CExp{\ub} \ - \ 
     \frac{\theta_s-\theta_{s,\alpha}}{y_s-y_{s,\alpha}} 
$$
and therefore:
$$ 
    \theta_s - (\CExp{\ub}{-}\saturation)\cdot y_s 
    \ \ \geqslant \ \ 
    \theta_{s,\alpha} - (\CExp{\ub}{-}\saturation)\cdot y_{s,\alpha}
$$
If $y_s = y_{s,\alpha}$ then 
$\theta_s \geqslant \theta_{s,\alpha}$ (by the choice of $\maxsched$).
The case $y_s < y_{s,\alpha}$ is impossible as $\maxsched$ maximizes
the probabilities to reach $\goal$.
This yields the claim.
\Ende
\end{proof}

\begin{lemma}
 \label{diff-Sched-all}
  $\theta_{s} - (\CExp{\ub}{-}\saturation)\cdot y_s
    \ \geqslant \ 
   \Exp{\tsched}{\cM,s} - 
      (\CExp{\ub}{-}\saturation)\cdot \Pr^{\tsched}_{\cM,s}(\Diamond \goal)$
  for all schedulers $\tsched \in \Sched'$.
\end{lemma}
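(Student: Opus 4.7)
The plan is to establish the inequality $V(s) \geqslant W_r^{\tsched}(s)$ for all $s \in S$ and $r \in \{0, 1, \ldots, \saturationNaive\}$ by backward induction on $r$, where
$V(s) \eqdef \theta_s - (\CExp{\ub}{-}\saturation) y_s$
and
$W_r^{\tsched}(s) \eqdef \Exp{\residual{\tsched}{(s,r)}}{\cM,s} - (\CExp{\ub}{-}\saturation) \cdot \Pr^{\residual{\tsched}{(s,r)}}_{\cM,s}(\Diamond \goal)$.
The case $r = 0$ then yields the lemma. The base case $r = \saturationNaive$ is immediate from the defining property of $\Sched'$: there $\residual{\tsched}{(s,\saturationNaive)} = \maxsched$, hence $W_{\saturationNaive}^{\tsched}(s) = \theta_s - (\CExp{\ub}{-}\saturation) y_s = V(s)$.

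For the inductive step, fix $r < \saturationNaive$ and assume $V(t) \geqslant W_{r''}^{\tsched}(t)$ for all $t$ and all $r'' > r$. Trap states are handled directly: $V(\goal) = W_r^{\tsched}(\goal) = -(\CExp{\ub}{-}\saturation)$ and $V(\fail) = W_r^{\tsched}(\fail) = 0$. For a non-trap state $s$ put $\alpha = \tsched(s,r)$ and $r' = r + \rew(s,\alpha)$; unfolding the definition of $W_r^{\tsched}$ (the $(\CExp{\ub}{-}\saturation)$-terms cancel on both sides) yields the recursion
\begin{center}
$W_r^{\tsched}(s) \ = \ \rew(s,\alpha) \cdot \sum_t P(s,\alpha,t) \, \Pr^{\residual{\tsched}{(t,r')}}_t(\Diamond \goal) \ + \ \sum_t P(s,\alpha,t) \, W_{r'}^{\tsched}(t)$,
\end{center}
while rearranging Lemma~\ref{diff-Sched-maxsched} supplies the Bellman-type inequality $V(s) \geqslant \rew(s,\alpha) y_{s,\alpha} + \sum_t P(s,\alpha,t) V(t)$ for every $\alpha \in \Act(s)$. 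In the case $\rew(s,\alpha) > 0$ we have $r' > r$, so the induction hypothesis applies to $W_{r'}^{\tsched}(t)$; combined with $\Pr^{\residual{\tsched}{(t,r')}}_t(\Diamond \goal) \leqslant y_t = p_t^{\max}$ this gives $W_r^{\tsched}(s) \leqslant \rew(s,\alpha) y_{s,\alpha} + \sum_t P(s,\alpha,t) V(t) \leqslant V(s)$.

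The main obstacle is the zero-reward case $\rew(s,\alpha) = 0$, where $r' = r$ and the induction hypothesis offers no information on the right-hand side. Here the plan is a within-level argument. Call a state $s'$ \emph{$r$-boundary} if $s' \in \{\goal,\fail\}$ or $\rew(s',\tsched(s',r)) > 0$, and let $B$ be the set of $r$-boundary states; the cases already handled show $V(s') \geqslant W_r^{\tsched}(s')$ for every $s' \in B$. For $s \notin B$, consider the Markov chain on $S$ whose transitions outside $B$ are those of $\tsched$ at level $r$ (all zero-reward) and which absorbs on $B$; by assumption~\eqref{assumption:A2} this chain has no end component off $B$, so it reaches $B$ in finitely many steps almost surely. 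The Bellman inequality specialised to $\rew = 0$ shows that $V$ is superharmonic along these transitions, and $V$ is bounded since $S$ is finite, so the optional stopping theorem applied to the absorption time $\tau$ in $B$ yields
\begin{center}
$V(s) \ \geqslant \ \sum_{t \in B} \mathrm{hit}_B(s,t) \, V(t) \ \geqslant \ \sum_{t \in B} \mathrm{hit}_B(s,t) \, W_r^{\tsched}(t) \ = \ W_r^{\tsched}(s)$,
\end{center}
where the last equality comes from the fact that $W_r^{\tsched}$ satisfies the same absorbing-chain fixed-point equation (which is uniquely solvable by almost-sure absorption). This closes the induction and proves the lemma.
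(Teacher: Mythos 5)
Your proof is correct, but it takes a genuinely different route from the paper's. The paper builds an auxiliary two-mode MDP $\cN$ (tracking accumulated reward up to $\saturationNaive$, then switching to $\maxsched$), designs a signed reward structure so that $\ExpRew{\tsched}{\cN,\langle s,0\rangle}(\accdiaplus\final)$ equals $\Exp{\tsched}{\cM,s}-(\CExp{\ub}{-}\saturation)\Pr^{\tsched}_{\cM,s}(\Diamond\goal)$, verifies that the vector of $\maxsched$-values is a fixed point of the Bellman optimality operator for $\cN$, and then invokes the uniqueness of Bellman fixed points in stochastic-shortest-path MDPs \cite{BerTsi91} to conclude $\maxsched$ is optimal. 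You instead prove the statement directly by backward induction on the reward level, using the unrolling identity for $W_r^{\tsched}$, the one-step inequality from Lemma~\ref{diff-Sched-maxsched} as a superharmonicity statement for $V$, and an optional-stopping argument on the zero-reward sublayer (a.s.\ absorption in the boundary set $B$, guaranteed by assumption~(A2), plus uniqueness of the fixed point for the absorbing chain). Your route is more elementary and self-contained — no auxiliary MDP, no appeal to the general Bellman-uniqueness theorem — at the cost of doing by hand the absorption/uniqueness reasoning that the paper delegates to $\cN$ and \cite{BerTsi91}; both arguments ultimately rest on the same facts, namely (A2) and the structure of $\maxsched$. One small gap to close: when $\rew(s,\alpha)>1$, the next level $r'=r+\rew(s,\alpha)$ may exceed $\saturationNaive$, and your stated inductive hypothesis only covers $r''\in\{r+1,\ldots,\saturationNaive\}$; you should note explicitly that for $r''\geqslant\saturationNaive$ one has $\residual{\tsched}{(t,r'')}=\maxsched$, hence $W^{\tsched}_{r''}(t)=V(t)$, so the needed bound still holds.
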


\begin{proof}
The idea is to define a new MDP $\cN$ that simulates $\cM$ in such a way 
that the value
$\Exp{\tsched}{\cM,s} - 
      (\CExp{\ub}{-}\saturation)\cdot \Pr^{\tsched}_{\cM,s}(\Diamond \goal)$
equals the expected accumulated reward until reaching 
$\final$ from $s$ in $\cN$ under scheduler $\tsched \in \Sched'$.
The new MDP $\cN$ operates in two modes
and extends $\cM$ by a new trap state $\final$. 
It tracks the accumulated reward
until the moment the accumulated reward surpasses $\saturationNaive$. 
From that point on, $\cN$ behaves according to $\maxsched$.
The accumulated reward of a path in $\cM$ that has surpassed
$\saturationNaive$ in the last step is bound by:
$$
  N \ \ = \ \ \saturationNaive \ + \ \max_{s,\alpha} \rew(s,\alpha)
$$
That is, if $\fpath$ is a finite path in $\cM$ with
$\rew(\fpath) \geqslant \saturationNaive$ and
$\rew(\fpath') < \saturationNaive$ for all proper prefixes of
$\fpath$ then $\rew(\fpath) < N$.

Formally, $\cN$ is a MDP with negative and positive reward.
Its state space is:
$$
   S_{\cN} \ \ = \ \ S_1 \cup S_2 \cup \{\final\}
$$
where $S_1 = S \times \{0, \ldots, \saturationNaive {-}1\}$ (first mode)
and   $S_2 = S \times S \times \{\saturationNaive, \ldots, N\}$ (second mode).
Intuitively, the pairs $\<s,r\> \in S_1$ in the first mode
represent the current state 
$s$ and the accumulated reward $r$, while the triples
$\<s,t,r\>$ used for the states in the second mode
represent the current state $s$, the state $t$ where the
  switch to the second mode occurred and the accumulated reward $r$ until
  the switch.
The auxiliary state $\final$ is a trap.
The initial state of $\cN$ is $\<\sinit,0\>$.

The action set is $\Act_{\cN}=\Act \cup \{\tau\}$.
The transition probabilities for the states in 
the first mode are as follows. Let $s \in S \setminus \{\goal,\fail\}$,
$r \in \{0,1,\ldots,\saturationNaive{-}1\}$, $\alpha \in \Act(s)$ 
and $r'= r{+}\rew(s,\alpha)$. Then:
$$
 \begin{array}{lcll}
   P_{\cN}(\<s,r\>,\alpha,\<t,r'\>) & \ = \ & P(s,\alpha,t) \ \
   & 
   \text{ if } r'  < \saturationNaive
   \\[1ex]
   P_{\cN}(\<s,r\>,\alpha,\<t,t,r'\>) & \  = \ & P(s,\alpha,t) 
   & 
   \text{ if } r' = r  \geqslant \saturationNaive
 \end{array}
$$
and $\rew_{\cN}(\<s,r\>,\alpha)  \ = \ \rew(s,\alpha)$.
In the second mode $\cN$ behaves according to $\maxsched$. That is,
if $s \in S \setminus \{\goal,\fail\}$, 
$r\in \{\saturationNaive, \ldots, N\}$
 and $\alpha = \maxsched(s)$ then:
$$
   P_{\cN}(\<s,s',r\>,\alpha,\<t,s',r\>) \ \ = \ \ P(s,\alpha,t)
$$
and $\rew_{\cN}(\<s,s',r\>,\alpha) \  = \ \rew(s,\alpha)$.
Thus, $\Act_{\cN}(\<s,r\>)=\Act(s)$ for the states in the first mode, while 
$\Act_{\cN}(\<s,s',r\>)=\{\maxsched(s)\}$ for the states in the second mode.

The goal and fail states in both modes have $\tau$-transitions
to the final state, and no other actions is enabled in the
goal and fail states.
We first consider the goal states: 
$$
   P_{\cN}(\<\goal,r\>,\tau,\final) \  = \ 
   P_{\cN}(\<\goal,s,r\>,\tau,\final) \  = \ 1
$$
and   
$$
   \rew_{\cN}(\<\goal,r\>,\tau) \ \ = \ \
   \rew_{\cN}(\<\goal,s,r\>,\tau) \ \ = \ \
   - (\CExp{\ub}{-}\saturation)
$$
Thus, if $\fpath$ is a path from $s$ to $\goal$ in $\cM$ then
for the lifted path $\fpath_{\cN}$ from $\<s,0\>$ in $\cN$ we have
$\rew_{\cN}(\fpath_{\cN}) \ = \ 
    \rew(\fpath) - (\CExp{\ub} {-}\saturation )$.
For the fail states, we want to make sure that the partial expectation
for the accumulated reward from the initial state to $\final$ via a fail state
is 0 under all schedulers. For this purpose, we define:
$$
   P_{\cN}(\<\fail,r\>,\tau,\final) \  = \ 1,
   \qquad
   \rew_{\cN}(\<\fail,r\>,\tau) \ = \ -r
$$
This ensures that all paths $\fpath$ in $\cN$ from some state $\<s,0\>$ to 
$\final$ via a fail state in the first mode have reward 0.
For the fail states in the second mode, we define:
$$
   P_{\cN}(\<\fail,s,r\>,\tau,\final) \  = \ 1,
   \qquad
   \rew_{\cN}(\<\fail,s,r\>,\tau) \ = \
   -r - \frac{\Exp{\maxsched}{\cM,s}(\accdiaplus \fail)}
             {\Pr^{\maxsched}_{\cM,s}(\Diamond \fail)}
$$
provided that $\Pr^{\maxsched}_{\cM,s}(\Diamond \fail) >0$.
If $\Pr^{\maxsched}_{\cM,s}(\Diamond \fail) =1$ then state
$\<\fail,s,r\>$ is not reachable and the transition probabilities
and reward for $\<\fail,s,r\>$ are irrelevant.

Obviously, there is a one-to-one correspondence between the schedulers
$\tsched$ for $\cM$ that belong to $\Sched'$ and the schedulers for $\cN$.
For all schedulers $\tsched\in \Sched'$ we have
$\Pr^{\tsched}_{\cN,\tilde{s}}(\Diamond \final)=1$.
For all states $s\in S$ and all schedulers $\tsched \in \Sched'$:
$$
   \Exp{\tsched}{\cN,\<s,s',r\>}(\accdiaplus \mathit{Fail}) 
   \ \ = \ \ 
   \Exp{\maxsched}{\cM,s}(\accdiaplus \fail)
   \ \ = \ \ 
   \sum_{\fpath \in \Pi_s} \rew(\fpath)\cdot \probability(\fpath)
$$
where $\Pi_s$ denotes the set of finite $\maxsched$-paths 
$\fpath$ in $\cM$ with
$\first(\fpath)=s$ and $\last(\fpath)=\fail$.
Clearly, 
$\sum_{\fpath \in \Pi_s} \probability(\fpath) = 
 \Pr^{\maxsched}_{\cM,s}(\Diamond \fail)$.
The partial expectation of all paths from
$\<s,s',r\>$ to $\final$ via some fail state 
under each scheduler $\tsched \in \Sched'$ is:
$$
 \begin{array}{ll}
   & 
   \Exp{\tsched}{\cN,\<s,s',r\>}
     (\accdiaplus \text{``$\final$ via $\mathit{Fail}$''}) 
   \\
   \\
   = \ \ \ & 
   \sum\limits_{\fpath \in \Pi_s} 
     \Bigl(\ \rew(\fpath)-r-
      \frac{\Exp{\maxsched}{\cM,s'}(\accdiaplus \fail)}
           {\Pr^{\maxsched}_{\cM,s'}(\Diamond \fail)} \ \Bigr) 
     \cdot \probability(\fpath)
   \\[0ex]
   \\
   = \ \ \ &
   \Exp{\maxsched}{\cM,s}(\accdiaplus \fail) \ - \
   \Bigl(\ r+
      \frac{\Exp{\maxsched}{\cM,s'}(\accdiaplus \fail)}
           {\Pr^{\maxsched}_{\cM,s'}(\Diamond \fail)} \ \Bigr)
   \cdot \Pr^{\maxsched}_{\cM,s}(\Diamond \fail)
 \end{array}
$$
where $\mathit{Fail}$ denotes the set of all fail states (in either mode)
and
where we suppose $\Pr^{\maxsched}_{\cM,s'}(\Diamond \fail)>0$.
With $s=s'$ we get:
$$
   \Exp{\tsched}{\cN,\<s,s,r\>}
     (\accdiaplus \text{``$\final$ via $\mathit{Fail}$''}) 
   \ \ = \ \
   -r \cdot \Pr^{\maxsched}_{\cM,s}(\Diamond \fail)
$$
Therefore:
$$
 \begin{array}{ll}
   &
   \Exp{\tsched}{\cN,\<s,s,r\>}
     (\accdiaplus \final )
   \\[2ex]
   = \ \ \ &
   \Exp{\maxsched}{\cM,s}
   \ - \ 
   (\CExp{\ub}{-}\saturation) \cdot \Pr^{\maxsched}_{\cM,s}(\Diamond \goal)
   - \ r \cdot \Pr^{\maxsched}_{\cM,s}(\Diamond \fail)
  \\[2ex]
  = \ \ \ &
  \theta_s - (\CExp{\ub}{-}\saturation) \cdot y_s
   - \ r \cdot (1{-}y_s)
 \end{array}
$$
For the states in the first mode:
$$
\begin{array}{lcl}
  \ExpRew{\tsched}{\cN,\<s,r\>}(\accdiaplus \final)
  & \ \ = \ \ & 
  \Exp{\tsched}{\cM,s} \ - \ 
   (\CExp{\ub}{-}\saturation) \cdot \Pr^{\tsched}_{\cM,s}(\Diamond \goal)
  ] - \ r \cdot \Pr^{\tsched}_{\cM, s}(\Diamond \fail)
\end{array}
$$
We get for the special case $r=0$:
$$
  \ExpRew{\tsched}{\cN,\<s,0\>}(\accdiaplus \final)
  \ \ = \ \ 
  \Exp{\tsched}{\cM,s} \ - \ 
     (\CExp{\ub}{-}\saturation) \cdot \Pr^{\tsched}_{\cM,s}(\Diamond \goal)
$$
for all states $s\in S$ and all schedulers $\tsched \in \Sched'$. 
Moreover, for $\tsched=\maxsched$, we get for the states in the first mode:
$$
 \begin{array}{lcl}
  \ExpRew{\maxsched}{\cN,\<s,r\>}(\accdiaplus \final)
  & \ \ = \ \ &
  \Exp{\maxsched}{\cM,s} \ - \ 
     (\CExp{\ub}{-}\saturation) \cdot \Pr^{\maxsched}_{\cM,s}(\Diamond \goal)
  \ - \ r \cdot \Pr^{\maxsched}_{\cM,s}(\Diamond \fail)
  \\[2ex]
  & \ \ = \ \ &
  \theta_s \ - \ (\CExp{\ub}{-}\saturation) \cdot y_s \ - \ r \cdot (1{-}y_s)
 \end{array}
$$
for all states $s\in S$.
Note that $\Pr^{\max}_{\cM,s}(\Diamond \final) =
      1-\Pr^{\maxsched}_{\cM,s}(\Diamond \goal) = 1{-}y_s$.

Let $f : \Real^{S_{\cN}} \to \Real^{S_{\cN}}$ denote the
fixed point operator for the maximal (unconditional) expected accumulated
reward
until $\final$ in $\cN$. If
$\phi = (\phi_{\tilde{s}})_{\tilde{s}\in S_{\cN}}$ then
$$
     f(\phi) \ \ = \ \ 
     \bigl(\, f_{\tilde{s}}(\phi)\ \bigr)_{\tilde{s}\in S_{\cN}}
$$
where $f_{\final}(\phi)=0$ and for $\tilde{s}\in S_{\cN}\setminus \{\final\}$:
$$
     f_{\tilde{s}}(\phi) \ \ = \ \ 
     \max \ \bigl\{ \ f_{\tilde{s},\alpha}(\phi) \ : \ 
                      \alpha \in \Act_{\cN}(\tilde{s}) \ 
            \bigr\}
$$
where the function $f_{\tilde{s},\alpha} : \Real^{S_{\cN}} \to \Real$
is given by:
$$
     f_{\tilde{s},\alpha}(\phi) \ \ = \ \
     \rew_{\cN}(\tilde{s},\alpha) \ + \ 
               \sum_{\tilde{t}\in S_{\cN}} 
                 P_{\cN}(\tilde{s},\alpha,\tilde{t})\cdot \phi_{\tilde{t}} 
$$
We now consider the vector   
$\phi^* = (\phi^*_{\tilde{s}})_{\tilde{s}\in S_{\cN}}$
where 
$$
  \phi^*_{\tilde{s}} \ \ = \ \ 
  \ExpRew{\maxsched}{\cN,\tilde{s}}(\accdiaplus \final)
$$
As $\cN$ in the second mode 
has no nondeterministic choices and behaves
as $\maxsched$, we have 
$f_{\tilde{s}}(\phi^*) = \phi^*_{\tilde{s}}$ for all states
$\tilde{s}$ in the second mode of $\cN$.
For the states $\<s,r\>$ of the first mode, we have (see above):
$$
  \phi^*_{\<s,r\>} \ \ = \ \ 
  \theta_s - (\CExp{\ub}{-}\saturation) \cdot y_s
  \ - \ r \cdot (1{-}y_s)
$$
We now show that $\phi_{\<s,r\>}^* \geqslant f_{\<s,r\>,\alpha}(\phi^*)$
for each action  $\alpha \in \Act(s) =\Act_{\cN}(\<s,r\>)$.
Let $k=\rew(s,\alpha) = \rew_{\cN}(\<s,r\>,\alpha)$. 
In the following calculation, we suppose $r{+}k <\saturationNaive$. 
The calculation for mode switches (i.e., $r{+}k \geqslant \saturationNaive)$)
is similar and omitted here.
\begin{eqnarray*}
   f_{\<s,r\>,\alpha}(\phi^*) &  \ =  \ &
   k \ + \ 
   \sum_{t\in S} 
     P_{\cN}(\<s,r\>,\alpha,\<t,r{+}k\>) \cdot \phi^*_{\<t,r{+}k\>}
   \\
   \\
   & = &
   k \ + \ 
   \sum_{t\in S} 
     P(s,\alpha,t)
     \cdot \bigl( \ \theta_t - (\CExp{\ub}{-}\saturation) \cdot y_t
  \ - \ (r{+}k) \cdot (1{-}y_t) \ \bigr)
   \\
   \\
   & = &
   ky_s \ + \ 
   \sum_{t\in S} P(s,\alpha,t) \cdot \theta_t
   \ - \ 
   (\CExp{\ub}{-}\saturation) \cdot \sum_{t\in S} P(s,\alpha,t)\cdot y_t
   \\
   \\
   & & 
   \ + \ 
   k(1{-}y_s) 
   \ - \ 
   (r{+}k) \cdot \sum_{t\in S} P(s,\alpha,t)\cdot (1{-}y_t) 
   \\
   \\
   & = &
   \theta_{s,\alpha}
   \ - \ 
   (\CExp{\ub}{-}\saturation) \cdot y_{s,\alpha}
   \ + \ k(1{-}y_{s,\alpha}) 
   \ - \ (r{+}k)(1{-}y_{s,\alpha})
   \\
   \\
   & = &
   \theta_{s,\alpha}
   \ - \ 
   (\CExp{\ub}{-}\saturation) \cdot y_{s,\alpha}
   \ - \ r(1{-}y_{s,\alpha})
\end{eqnarray*}
If $\alpha = \maxsched(s)$ then 
$\theta_s=\theta_{s,\alpha}$ and $y_s=y_{s,\alpha}$.
Hence, $f_{\<s,r\>,\maxsched(s)}(\phi^*)=\phi_{\<s,r\>}^*$.
If $\alpha \in \Act(s) \setminus \{\maxsched(s)\}$ then
$y_s \geqslant y_{s,\alpha}$ 
and 
$$
   \theta_{s,\alpha}
   \ - \ 
   (\CExp{\ub}{-}\saturation) \cdot y_{s,\alpha}
   \ \ \leqslant \ \ 
   \theta_{s}
   \ - \ 
   (\CExp{\ub}{-}\saturation) \cdot y_{s}
$$
by Lemma \ref{diff-Sched-maxsched}. Hence:
$$
  f_{\<s,r\>}(\phi^*) 
  \ \ = \ \ 
  f_{\<s,r\>,\maxsched(s)}(\phi^*) \ \ = \ \ \phi^*_{\<s,r\>}
$$
This yields $f(\phi^*)=\phi^*$.
By the results of \cite{BerTsi91},
the vector
$$
  \bigl( \ 
   \ExpRew{\max}{\cN,\tilde{s}}(\accdiaplus \final) \ 
  \bigr)_{\tilde{s} \in S_\cN}
$$
is the unique fixpoint of $f$.%
\footnote{\cite{BerTsi91} considers the fixed point operator 
   for minimal (unconditional) expected accumulated rewards in MDPs.
   However, \cite{BerTsi91} treats MDPs that might have negative and positive
   reward values.
   Thus, by multiplying all rewards with ${-}1$ the results of
   \cite{BerTsi91} carry over to maximal expectations.}
Hence,
$\phi^*_{\tilde{s}} = 
 \ExpRew{\max}{\cN,\tilde{s}}(\accdiaplus \final)$
for all states $\tilde{s} \in S_{\cN}$.
That is, scheduler $\maxsched$ maximizes the (unconditional)
accumulated reward until reaching the final state in $\cN$.
That is,
$\ExpRew{\maxsched}{\cN,\tilde{s}}(\accdiaplus \final) \
 \geqslant \ 
 \ExpRew{\tsched}{\cN,\tilde{s}}(\accdiaplus \final)$
for all schedulers $\tsched$ for $\cN$ and all states
$\tilde{s}$ in $\cN$.
But then:
$$
 \begin{array}{lcl}
  \theta_s - (\CExp{\ub}{-}\saturation) \cdot y_s 
  & = &
  \ExpRew{\maxsched}{\cN,\<s,0\>}(\accdiaplus \final)
  \\[2ex]  
  & \ \ \ \geqslant \ \ \ &
  \ExpRew{\tsched}{\cN,\<s,0\>}(\accdiaplus \final)
  \\[2ex]
  & \ \ = \ \ &
  \Exp{\tsched}{\cM,s} \ - \  
    (\CExp{\ub}{-}\saturation) \cdot \Pr^{\tsched}_{\cM,s}(\Diamond \goal)
 \end{array}
$$
for all states $s$ in $\cM$ and all schedulers $\tsched \in \Sched'$.
\Ende
\end{proof}

\begin{lemma}
  If $\tsched\in \Sched'$ with $\Pr^{\tsched}_{\cM,\sinit}(\Diamond \goal)>0$
  then
  $\CExp{\tsched} \ \leqslant \
    \CExp{\redefresidual{\tsched}{\saturation}{\maxsched}}$.
\end{lemma}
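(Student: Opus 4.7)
The plan is to reduce the claim to Lemma~\ref{diff-Sched-all} by decomposing every $\sinit$-to-$\goal$ path according to whether or not it ever reaches accumulated reward $\geqslant \saturation$. Set $\sched = \redefresidual{\tsched}{\saturation}{\maxsched}$ and let $\Pi$ be the set of finite $\tsched$-paths $\fpath$ from $\sinit$ that first attain reward $\geqslant \saturation$, i.e., $\rew(\fpath) \geqslant \saturation$ but $\rew(\fpath') < \saturation$ for every proper prefix. Since $\tsched$ and $\sched$ agree on all paths with accumulated reward below $\saturation$, $\Pi$ is simultaneously the set of such $\sched$-paths. For $\fpath \in \Pi$ put $w_\fpath = \probability(\fpath)$, $R_\fpath = \rew(\fpath) \geqslant \saturation$, $s_\fpath = \last(\fpath)$, $y^{\tsched}_\fpath = \Pr^{\residual{\tsched}{\fpath}}_{\cM,s_\fpath}(\Diamond \goal)$ and $\theta^{\tsched}_\fpath = \Exp{\residual{\tsched}{\fpath}}{\cM,s_\fpath}$, and let $\rho,x$ denote the partial expectation and the probability, common to $\sched$ and $\tsched$, of the $\sinit$-to-$\goal$ paths whose reward stays strictly below $\saturation$. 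Then
\[ \Exp{\sched}{\sinit} = \rho + \sum_\fpath w_\fpath (R_\fpath y_{s_\fpath} + \theta_{s_\fpath}), \qquad \Pr^{\sched}_{\cM,\sinit}(\Diamond \goal) = x + \sum_\fpath w_\fpath y_{s_\fpath}, \]
with the analogous formulas for $\tsched$ using $y^{\tsched}_\fpath,\theta^{\tsched}_\fpath$ in place of $y_{s_\fpath},\theta_{s_\fpath}$.

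Next I would apply Lemma~\ref{diff-Sched-all} with scheduler $\residual{\tsched}{\fpath}$ and starting state $s_\fpath$ to obtain $\theta_{s_\fpath} - \theta^{\tsched}_\fpath \geqslant (\CExp{\ub}{-}\saturation)(y_{s_\fpath} - y^{\tsched}_\fpath)$. Combined with $R_\fpath \geqslant \saturation$ and $y_{s_\fpath} \geqslant y^{\tsched}_\fpath$ (the latter because $\maxsched$ maximizes the probability of reaching $\goal$), the $\CExp{\ub}$-adjusted per-path difference satisfies
\[ \bigl(\theta_{s_\fpath} + R_\fpath y_{s_\fpath} - \CExp{\ub}\, y_{s_\fpath}\bigr) - \bigl(\theta^{\tsched}_\fpath + R_\fpath y^{\tsched}_\fpath - \CExp{\ub}\, y^{\tsched}_\fpath\bigr) \ \geqslant \ (R_\fpath{-}\saturation)(y_{s_\fpath} - y^{\tsched}_\fpath) \ \geqslant \ 0. \]
Weighting by $w_\fpath$, summing over $\fpath \in \Pi$, and adding the common term $\rho - \CExp{\ub}\, x$ on both sides collapses this into $E_1 - \CExp{\ub}\, P_1 \geqslant E_2 - \CExp{\ub}\, P_2$, where $E_1,E_2,P_1,P_2$ abbreviate $\Exp{\sched}{\sinit}, \Exp{\tsched}{\sinit}, \Pr^{\sched}_{\cM,\sinit}(\Diamond \goal), \Pr^{\tsched}_{\cM,\sinit}(\Diamond \goal)$ respectively.

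To finish I would convert this affine inequality into the ratio inequality via
\[ E_1 P_2 - E_2 P_1 \ = \ E_1(P_2{-}P_1) + P_1(E_1{-}E_2) \ \geqslant \ (P_1{-}P_2)\bigl(\CExp{\ub} P_1 - E_1\bigr), \]
whose right-hand side is non-negative: $P_1 \geqslant P_2$ because the $\maxsched$-tail of $\sched$ maximizes the probability of reaching $\goal$ from every state, and $\CExp{\ub} P_1 \geqslant E_1$ because $\CExp{\sched} \leqslant \CExp{\max} \leqslant \CExp{\ub}$ (using $P_1 \geqslant P_2 > 0$, so $\CExp{\sched}$ is well-defined). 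Dividing by $P_1 P_2 > 0$ yields $\CExp{\tsched} \leqslant \CExp{\sched}$. The main point requiring care is the invocation of Lemma~\ref{diff-Sched-all} at the shifted initial state $s_\fpath$: the residual $\residual{\tsched}{\fpath}$ still selects $\maxsched$ on every continuation whose accumulated reward from $s_\fpath$ is $\geqslant \saturationNaive - R_\fpath$, hence certainly $\geqslant \saturationNaive$, so it lies in the appropriate class, and the auxiliary MDP $\cN$ used in that lemma's proof yields optimality of $\maxsched$ uniformly from every state; everything else is algebraic bookkeeping.
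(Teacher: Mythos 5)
Your proof is correct and follows essentially the same strategy as the paper's own argument: decompose paths according to the first moment the accumulated reward reaches or exceeds $\saturation$, apply Lemma~\ref{diff-Sched-all} to each residual scheduler at the cut state $s_\fpath$, use $R_\fpath \geqslant \saturation$ together with $y_{s_\fpath}\geqslant y^{\tsched}_\fpath$ to obtain the affine inequality $E_1 - \CExp{\ub}P_1 \geqslant E_2 - \CExp{\ub}P_2$, and then convert it into the ratio inequality. The paper groups the cut contributions by state-reward pairs $(s,r)$ rather than by paths $\fpath$, but for reward-based schedulers these aggregations coincide. The only genuine difference is the last step: the paper appeals to Lemma~\ref{lemma:rho-theta-x-y}, which forces a separate treatment of the case $y=z$ (where that lemma is inapplicable), whereas your direct algebraic identity $E_1 P_2 - E_2 P_1 = E_1(P_2{-}P_1) + P_1(E_1{-}E_2) \geqslant (P_1{-}P_2)(\CExp{\ub}P_1 - E_1) \geqslant 0$ handles both cases uniformly and is arguably cleaner. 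Your observation that $\residual{\tsched}{\fpath}$ lies in $\Sched'$ because the reward offset $R_\fpath\geqslant\saturation>0$ only lowers (never raises) the threshold beyond which $\maxsched$ is played is exactly the right justification for invoking Lemma~\ref{diff-Sched-all} at the shifted state.
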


\begin{proof}
Let $\Gamma$ denote the set of $\tsched$-paths $\fpath$ in $\cM$ from
$\sinit$ to $\goal$ with $\rew(\fpath) < \saturation$.
Let
$$
  x \ = \sum_{\fpath \in \Gamma} \probability(\fpath), \ \ \ 
  \rho \ = \ 
  \sum_{\fpath \in \Gamma} \rew(\fpath)\cdot \probability(\fpath)
$$
Then, 
$x = \Pr^{\tsched}_{\cM,\sinit}(\Diamond^{<\saturation} \goal)
  = \Pr^{\redefresidual{\tsched}{\saturation}{\maxsched}}_{\cM,\sinit}  
   (\Diamond^{<\saturation} \goal)$
and
$$ 
  \rho
   \ \ = \ \ 
   \sum_{r=0}^{\saturation-1}
   \Pr^{\tsched}_{\cM,\sinit}(\Diamond^{=r} \goal) \cdot r
   \ \ = \ \ 
   \sum_{r=0}^{\saturation-1}
   \Pr^{\redefresidual{\tsched}{\saturation}{\maxsched}}_{\cM,\sinit}
      (\Diamond^{=r} \goal) \cdot r
$$
Thus, the conditional expectations $\CExp{\tsched}$ and 
$\CExp{\redefresidual{\tsched}{\saturation}{\maxsched}}$ have the form
$$
  \CExp{\tsched} \ \ = \ \ \frac{\rho + \zeta}{x+z}
  \qquad \text{and} \qquad
  \CExp{\redefresidual{\tsched}{\saturation}{\maxsched}}
  \ \ = \ \ \frac{\rho + \theta}{x+y}
$$
where
$z = \Pr^{\tsched}_{\cM,\sinit}(\Diamond^{\geqslant \saturation} \goal)$
and
$y = \Pr^{\redefresidual{\tsched}{\saturation}{\maxsched}}_{\cM,\sinit}  
   (\Diamond^{\geqslant \saturation} \goal)$
and 
$\zeta$ and $\theta$ are the corresponding partial expectations.

For $r \in \Nat$, $r \geqslant \saturation$, 
let $\Pi_{s,r}$ denote the set of $\tsched$-paths $\fpath$ from $\sinit$ 
to $s$ with $\saturation \leqslant \rew(\fpath)$
and such that $\saturation > \rew(\fpath')$ for all proper prefixes
$\fpath'$ of $\fpath$. Thus, $\Pi_{s,r}=\varnothing$ if
 $r > N \eqdef \saturation + \max\limits_{s,\alpha} \rew(s,\alpha)$.
Let $p_{s,r} = \sum\limits_{\fpath \in \Pi_{s,r}} \probability(\fpath)$.
Then:
$$
   y \ \ = \ \ 
   \sum_{s\in S} \sum_{r=\saturation}^N 
      p_{s,r} \cdot \Pr^{\maxsched}_{\cM,s}(\Diamond \goal)
   \ \ = \ \ 
   \sum_{s\in S} \sum_{r=\saturation}^N 
      p_{s,r} \cdot y_s
$$
Similarly:
$$
   z \ \ = \ \ 
   \sum_{s\in S} \sum_{r=\saturation}^N 
      p_{s,r} \cdot z_{s,r}
   \qquad \text{where} \qquad
   z_{s,r} \ = \
      \Pr^{\residual{\tsched}{r}}_{\cM,s}(\Diamond \goal)
$$
and 
$$
   \theta \ \ = \ \ 
   \sum_{s\in S} \sum_{r=\saturation}^N 
      p_{s,r} \cdot \bigl(\Exp{\maxsched}{\cM,s} + r \cdot y_s\bigr)
   \qquad
   \zeta 
   \ \ = \ \ 
   \sum_{s\in S} \sum_{r=\saturation}^N 
      p_{s,r} \cdot 
      \bigl(\Exp{\residual{\tsched}{r}}{\cM,s} + r \cdot z_{s,r} \bigr)
$$
Note that $y_s \geqslant z_{s,r}$ as
$y_s=\Pr^{\maxsched}_{\cM,s}(\Diamond \goal)=
     \Pr^{\max}_{\cM,s}(\Diamond \goal)$.
Using Lemma \ref{diff-Sched-all} we obtain:
\begin{equation}
  \label{diff-max-residual}
  \Exp{\maxsched}{\cM,s} \ - \ \Exp{\residual{\tsched}{r}}{\cM,s}
  \ \ \ \geqslant \ \ \
  (\CExp{\ub}{-}\saturation) \cdot (y_s-z_{s,r})
  \tag{+}
\end{equation}
This yields:
\begin{eqnarray*}
  \theta - \zeta
  & \ \ = \ \  &
  \sum_{s,r}
    p_{s,r} \cdot 
    \bigl( \Exp{\maxsched}{\cM,s}-\Exp{\residual{\tsched}{r}}{\cM,s} \ \bigr)
  \ + \ 
  \sum_{s,r}
    p_{s,r} \cdot r \cdot (y_s -z_{s,r})
  \\[0ex]
  \\  
  & \geqslant &
  (\CExp{\ub}{-}\saturation) \cdot 
  \sum_{s,r}
    p_{s,r} \cdot (y_s-z_{s,r})  
  \ + \ 
  \sum_{s,r}
    p_{s,r} \cdot r \cdot (y_s -z_{s,r})
  \\[0ex]
  \\  
  & = &
 (\CExp{\ub}{-}\saturation) \cdot (y-z)
 \ + \ 
  \sum_{s,r}
    p_{s,r} \cdot r \cdot (y_s -z_{s,r}) 
\end{eqnarray*}
where the sum ranges over all pairs $(s,r)$ with $s\in S$ and
$r\in \{\saturation,\ldots,N\}$.
As 
$$
  \sum_{s\in S} \sum_{r=\saturation}^N
    p_{s,r} \cdot r \cdot (y_s -z_{s,r}) 
  \ \ \geqslant \ \
  \saturation \cdot \sum_{s\in S} \sum_{r=\saturation}^N
    p_{s,r} \cdot (y_s -z_{s,r})
  \ \ = \ \ 
  \saturation \cdot (y-z)
$$
we obtain:
\begin{eqnarray*}
  \theta - \zeta
  & \ \ \geqslant \ \ &
 (\CExp{\ub}{-}\saturation) \cdot (y-z)
 \ + \ 
 \saturation \cdot (y-z)
 \ \ = \ \ \CExp{\ub}\cdot (y-z)
\end{eqnarray*}
As $y_s \geqslant z_{s,r}$ we have $y \geqslant z$.
Let us first consider the case $y=z$.
By the choice of $\maxsched$, 
  $y=z$ implies $y_s=z_{s,r}$ for all $s,r$ and
  therefore
  $\Exp{\maxsched}{\cM,s} \geqslant
  \Exp{\residual{\tsched}{r}}{\cM,s}$
  (see \eqref{diff-max-residual}).
  Consequently,
  $\theta \geqslant \zeta$ and thus
  $\CExp{\tsched} \leqslant 
   \CExp{\redefresidual{\tsched}{\saturation}{\maxsched}}$.
If $y > z$ then we get:
\begin{eqnarray*}
  \frac{\theta - \zeta}{y-z}
  & \ \ \geqslant \ \ &
  \CExp{\ub} \ \ \geqslant \ \ \CExp{\max} \ \ \geqslant \ \ 
  \CExp{\tsched}
\end{eqnarray*}
But then $\CExp{\tsched} \leqslant 
   \CExp{\redefresidual{\tsched}{\saturation}{\maxsched}}$
by Lemma~\ref{lemma:rho-theta-x-y}.
\Ende
\end{proof}

\noindent
As the cost of computing $\saturation$ is dominated by
the computation of $\CExp{\ub}$, which has a pseudo-polynomial time
bound in the size of $\cM$ 
(see Appendices \ref{appendix:finitness-critical-schedulers}
 and \ref{sec:upper-bound}), 
we obtain
a pseudo-polynomial time bound for the computation of $\saturation$ as
well. 
As the logarithmic length of $\CExp{\ub}$ is polynomially bounded
in the size of $\cM$, so is the logarithmic length of $\saturation$.

\section{Threshold algorithm}
\label{appendix:threshold}

The algorithms for the threshold problem as well as the algorithm to compute
the maximal conditional expectation will rely on the following simple
observation (Lemma \ref{lemma:decision-for-s-r}).
Among others, we will use it as
a semi-local criterion on the best choice for
a given state reward pair $(s,r)$ among two options,
say schedulers $\tsched$ and $\usched$.
Let $y=\Pr^{\tsched}_s(\Diamond \goal)$ and
$\theta = \Exp{\tsched}{s}$. The pair $(z,\zeta)$ has analogous
meaning for scheduler $\usched$ where we suppose $y > z$.
As illustrated by Example \ref{example:running-example},
the best choice might depend on still unknown
decisions for other state-reward pairs.
In Lemma \ref{lemma:decision-for-s-r} these unknown decisions are represented
by the parameters $x, \rho, p$ where $x$ stands for the probability
to reach $\goal$ from $\sinit$ via path that has no prefix
$\fpath$ with $\rew(\fpath)=r$ and $\last(\fpath)=s$ and $\rho$ for
the corresponding expectation. The meaning of $p$ is the probability
to reach $s$ from $\sinit$ via a path $\fpath$ with $\rew(\fpath)=r$.
Then, Lemma \ref{lemma:decision-for-s-r} states
that $\threshold = r + (\theta-\zeta)/(y-z)$ is a threshold for the decision
which of the schedulers $\tsched$ or $\usched$ is better for $(s,r)$:
$\tsched$ is better than $\usched$ if $\CExp{\max} < \threshold$
and
$\usched$ is better than $\tsched$ if $\CExp{\max} > \threshold$.
Note that given $\tsched$ and $\usched$, we can compute $\threshold$,
while $\CExp{\max}$ might still be unknown.

\begin{lemma}
\label{lemma:decision-for-s-r}
  Let $\rho,\theta,\zeta,r,x,y,z,p$ be real numbers
  such that $p>0$, $x,y,z \geqslant 0$ and
  $x{+}y >0$ and $x{+}z>0$.
  If $y > z$ then one of the following three cases holds:
  \begin{eqnarray*}
    & &
    r \ + \
    \frac{\theta - \zeta}{y-z}
    \ \ \ >  \ \ \
    \frac{\rho + p(r y + \theta)}{x+py}
    \ \ \ > \ \ \
    \frac{\rho + p(rz + \zeta)}{x+pz}
    \\
    \\[0ex]
    \text{or} \ \ \ & &
    r \ + \
    \frac{\theta - \zeta}{y-z}
    \ \ \ <  \ \ \
    \frac{\rho + p(r y + \theta)}{x+py}
    \ \ \ < \ \ \
    \frac{\rho + p(rz + \zeta)}{x+pz}
    \\
    \\[0ex]
    \text{or} \ \ \ & &
    r \ + \
    \frac{\theta - \zeta}{y-z}
    \ \ \ =  \ \ \
    \frac{\rho + p(r y + \theta)}{x+py}
    \ \ \ = \ \ \
    \frac{\rho + p(rz + \zeta)}{x+pz}
 \end{eqnarray*}
\end{lemma}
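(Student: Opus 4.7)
The plan is to reduce the three-way chain to the elementary inequality Lemma~\ref{lemma:basic-abschaetzung} by a suitable substitution. First I would check that the two denominators appearing in the claim are strictly positive: since $p>0$ and $y,z\geqslant 0$, the quantity $x+py$ equals $x$ if $y=0$, in which case $x>0$ follows from $x+y>0$; otherwise $x+py>0$ already because $py>0$. The same case distinction, using $x+z>0$, shows $x+pz>0$. This justifies dividing by both denominators and reduces all comparisons to sign comparisons of the corresponding numerators.

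Next I would introduce the abbreviations
\begin{center}
$K \ = \ \rho + p(rz+\zeta)$, \qquad $L \ = \ x+pz$, \qquad $k \ = \ p\bigl(r(y-z) + (\theta-\zeta)\bigr)$, \qquad $l \ = \ p(y-z)$,
\end{center}
and verify the three identities
\begin{center}
$\dfrac{K}{L} \ = \ \dfrac{\rho + p(rz+\zeta)}{x+pz}$, \qquad
$\dfrac{K+k}{L+l} \ = \ \dfrac{\rho + p(ry+\theta)}{x+py}$, \qquad
$\dfrac{k}{l} \ = \ r + \dfrac{\theta-\zeta}{y-z}$.
\end{center}
The first identity is the definition of $K,L$; the second follows by expanding $K+k = \rho + pry + p\theta$ and $L+l = x+py$; the third follows by dividing numerator and denominator of $k/l$ by $p$ and splitting the resulting fraction.

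Having set up this dictionary, the hypotheses $p>0$ and $y>z$ give $l=p(y-z)>0$, and the first bullet above gives $L=x+pz>0$. I can now apply Lemma~\ref{lemma:basic-abschaetzung} to $K,L,k,l$, which yields that exactly one of the three chains
\begin{center}
$\tfrac{K}{L} < \tfrac{K+k}{L+l} < \tfrac{k}{l}$, \qquad
$\tfrac{K}{L} > \tfrac{K+k}{L+l} > \tfrac{k}{l}$, \qquad
$\tfrac{K}{L} = \tfrac{K+k}{L+l} = \tfrac{k}{l}$
\end{center}
holds. Substituting the three identities above translates these chains into exactly the three alternatives claimed in the lemma, completing the proof.

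There is no real obstacle here; the only subtle point is bookkeeping, namely noticing that one must play the roles so that the incremental denominator $l = p(y-z)$ is positive (this is where the hypothesis $y>z$ is used, and it is the reason the lemma is stated only for this case rather than the symmetric $y<z$ case, which would of course follow by swapping the two fractions). The positivity checks on $x+py$ and $x+pz$ use the non-degeneracy assumptions $x+y>0$ and $x+z>0$ in the only nontrivial way, handling the boundary case where one of $y$ or $z$ equals zero.
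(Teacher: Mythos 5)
Your proof is correct and takes essentially the same route as the paper: the paper derives this lemma as "immediate from Lemma~\ref{lemma:rho-theta-x-y}" (via the substitution $x'{=}x$, $y'{=}py$, $z'{=}pz$, $\rho'{=}\rho$, $\theta'{=}p(ry{+}\theta)$, $\zeta'{=}p(rz{+}\zeta)$), and Lemma~\ref{lemma:rho-theta-x-y} is itself proved from Lemma~\ref{lemma:basic-abschaetzung}, so unwinding the two steps yields exactly your $K,L,k,l$. You merely flatten the two-step reduction into one direct application of Lemma~\ref{lemma:basic-abschaetzung}, with the same positivity bookkeeping.
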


\begin{proof}
immediate by Lemma \ref{lemma:rho-theta-x-y}.
\Ende
\end{proof}

In what follows, we often use Lemma \ref{lemma:decision-for-s-r} in the
following form. If $y> z$ then:
\begin{eqnarray*}
    \frac{\rho + p(r y + \theta)}{x+py}
    \ \  >  \ \
    \frac{\rho + p(rz + \zeta)}{x+pz}
    & \ \ \ \ \text{iff} \ \ \ \ &
    r \ + \ \frac{\theta - \zeta}{y - z}
    \ \  >  \ \
    \frac{\rho + p(r y + \theta)}{x+py}
    \\
    \\[0ex]
    & \text{iff} &
    r \ + \ \frac{\theta - \zeta}{y - z}
    \ \ >  \ \
    \frac{\rho + p(r z + \zeta)}{x+pz}
\end{eqnarray*}
and the analogous statement for $\geqslant$ rather than $>$.

\subsection{Algorithms for the threshold problem}

\label{algo:threshold-cexp}

The input of the threshold problem for maximal conditional expectations
is a positive rational number $\threshold$ (called threshold) and
an MDP $\cM$ with non-negative integer rewards and distinguished  states
$\sinit$, $\goal$ and $\fail$.
The goal is to check whether the maximal conditional expectation in
$\cM$ meets the bound specified by the threshold value
$\threshold$ either as a strict or non-strict lower or strict or non-strict
upper bound:
\begin{center}
    \begin{tabular}{l}
        does
        $\CExpState{\max}{\cM,\sinit}(\accdiaplus \goal | \Diamond \goal)
         \, \geqslant \, \threshold$ hold ?
        \\[1ex]

        does
        $\CExpState{\max}{\cM,\sinit}(\accdiaplus \goal | \Diamond \goal)
         \, > \, \threshold$ hold ?
        \\[1ex]

        does
        $\CExpState{\max}{\cM,\sinit}(\accdiaplus \goal | \Diamond \goal)
         \, \leqslant \, \threshold$ hold ?
        \\[1ex]

        does
        $\CExpState{\max}{\cM,\sinit}(\accdiaplus \goal | \Diamond \goal)
         \, < \, \threshold$ hold ?
       \end{tabular}
\end{center}
Throughout this section, we suppose that $\cM$
satisfies conditions \eqref{assumption:A1}, \eqref{assumption:A2} and has no critical schedulers.
Thus, $\CExp{\max}$ is finite
(see Proposition \ref{proposition:exprew-infinite-critical-scheduler}).
As before, we write $\CExp{\max}$ rather than
$\CExpState{\max}{\cM,\sinit}(\accdiaplus \goal | \Diamond \goal)$.

Obviously, the threshold problem for strict (resp.~non-strict)
upper thresholds is dual to the threshold problem for non-strict
(resp.~strict) thresholds. Thus, it suffices to consider lower
thresholds.

As described in Section~\ref{sec:threshold},
we provide a threshold algorithm that,
given an MDP $\cM$ and a rational threshold $\threshold$,
generates
a deterministic reward-based scheduler $\sched$ with
$\residual{\sched}{\saturation}=\maxsched$
(where $\maxsched$ and $\saturation$ are as in
Prop.~\ref{prop:saturation-maxsched})
such that
$\CExp{\sched} > \threshold$ if $\CExp{\max} > \threshold$, and
$\CExp{\sched} = \threshold$ if $\CExp{\max} = \threshold$.
If $\CExp{\max} < \threshold$ then the output of the threshold algorithm
is ``no''.
It is easy to see how this algorithm can be used in a decision
procedure for deciding whether $\CExp{\sched} > \threshold$ or
$\CExp{\sched} \geqslant \threshold$.

In a preprocessing step, we compute the saturation point
$\saturation$ (see Section \ref{appendix:compute-saturation}).
The threshold algorithm 
operates level-wise and attempts to construct a reward-based
deterministic scheduler that is memoryless from the last level
$\saturation$ and that satisfies the threshold condition
$\CExp{\sched} \geqslant \threshold$, provided
$\CExp{\max} \geqslant \threshold$. Otherwise the algorithm returns ``no''.

\tudparagraph{1ex}{{\it Initialization of the threshold algorithm.}}
The treatment of level $\saturation$ is obvious as
optimal decisions are known
by the results of Appendix~\ref{appendix:saturation}.
Let $\maxsched$ be a deterministic memoryless scheduler
that maximizes the probability to reach $\goal$ from each state
and whose conditional expectation is maximal under all those
schedulers (by Lemma \ref{lemma:Sched-max-exp}).
Let $\action(s,\saturation) \in \Act(s)$
be the action that $\maxsched$ chooses for state $s$.
Furthermore, we define $y_{s,\saturation} = p^{\max}_s$ and
$\theta_{s,\saturation} = \Exp{\maxsched}{s}$.

\tudparagraph{1ex}{{\it Level-wise computation of feasible actions.}}
For $r = \saturation{-}1,\saturation{-}2,\ldots,2,1,0$
and each state $s\in S \setminus \{\goal,\fail\}$, the algorithm
computes actions $\action(s,r) \in \Act(s)$ and rational
values $y_{s,r}$, $\theta_{s,r}$ for the probability to reach $\goal$ from
$s$ and the corresponding expectation under the residual scheduler defined
by the action table $\action(\cdot)$.
The values for the trap states are the trivial ones:
$y_{\goal,r}=1$, $y_{\fail,r}=0$ and $\theta_{\goal,r}=\theta_{\fail,r}=0$.

Suppose now that $r \in \Nat$ with $0 \leqslant r < \saturation$
and that levels $r{+}1,r{+}2,\ldots,\saturation$ have been treated before
and the triples $(\action(t,R),y_{t,R},\theta_{t,R})$ have been computed
for all states $t\in S$ and for all $R \in \{r{+}1,\ldots,\saturation\}$.
Before describing the steps that the threshold algorithm performs to
treat level $r$, we explain the demands on the actions that the
threshold algorithm assigns to the state-reward pairs $(s,r)$.

\tudparagraph{1ex}{{\it Most feasible actions at level $r$.}}
The goal of the treatment of level $r$ is
to find the most feasible way to combine zero-reward actions
with positive-reward actions as decisions of a deterministic
reward-based scheduler for paths where the accumulated reward is $r$.
Here, ``most feasible'' is understood with respect to the task to
assign actions to the state-reward pairs $(s,r)$ that
any scheduler whose conditional expectation is at least or larger
than the given threshold $\threshold$ may take.
More precisely, given a function
$\psched : S \setminus \{\goal,\fail\} \to \Act$
such that $\psched(s)\in \Act(s)$ for all states $s$,
let
$$
  T_{\psched} \ \ \ = \ \ \
  \bigl\{\goal,\fail \bigr\} \ \cup \
  \bigl\{\ s\in S\setminus \{\goal,\fail\} \ : \
           \rew(s,\psched(s))>0 \ \bigr\}
$$
For the non-trap states $s \in T_{\psched} \setminus \{\goal,\fail\}$,
the values
$y_{s,r,\psched}$ and $\theta_{s,r,\psched}$ are defined by:
\begin{eqnarray*}
     y_{s,r,\psched} & \ = \ &
         \sum_{t\in S} P(s,\psched(s),t)\cdot y_{t,R}
     \\
     \\[0ex]
     \theta_{s,r,\psched} & = &
         \rew(s,\alpha) \cdot y_{s,r,\psched} \ \ + \ \
         \sum_{t\in S} P(s,\psched(s),t)\cdot \theta_{t,R}
\end{eqnarray*}
where $R = \min \{ \saturation, r+\rew(s,\alpha) \}$.
Furthermore,
$y_{\fail,r,\psched}= \theta_{\fail,r,\psched}=\theta_{\goal,r,\psched}=0$
and $y_{\goal,r,\psched}=1$.
For the states $s\in S \setminus T_{\psched}$ we define:
\begin{eqnarray*}
  y_{s,r,\psched} & \ \ = \ \ &
  \sum_{t\in T_{\psched}} \
     \Pr^{\psched}_{s}\bigl(\ \neg T_{\psched} \Until t \ \bigr)
      \cdot y_{t,r,\psched}
  \\
  \\[0ex]
  \theta_{s,r,\psched} & \ \ = \ \ &
  \sum_{t\in T_{\psched}} \
     \Pr^{\psched}_{s}\bigl(\ \neg T_{\psched} \Until t \ \bigr)
      \cdot \theta_{t,r,\psched}
\end{eqnarray*}
The task of the threshold algorithm
is now to find a function $\psched^*$ satisfying
the following constraint.
If $\CExp{\max} \geqslant \threshold$ %
then there exists an
eventually memoryless, reward-based scheduler $\tsched$ with
$\Pr^{\tsched}_{\sinit}(\Diamond \goal) >0$ and
$\CExp{\tsched}\geqslant \threshold$
that schedules
\begin{itemize}
\item
   $\maxsched(s)$ for all state-reward pairs $(s,R)$ with
   $R \geqslant \saturation$ and
\item
   the actions $\action(t,R)$ for all state-reward pairs $(t,R)$ with
   $r < R < \saturation$
\item
   $\psched^*(s)$ for the state-reward pairs $(s,r)$.
\end{itemize}
To do so, we present a procedure that is based on
linear programming techniques and the following observation.

\begin{lemma}
   \label{lemma:rho-theta-LP}
   Let $\rho,\theta,\zeta,\threshold,p,x,y,z$ be rational numbers with
   $p>0$, $x,y,z \geqslant 0$, $x{+}y, x{+}z>0$ and $r\in \Nat$
   such that
   \begin{equation}
     \label{equation:*-theta-zeta}
     \theta - (\threshold - r) \cdot y \ \ \geqslant \ \
     \zeta - (\threshold - r) \cdot z
     \tag{*}
   \end{equation}
   Then:
   \begin{eqnarray*}
        \text{\rm (a)} \ \ \ \ \ \
        \frac{\rho + p(rz + \zeta)}{x+pz} \ \geqslant \ \threshold
        & \ \ \text{implies} \ \ &
        \frac{\rho + p(ry + \theta)}{x+py} \ \geqslant \ \threshold
        \\
        \\[0ex]
        \text{\rm (b)} \ \ \ \ \ \
        \frac{\rho + p(rz + \zeta)}{x+pz} \ > \ \threshold
        & \ \ \text{implies} \ \ &
        \frac{\rho + p(ry + \theta)}{x+py} \ > \ \threshold
   \end{eqnarray*}
\end{lemma}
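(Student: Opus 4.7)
The statement is purely algebraic, so I will not rely on any probabilistic interpretation. The plan is to rewrite both the hypothesis \eqref{equation:*-theta-zeta} and the conclusions so that the denominators $x+py$ and $x+pz$ disappear, reducing everything to a chain of linear inequalities.

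First I would observe that since $p>0$, $x,y,z\geqslant 0$ and $x+y, x+z >0$, both denominators $x+py$ and $x+pz$ are strictly positive, so clearing denominators preserves the direction of inequality. Next I rewrite \eqref{equation:*-theta-zeta} by adding $r(y-z)$ to both sides, giving the equivalent inequality
\[
  (ry+\theta) - \threshold\, y \ \geqslant \ (rz+\zeta) - \threshold\, z.
\]
Multiplying by $p>0$ and adding $\rho$ yields
\[
  \rho + p(ry+\theta) - \threshold\, py \ \geqslant \ \rho + p(rz+\zeta) - \threshold\, pz. \tag{$\star$}
\]
For part (a), the hypothesis $\frac{\rho+p(rz+\zeta)}{x+pz}\geqslant \threshold$ is equivalent to $\rho+p(rz+\zeta)-\threshold\, pz \geqslant \threshold\, x$. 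Combining this with $(\star)$ gives
\[
  \rho + p(ry+\theta) - \threshold\, py \ \geqslant \ \threshold\, x,
\]
and rearranging yields $\rho+p(ry+\theta)\geqslant \threshold(x+py)$, i.e., $\frac{\rho+p(ry+\theta)}{x+py}\geqslant \threshold$, as desired. Part (b) is handled by the same chain, replacing the non-strict inequality from the hypothesis by a strict one: the inequality $(\star)$ remains non-strict, but the hypothesis now contributes a strict inequality, which carries through.

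There is no real obstacle here, only the bookkeeping of signs and of the two positivity conditions $x+py>0$ and $x+pz>0$, which I would dispatch at the beginning by the case split on whether $x>0$ or $y>0$ (resp.\ $z>0$). I would also note explicitly that this lemma is the non-strict-denominator analogue of the equivalences in \eqref{magic-constraint}: here we drop the assumption $y>z$ and therefore lose the ``iff'' but keep the implication needed by the threshold algorithm.
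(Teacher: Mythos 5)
Your proof is correct, and it is genuinely simpler than the one in the paper. The paper derives Lemma~\ref{lemma:rho-theta-LP} by a three-way case split on $y=z$, $y>z$, $y<z$, and in the two unequal cases argues by contradiction through Lemma~\ref{lemma:decision-for-s-r} (the trichotomy $\mathsf{A}\gtrless \mathsf{B}$ iff $r+(\theta-\zeta)/(y-z)\gtrless \mathsf{C}$), which in turn rests on Lemma~\ref{lemma:rho-theta-x-y}. You bypass all of that: after noting that the positivity assumptions make both denominators $x+py$ and $x+pz$ strictly positive, you rewrite~\eqref{equation:*-theta-zeta} as the denominator-free inequality $(\star)$, combine it linearly with the cleared-denominator form of the hypothesis, and divide back. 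No case distinction, no appeal to the earlier trichotomy lemmas, and the strict variant (b) falls out by replacing one $\geqslant$ with $>$ in the chain. What the paper's route buys is that it reuses the machinery already set up for statement~\eqref{magic-constraint}, which is perhaps illuminating given that the lemma is explicitly framed as a weakening of~\eqref{magic-constraint}; what your route buys is a shorter, self-contained, and more transparent argument that makes visible exactly where positivity of $p$ and of the denominators is used. Your closing remark that the lemma is the "one-sided" analogue of~\eqref{magic-constraint} obtained by dropping $y>z$ matches the paper's own framing.
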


\noindent
Before presenting the proof of Lemma \ref{lemma:rho-theta-LP}, let us first
state its role for the treatment of level $r$ in the threshold algorithm.
Intuitively, the value $x$ stands for the
probability to reach $\goal$ along some path that has no prefix $\fpath$ with
$\rew(\fpath)=r$ and $\rho$ for the corresponding expectation, while
$p$ denotes the probability to generate a finite path $\fpath$
from $\sinit$ to some state $s$ with $\rew(\fpath)=r$.
The pairs $(\theta,y)$ and $(\zeta,z)$ stand for the expected reward and
probability to reach $\goal$ from $s$ under some scheduler.
When treating the states at level $r$, the values $\rho,x,p$ are unknown,
while the pairs $(\theta_{s,r,\psched},y_{s,r,\psched})$ are candidates
for $(\theta,y)$ and $(\zeta,z)$.
Thus, Lemma \ref{lemma:rho-theta-LP} asserts that the most promising candidates
for $\psched$ are the ones where
$\theta_{s,r,\psched} - (\threshold - r) \cdot y_{s,r,\psched}$ is maximal
for all states $s$.
Here, ``most promising'' means that
if $\CExp{\sched}\geqslant \threshold$
for some scheduler $\sched$
that extends the already made decisions for levels
$r{+}1,\ldots,\saturation$ then
$\CExp{\tsched}\geqslant \threshold$
for some scheduler $\sched$
that extends the decisions for levels $r{+}1,\ldots,\saturation$
and behaves as $\psched$ for level $r$.

\begin{proof}
We first consider statement (a) and
suppose  $(\rho + p(rz + \zeta))/(x+pz) \geqslant \threshold$.
The task is to show that \eqref{equation:*-theta-zeta}  implies
$(\rho + p(ry + \theta))/(x+py) \ \geqslant \ \threshold$.
The claim is clear if $y=z$, in which case \eqref{equation:*-theta-zeta}
implies $\theta \geqslant \zeta$.
Suppose now $y > z$. Then, \eqref{equation:*-theta-zeta}  implies
$$
  \frac{\theta - \zeta}{y-z} \ \ \ \geqslant \ \ \ \threshold - r
$$
and therefore
$$
  r \ + \ \frac{\theta - \zeta}{y-z} \ \ \ \geqslant \ \ \ \threshold
$$
Suppose by contradiction that
$$
  \frac{\rho + p(ry + \theta)}{x+py} \ \ \ < \ \ \ \threshold
$$
Then:
$$
  \frac{\rho + p(ry + \theta)}{x+py} \ \ \ < \ \ \
  r \ + \ \frac{\theta - \zeta}{y-z}
$$
We now apply Lemma \ref{lemma:decision-for-s-r} and obtain:
$$
  \frac{\rho + p(ry+\theta)}{x+py}
  \ \ \ > \ \ \
  \frac{\rho + p(rz+\zeta)}{x+pz}
  \ \ \ \geqslant \ \ \ \threshold
$$
Contradiction. Hence,
$(\rho + p(ry + \theta))/(x+py) \ \geqslant \ \threshold$ if $y \geqslant z$.
The remaining case $y < z$ can be handled by
analogous arguments.
This completes the proof of Lemma \ref{lemma:rho-theta-LP}.
\Ende
\end{proof}

The idea for the treatment of each level $r<\saturation$
is now to compute the values
$\max_{\psched}
 (\theta_{s,r,\psched} - (\threshold {-} r) \cdot y_{s,r,\psched})$
for all states
by solving
the linear program shown in Figure \ref{LP-threshold}.
The latter has one variable $x_s$ for each state
$s\in S$ and one linear constraint for each state-action pair $(s,\alpha)$
with $\alpha \in \Act(s)$.

\begin{figure}[t]
\begin{tabular}{l}
    Minimize $\sum\limits_{s\in S} x_s$ subject to:
    \\[2ex]
    \qquad
    \begin{tabular}{ll}
      (1) &
      If $s \in S \setminus \{\goal,\fail\}$ then
      for each action $\alpha \in \Act(s)$ with $\rew(s,\alpha)=0$:
      \\[2ex]
      &
      \qquad \qquad
      $x_s \ \ \geqslant \ \ \sum\limits_{t\in S} P(s,\alpha,t) \cdot x_t$
      \\[2.5ex]
      (2) &
      If $s \in S \setminus \{\goal,\fail\}$ then
      for each action $\alpha \in \Act(s)$ with $\rew(s,\alpha)>0$:
      \\[2ex]
      &
      \qquad \qquad
      $x_s \ \ \geqslant \ \
       \sum\limits_{t\in S} P(s,\alpha,t) \cdot
            \bigl( \, \theta_{t,R} + \rew(s,\alpha) \cdot y_{t,R} \, - \,
                (\threshold {-} r) \cdot y_{t,R} \, \bigr)$
      \\[2ex]
      &
      where $R=\min\{\saturation,r{+}\rew(s,\alpha)\}$
      \\[2.5ex]
      (3) &
      For the trap states: \ \
      $x_{\goal} = r-\threshold$ \ \ and  \ \
      $x_{\fail}=0$
  \end{tabular}
\end{tabular}
\caption{Linear program  for the treatment of level $r$
         in the threshold algorithm
    \label{LP-threshold}}
\end{figure}

Lemma \ref{lemma:soundness-LP-threshold} (see below) will
show the existence of a unique solution
of the linear program in Figure \ref{LP-threshold}.
Let $(x_s^*)_{s\in S}$ be the solution
of the linear program in Figure \ref{LP-threshold}.
Let $\Act^*(s)$ denote the set of actions $\alpha \in \Act(s)$ such that
the following constraints (E1) and (E2) hold:
\begin{eqnarray*}
     \text{\rm (E1)} & \ \  &
     \text{If $\rew(s,\alpha)=0$ then:}
     \  \ x_s^* \  =  \ \sum_{t\in S} P(s,\alpha,t)\cdot x_t^*
     \\[2ex]
     \text{\rm (E2)} &  &
     \text{If $\rew(s,\alpha)>0$ and
           $R\, =\,
       \min \, \bigl\{\, \saturation,\, r{+}\rew(s,\alpha)\, \bigr\}$ then:}
     \\[2ex]
     & &
     \hspace*{1cm}
     \ x_s^* \  =  \ \sum_{t\in S} P(s,\alpha,t)\cdot
         \bigl( \theta_{t,R} + \rew(s,\alpha)\cdot y_{t,R}
                  \, - \, (\threshold {-} r)\cdot y_{t,R} \bigr)
\end{eqnarray*}
Let $\cM^*=\cM^*_{r,\threshold}$ denote the MDP  with state space $S$
induced by the state-action pairs $(s,\alpha)$ with
$\alpha \in \Act^*(s)$ where the positive-reward actions are redirected to
the trap states. More precisely, if $s,t\in S$ and $\alpha \in \Act^*(s)$ 
and $\rew(s,\alpha)=0$ then
$P_{\cM^*}(s,\alpha,t)=P(s,\alpha,t)$.
For $\alpha \in \Act^*(s)$ and $\rew(s,\alpha)>0$:
$$
  P_{\cM^*}(s,\alpha,\goal) \ \ = \ \
  \sum_{t\in S} P(s,\alpha,t) \cdot y_{t,R},
  \qquad
  P_{\cM^*}(s,\alpha,\fail) \ \ = \ \ 1 - P_{\cM^*}(s,\alpha,\goal)
$$
where $R=\min \{\saturation,r+\rew(s,\alpha)\}$.
The reward structure of $\cM^*$ is irrelevant for our purposes.

\tudparagraph{1ex}{{\it Treatment of level $r$ in the threshold algorithm.}}
The threshold algorithm solves the linear program
of Figure \ref{LP-threshold}%
\footnote{%
  Note that the values
  $\theta_{t,R}$ and $y_{t,R}$ for $t\in S$ and
  $r < R \leqslant \saturation$
  used in the constraints (2) of Figure \ref{LP-threshold}
  have been computed before in the treatment of level $R$.}
and then computes the action sets $\Act^*(s)$
and a deterministic memoryless scheduler
$\psched^* : S \setminus \{\goal,\fail\} \to \Act$ for the MDP $\cM^*$ with
\begin{center}
   $\psched^*(s)\in \Act^*(s)$ \ \ and \ \
   $\Pr^{\psched^*}_{\cM^*,s}(\Diamond \goal) \ = \
    \Pr^{\max}_{\cM^*,s}(\Diamond \goal)$
\end{center}
for all $s\in S \setminus \{\goal,\fail\}$. It then defines:
\begin{center}
   $\action(s,r)=\psched^*(s)$, \ \
   $y_{s,r} = y_{s,r,\psched^*}$ \ \ and \ \
   $\theta_{s,r}=\theta_{s,r,\psched^*}$.
\end{center}
This completes the treatment of level $r$.

\tudparagraph{1ex}{{\it Output of the threshold algorithm.}}
Having reached the last level $r=0$, the output of the algorithm
is as follows.
The generated reward-based scheduler $\sched$, given by
$\sched(s,r)=\action(s,r)$ for $r < \saturation$ and
$\sched(s,r) = \maxsched(s)$ for $r \geqslant \saturation$,
satisfies the equations
$y_{\sinit,0}=\Pr^{\sched}_{\sinit}(\Diamond \goal)$ and
$\theta_{\sinit,0}=\Exp{\sched}{\sinit}$.
Thus, if $y_{\sinit,0}>0$ then
$\CExp{\sched} = \theta_{\sinit,0}/y_{\sinit,0}$.
Hence, the 
threshold algorithm returns $\sched$  
if $y_{\sinit,0}>0$ 
and $\theta_{\sinit,0}/y_{\sinit,0}\geqslant \threshold$.
The correctness is obvious, as $\CExp{\sched} \geqslant \threshold$.
Otherwise, i.e., if $y_{\sinit,0}=0$ or
$\theta_{\sinit,0}/y_{\sinit,0} < \threshold$,
the algorithm terminates with the answer ``no''.
The correctness of the answer ``no'' is a consequence of
Lemma \ref{lemma:soundness-threshold-algorithm} (see below).

\begin{lemma}[Soundness of the LP of Figure \ref{LP-threshold}]
   \label{lemma:soundness-LP-threshold}
   The linear program in Figure \ref{LP-threshold} has a unique solution
   $(x_s^*)_{s\in S}$. The action-sets $\Act^*(s)$ are non-empty for
   all $s\in S \setminus \{\goal,\fail\}$ and
   for each function
   $\psched : S \setminus \{\goal,\fail\} \to \Act$
   with $\psched(s)\in \Act^*(s)$ we have:
   $$
     x_s^* \ \ \ = \ \ \
     \theta_{s,r,\psched} \ - \ (\threshold {-}r)\cdot y_{s,r,\psched}
   $$
   Moreover, whenever
   $\psched : S \setminus \{\goal,\fail\} \to \Act$
   is a function with $\psched(s)\in \Act(s)$ then:
   $$
       x_s^* \ \ \ \geqslant \ \ \
        \theta_{s,r,\psched} \ - \ (\threshold {-}r)\cdot y_{s,r,\psched}
   $$
\end{lemma}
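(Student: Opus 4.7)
The LP is the standard Bellman LP for the maximum expected reward in the auxiliary ``level-$r$ MDP'' $\cM_r$ obtained by treating $\goal$ and $\fail$ as terminals with values $r{-}\threshold$ and $0$, treating each positive-reward action $\alpha$ at state $s$ as a terminating transition of value $\sum_{t\in S}P(s,\alpha,t)(\theta_{t,R}+\rew(s,\alpha)\cdot y_{t,R} - (\threshold{-}r)y_{t,R})$ (where $R=\min\{\saturation,r{+}\rew(s,\alpha)\}$), and keeping zero-reward actions internal. My plan is to (i) define $V_s$ as the supremum of $\theta_{s,r,\psched} - (\threshold{-}r)\cdot y_{s,r,\psched}$ over deterministic memoryless $\psched$, (ii) show $(V_s)_{s\in S}$ is the pointwise-smallest feasible solution of the LP, and therefore coincides with the unique minimizer $(x_s^*)$, and (iii) read off non-emptiness of $\Act^*(s)$ and the equalities for $\psched(s)\in\Act^*(s)$ from the Bellman characterization of~$V_s$.

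\textbf{First}, I would verify feasibility of $(V_s)$: $V_{\goal}=r{-}\threshold$ and $V_{\fail}=0$ hold by definition, and for every $s\in S\setminus\{\goal,\fail\}$ and $\alpha\in\Act(s)$, the scheduler $\psched_\alpha$ that picks $\alpha$ at $s$ and plays optimally elsewhere shows $V_s \geqslant $ r.h.s.\ of the LP constraint for $(s,\alpha)$. \textbf{Second}, to prove that any feasible $(x_s)$ satisfies $x_s\geqslant\theta_{s,r,\psched}-(\threshold{-}r)\,y_{s,r,\psched}$ for every deterministic memoryless $\psched$ (which yields the ``moreover'' clause and also $x_s\geqslant V_s$), fix such a $\psched$ and let $T_\psched$ be as in the paper. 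For $s\in T_\psched$ the bound is immediate from constraint~(2). For $s\notin T_\psched$, consider the Markov chain induced by $\psched$ restricted to zero-reward actions; by assumption~\eqref{assumption:A2} it has no end components, so $\Pr^\psched_s(\Diamond T_\psched)=1$, and iterating the inequality $x_s\geqslant\sum_t P(s,\psched(s),t)\,x_t$ along $n$ steps and letting $n\to\infty$ gives $x_s\geqslant \sum_{t\in T_\psched}\Pr^\psched_s(\neg T_\psched\Until t)\,x_t \geqslant \theta_{s,r,\psched}-(\threshold{-}r)y_{s,r,\psched}$, using the definitions of $y_{s,r,\psched}$ and $\theta_{s,r,\psched}$. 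Together with feasibility, this forces the LP minimum to equal $\sum_s V_s$ and to be attained uniquely at $(V_s)$, proving uniqueness of~$(x_s^*)$ and the identity $x_s^*=V_s$.

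\textbf{Third}, for non-emptiness of $\Act^*(s)$: since $V_s$ is a maximum over finitely many actions (the Bellman equation $V_s=\max_\alpha(\cdot)$ being inherited from the standard dynamic-programming argument, which is valid here because the zero-reward sub-MDP has no end components and hence the Bellman operator has a unique fixed point on the almost-surely transient states), at least one $\alpha\in\Act(s)$ attains the maximum, and such an $\alpha$ lies in $\Act^*(s)$ by construction. \textbf{Fourth}, for any $\psched$ with $\psched(s)\in\Act^*(s)$ I want to show the equality $x_s^* = \theta_{s,r,\psched}-(\threshold{-}r)y_{s,r,\psched}$. For $s\in T_\psched$ it is constraint (E2) applied to $\psched(s)$. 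For $s\notin T_\psched$, I iterate (E1) to get $x_s^* = \sum_{t\in S}P(s,\psched(s),t)\,x_t^*$ and, by the same argument as in step two but now with equalities, obtain $x_s^* = \sum_{t\in T_\psched}\Pr^\psched_s(\neg T_\psched\Until t)\,x_t^* = \theta_{s,r,\psched}-(\threshold{-}r)y_{s,r,\psched}$.

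\textbf{Main obstacle.} The delicate point is handling zero-reward cycles among $\Act^*$-actions: a priori one might fear that a scheduler $\psched$ trapped in a zero-reward end component returns a different value than $x_s^*$. The assumption~\eqref{assumption:A2} that $\cM$ itself has no end components rules this out: the sub-MDP restricted to zero-reward actions inherits the absence of end components, which guarantees that iterating (E1) converges and that $T_\psched$ is reached almost surely, making the propagation of equalities (and inequalities) from $T_\psched$ to all of $S$ valid in the limit.
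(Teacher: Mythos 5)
Your proposal takes the same basic route as the paper: both interpret the LP as the optimality LP for a max expected total reward in an auxiliary terminal-reward MDP in which positive-reward actions and the traps $\goal,\fail$ lead to a fresh absorbing state with the appropriate rewards, and both rely on assumption (A2) to ensure that the zero-reward sub-MDP has no end components, so that the optimal value function is the unique LP solution and the hitting-probability propagation from $T_\psched$ is valid. The paper simply packages this by constructing the MDP $\cN$ explicitly and citing a standard result (Theorem 4.20 in Kallenberg), whereas you re-derive the characterization by hand.

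There is, however, a small but genuine gap in your Step~1. For the zero-reward constraints (constraint~(1)), you argue feasibility of $(V_s)$ via "the scheduler $\psched_\alpha$ that picks $\alpha$ at $s$ and plays optimally elsewhere." If $\rew(s,\alpha)=0$, the value of $\psched_\alpha$ from $s$ is $\sum_t P(s,\alpha,t)\,W^{\psched_\alpha}_t$, where $W^{\psched_\alpha}_t$ is the value of $\psched_\alpha$ from $t$ — and since the chain may revisit $s$ where $\psched_\alpha$ plays the possibly suboptimal $\alpha$, one only has $W^{\psched_\alpha}_t\leqslant V_t$, giving $V_s\geqslant\sum_t P(s,\alpha,t)W^{\psched_\alpha}_t$ rather than the required $V_s\geqslant\sum_t P(s,\alpha,t)V_t$. (For the positive-reward constraints this argument is fine, since a positive-reward step jumps to the precomputed next-level values and the rest of $\psched_\alpha$'s decisions are irrelevant for the term at hand.) The fix is simply to establish the Bellman equation $V_s=\max_\alpha(\cdot)$ first — which you in fact invoke in Step~3 anyway, correctly noting that it rests on the absence of end components in the zero-reward sub-MDP — and then read off feasibility as a corollary of the Bellman equation, rather than from the ad-hoc scheduler $\psched_\alpha$. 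With Step~1 replaced this way, your Steps~2–4 are correct and complete the argument as in the paper.
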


\begin{proof}
 The solvability and the uniqueness of the solution of the linear program
 in Figure \ref{LP-threshold} follows by the fact that the linear program
 agrees with the one that is known to represent the expected total reward
 in the MDP $\cN$ with state space $S_{\cN} = S \cup \{\final\}$
 and action set $\Act_{\cN} = \Act \cup \{\tau\}$ such that for all states
 $s\in S$ and
 all actions $\alpha \in \Act(s)$:
 \begin{center}
  \begin{tabular}{l}
    $P_{\cN}(s,\alpha,t) = P(s,\alpha,t)$ and $\rew_{\cN}(s,\alpha)=0$
    if $\rew(s,\alpha)=0$
    \\[1ex]

    $P_{\cN}(s,\alpha,\final)=1$ and
    $\rew_{\cN}(s,\alpha) =
     \theta_{s,r,\alpha} - (\threshold {-}r) \cdot y_{s,r,\alpha}$
    if $\rew(s,\alpha)>0$
    \\[1ex]

    $P_{\cN}(\goal,\tau,\final)=1$ and $\rew_{\cN}(\goal,\tau)=r-\threshold$
    \\[1ex]

    $P_{\cN}(\fail,\tau,\final)=1$ and $\rew_{\cN}(\fail,\tau)=0$
  \end{tabular}
 \end{center}
 and $P_{\cN}(\cdot)=\rew_{\cN}(\cdot)=0$ in all remaining cases.
 In particular, state $\final$ is a trap in $\cN$ and there are no
 other traps in $\cN$.
 Assumption \eqref{assumption:A2} yields
 $\Pr^{\min}_{\cN,s}(\Diamond \final)=1$ for all states $s \in S_{\cN}$.
 Using standard results for finite-state MDPs
 (see e.g.~Theorem 4.20 in \cite{Kallenberg}),
 the values $\Exp{\max}{\cN,s}(\accdiaplus \final)$ are finite
 and are computable as the unique solution
 of the linear program shown in Figure \ref{LP-threshold}.
 That is, if $(x_s^*)_{s\in S}$ is the unique solution of
 the linear program in Figure \ref{LP-threshold} then
 \begin{center}
     $x_s^* \ \ = \ \ \Exp{\max}{\cN,s}(\accdiaplus \final)$ \ \
     for all states $s\in S$.
 \end{center}
 Let now $\psched$ be a deterministic memoryless scheduler
 for $\cN$ that maximizes the expected total reward in $\cN$, i.e.,
 $x_s^* = \Exp{\psched}{\cN,s}(\accdiaplus \final)$ for all states $s\in S$.
 Clearly, $\psched$ can be viewed as a function
 $S \setminus \{\goal,\fail\} \to \Act$ with $\psched(s)\in \Act^*(s)$
 for all $s\in S$.
 This yields that the sets $\Act^*(s)$ are non-empty for the
 non-trap states $s$ for $\cM$.
 Vice versa, each function
 $\psched : S \setminus \{\goal,\fail\} \to \Act$ with
 $\psched(s)\in \Act^*(s)$ for all $s \in S \setminus \{\goal,\fail\}$
 can be viewed as a deterministic memoryless scheduler
 for $\cN$ that maximizes the expected total reward in $\cN$.

 Suppose now that $\psched : S \setminus \{\goal,\fail\} \to \Act$
 is a function with $\psched(s) \in \Act(s)$ for all
 $s\in S \setminus \{\goal,\fail\}$. Let $T$ denote the set of states
 $s \in S$ such that either $s\in \{\goal,\fail\}$ or
 $\rew(s,\psched(s))>0$. Clearly, $\psched$ can be viewed as a scheduler
 for $\cN$ that schedules the unique action $\tau$ for the trap-states
 $\goal$ and $\fail$ of $\cM$, and we have:
 $$
    \Pr^{\psched}_{\cN,s}(\Diamond T) \ = \ 1
 $$
 With $\theta_{\fail,r,\psched}=\theta_{\fail,r}=0$,
 $\theta_{\goal,r,\psched}=\theta_{\goal,r}=0$,
 $y_{\fail,r,\psched}=y_{\fail,r}=0$ and
 $y_{\goal,r,\psched}=y_{\goal,r}=1$ we obtain:
 $$
  \rew_{\cN}(t,\psched(t)) \ \ \ = \ \ \
  \theta_{t,r,\psched} \ - \ (\threshold {-}r)\cdot y_{t,r,\psched}
 $$
 for all states $t\in T$.
 This yields that for all states $s\in S$:
 \begin{eqnarray*}
     \Exp{\psched}{\cN,s}(\accdiaplus \final)
     & \ = \ &
     \sum_{t\in T}
         \Pr^{\psched}_{\cN,s}(\, \neg T \Until t \, ) \cdot
         \rew_{\cN}(t,\psched(t))
     \\
     \\[0ex]
     & \ = \ &
     \sum_{t\in T}
         \Pr^{\psched}_{\cN,s}(\, \neg T \Until t \, ) \cdot
         \theta_{t,r,\psched}
     \ \ - \ \
     (\threshold - r) \cdot
     \sum_{t\in T}
         \Pr^{\psched}_{\cN,s}(\, \neg T \Until t \, ) \cdot
         y_{t,r,\psched}
     \\
     \\[0ex]
     & \ = \ &
     \theta_{s,r,\psched} \ - \ (\threshold {-}r)\cdot y_{s,r,\psched}
 \end{eqnarray*}
 As $x_s^* \ = \ \Exp{\max}{\cN,s}(\accdiaplus \final)
     \ \geqslant \
     \Exp{\psched}{\cN,s}(\accdiaplus \final)$
 we obtain:
 $$
   x_s^*
   \ \ \ \geqslant \ \ \
   \theta_{s,r,\psched} \ - \ (\threshold {-}r)\cdot y_{s,r,\psched}
 $$
 Moreover, if $\psched(s)\in \Act^*(s)$ for all states $s$ then
 $x_s^* \ = \ \Exp{\max}{\cN,s}(\accdiaplus \final)
     \ = \
     \Exp{\psched}{\cN,s}(\accdiaplus \final)$ (see above)
 and therefore:
 $$
     x_s^* \ \ \ = \ \ \
     \Exp{\max}{\cN,s}(\accdiaplus \final) \ \ \ = \ \ \
     \Exp{\psched}{\cN,s}(\accdiaplus \final) \ \ \ = \ \ \
     \theta_{s,r,\psched} \ - \ (\threshold {-}r)\cdot y_{s,r,\psched}
 $$
 This completes the proof of Lemma \ref{lemma:soundness-LP-threshold}.
\Ende
\end{proof}

It remains to show that if the algorithm returns ``no'' then there is no
scheduler meeting the bound for its conditional expectation.
We prove this by showing that if $\CExp{\max} \geqslant \threshold$
then after the treatment of each level $r$ there exists a reward-based
scheduler $\tsched_r$ using the decisions that have been stored in the
action-table $\action(\cdot)$ for all level $> r$ and the decisions
of $\psched^*$ at level $r$ and satisfying
$\CExp{\tsched_r} \geqslant \threshold$.

\begin{lemma}[Soundness of the answer ``no'']
  \label{lemma:soundness-threshold-algorithm}
   If $\CExp{\max} \geqslant \threshold$ then
   the threshold algorithm generates a scheduler $\sched$ with
   $\CExp{\sched} \geqslant \threshold$.
\end{lemma}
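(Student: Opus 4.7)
The plan is to prove, by downward induction on $r \in \{\saturation, \saturation{-}1, \ldots, 0\}$, the following invariant: after the threshold algorithm has processed level $r$, if $\CExp{\max} \geqslant \threshold$ then there exists a deterministic reward-based scheduler $\tsched_r$ with $\Pr^{\tsched_r}_{\cM,\sinit}(\Diamond \goal) > 0$ and $\CExp{\tsched_r} \geqslant \threshold$ such that $\tsched_r(s, R) = \action(s, R)$ for every state $s$ and every $R \in \{r, \ldots, \saturation{-}1\}$ and $\residual{\tsched_r}{\saturation} = \maxsched$. Applied at $r = 0$, the scheduler $\tsched_0$ coincides with the scheduler $\sched$ returned by the algorithm, so $y_{\sinit, 0} > 0$ and $\theta_{\sinit, 0}/y_{\sinit, 0} = \CExp{\sched} \geqslant \threshold$, ruling out the ``no'' answer. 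The base case $r = \saturation$ is immediate from Proposition~\ref{prop:saturation-maxsched-appendix}(b), which supplies a deterministic reward-based scheduler $\sched^*$ with $\residual{\sched^*}{\saturation} = \maxsched$ and $\CExp{\sched^*} = \CExp{\max} \geqslant \threshold$.

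For the inductive step, let $\tsched_{r+1}$ be the scheduler provided at level $r{+}1$, let $\psched(s) \eqdef \tsched_{r+1}(s, r)$, and let $\psched^*$ be the memoryless scheduler computed by the algorithm at level $r$. Define $\tsched_r$ to agree with $\tsched_{r+1}$ at every level distinct from $r$ and to use $\psched^*$ at level $r$. Lemma~\ref{lemma:soundness-LP-threshold} provides the per-state inequality
\[
   \theta_{s,r,\psched^*} \ - \ (\threshold{-}r)\cdot y_{s,r,\psched^*}
   \ \ = \ \ x_s^* \ \ \geqslant \ \
   \theta_{s,r,\psched} \ - \ (\threshold{-}r)\cdot y_{s,r,\psched}
\]
at every state~$s$, which is strict whenever $\psched(s) \notin \Act^*(s)$. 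I then decompose both $\CExp{\tsched_{r+1}}$ and $\CExp{\tsched_r}$ according to the first time a $\sinit$-started path attains accumulated reward exactly~$r$: the probability $x$ of reaching $\goal$ while never hitting reward level $r$, its partial expectation $\rho$, and the probability $p_s$ of first hitting reward $r$ in state $s$ are identical for the two schedulers, since they agree at every level distinct from $r$. Multiplying the per-state inequality by $p_s$ and summing over $s$ yields the hypothesis~(*) of Lemma~\ref{lemma:rho-theta-LP} at the aggregated level (with $p = 1$), and part~(a) of that lemma transfers $\CExp{\tsched_{r+1}} \geqslant \threshold$ into $\CExp{\tsched_r} \geqslant \threshold$, provided the denominator $x + \sum_s p_s y_{s,r,\psched^*}$ is positive.

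The main obstacle I anticipate is precisely this positivity: replacing $\psched$ by $\psched^*$ could in principle introduce fresh zero-reward loops at level $r$ that swallow every remaining path to $\goal$. I plan to resolve it by a case split based on whether $\psched(s) \in \Act^*(s)$ for every contributing state $s$. In the affirmative case, $\psched$ is itself an $\Act^*(s)$-scheduler at the contributing states, and the algorithm's choice of $\psched^*$ as a reachability-maximizer inside $\cM^*$ yields $y_{s,r,\psched^*} \geqslant y_{s,r,\psched}$, so the denominator can only grow. In the negative case the per-state inequality is strict at a contributing state, making the aggregated inequality strict, and combining this with $\CExp{\tsched_{r+1}} \geqslant \threshold$ and $\threshold > 0$ forces $x + \sum_s p_s y_{s,r,\psched^*} > 0$; otherwise both sides of the aggregated inequality would collapse to zero, contradicting strictness. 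The trivial case $\threshold = 0$ is handled separately by using assumption~\eqref{assumption:A1} to exhibit any scheduler with positive reaching probability as witness.
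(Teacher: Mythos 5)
Your proof follows the paper's exact skeleton: downward induction from $\saturation$ to $0$ with the invariant that a $\threshold$-meeting scheduler consistent with the computed action table exists, the per-state inequality $x_s^* \geqslant \theta_{s,r,\psched} - (\threshold{-}r)y_{s,r,\psched}$ from Lemma~\ref{lemma:soundness-LP-threshold}, aggregation over the first-hitting decomposition at level $r$, and transfer via Lemma~\ref{lemma:rho-theta-LP}(a). The only place you diverge is in establishing denominator positivity. The paper assumes $x=0$ and $y=0$ and runs a single contradiction: these force per-state equality at every contributing state, hence $\psched$ is optimal in $\cN$ from those states and therefore an $\cM^*$-scheduler, and then $z>0$ gives $y_s \geqslant z_s > 0$, contradiction. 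You reorganize this as an upfront case split on whether $\psched(s)\in\Act^*(s)$ at every contributing state. Your negative case is a clean version of the same contradiction (and the hypothesis $\threshold > 0$ you invoke there is superfluous; the strictness and $\zeta - (\threshold{-}r)z \geqslant 0$ already collide). The affirmative case, however, silently relies on the set of contributing states being closed under the zero-reward $\psched$-successor relation at level $r$: without this closure, $\psched$ taking $\Act^*$-actions at contributing states does not make $\psched$ a scheduler for $\cM^*$ from $s$, and the comparison $y_{s,r,\psched^*} = \Pr^{\max}_{\cM^*,s}(\Diamond\goal) \geqslant \Pr^{\psched}_{\cM^*,s}(\Diamond\goal) = y_{s,r,\psched}$ is not available. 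The closure does hold, because $p_t$ is computed along $\usched = \tsched_{r+1}$ and $\psched = \usched(\cdot,r)$, so any zero-reward $\psched$-successor of a contributing state is reached from $\sinit$ with accumulated reward $r$ too; but you need to say so. The paper's organization avoids this issue altogether, because it derives $\Act^*$-compliance of $\psched$ from optimality (which propagates along $\psched$-reachable states automatically) rather than hypothesizing compliance at contributing states only.
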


\begin{proof}
 The task is to show that the algorithm indeed returns 
 a scheduler $\sched$ with $\CExp{\sched} \geqslant \threshold$
 if $\CExp{\max} \geqslant \threshold$.
 For this, we use an inductive argument to prove the following statement.
 If $\CExp{\max} \geqslant \threshold$ then
 for each $r\in \{\saturation,\saturation{-}1,\ldots,1,0\}$,
 there exists
 a reward-based scheduler $\tsched_r$ for $\cM$ with
 $\Pr^{\tsched_r}_{\cM,\sinit}(\Diamond \goal) >0$ and
 $\CExp{\tsched_r} \geqslant \threshold$ such that:
 \begin{itemize}
   \item
     $\tsched_r(s,R) = \maxsched(s)$ for all state-reward pairs $(s,R)$ with
     $R \geqslant \saturation$ and
   \item
     $\tsched_r(s,R) = \action(t,R)$ for all state-reward pairs $(t,R)$ with
     $r < R < \saturation$
   \item
     $\tsched_r(s,r)=\psched^*(s)$ for the state-reward pairs $(s,r)$
 \end{itemize}
 where $\psched^*$ is the function as explained in the treatment of
 level $r$. For $r=0$ we obtain $\tsched_0=\sched$ and therefore
 $\CExp{\sched}\geqslant \threshold$.

 The claim is obvious for $r=\saturation$ as then we can deal with
 $\tsched_{\saturation} = \tsched$ where $\tsched$ is any scheduler
 with $\CExp{\tsched}=\CExp{\max}$.
(This follows from the fact
    that $\saturation > \turning$
    for the turning point $\turning$
    of Proposition \ref{prop:turning-point}.)
 Suppose now that $r < \saturation$. By induction hypothesis
 there exists a reward-based scheduler $\tsched_{r+1}= \usched$
 such that
   \begin{itemize}
   \item
     $\usched(s,R) = \maxsched(s)$ for all state-reward pairs $(s,R)$ with
     $R \geqslant \saturation$ and
   \item
     $\usched(s,R) = \action(t,R)$ for all state-reward pairs $(t,R)$ with
     $r < R < \saturation$
   \end{itemize}
   and $\Pr^{\usched}_{\cM,\sinit}(\Diamond \goal) >0$ and
   $\CExp{\usched} \geqslant \threshold$.
   Let
   $$ p_s \ \ = \ \ \Pr^{\usched}_{\cM,\sinit}(\Diamond^{=r} s),
      \qquad
      p \ = \ \sum_{s \in S} p_s
   $$
   Furthermore, we define
   $\psched : S \setminus \{\goal,\fail\}\to \Act$ by
   $\psched(s)=\usched(s,r)$.
   Then:
   $$
     \zeta_s \ \ \eqdef \ \
     \theta_{s,r,\psched} \ \ = \ \
     \Exp{\residual{\usched}{r}}{\cM,s},
     \qquad
     z_s \ \ \eqdef \ \ y_{s,r,\psched} \ \ = \ \
     \Pr^{\residual{\usched}{r}}_{\cM,s}(\Diamond \goal)
   $$
   Likewise, we define
   $$
     \theta_s \ \ = \ \ \theta_{s,r,\psched^*}
     \qquad \text{and} \qquad
     y_s \ \ = \ \ \theta_{s,r,\psched^*}
   $$
 Let $\tsched = \tsched_r$ denote the scheduler for $\cM$ that agrees with
 $\usched$ except that $\tsched(s,r)=\psched^*(s)$ for all states $s$ of $\cM$.
 We then have
 \begin{center}
   $\theta_s \ = \ \Exp{\residual{\tsched}{r}}{\cM,s}(\accdiaplus \goal)$
   \qquad and \qquad
   $y_s \ = \ \Pr^{\residual{\psched^*}{r}}_{\cM,s}(\Diamond \goal)$.
 \end{center}
 Let
 \begin{eqnarray*}
    \theta \ \ = \ \ \sum\limits_{s\in S} \frac{p_s}{p} \cdot \theta_s, \ \ \ \
    & &
    y \ \ = \ \ \sum\limits_{s\in S} \frac{p_s}{p} \cdot y_s
    \\
    \\[0ex]
    \zeta \ \ = \ \ \sum\limits_{s\in S} \frac{p_s}{p} \cdot \zeta_s, \ \ \ \
    & &
    z \ \ = \ \ \sum\limits_{s\in S} \frac{p_s}{p} \cdot z_s
 \end{eqnarray*}
 Then, $py$ is the probability of the infinite
 $\tsched$-paths from $\sinit$ that have a prefix $\fpath$
 with $\rew(\fpath)=r$ and $p\theta$ the corresponding expectation.
 The values $pz$ and $p\zeta$ have analogous meaning for scheduler
 $\usched$.

 We now use Lemma \ref{lemma:soundness-LP-threshold}.
 As
 $x_s^* =
  \theta_{s,r,\psched^*} \ - \ (\threshold {-}r)\cdot y_{s,r,\psched^*}$
 and
 $x_s^* \ \geqslant \
  \theta_{s,r,\psched} \ - \ (\threshold {-}r)\cdot y_{s,r,\psched}$,
 we have:
 \begin{eqnarray*}
    \theta_{s} - (\threshold {-}r) \cdot y_{s}
    & \ \ = \ \ &
    \theta_{s,r,\psched^*} - (\threshold {-}r) \cdot y_{s,r,\psched^*}
    \\[2ex]
    & \ \geqslant \ &
    \theta_{s,r,\psched} \ - \ (\threshold {-}r)\cdot y_{s,r,\psched}
    \ \ \ = \ \ \
    \zeta_s - (\threshold {-}r)\cdot z_s
 \end{eqnarray*}
 for all states $s \in S$. But this yields:
   $$
     \theta - (\threshold {-} r) \cdot y
     \ \ \ \geqslant \ \ \
     \zeta - (\threshold {-} r) \cdot z
   $$
   There exists non-negative rational numbers $\rho$ and $x$
   (namely, $x$ is the probability of the maximal $\usched$-paths
    from $\sinit$ to $\goal$ that do not have a prefix $\fpath$ with
    $\rew(\fpath)=r$ and $\rho$ is the corresponding expectation)
   such that
   $$
     \CExp{\usched} \ \ = \ \ \frac{\rho + p(rz + \zeta)}{x+pz}
     \qquad \text{and} \qquad
     \CExp{\tsched} \ \ = \ \ \frac{\rho + p(ry + \theta)}{x+py}
   $$
   Let us check that $x+py$ is indeed positive. This is clear if $x>0$.
   Suppose now that $x=0$. The goal is to show that $y>0$.
   As $x=0$ we have $\rho=0$ and therefore:
   $$
     \threshold \ \ \leqslant \ \ \CExp{\usched} \ \ = \ \
      \frac{p(rz + \zeta)}{pz} \ \ = \ \ r + \frac{\zeta}{z}
   $$
   Hence, $(\threshold - r)\cdot z \ \leqslant \ \zeta$.
   Suppose by contradiction that $y=0$. Then, $\theta=0$.
   Hence,
    $$
      0 \ \ = \ \ \theta - (\threshold {-}r)y \ \
      \geqslant \ \ \zeta - (\threshold {-}r)z \ \ \geqslant  \ \ 0
    $$
    This yields
    $0 \ = \ \theta - (\threshold {-}r)y \ = \ \zeta - (\threshold {-}r)z$.
    As
    $\theta_s - (\threshold {-}r)y_s \ \geqslant \
     \zeta_s - (\threshold {-}r)z_s$ for all states $s$, we have
    $\theta_s - (\threshold {-}r)y_s \ = \
     \zeta_s - (\threshold {-}r)z_s$.
    This implies that $\psched$ viewed as a scheduler for $\cN$
    maximizes the expected total reward from every state $s$.
    In particular, $\psched(s)\in \Act^*(s)$ for all states $s$.
    Thus, $\psched$ can also be viewed as a scheduler for the MDP
    $\cM^*$.
    As $x+pz >0$ and $x=0$ (by assumption) we have $z >0$.
    Thus, $z_s > 0$ for at least one state $s$ with $p_s >0$.
    By the choice of $\psched^*$,
    $$
      y_s \ \ = \ \ \Pr^{\psched^*}_{\cM^*,s}(\Diamond \goal)
         \ \ = \ \ \Pr^{\max}_{\cM^*,s}(\Diamond \goal)
         \ \ \geqslant \ \ \Pr^{\psched}_{\cM^*,s}(\Diamond \goal)
         \ \ = \ \ z_s \ \ > \ \ 0
    $$
    This yields $y>0$, and therefore $x+py>0$.

    We are now in the position to apply Lemma \ref{lemma:rho-theta-LP}.
    Recall that we have
    $$
     \threshold \ \ \leqslant \ \
     \CExp{\usched} \ \ = \ \ \frac{\rho + p(rz + \zeta)}{x+pz}
     \qquad \text{and} \qquad
     \CExp{\tsched} \ \ = \ \ \frac{\rho + p(ry + \theta)}{x+py}
   $$
   and
   $$
     \theta - (\threshold {-} r) \cdot y \ \ \
     \geqslant \ \
     \zeta - (\threshold {-} r) \cdot z
   $$
   Part (a) of Lemma \ref{lemma:rho-theta-LP} yields
   $\CExp{\tsched} \geqslant \threshold$.
\Ende
\end{proof}

\begin{corollary}[Optimality of the generated scheduler]
  \label{corollary:threshold-algorithm-exact}
  If the threshold algorithm returns a scheduler $\sched$ with
  $\CExp{\sched}=\threshold$ then $\CExp{\max}=\threshold$ and
  $\sched$ is a reward-based scheduler that maximizes the conditional
  expectation.
\end{corollary}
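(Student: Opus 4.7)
My plan is to derive the corollary from Lemma~\ref{lemma:soundness-threshold-algorithm} together with the strict version of Lemma~\ref{lemma:rho-theta-LP}. The easy direction is that $\CExp{\max} \geqslant \CExp{\sched} = \threshold$ since $\sched$ is an admissible scheduler with $\Pr^{\sched}_{\cM,\sinit}(\Diamond \goal)>0$. The real content is to rule out $\CExp{\max} > \threshold$, and from that derive optimality of $\sched$.

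The strategy is a proof by contradiction: assume $\CExp{\max} > \threshold$ and mimic the inductive construction in the proof of Lemma~\ref{lemma:soundness-threshold-algorithm}, but tracking the \emph{strict} inequality. Concretely, I would construct schedulers $\tsched_\saturation, \tsched_{\saturation-1}, \ldots, \tsched_0$ with
$\CExp{\tsched_r} > \threshold$
for each $r$, where $\tsched_r$ uses $\maxsched$ on levels $R\geqslant \saturation$, the algorithm's actions $\action(\cdot,R)$ on levels $r < R < \saturation$, and the action $\psched^*$ at level $r$. The base case $r=\saturation$ uses Proposition~\ref{prop:saturation-maxsched}(b) to pick an optimal scheduler that agrees with $\maxsched$ beyond $\saturation$, giving $\CExp{\tsched_\saturation}=\CExp{\max}>\threshold$. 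The inductive step from $r{+}1$ to $r$ reproduces the calculation in Lemma~\ref{lemma:soundness-threshold-algorithm}: by Lemma~\ref{lemma:soundness-LP-threshold} the algorithm's choice $\psched^*$ satisfies
$\theta_s - (\threshold{-}r)y_s \ \geqslant\ \zeta_s - (\threshold{-}r)z_s$
for every state $s$, hence also after averaging with the weights $p_s/p$; then Lemma~\ref{lemma:rho-theta-LP}(b) (the strict version) converts $\CExp{\tsched_{r+1}} > \threshold$ into $\CExp{\tsched_r} > \threshold$. At the end $\tsched_0$ coincides with the scheduler $\sched$ returned by the algorithm, giving $\CExp{\sched} > \threshold$, which contradicts the hypothesis $\CExp{\sched}=\threshold$.

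From this contradiction we conclude $\CExp{\max}\leqslant \threshold$, and together with the easy direction $\CExp{\max}=\threshold=\CExp{\sched}$, so $\sched$ attains the maximal conditional expectation. The main obstacle is to make the induction compatible with the strict inequality: I must verify that the denominator positivity argument in Lemma~\ref{lemma:soundness-threshold-algorithm} (showing $x+py>0$) still applies, and that part (b) of Lemma~\ref{lemma:rho-theta-LP} really upgrades a strict inequality at level $r{+}1$ to a strict inequality at level $r$ even though the LP-based inequality between $\psched^*$ and the restriction of $\tsched_{r+1}$ is only non-strict. Both are already in place: positivity is proven exactly as in Lemma~\ref{lemma:soundness-threshold-algorithm} using $\CExp{\usched}>\threshold>0$, and Lemma~\ref{lemma:rho-theta-LP}(b) requires only the non-strict inequality on the $\theta,\zeta$-side, so the induction goes through without modification.
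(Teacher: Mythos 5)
Your proposal is correct and supplies exactly the argument the paper leaves implicit: the paper states the corollary without proof, relying on the reader to rerun the induction of Lemma~\ref{lemma:soundness-threshold-algorithm} with part~(b) of Lemma~\ref{lemma:rho-theta-LP} in place of part~(a) to get the strict version, and you have spelled out precisely why that upgrade is harmless (the LP-inequality remains non-strict, the positivity check for $x+py$ still goes through — in fact more easily, since $\zeta - (\threshold{-}r)z$ is now strictly positive). One cosmetic remark: the positivity of $\threshold$ plays no role in that denominator argument, so the phrase ``using $\CExp{\usched}>\threshold>0$'' could simply be ``using $\CExp{\usched}>\threshold$''; the rest is exactly what's needed.
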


The above corollary as well as the following observations about the
scheduler $\sched$ generated by the threshold algorithm will be crucial
for the computation of the maximal conditional expectation
$\CExp{\max}$.

Lemma \ref{lemma:soundness-LP-threshold} yields
for the function $\psched^*$ used to define the decisions of the
generated scheduler at level $r$:
$$
     \theta_{s,r,\psched^*} \ - \ (\threshold {-}r)\cdot y_{s,r,\psched^*}
     \ \ \ = \ \ \
     \max_{\psched} \
     \bigl( \
        \theta_{s,r,\psched} \ - \ (\threshold {-}r)\cdot y_{s,r,\psched}
     \bigr)
$$
where $\psched$ ranges over all functions
$\psched : S \setminus \{\goal,\fail\} \to \Act$ with
$\psched(s)\in \Act(s)$ for all states $s\in S$.
This implies the first part of the following lemma:

\begin{lemma}[Difference-property of the generated scheduler]
\label{lemma:property-sched-threshold-algorithm}
  Notations as before.
  The scheduler $\sched$ generated by the threshold algorithm for
  the threshold $\threshold$
  enjoys the following property. For each $r\in \{0,1,\ldots,\saturation\}$,
  each state $s\in S \setminus \{\goal,\fail\}$ and each
  function $\psched : S \setminus \{\goal,\fail\} \to \Act$ with
  $\psched(t)\in \Act(t)$ for all $t$ we have:
  $$
    \theta_{s,r} - (\threshold {-} r)\cdot y_{s,r}
    \ \ \ \geqslant \ \ \
    \theta_{s,r,\psched} - (\threshold {-} r)\cdot y_{s,r,\psched}
  $$
  Moreover, if $r\in \{0,1,\ldots,\saturation\}$ and
  $$
   \threshold_r \ \ = \ \
   \min \
   \Bigl\{ \
       r + \frac{\theta_{s,r}-\theta_{s,r,\alpha}}{y_{s,r}-y_{s,r,\alpha}}
       \ : \
       s \in S \setminus \{\goal,\fail\}, \ y_{s,r} > y_{s,r,\alpha}
       \
   \Bigr\}
  $$
  (where we put $\min \varnothing = +\infty$)
  then $\threshold_r \geqslant \threshold$ and
  for each value $\threshold^*$ with $\threshold^* \geqslant \threshold$
  we have: $\threshold^*$ satisfies the following condition
  \eqref{threshold-difference-property}
  for all states $s$ and functions $\psched$ if and only if
  $\threshold^* \leqslant \threshold_r$.
  \begin{equation}
     \label{threshold-difference-property}
    \theta_{s,r} - (\threshold^* {-} r)\cdot y_{s,r}
    \ \ \ \geqslant \ \ \
    \theta_{s,r,\psched} - (\threshold^* {-} r)\cdot y_{s,r,\psched}
    \tag{*}
 \end{equation}
  Furthermore,
  if $\threshold < \threshold^* \leqslant
      \min \{\threshold_R : r \leqslant R \leqslant \saturation\}$
  and $\sched^*$ is the scheduler that has been generated by the
  threshold algorithm for the lower
  bound $\threshold^*$
  then $(y_{s,R},\theta_{s,R}) = (y_{s,R}^*,\theta_{s,R}^*)$
  for all states $s\in S$ and $R\in \{r,\ldots,\saturation\}$
  where
  $$
     y_{s,R}^* \ \ = \ \ \Pr^{\residual{\sched^*}{r}}_s(\Diamond \goal),
     \quad
     \theta_{s,R}^* \ \ = \ \ \Exp{\residual{\sched^*}{r}}{s}
  $$
  These properties hold, no matter whether $\CExp{\sched} < \threshold$
  or $\CExp{\sched} = \threshold$ or $\CExp{\sched} > \threshold$.
\end{lemma}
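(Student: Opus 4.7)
The plan is to derive the three claims in sequence, the first being immediate, the second following from an LP re-optimality argument, and the third by a descending induction that uses the second.

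Part 1 (difference property) is immediate from Lemma \ref{lemma:soundness-LP-threshold}. Since at level $r$ the algorithm sets $\sched(\cdot,r)=\psched^{*}$ and records $y_{s,r}=y_{s,r,\psched^{*}}$, $\theta_{s,r}=\theta_{s,r,\psched^{*}}$, the equality case of Lemma \ref{lemma:soundness-LP-threshold} yields $x_s^{*}=\theta_{s,r}-(\threshold{-}r)y_{s,r}$, while its inequality case yields $x_s^{*}\geqslant \theta_{s,r,\psched}-(\threshold{-}r)y_{s,r,\psched}$ for every function $\psched$.

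For Part 2 I first establish $\threshold_r\geqslant \threshold$ by reading off each LP constraint of Figure \ref{LP-threshold}. Substituting $x_t^{*}=\theta_{t,r}-(\threshold{-}r)y_{t,r}$ to unfold the zero-reward constraints (where $R=r$), each constraint collapses to $\theta_{s,r}-\theta_{s,r,\alpha}\geqslant (\threshold{-}r)(y_{s,r}-y_{s,r,\alpha})$ for every $\alpha\in \Act(s)$. Whenever $y_{s,r}>y_{s,r,\alpha}$ this rearranges to $r+(\theta_{s,r}{-}\theta_{s,r,\alpha})/(y_{s,r}{-}y_{s,r,\alpha})\geqslant \threshold$, so $\threshold_r\geqslant \threshold$. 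For the equivalence ``\eqref{threshold-difference-property} for all $\psched$ iff $\threshold^{*}\leqslant \threshold_r$'', the forward direction follows by specializing \eqref{threshold-difference-property} to a function that coincides with $\psched^{*}$ outside $s$ and chooses $\alpha$ at $s$. For the converse I adapt the auxiliary MDP $\cN$ used in the proof of Lemma \ref{lemma:soundness-LP-threshold} to a variant $\cN'$ obtained by replacing $\threshold$ with $\threshold^{*}$ in the state-action rewards. The candidate vector $\theta_{s,r}-(\threshold^{*}{-}r)y_{s,r}$ is feasible for the $\cN'$-LP: the constraint at $\alpha$ reduces to $\theta_{s,r}-\theta_{s,r,\alpha}\geqslant (\threshold^{*}{-}r)(y_{s,r}-y_{s,r,\alpha})$, which follows from $\threshold^{*}\leqslant \threshold_r$ when $y_{s,r}>y_{s,r,\alpha}$, and from the $\threshold$-LP together with $\threshold^{*}\geqslant \threshold$ in the two remaining cases. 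Since $\psched^{*}$ achieves this candidate value by construction of $\theta_{s,r},y_{s,r}$, uniqueness of the LP solution identifies it with the true $\cN'$-LP value, and the last assertion of Lemma \ref{lemma:soundness-LP-threshold} applied to $\cN'$ yields \eqref{threshold-difference-property} for every $\psched$.

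For Part 3 I plan a descending induction on $R$, from $R=\saturation$ down to $R=r$. The base case is immediate because both runs use $\maxsched$ at level $\saturation$. In the inductive step, the hypothesis asserts $(y_{t,R'}^{*},\theta_{t,R'}^{*})=(y_{t,R'},\theta_{t,R'})$ for all $R'>R$, so the $\threshold^{*}$-LP at level $R$ shares all coefficient data with the $\threshold$-LP up to the substitution $\threshold \leadsto \threshold^{*}$. The Part 2 argument (valid because $\threshold^{*}\leqslant \threshold_R$) then shows that $\psched^{*}$ is also optimal for the $\threshold^{*}$-LP at level $R$ and that $\psched^{*}(s)$ is tight in that LP. The main obstacle, which I expect to be the crux, is that the $\threshold^{*}$-algorithm may pick a scheduler $\psched^{**}$ different from $\psched^{*}$ among the tight actions when maximizing goal-reachability in the corresponding sub-MDP. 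I plan to dispose of this as follows: using the tightness identity $\theta_{s,R}-\theta_{s,R,\alpha}=(\threshold^{*}{-}R)(y_{s,R}-y_{s,R,\alpha})$ for $\alpha$ tight in the $\threshold^{*}$-LP, together with the $\threshold$-LP constraint $\theta_{s,R}-\theta_{s,R,\alpha}\geqslant (\threshold{-}R)(y_{s,R}-y_{s,R,\alpha})$ and the strict inequality $\threshold<\threshold^{*}$, one concludes $y_{s,R,\alpha}\leqslant y_{s,R}$. Since $\psched^{*}(s)$ is itself a tight action attaining $y_{s,R,\psched^{*}(s)}=y_{s,R}$ and the algorithm selects $\psched^{**}(s)$ so as to maximize goal-reachability, one gets $y_{s,R,\psched^{**}(s)}=y_{s,R}$; the tightness identity then forces $\theta_{s,R,\psched^{**}(s)}=\theta_{s,R}$. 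These equalities propagate unchanged to states outside $T_{\psched^{**}}$ through the reachability formulae for zero-reward actions, closing the induction.
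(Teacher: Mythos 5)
Your Parts 1 and 2 follow the same route as the paper: Part 1 reads off Lemma~\ref{lemma:soundness-LP-threshold}, and Part 2 rebuilds the auxiliary total-reward MDP with $\threshold$ replaced by $\threshold^{*}$, showing that $\theta_{s,r}-(\threshold^{*}{-}r)y_{s,r}$ is a feasible vector attained by $\sched(\cdot,r)$, hence the unique LP optimum. You are actually a bit more careful than the paper here: the paper derives the constraint inequality from ``$\threshold\leqslant\threshold^{*}$'' alone, whereas the case $y_{s,r}>y_{s,r,\alpha}$ genuinely needs $\threshold^{*}\leqslant\threshold_r$, which you spell out. One small point you share with the paper: in the converse direction (``$\threshold^{*}>\threshold_r$ implies \eqref{threshold-difference-property} fails for some $(s,\psched)$'') you ``specialize to a function coinciding with $\psched^{*}$ except at $s$''. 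For a zero-reward witness $\alpha$ the quantities $\theta_{s,r,\alpha}$ and $\theta_{s,r,\psched}$ need not coincide (the latter propagates the modified decision through zero-reward loops), so the contrapositive is cleaner: if \eqref{threshold-difference-property} held for all $\psched$, then by Lemma~\ref{lemma:soundness-LP-threshold} the candidate vector would be the LP optimum, hence feasible, contradicting the LP constraint you exhibited at $(s,\alpha)$.

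Part 3 is where your proposal genuinely diverges from the paper's write-up, and in a useful way. The paper asserts $\cM^{*}_{R,\threshold}=\cM^{*}_{R,\threshold^{*}}$ and then concludes in one line, but that identity is not literally true at the boundary $\threshold^{*}=\threshold_R$: an action $\alpha$ with $y_{s,R}>y_{s,R,\alpha}$ that attains $\threshold_R$ becomes tight in the $\threshold^{*}$-LP without being tight in the $\threshold$-LP, so $\Act^{*}_{\threshold^{*}}(s)$ can strictly contain $\Act^{*}_{\threshold}(s)$. You correctly isolate the resulting issue (the $\threshold^{*}$-run may choose a different $\psched^{**}$) and show the extra actions are harmless because they strictly decrease reachability. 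Two small gaps remain in your sketch. First, the step ``the algorithm picks $\psched^{**}$ to maximize goal-reachability, hence $y_{s,R,\psched^{**}(s)}=y_{s,R}$'' tacitly uses that $(y_{t,R})_{t}$ equals $\Pr^{\max}_{\cM^{*}_{\threshold^{*}},\cdot}(\Diamond\goal)$; this needs the superharmonicity $y_{s,R}\geqslant y_{s,R,\alpha}$ over all $\alpha\in\Act^{*}_{\threshold^{*}}(s)$ (which you derive) together with achievability by $\psched^{*}$ (which requires $\Act^{*}_{\threshold}(s)\subseteq\Act^{*}_{\threshold^{*}}(s)$ — true, but worth stating). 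Second, the ``propagation through zero-reward reachability formulae'' for $\theta$ is more delicate than it sounds, because $\theta^{*}_{s,R}$ for $s\notin T_{\psched^{**}}$ is aggregated via $\psched^{**}$'s hitting distribution on $T_{\psched^{**}}$ while $\theta_{s,R}$ uses $\psched^{*}$'s on $T_{\psched^{*}}$. A shorter route, once $y^{*}_{s,R}=y_{s,R}$ is in hand, is to combine Part~1 applied to the $\threshold^{*}$-run, $x^{*,\threshold^{*}}_{s}=\theta^{*}_{s,R}-(\threshold^{*}{-}R)y^{*}_{s,R}$, with the Part~2 identification $x^{*,\threshold^{*}}_{s}=\theta_{s,R}-(\threshold^{*}{-}R)y_{s,R}$, and subtract to get $\theta^{*}_{s,R}=\theta_{s,R}$ directly. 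Your decomposition is sound, but this shortcut sidesteps the $T$-set bookkeeping entirely.
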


\begin{proof}
The first statement is obvious. To prove the second statement, we pick
some $r\in \{0,1,\ldots,\saturation\}$. Obviously, we have
$\threshold_r \geqslant \threshold$.

Recall that $y_{s,r,\alpha} = \sum_{t\in S} P(s,\alpha,t) \cdot y_{t,R}$
and $\theta_{s,r,\alpha}= \sum_{t\in S} P(s,\alpha,t) \cdot \theta_{t,R}$
where $R=\min\{\saturation,r{+}\rew(s,\alpha)\}$.

Let now $\threshold^*$ be any value
with $\threshold \leqslant \threshold^* \leqslant \threshold_r$.
We prove that \eqref{threshold-difference-property} holds.
As $\threshold \leqslant \threshold^*$ we obtain:
\begin{equation*}
  \label{theta-difference-max}
  \theta_{s,r} - (\threshold^*{-}r) \cdot y_{s,r}
  \ \ \ \geqslant \ \ \
  \theta_{s,r,\alpha} - (\threshold^*{-}r) \cdot y_{s,r,\alpha}
\end{equation*}
for all states $s \in S \setminus \{\goal,\fail\}$ and
actions $\alpha \in \Act(s)$.

As in the proof of Lemma \ref{lemma:soundness-LP-threshold}, we consider
the MDP $\cN$ with state space $S_{\cN}=S \cup \{\final\}$
and action set $\Act_{\cN} = \Act \cup \{\tau\}$ such that for all states
$s\in S$ and all actions $\alpha \in \Act(s)$:
\begin{center}
 \begin{tabular}{l}
    $P_{\cN}(s,\alpha,t) = P(s,\alpha,t)$ and $\rew_{\cN}(s,\alpha)=0$ \ \
    if $\rew(s,\alpha)=0$
    \\[1ex]

    $P_{\cN}(s,\alpha,\final)=1$ and
    $\rew_{\cN}(s,\alpha) =
     \theta_{s,\alpha} - (\threshold^* {-}r) \cdot y_{s,\alpha}$ \ \
    if $\rew(s,\alpha)>0$
    \\[1ex]

    $P_{\cN}(\goal,\tau,\final)= 1$ and
    $\rew_{\cN}(\goal,\tau)=r-\threshold^*$
    \\[1ex]

    $P_{\cN}(\fail,\tau,\final)=1$ and $\rew_{\cN}(\fail,\tau)=0$
 \end{tabular}
\end{center}
and $P_{\cN}(\cdot)=\rew_{\cN}(\cdot)=0$ in all remaining cases.
Note that state $\final$ is a trap in $\cN$ and there are no
other traps in $\cN$. Assumption \eqref{assumption:A2} yields
$\Pr^{\min}_{\cN,s}(\Diamond \final)=1$ for all states $s \in S_{\cN}$.
For $s\in S$ let
$$
  x_s^{\sched} \ \ = \ \ \theta_{s,r} - (\threshold^*-r) \cdot y_{s,r}
$$
and let $x_{\final}^{\sched}=0$.
For $\alpha \in \Act(s)$ with $\rew(s,\alpha)>0$
and $R=\min \{\saturation,r{+}\rew(s,\alpha)\}$ we have:
\begin{eqnarray*}
  x_s^{\sched} & \ \geqslant \ &
  \theta_{s,r,\alpha} - (\threshold^*-r) \cdot y_{s,r,\alpha}
  \\
  \\[0ex]
  & \ = \ &
  \rew_{\cN}(s,\alpha) \ + \
  P_{\cN}(s,\alpha,\final) \cdot x_{\final}^{\sched}
  \\
  \\[0ex]
  & = &
  \rew_{\cN}(s,\alpha) \ + \
  \sum_{t\in S_{\cN}} P(s,\alpha,t) \cdot x_t^{\sched}
\end{eqnarray*}
Thus, the vector $(x_s^{\sched})_{s\in S_{\cN}}$ provides a solution of the
following constraints:
$$
  x_s \ \ \geqslant \ \
  \rew_{\cN}(s,\alpha) + \sum_{s\in S_{\cN}} P_{\cN}(s,\alpha,t) \cdot x_t
  \qquad
  \text{for $s \in S$ and $\alpha \in \Act_{\cN}(s)$}
$$
and $x_{\final}=0$.
It is well known that the vector $(x_s^*)_{s\in S_{\cN}}$
where $x_s^*=\Exp{\sched}{\cN,s}(\accdiaplus \final)$
provides the unique solution of the above linear constraints
that minimizes $\sum_{s\in S_{\cN}} x_s$.
Hence:
$$
  x_s^{\sched} \ \ \geqslant \ \ \Exp{\sched}{\cN,s}(\accdiaplus \final)
$$
for all states $s$.
On the other hand, for $\sched(\cdot,r)$ viewed as a scheduler for
$\cN$ we have
$x_s^{\sched} = \Exp{\sched(\cdot,r)}{\cN,s}(\accdiaplus \final)$.
Thus,
$$
  x_s^{\sched} \ \ = \ \ \Exp{\sched}{\cN,s}(\accdiaplus \final)
$$
for all states $s$.
But then for each function $\psched : S \setminus \{\goal,\fail\} \to \Act$
with $\psched(t)\in \Act(t)$ for all $t$ (viewed as a scheduler for $\cN$)
we have:
$$
  x_s^{\sched} \ \ = \ \ \Exp{\psched}{\cN,s}(\accdiaplus \final)
$$
This yields:
$$
  \theta_{s,r} - (\threshold^*{-}r) \cdot y_{s,r}
  \ \ \ \geqslant \ \ \
  \theta_{s,r,\psched} - (\threshold^*{-}r) \cdot y_{s,r,\psched}
$$
for all states $s \in S \setminus \{\goal,\fail\}$, all
actions $\alpha \in \Act(s)$ and all functions $\psched$.
Hence, \eqref{threshold-difference-property} holds for any
value $\threshold^*$ with
$\threshold \leqslant \threshold^* \leqslant \threshold_r$.

It remains to show that if $\threshold^* > \threshold_r$
then \eqref{threshold-difference-property} does not
holds for at least one pair $(s,\psched)$.
This, however, is obvious as we can pick a pair $(s,\alpha)$
such that $y_{s,r} > y_{s,r,\alpha}$ and
$$
   \threshold_r \ \ \ = \ \ \
   r + \frac{\theta_{s,r}-\theta_{s,r,\alpha}}{y_{s,r}-y_{s,r,\alpha}}
$$
If $\threshold^* > \threshold_r$ then
$$
   \threshold^* \ \ \ > \ \ \
   r + \frac{\theta_{s,r}-\theta_{s,r,\alpha}}{y_{s,r}-y_{s,r,\alpha}}
$$
But then
$$
  \theta_{s,r} - (\threshold^*{-}r) \cdot y_{s,r}
  \ \ \ < \ \ \
  \theta_{s,r,\psched} - (\threshold^*{-}r) \cdot y_{s,r,\psched}
$$
This completes the proof of the second statement in
Lemma \ref{lemma:property-sched-threshold-algorithm}.
To prove the last statement of
Lemma \ref{lemma:property-sched-threshold-algorithm},
we suppose
$$
  \threshold \ \ < \ \ \threshold^* \ \ \leqslant \ \
  \min \ \bigl\{\ \threshold_R \ : \
                   r \leqslant R \leqslant \saturation \
  \bigr\}
$$
\eqref{threshold-difference-property} yields that for each level
$R\in \{r,\ldots,\saturation\}$,
the unique solution of the linear program in Figure \ref{LP-threshold}
is the vector $(x_s^R)_{s\in S}$ where
$x_s^R = \theta_{s,R}-(\threshold^*{-}r)y_{s,R}$.
But then the calls of the threshold algorithm for $\threshold$
and $\threshold^*$ deal level-wise with the same MDP
$\cM^*_R \eqdef \cM^*_{R,\threshold}=\cM^*_{R,\threshold^*}$ to derive
$y_{s,R}=y_{s,R}^*$ as the maximal probability to reach $\goal$ from $s$
in $\cM^*_R$ and $\theta_{s,R}=\theta_{s,R}^*$ as the expected total
reward under each scheduler for $\cM^*_R$.
\Ende
\end{proof}

\begin{remark}[MDP without zero-reward cycles]
\label{threshold-problem-MDP-without-zero-reward-cycles}
For the special case of an MDP without zero-reward cycles, the presented
algorithm for the threshold problem can be simplified as follows.
The initialization phase remains unchanged, but in the
treatment of the level $r=\saturation{-}1,\saturation{-}2,\ldots,1,0$,
the solution of the linear program in Figure \ref{LP-threshold}
can be computed directly without linear programming techniques.
For this, we consider an enumeration $s_1,s_2,\ldots,s_N$ of the states
in $S \setminus \{\goal,\fail\}$
such that $P(s_i,\alpha,s_j)>0$ and $\rew(s_i,\alpha)=0$ implies $i > j$.
Then, for $i=1,2,\ldots,N$, and each action $\alpha \in \Act(s_i)$ we put
\begin{eqnarray*}
  y_{s_i,r,\alpha} & \ = \ &
  \sum_{t\in S} P(s_i,\alpha,t) \cdot y_{t,R}
  \\
  \\[0ex]
  \theta_{s_i,r,\alpha} & = &
  \rew(s_i,\alpha) \cdot y_{s_i,r,\alpha} \ \ + \ \
      \sum_{t\in S} P(s_i,\alpha,t) \cdot \theta_{t,R}
\end{eqnarray*}
where $R = \min \, \bigl\{ \saturation, r+\rew(s_i,\alpha)\}$,
$y_{\goal,R}=1$ and
$y_{\fail,R}=\theta_{\goal,R}=\theta_{\fail,R}=0$.
Note that if $\rew(s_i,\alpha)=0$ then $R=r$ and
$P(s_i,\alpha,t) >0$ implies $t=s_j$ for some $j < i$.
Hence, the relevant values $y_{t,R}$ and $\theta_{t,R}$ have been computed
before.
Let
$$
  \Delta_{s_i,r} \ \ = \ \
  \max_{\alpha \in \Act(s_i)} \Delta_{s_i,r,\alpha}
  \qquad \text{where} \qquad
  \Delta_{s_i,r,\alpha} \ = \
  \theta_{s_i,r,\alpha} \ - \
              (\threshold {-}r ) \cdot y_{s_i,r,\alpha}
$$
We then pick an action $\alpha \in \Act(s_i)$ where
$\Delta_{s_i,r,\alpha} = \Delta(s_i,r)$ and
$y_{s_i,r,\alpha}\geqslant y_{s_i,r,\beta}$ for each
action $\beta \in \Act(s_i)$ with
$\Delta_{s_i,r,\beta} = \Delta(s_i,r)$.
We then define $(\action(s_i,r),y_{s_i,r},\theta_{s_i,r})$ as the triple
$(\alpha,y_{s_i,r,\alpha},\theta_{s_i,r,\alpha})$.
Then, the values $x_s^*= \Delta_{s,r,\alpha}$ obtained after treating all
states at level $r$ constitute the unique solution of the linear program in
Figure \ref{LP-threshold}.
Hence, the threshold problem in MDPs without zero-reward cycles is solvable
in $\cO(|S|\cdot |\Act| \cdot \saturation)$ steps.
\Ende
\end{remark}

Analogous techniques are applicable to establish a PSPACE upper bound
for the threshold problem in acyclic MDPs.
This will be shown in Lemma \ref{lemma:CExp-in-PSPACE}.

\newcommand{\mypoly}{g}
\newcommand{\thepoly}{f}
\newcommand{\Denom}{Denom}
\newcommand{\LogLength}{LogLen}

\subsection{Complexity of the threshold algorithm}

\label{appendix:size}
\label{appendix:complexity-threshold}

The time complexity of the threshold algorithm is dominated by
(i) the computation of the saturation point $\saturation$
as described in Section \ref{appendix:compute-saturation}
and (ii) the linear programs to compute feasible actions for the levels
$r=\saturation{-}1,\saturation{-}2,\ldots,1,0$.
The time complexity of step (i) is pseudo-polynomial in the size
of $\cM$ as outlined in Section \ref{appendix:compute-saturation}.
To prove the exponential time bound as stated in
Theorem \ref{thm:threshold-problem}, we show that the time
complexity of step (ii) is exponential in the size of $\cM$
and polynomial in the logarithmic length of the threshold value $\threshold$.
As linear programs are solvable in time polynomial in the number of variables
and the total logarithmic lengths of the coefficients in the linear
constraints, it suffices to establish a exponential bound for
the logarithmic lengths of the
probability values $y_{s,r}$ and the partial expectations $\theta_{s,r}$
that are computed in the threshold algorithm and used in the linear program
in Figure \ref{LP-threshold} in Appendix~\ref{appendix:threshold}.

Given a non-zero-rational number $x$, we refer to
the unique coprime integers $n$, $d$ with
$x=n/d$ and $d > 0$
as \emph{the} numerator ($n$) and \emph{the} denominator ($d$) of $x$.
For $x=0$ we say \emph{the} denominator is 1 and \emph{the} numerator is 0.

\begin{lemma}
  \label{integer-LES}
  Let $A = (a_{i,j})_{i,j=1,\ldots,m}$
  be a non-singular $m\times m$-matrix
  with integer values $a_{i,j}$
  whose logarithmic length is bounded by $K$.
  Let $b = (b_{i})_{i=1,\ldots,m}$  be an integer vector
  where the logarithmic length of the values $b_i$ is bounded by $L$.
  Let $x = (x_j)_{j=1,\ldots,n}$ be
  the unique solution of the linear equation
  system $Ax=b$.
  Then,
  the $x_j$'s are rational numbers and
  the logarithmic length of the
  least common multiple $\textsf{lcm}$
  of the denominators of the values $x_1,\ldots,x_m$
  is bounded by $m\log m + Km$.
  Moreover, for all $j\in \{1,\ldots,m\}$,
  the logarithmic length of $x_j \cdot \textsf{lcm} \in \Integer$
  is at most $m\log m + K(m{-}1)+L$.
\end{lemma}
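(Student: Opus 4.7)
My plan is to use Cramer's rule. Since $A$ is non-singular with integer entries, $\det(A) \in \Integer \setminus \{0\}$ and, for each $j \in \{1,\ldots,m\}$, the unique solution satisfies
\[
  x_j \ \ = \ \ \frac{\det(A_j)}{\det(A)}
\]
where $A_j$ denotes the matrix obtained from $A$ by replacing its $j$-th column with the vector $b$. Since $b$ has integer entries as well, $\det(A_j) \in \Integer$. In particular, each $x_j$ is rational and the denominator of $x_j$ (in lowest terms) divides $|\det(A)|$. Hence $\textsf{lcm}$ divides $|\det(A)|$, so it suffices to bound $|\det(A)|$ and each $|\det(A_j)|$.

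To bound the determinants I plan to use the Leibniz (permutation) expansion. For $\det(A)$, each summand is a product of $m$ integer entries of $A$, each of absolute value at most $2^K$, giving
\[
  |\det(A)| \ \ \leqslant \ \ m! \cdot (2^K)^m \ \ \leqslant \ \ 2^{m \log m + Km}.
\]
Since $\textsf{lcm}$ divides $|\det(A)|$, this yields the first bound $\log_2 \textsf{lcm} \leqslant m \log m + Km$. For $\det(A_j)$, the $j$-th column now consists of entries of absolute value at most $2^L$ rather than $2^K$, while the remaining $m{-}1$ columns still contribute factors bounded by $2^K$. Thus each summand in the Leibniz expansion is bounded by $(2^K)^{m-1}\cdot 2^L$ and
\[
  |\det(A_j)| \ \ \leqslant \ \ m! \cdot (2^K)^{m-1} \cdot 2^L
    \ \ \leqslant \ \ 2^{m \log m + K(m-1) + L}.
\]

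For the second claim I use the identity $x_j \cdot \det(A) = \det(A_j)$, which combined with the fact that $\textsf{lcm}$ divides $|\det(A)|$ gives
\[
  \bigl| x_j \cdot \textsf{lcm} \bigr|
  \ \ = \ \ \bigl|\det(A_j)\bigr| \cdot \frac{\textsf{lcm}}{|\det(A)|}
  \ \ \leqslant \ \ |\det(A_j)|,
\]
so the bound on $|\det(A_j)|$ established above transfers directly and yields the claimed bound $m \log m + K(m{-}1) + L$ on the logarithmic length of $x_j \cdot \textsf{lcm}$. There is no real obstacle here; the only care needed is the arithmetic with $\log m! \leqslant m \log m$ and the observation that $\textsf{lcm}$ divides $|\det(A)|$ (which is what allows the same Hadamard-type estimate to serve for both parts of the statement).
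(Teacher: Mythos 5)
Your proof is correct and takes essentially the same route as the paper: Cramer's rule together with the Leibniz permutation expansion to bound $|\det(A)|$ and $|\det(A_j)|$. You are slightly more explicit than the paper in recording that $\textsf{lcm}$ divides $|\det(A)|$ and in deriving the second bound via $|x_j \cdot \textsf{lcm}| = |\det(A_j)|\cdot\textsf{lcm}/|\det(A)| \leqslant |\det(A_j)|$, but these are points the paper leaves implicit rather than genuinely different ideas.
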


\begin{proof}
 Let $A_i$ denote the matrix resulting from $A$ by replacing the $i$-th column
 with the vector $b$. By Cramer's rule we have:
 $$
   x_i \ \ = \ \ \frac{\textrm{det}(A_i)}{\textrm{det}(A)}
 $$
 where $\textrm{det}(B)$ denotes the determinant of matrix $B$.
 By the Leibniz formula for determinants we have:
 $$
   \textrm{det}(A)
   \ \ = \ \
   \sum_{\sigma \in \textit{Perm}_m}
     \textit{sign}(\sigma) \cdot
     a_{1,\sigma(1)} \cdot a_{2,\sigma(2)} \cdot \ldots \cdot
     a_{m,\sigma(m)}
 $$
 where $\textit{Perm}_m$ denotes the set of permutations of the values
 $1,\ldots,m$ and $\textit{sign}(\sigma)$ the sign of $\sigma$.
 As $|a_{i,j}|\leqslant 2^K{-}1$ we get:
 $$
   |\textrm{det}(A)| \ \ < \ \
   m! \cdot 2^{Km}
   \ \ \leqslant \ \ 2^{m\log m + Km}
 $$
 where we use the fact that $m! \leqslant m^m = 2^{m \log m}$.
 Likewise, we get for the absolute value of the determinant of $A_i$:
 $$
   |\textrm{det}(A_i)| \ \ < \ \
   m! \cdot 2^{K(m-1)} \cdot 2^L
   \ \ \leqslant \ \ 2^{m\log m + K(m-1)+L}
 $$
 This yields the claim.
\Ende
\end{proof}

\begin{lemma}
  \label{rational-LES}
  Notations as in Lemma \ref{integer-LES}, except that
  the values $a_{i,j}$ and $b_{i}$ are rational
  and the logarithmic lengths of the numerators (resp.~denominators)
  of the values $a_{i,j}$ are bounded by $K_n$ (resp.~$K_d$),
  while the logarithmic lengths of the numerators (resp.~denominators)
  of the values $b_{i}$ are bounded by $L_n$ (resp.~$L_d$).
  Then, the values $x_i$  are rational numbers and the logarithmic length
  of the least common multiple $\textsf{lcm}$ of the denomimators
  is bounded by
  $m (\log m + K_n  + mK_d + L_d)$.
  The logarithmic length of $x_j \cdot \textsf{lcm}$
  is at most
  $m (\log m + K_n  + mK_d + L_d) + L_n + mK_d + L_d$.
\end{lemma}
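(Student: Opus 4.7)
The plan is to reduce the rational case to the integer case already settled in Lemma~\ref{integer-LES} by clearing denominators row by row. For each $i \in \{1,\ldots,m\}$, let $d_i$ denote the least common multiple of the denominators of $a_{i,1},\ldots,a_{i,m}$ and of $b_i$. Since the LCM of integers is bounded by their product, the logarithmic length of $d_i$ is at most $m K_d + L_d$. Multiplying the $i$-th row of $Ax=b$ by $d_i$ yields an equivalent linear system $\tilde A x = \tilde b$ with the \emph{same} rational solution $x$, whose entries $\tilde a_{i,j} = d_i a_{i,j}$ are integers of logarithmic length at most $K_n + m K_d + L_d$ (the numerator of $a_{i,j}$ contributes $K_n$ and the denominator divides $d_i$), and whose right-hand side values $\tilde b_i = d_i b_i$ are integers of logarithmic length at most $L_n + m K_d + L_d$. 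Non-singularity of $\tilde A$ follows because row scaling multiplies the determinant by the nonzero factor $\prod_i d_i$.

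Once in the integer setting, the plan is to invoke Lemma~\ref{integer-LES} with $K := K_n + m K_d + L_d$ and $L := L_n + m K_d + L_d$. Substituting into the bound $m\log m + Km$ gives
\[
  m\log m + m(K_n + m K_d + L_d) \ = \ m(\log m + K_n + m K_d + L_d),
\]
which is exactly the claimed bound on the logarithmic length of the common denominator $\mathsf{lcm}$. Substituting into $m\log m + K(m{-}1) + L$ yields $m\log m + (m{-}1)(K_n + m K_d + L_d) + L_n + m K_d + L_d$, which is dominated by the stated upper bound $m(\log m + K_n + m K_d + L_d) + L_n + m K_d + L_d$ on the logarithmic length of $x_j \cdot \mathsf{lcm}$.

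The only genuinely nontrivial choice is to scale row by row rather than multiplying the whole system by a single global denominator: a global LCM over all $a_{i,j}$ would have logarithmic length on the order of $m^2 K_d$, which when plugged into the integer bound would inflate the $K(m{-}1)$ term to something cubic in $m$ and exceed the stated bound. Doing the scaling locally keeps the new integer coefficients small enough that the integer bound, applied once, already produces the claimed estimate. Everything else is a mechanical substitution into Lemma~\ref{integer-LES}.
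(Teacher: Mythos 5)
Your proof is correct and follows essentially the same route as the paper's: both clear denominators row-by-row via the per-row LCM $\mathsf{lcm}_i$, obtain an equivalent integer system with coefficient bound $K = K_n + mK_d + L_d$ and right-hand-side bound $L = L_n + mK_d + L_d$, and then invoke Lemma~\ref{integer-LES}. Your closing remark explaining why row-wise (rather than global) denominator clearing is needed to stay within the stated bounds is a nice complement, though not part of the argument itself.
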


\begin{proof}
    We multiply the $i$-th row of $A$ and $b_i$
    with the least common multiple
    $\textsf{lcm}_i$  of
    the denominators of the values $a_{i,j}$ and the denominator of
    $b_i$.
    Let $A'x = b'$ be the resulting equation system.
    Obviously, $\textsf{lcm}_i < 2^{mK_d + L_d}$.
    Thus, the values $a_{i,j}'$ of the matrix $A'$
    are integers whose logarithmic length is bounded by
    $K = K_n + mK_d + L_d$.
    Likewise, $b = (b_i')_{i=1,\ldots,m}$ is an integer vector and the
    logarithmic length of the values $b_i'$ is bounded by
    $L = L_n + mK_d + L_d$.

    The unique solution $x$ of $Ax=b$ is also the unique
    solution of $A'x = b'$.
    Thus, the claim follows by Lemma \ref{integer-LES}.
\Ende
\end{proof}

\noindent
We now consider the probability values $y_{s,r}$ computed in
the threshold algorithm and show that their logarithmic lengths are
polynomially bounded in $\saturation$ and the size of $\cM$.
Analogous arguments can be provided for the partial expectations
$\theta_{s,r}$.
Let $y_{s,r}=y'_{s,r}/d_r$  where $d_r$  is the least
common multiple of the denominators of $y_{s,R}$
where $s$ ranges over all states in $S$ and $R$ over the levels in
$\{r,r{+}1,\ldots,\saturation\}$.
Hence, the values $y'_{s,r}$ are non-negative integers
and $d_r$ is a multiple of $d_{R}$ for all $r < R \leqslant \saturation$.
Let $k_r$  denote the maximal logarithmic length of the
values $y'_{s,r}$.
The goal is show that $k_r$ and the logarithmic lengths of the values $d_r$
are polynomially bounded in $\saturation$ and $\Size(\cM)$
for all $r \in \{0,1,\ldots,\saturation\}$.

For $r=\saturation$, the values
$y_{s,\saturation}=\Pr^{\maxsched}_{\cM,s}(\Diamond \goal)$
can be characterized as the unique solution of a
linear equation system $Ax=b$ where $A$ is a $m\times m$-matrix
of the form $I-P'$ with $m \leqslant |S|$.
Here $I$ is the identity matrix and
$P'$ arises from the transition probability matrix
of the Markov chain induced by $\maxsched$
by deleting certain columns and rows.
Using Lemma \ref{rational-LES} we get that the logarithmic length
of the values $y'_{s,\saturation}$ is in $\cO(\Size(\cM)^3)$.
Thus, $k_{\saturation}$ is in $\cO(\Size(\cM)^3)$.

Let now $r < \saturation$ and let $\psched_r^*$ denote the
memoryless deterministic scheduler chosen by the threshold algorithm
for level $r$.
For $t\in S \setminus \{\goal,\fail\}$, let
$\alpha_t = \psched_r^*(t)$ and
$R_t=\min \{\saturation, r{+}\rew(t,\alpha_t)\}$.
Let $T_r$ denote the set of states $t\in S$ such that either
$t\in \{\goal,\fail\}$ or $\rew(t,\alpha_t)>0$.
For the states $t\in T_r \setminus \{\goal,\fail\}$:
\begin{eqnarray*}
  y_{t,r} & \ =  \ &
  \sum_{u\in S} P(t,\alpha_t,u) \cdot y_{u,R_t}
\end{eqnarray*}
while for the states $s\in S \setminus T$:
$$
  y_{s,r} \ \ = \ \
  \sum_{t\in T_r}
    \Pr^{\psched_r^*}_{\cM,s}(\neg T_r \Until t) \cdot y_{t,r}
$$
As a consequence of Lemma \ref{rational-LES} we obtain a polynomial
bound for the logarithmic lengths of the reachability probabilities
$\Pr^{\psched_r^*}_{\cM,s}(\neg T_r \Until t)$.
More precisely:

\begin{corollary}
 \label{Until-log-length}
  There exists a polynomial~$\mypoly$ of degree $3$
  such that for all triples $(\sched,T,t)$ where $\sched$
  is a memoryless deterministic scheduler for $\cM$,
  $T\subseteq S$ and $t\in T$, there exists
  a positive integer $d$ and non-negative integers $n_s$ for $s\in S$ with
  $\Pr^{\sched}_{\cM,s}(\neg T \Until t) = n_s/d$
  and such that the logarithmic lengths of $n_s$ and $d$ are
  bounded by $\mypoly(\Size(\cM))$.
\end{corollary}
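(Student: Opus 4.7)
The plan is to express the reachability probabilities $q_s \eqdef \Pr^{\sched}_{\cM,s}(\neg T \Until t)$ as the unique solution of a linear equation system of size polynomial in $\Size(\cM)$ whose coefficients have polynomially bounded logarithmic length, and then invoke Lemma~\ref{rational-LES} to derive a common denominator and bounded-size numerators.

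First I would partition $S$ into three parts. The ``trivial'' states are $\{t\}$, $T \setminus \{t\}$, and the set $S_0$ of states $s \in S \setminus T$ from which no finite $\sched$-path reaches $t$ without first visiting $T \setminus \{t\}$; on these states $q_s \in \{0,1\}$ and can be read off by a graph analysis in time polynomial in $\Size(\cM)$. Let $S_? = S \setminus (T \cup S_0)$ and $m = |S_?|$. The values $(q_s)_{s \in S_?}$ are then the unique solution of the classical linear equation system
\[
  q_s \ = \ P(s,\sched(s),t) \ + \
  \sum_{u \in S_?} P(s,\sched(s),u) \cdot q_u,
  \qquad s \in S_?,
\]
which we write as $(I-P')q = b$. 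By definition of $S_?$, every $s \in S_?$ has a finite $\sched$-path that leaves $S_?$ via $\{t\} \cup (S \setminus (S_? \cup \{t\}))$, so the powers $(P')^k$ converge to the zero matrix and $I - P'$ is non-singular, making $q$ the unique solution.

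Second, I would apply Lemma~\ref{rational-LES} to this system. The entries of $P'$ and of $b$ are either $0$ or transition probabilities of $\cM$; thus both the numerators and denominators of the entries of $A \eqdef I - P'$ and of $b$ have logarithmic length at most $K \eqdef \Size(\cM)$ (the identity part of $A$ merely contributes an additive constant, which I absorb into $K$). Since $m \leqslant |S| \leqslant \Size(\cM)$, Lemma~\ref{rational-LES} yields a common denominator for $(q_s)_{s \in S_?}$ whose logarithmic length is bounded by $m(\log m + K + mK + K)$, and the corresponding integer numerators $q_s \cdot d$ satisfy the bound $m(\log m + K + mK + K) + K + mK + K$. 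Both expressions lie in $\cO(\Size(\cM)^3)$. To get a single common denominator $d$ for all $s \in S$, I would simply reuse this $d$ and put $n_t = d$, $n_s = 0$ for $s \in (T \setminus \{t\}) \cup S_0$, which preserves the polynomial bound.

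The main obstacle is purely bookkeeping: identifying $S_?$ so that $I - P'$ is non-singular (otherwise Lemma~\ref{rational-LES} does not apply directly), and collecting the constants and exponents carefully enough to exhibit a single polynomial $\mypoly$ of degree $3$ that works uniformly for every triple $(\sched, T, t)$. Once the partition into $S_0$, $T$, $S_?$ is in place, both points are routine.
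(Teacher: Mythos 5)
Your proposal is correct and follows the paper's own argument: the paper likewise reduces to the linear system $(I-P')q=b$ obtained by deleting the rows and columns for the ``trivial'' states and then applies Lemma~\ref{rational-LES} with $m,K_n,K_d,L_n,L_d\leqslant\Size(\cM)$ to get the degree-$3$ bound. You have merely made explicit the two bookkeeping points the paper leaves implicit --- the exact partition of $S$ into $\{t\}$, $T\setminus\{t\}$, $S_0$ and $S_?$, and the non-singularity of $I-P'$ via the substochasticity argument on $S_?$ --- both of which are correct.
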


\begin{proof}
  The claim follows by Lemma \ref{rational-LES} as the non-zero
  values $\Pr^{\sched}_{\cM,s}(\neg T \Until t)$
  can be obtained as the unique solution of a linear equation system
  of the form $Ax=b$ where $A = I- P'$ where $P'$ arises from the
  transition probability matrix $(P(s,\sched(s),t))_{s,t,\in S}$ by removing
  certain columns and rows.
  Note that $\Size(\cM)$ is an upper bound for the values
  $m,K_d,K_n,L_d, L_n$ in Lemma \ref{rational-LES},
  no matter how $\sched$, $T$ and $t$ are chosen.
\Ende
\end{proof}

\noindent
In what follows, let $A$ denote the least common multiple of the
denominators $P_d(s,\alpha,t)$
of the transition probabilities $P(s,\alpha,t)$ when
ranging over all triples $(s,\alpha,t)\in S \times \Act \times S$
where $P(s,\alpha,t)>0$.
Obviously, $A$ is bounded by the product of the values $P_d(s,\alpha,t)$.
Hence, the logarithmic length of $A$ is bounded by $\Size(\cM)$.
Let $B_r$ denote the least common multiple of the
denominators of the reachability probabilities
$\Pr^{\psched_r^*}_{\cM,s}(\neg T_r \Until t)$ when ranging over all
triples $(s,t,r)$ with $s\in S \setminus T_r$.
The logrithmic length of $B_r$ is bounded by
$|S|^2 \cdot \mypoly(\Size(\cM))$ where
$\mypoly$ is as in Corollary \ref{Until-log-length}.

For the states $t\in T_r$ we have:
$$
  y_{t,r} \ \ = \ \
  \sum_{u\in S} P(t,\alpha_t,u) \cdot \frac{y'_{u,R_t}}{d_{R_t}}
  \ \ = \ \
  \frac{Y_{t,r}}{A \cdot d_{r+1}}
$$
where
$$
   Y_{t,r} \ \ = \ \
   \sum_{u\in S}
        A \cdot P(t,\alpha_t,u) \cdot
        y'_{u,R_t} \cdot \frac{d_{r+1}}{d_{R_t}}
$$
Note that $d_{r+1}/d_{R_t}$ is an integer whose logarithmic length
is bounded by the logarithmic length of $d_{r+1}$.
Obviously, the values $A \cdot P(t,\alpha_t,u)$ are integers
with the logarithmic lengths at most $2\cdot \Size(\cM)$.
Hence, $Y_{t,r}\in \Nat$.
As $R_t \geqslant r{+}1$, the logarithmic length of the values
$y'_{u,R_t}$ is bounded by $k_{r+1}$.
Hence:
$$
  Y_{t,r} \ \ \leqslant \ \
  |S| \cdot  2^{2\Size(\cM)} \cdot 2^{k_{r+1}} \cdot d_{r+1}
$$
For the states $s\in S \setminus T_r$ we have:
$$
  y_{s,r} \ \ = \ \
  \frac{Y_{s,r}}{A \cdot B_r \cdot d_{r+1}}
  \qquad \text{where} \qquad
  Y_{s,r}
  \ \ = \ \
  \sum_{t\in T_r}
    B_r \cdot \Pr^{\psched_r^*}_{\cM,s}(\neg T_r \Until t)
   \cdot Y_{t,r}
$$
The values $B_r \cdot \Pr^{\psched_r^*}_{\cM,s}(\neg T_r \Until t)$
are integers with
$$
  \log\bigl( \ B_r \cdot \Pr^{\psched_r^*}_{\cM,s}(\neg T_r \Until t) \ \bigr)
  \ \ \leqslant \ \
  |S|^2 \cdot \mypoly\bigl(\Size(\cM)\bigr) \ + \
  \mypoly\bigl(\Size(\cM)\bigr)
$$
Here, we use Corollary \ref{Until-log-length}.
We get $Y_{s,r} \in \Nat$ and
$d_r \ \leqslant  \ A \cdot B_r \cdot d_{r+1}$.
Thus, there exists a polynomial $f$  such that:
$$
  \log(d_r) \ \ \leqslant \ \ f\bigl(\Size(\cM)\bigr) \ + \ \log(d_{r+1})
$$
and $d_{\saturation}\leqslant f(\Size(\cM))$.
Hence, the logarithmic lengths of the $d_r$'s are polynomially bounded
in $\saturation$ and the size of $\cM$ as we have:
$$
  \log(d_r) \ \ \leqslant \ \
  (\saturation {-} r{+}1)\cdot f\bigl(\Size(\cM)\bigr)
$$
The above shows that $y'_{s,r}\leqslant Y_{s,r}$ for all states $s\in S$
and there exists a bivariate polynomial $h$ such that
$$
  k_r \ \ \leqslant \ \ k_{r+1} \ + \ h\bigl(\saturation,\Size(\cM)\bigr)
$$
Recall that $k_r$ denotes the maximal logarithmic length of the
values $y'_{s,r}$.
We may assume w.l.o.g.~that
$k_{\saturation} \leqslant h(\saturation,\Size(\cM))$.
Then:
$$
  k_r \ \ \leqslant \ \
  (\saturation {-} r{+}1)\cdot h \bigl( \saturation,\Size(\cM) \bigr)
$$
Hence, $k_r$ and therefore the logarithmic lengths of the values $y_{s,r}$
are polynomially bounded in $\saturation$ and $\Size(\cM)$.
Analogously, we obtain that the logarithmic lengths of the
values $\theta_{s,r}$ are polynomially bounded in
$\saturation$ and $\Size(\cM)$.

We conclude that the representation of the linear programs
(in particular the bit-presentation of the coefficients) used in
the threshold algorithm
(Figure~\ref{LP-threshold} in Appendix~\ref{appendix:threshold})
are polynomially bounded in $\saturation$, $\Size(\cM)$ and the logarithmic
length of the threshold value $\threshold$.
As the logarithmic length of the saturation
point $\saturation$ is polynomial in $\Size(\cM)$
(see Appendix~\ref{appendix:compute-saturation}),
the time complexity of the threshold algorithm
(Section~\ref{sec:threshold} and Appendix~\ref{appendix:threshold})
is exponential.

\section{Computing an optimal scheduler and 
   the maximal conditional expectation}

\label{appendix:cexpmax}

We now address the task to compute $\CExp{\max}$.
As before, we suppose that \eqref{assumption:A1}, (A2) holds and that $\cM$ has no critical
schedulers.
By the results of the previous sections, we can formulate a simple
algorithm that successively calls the threshold algorithm
(see Section \ref{algo:threshold-cexp}) for computing the maximal conditional
expectation and an optimal scheduler.
The preprocessing is the same as for the threshold
algorithm, i.e., the algorithm computes the saturation point $\saturation$
and the deterministic memoryless scheduler $\maxsched$ that is known
to provide optimal decisions for all paths $\fpath$ with
$\rew(\fpath) \geqslant \saturation$.
Let $\SchedThreshold{\threshold}$ denote the scheduler that is generated
by calling the threshold algorithm for the threshold value $\threshold$.
\begin{enumerate}
\item []
    $\sched := \maxsched$ where
    $\maxsched$ is as in Lemma \ref{lemma:Sched-max-exp};
\item []
    {\tt REPEAT}
    \begin{enumerate}
    \item []
       $\threshold := \CExp{\sched}$; \ \
       $\sched := \SchedThreshold{\threshold}$; \
    \end{enumerate}
    {\tt UNTIL} $\threshold = \CExp{\sched}$;
\item []
    return $\threshold$ as the maximal conditional expectation
    and an optimal scheduler $\sched$
\end{enumerate}
Let $\sched_i$ and $\threshold_i$ denote the scheduler resp.~threshold value
at the end of
the $i$-th iteration of the repeat-loop.
Then, $\sched_{i}=\SchedThreshold{\threshold_{i-1}}$
and $\threshold_i = \CExp{\sched_i}$
where $\threshold_0=\CExp{\maxsched}$.
Hence, all calls of the threshold algorithm are successful in the sense
that $\CExp{\sched_{i}} \geqslant \threshold_{i-1}$.

Using Corollary
\ref{corollary:threshold-algorithm-exact} we get the following.
The above algorithm
generates a strictly increasing sequence of threshold values
that are the conditional expectations of some deterministic
reward-based scheduler that is memoryless from
$\saturation$ on. As the set of the latter is finite and
bounded by $K = |\Act|^{\saturation |S|}$
there is some $k \leqslant K$ such that
$$
  \threshold_0 \ < \ \threshold_1 \ < \ \threshold_2 \ < \
   \ldots \ < \threshold_{k-1} \ = \ \threshold_k \ = \ \CExp{\max}
$$
Hence, the algorithm terminates after at most $K$ iterations and
correctly returns $\CExp{\max}$ and an optimal scheduler.
Together with the exponential time complexity of the
threshold algorithm (see Section \ref{appendix:complexity-threshold}),
this yields a double-exponentially time bounded algorithm for the computation
of $\CExp{\max}$.

In the sequel, we present a more efficient algorithm for computing
$\CExp{\max}$ that runs in single exponential time.
The idea is an iterative scheduler-improvement approach that
relies on Lemma \ref{lemma:decision-for-s-r}
and is interleaved with calls of the
threshold algorithm to maintain and successively improve a left-closed and
right-open interval
$I = [A,B[$ with $\CExp{\max}\in I$ together
with a scheduler $\sched$ such that
$\CExp{\sched} \in I$.
Similar to the threshold algorithm,
the proposed algorithm for computing $\CExp{\max}$
operates level-wise and freezes optimal decisions for levels
$r=\saturation,\saturation{-}1,\saturation{-}2,\ldots,1,0$.
More precisely, when level $r$ has been treated then the current scheduler
$\sched$ is strongly optimal for all level $\geqslant r$ in the sense of
Definition \ref{def:strongly-optimal}.

Thanks to Corollary \ref{corollary:threshold-algorithm-exact},
it is possible that an optimal scheduler is found when treating some level
$r >0$, in which case the explicit treatment of levels $r{-}1,r{-}2,\ldots,0$
is skipped. (However, the levels $<r$  have been treated implicitly
in the calls of the threshold algorithms.)

\begin{definition}[Strong $r$-optimality, $r$-optimality]
\label{def:strongly-optimal}
{\rm
Let $\sched$ be a reward-based scheduler.
Scheduler $\sched$ is said to be strongly optimal from level
$r$ on, briefly called strongly $r$-optimal,
if for all states $s\in S$, all schedulers $\tsched$
and all $R\in \Nat$, $R \geqslant r$ we have:
$$
  \Exp{\residual{\sched}{R}}{s} -
  (\CExp{\max}-R) \cdot \Pr^{\residual{\sched}{R}}_s(\Diamond \goal)
  \ \ \ \geqslant \ \ \
  \Exp{\tsched}{s} -
  (\CExp{\max}-R) \cdot \Pr^{\tsched}_s(\Diamond \goal)
$$
Recall that $\residual{\sched}{R}$ denotes the residual scheduler
given by $(\residual{\sched}{R})(s,i) = \sched(s,R{+}i)$.
Hence, $\residual{(\residual{\sched}{R})}{i} = \residual{\sched}{(R{+}i)}$.

The notion of ``optimal from level $r$ on'' is used as before.
That is, scheduler $\sched$ is called optimal from level $r$ on,
briefly called $r$-optimal, if
$\CExp{\tsched}=\CExp{\max}$ implies $\CExp{\usched}=\CExp{\max}$ where
$\usched(\fpath)=\tsched(\fpath)$ for each finite path $\fpath$ with
$\rew(\fpath) < r$ and
$\usched(\fpath)=\sched(\fpath)$ for each finite path $\fpath$ with
$\rew(\fpath) \geqslant r$.

The notion ``strongly optimal'' will be used instead of strongly 0-optimal.
Likewise, a scheduler is briefly called optimal if it is 0-optimal.
\Ende
  }
\end{definition}

Clearly, $\sched$ is strongly $r$-optimal if and only if the following
conditions hold for all states $s\in S$, all schedulers $\tsched$
and all $R\in \Nat$, $R \geqslant r$:
\begin{eqnarray*}
  R + \frac{\Exp{\residual{\sched}{R}}{s} -\Exp{\tsched}{s}}
           {\Pr^{\residual{\sched}{R}}_s(\Diamond \goal)-
            \Pr^{\tsched}_s(\Diamond \goal)}
  \ \ \geqslant \ \ \CExp{\max}
  & \ \ &
  \text{if $\Pr^{\residual{\sched}{R}}_s(\Diamond \goal) >
            \Pr^{\tsched}_s(\Diamond \goal)$}
  \\
  \\[0ex]
  R + \frac{\Exp{\residual{\sched}{R}}{s} -\Exp{\tsched}{s}}
           {\Pr^{\residual{\sched}{R}}_s(\Diamond \goal)-
            \Pr^{\tsched}_s(\Diamond \goal)}
  \ \ \leqslant \ \ \CExp{\max}
  & \ \ &
  \text{if $\Pr^{\residual{\sched}{R}}_s(\Diamond \goal) <
            \Pr^{\tsched}_s(\Diamond \goal)$}
  \\
  \\[0ex]
  \Exp{\residual{\sched}{R}}{s} \ \geqslant \ \Exp{\tsched}{s}
  \hspace*{2cm}
  & &
  \text{if $\Pr^{\residual{\sched}{R}}_s(\Diamond \goal)
            = \Pr^{\tsched}_s(\Diamond \goal)$}
\end{eqnarray*}
Hence, by Lemma \ref{lemma:decision-for-s-r},
each strongly $r$-optimal scheduler $\sched$ is
$r$-optimal. The reverse direction does not hold in general as
some pairs $(s,r)$ might not be reachable under $\sched$.

The existence of strongly optimal schedulers is ensured by the
following lemma:

\begin{lemma}
  \label{lemma:SchedThreshold-CExpmax}
  The scheduler $\SchedThreshold{\CExp{\max}}$ is strongly optimal.
  In particular, if $\CExp{\SchedThreshold{\threshold}} = \threshold$ then
  $\SchedThreshold{\threshold}$ is strongly optimal.
\end{lemma}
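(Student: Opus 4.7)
The plan is to derive strong optimality of $\sched := \SchedThreshold{\threshold}$ for $\threshold = \CExp{\max}$ from the ``difference property'' of Lemma~\ref{lemma:property-sched-threshold-algorithm} via a Bellman-style argument in an auxiliary total-reward MDP. I first observe that by Corollary~\ref{corollary:threshold-algorithm-exact} the hypothesis $\CExp{\SchedThreshold{\threshold}} = \threshold$ of the second statement forces $\threshold = \CExp{\max}$, so both statements reduce to the same claim about $\sched = \SchedThreshold{\CExp{\max}}$. Setting $x_{s,R} \eqdef \theta_{s,R} - (\threshold{-}R)\cdot y_{s,R}$ for $s \in S$ and $0 \leqslant R \leqslant \saturation$, the construction of the threshold algorithm together with Lemma~\ref{lemma:soundness-LP-threshold} (applied with $\threshold = \CExp{\max}$) shows that the vector $(x_{s,R})_{s,R}$ is the unique solution of the level-$R$ linear program, and hence satisfies the one-step inequality $x_{s,R} \geqslant \sum_{t\in S} P(s,\alpha,t)\cdot x_{t,R'}$ (with $R' = \min\{\saturation,R{+}\rew(s,\alpha)\}$) for every $\alpha \in \Act(s)$, with equality when $\alpha = \sched(s,R)$.

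To extend this one-step inequality into strong optimality against arbitrary schedulers, I would interpret the right-hand side of the defining inequality as an expected total reward in an auxiliary MDP $\cN$ on state space $S \times \{0,1,\ldots,\saturation\} \cup \{\final\}$ that simulates $\cM$ while tracking the accumulated reward capped at $\saturation$, equips $\<\goal,R\>$ with a single $\tau$-transition to $\final$ of reward $R - \CExp{\max}$, and equips $\<\fail,R\>$ with a $\tau$-transition of reward $0$; with this encoding one verifies $\Exp{\tsched}{\cN,\<s,R\>}(\accdiaplus \final) = \Exp{\tsched}{\cM,s} - (\CExp{\max}{-}R)\cdot \Pr^{\tsched}_{\cM,s}(\Diamond \goal)$ for every scheduler $\tsched$ (whose behaviour in $\cN$ is canonically determined by the annotated states). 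The Bellman inequality established above then says that the vector $(x_{s,R})$ (with boundary values $x_{\goal,R}=R{-}\CExp{\max}$, $x_{\fail,R}=0$, $x_{\final}=0$) is super-harmonic for the maximising Bellman operator of $\cN$; since \eqref{assumption:A2} ensures $\final$ is reached almost surely from every state with bounded expected reward, the standard verification theorem for finite-state total-reward MDPs (as used in Lemma~\ref{lemma:Sched-max-exp} and Lemma~\ref{diff-Sched-all}) yields $x_{s,R} \geqslant \Exp{\tsched}{s} - (\CExp{\max}{-}R)\cdot \Pr^{\tsched}_s(\Diamond \goal)$ for every $\tsched$, which is exactly the defining inequality of strong $R$-optimality.

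The main obstacle, I expect, is the cap $R' = \min\{\saturation, R{+}\rew(s,\alpha)\}$: once the accumulated reward crosses $\saturation$ the threshold algorithm freezes the values $\theta_{\cdot,\saturation}$ and $y_{\cdot,\saturation}$, so I must justify that $x_{t,R}=x_{t,\saturation}$ for $R \geqslant \saturation$ is the correct super-harmonic value in $\cN$, i.e.\ that $\maxsched$ attains $\sup_{\tsched}\bigl(\Exp{\tsched}{t} - (\CExp{\max}{-}R)\Pr^{\tsched}_t(\Diamond \goal)\bigr)$ for every $R \geqslant \saturation$. This is precisely the content I would extract from Lemma~\ref{diff-Sched-all} (which gives the inequality with $\CExp{\ub}{-}\saturation$ in place of $\CExp{\max}{-}R$) together with the facts that $\CExp{\ub} \geqslant \CExp{\max}$, $y_s = \Pr^{\max}_s(\Diamond \goal) \geqslant \Pr^{\tsched}_s(\Diamond \goal)$, and, when $R \geqslant \saturation > \CExp{\max}$, the factor $(\CExp{\max}{-}R)$ is non-positive, allowing the $\saturation$-bound to be transferred cleanly; Proposition~\ref{prop:saturation-maxsched-appendix}(a) then ensures that the residual replacement by $\maxsched$ above $\saturation$ cannot decrease the quantity, closing the induction on $R = \saturation,\saturation{-}1,\ldots,0$ that drives the Bellman verification.
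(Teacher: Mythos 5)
Your proposal is correct and follows essentially the same route as the paper's (very terse) proof: both reduce the second statement to the first via Corollary~\ref{corollary:threshold-algorithm-exact}, both invoke the one-step ``difference property'' of Lemma~\ref{lemma:property-sched-threshold-algorithm} / Lemma~\ref{lemma:soundness-LP-threshold} with $\threshold = \CExp{\max}$ to get the local Bellman inequality, and both close the argument by an induction on the level $R = \saturation, \saturation{-}1,\ldots,0$. The paper leaves that induction completely unspelled; you fill it in by packaging it as a verification-theorem argument for a total-reward MDP of the same shape as the one built in the proof of Lemma~\ref{diff-Sched-all}, which is a clean way to organise the step from one-step memoryless deviations $\psched$ to arbitrary schedulers $\tsched$. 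Your handling of the base case $R \geqslant \saturation$ is also sound: you only need $\CExp{\ub}{-}\saturation \geqslant \CExp{\max}{-}R$ (which holds since $\CExp{\ub}\geqslant\CExp{\max}$ and $R\geqslant\saturation$) together with $y_s = \Pr^{\max}_s(\Diamond\goal) \geqslant \Pr^{\tsched}_s(\Diamond\goal)$ to transfer the Lemma~\ref{diff-Sched-all} bound, and the intermediate remark ``when $\saturation > \CExp{\max}$'' is not actually required for that inference. No genuine gap.
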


\begin{proof}
 The first part
 follows by the fact that strong $r$-optimality of a reward-based
 scheduler $\sched$
 is equivalent to the statement
 that for all states $s$, $R\in \Nat$ with $R \geqslant r$ and
 all schedulers $\tsched$:
 $$
   \Exp{\residual{\sched}{R}}{s} -
   (\CExp{\max}-R)\cdot \Pr^{\residual{\sched}{R}}_{s}(\Diamond \goal)
   \ \ \ \geqslant \ \ \
   \Exp{\tsched}{s} -
   (\CExp{\max}-R)\cdot \Pr^{\tsched}_{s}(\Diamond \goal)
 $$
 The claim
 then follows by
 Lemma \ref{lemma:property-sched-threshold-algorithm} using an induction
 on $i=\saturation-R$.

 To see the second part, we suppose
  $\sched = \SchedThreshold{\threshold}$ and
  $\CExp{\sched}=\threshold$. Then, $\threshold = \CExp{\max}$
 by Corollary \ref{corollary:threshold-algorithm-exact}.
 Thus, $\sched$ is strongly optimal.
\Ende
\end{proof}

\tudparagraph{1ex}{{\it Initialization.}}
The algorithm starts with the scheduler
$\sched= \SchedThreshold{\CExp{\maxsched}}$ where $\maxsched$ is
as in Lemma \ref{lemma:Sched-max-exp}.
If $\CExp{\sched}=\CExp{\maxsched}$ then the algorithm immediately
terminates on the basis of Corollary \ref{corollary:threshold-algorithm-exact}.
Suppose now that $\CExp{\sched} > \CExp{\maxsched}$.
The initial interval is $I = [A,B[$ where $A = \CExp{\sched}$
and $B$ is a strict upper bound for $\CExp{\max}$,
e.g.,~the upper bound $\CExp{\ub}$ (plus some small constant) 
obtained by the preprocessing explained
in Section \ref{sec:finiteness} and 
Appendix~\ref{sec:upper-bound}.

\tudparagraph{1ex}{{\it Level-wise scheduler improvement.}}
The algorithm successively determines
optimal decisions for the levels
$r=\saturation{-}1,\saturation{-}2,\ldots,1,0$.
It maintains a left-closed and right-open interval $I = [A,B[$
and a reward-based scheduler $\sched$
satisfying the invariance $\CExp{\max}\in I$ and $\CExp{\sched}\in I$.
The scheduler $\sched$ has been obtained by the last successful run
of the threshold algorithm for the strict bound ``$> \threshold$''
applied to some threshold $\threshold \leqslant A$ in the previous interval.
Thus, $\sched=\SchedThreshold{\threshold}$ and $\CExp{\sched} > \threshold$.

The treatment of level $r$ consists of a sequence of scheduler improvement
steps where at the same time the interval $I$ is replaced with proper
sub-intervals. The scheduler improvements are obtained by calls of the
threshold algorithm ``does $\CExp{\max} \geqslant \threshold$ hold?''
for some appropriate value $\threshold \in I$.
If the scheduler $\sched$ generated by the threshold algorithm
enjoys the property
$\CExp{\sched}=\threshold$ then
the algorithm terminates
and returns $\threshold$ as the maximal condition expectation and $\sched$
as an optimal scheduler
(see Corollary \ref{corollary:threshold-algorithm-exact}).

Besides the decisions of $\sched$ (i.e., the actions
$\sched(s,R)$ for all state-reward pairs
$(s,R)$ where $s\in S \setminus \{\goal,\fail\}$ and
$R\in \{0,1,\ldots,\saturation\}$) we shall also need the values
\begin{eqnarray*}
   y_{s,r} & \ \ = \ \ &
   \Pr^{\residual{\sched}{r}}_{\cM,s}(\Diamond \goal)
   \ \ \ = \ \ \
   \Pr^{\max}_{\cM^*,s}(\Diamond \goal)
   \\[1.5ex]
  \theta_{s,r} & \ \ = \ \ &
  \Exp{\residual{\sched}{r}}{\cM,s}(\accdiaplus \goal)
  \ \ \ = \ \ \
  x_s^* \ + \ (\threshold {-} r) \cdot y_{s,r}
\end{eqnarray*}
that have been computed in the threshold algorithm
(where $x_s^*$ refers to the unique solution of the linear program
in Figure \ref{LP-threshold} and $\cM^* = \cM^*_{r,\threshold}$ is the
MDP defined as in Section \ref{algo:threshold-cexp}).
The algorithm also stores the optimal actions and the values
$\theta_{s,R}$ and $y_{s,R}$ for $s\in S$ and all levels
$R \in \{r{+}1,\ldots,\saturation\}$ that have been treated before.
These values
can be reused in the calls of the threshold algorithms.
That is, the calls of the threshold algorithm that are invoked
in the scheduler-improvement steps at level $r$ can skip levels
$\saturation,\saturation{-}1,\ldots,r{+}1$ and only need to process
levels $r,r{-}1,\ldots,1,0$.

For the current level $r$, the algorithm also computes
for each state $s\in S \setminus \{\goal,\fail\}$ and
each action $\alpha \in \Act(s)$ the values:
\begin{eqnarray*}
   y_{s,r,\alpha} & = & \sum_{t\in S} P(s,\alpha,t) \cdot y_{t,R}
   \\
   \\[0ex]
   \theta_{s,r,\alpha} & = &
   \rew(s,\alpha) \cdot y_{s,r,\alpha} \ + \
   \sum_{t\in S} P(s,\alpha,t) \cdot \theta_{t,R}
\end{eqnarray*}
where $R = \min \{ \saturation, r+\rew(s,\alpha)\}$.
Thus, $R=r$ if $\rew(s,\alpha)=0$.%
\footnote{Note that
  $y_{s,r,\alpha} \ = \
   \Pr^{\sched_{s,r,\alpha}}_{\cM,s}(\Diamond \goal)$ and
  $\theta_{s,r,\alpha}  \ = \
   \Exp{\sched_{s,r,\alpha}}{\cM,s}(\accdiaplus \goal)$,
 where $\sched_{s,r,\alpha}$ denotes the unique scheduler
 that agrees with $\residual{\sched}{r}$, except that it assigns
 action $\alpha$ to state $s$ (viewed as a path of length 0).
 That is, $\sched_{s,r,\alpha}(s)=\alpha$ and
 $\sched_{s,r,\alpha}(\fpath)=\residual{\sched}{r}(\fpath)$
 for all finite paths of length at most 1.}

\tudparagraph{1ex}{{\it Scheduler-improvement step.}}
Let $r$ be the current level,
$I = [A,B[$ the current interval
and $\sched$ the current scheduler with $\CExp{\max}\in I$.
At the beginning of the scheduler-improvement step we have
$\CExp{\sched} =A$.
Let
\begin{eqnarray*}
  \cI_{\sched,r} & \ = \ &
  \Bigl\{ \
     r + \frac{\theta_{s,r} - \theta_{s,r,\alpha}}{y_{s,r}-y_{s,r,\alpha}}
     \ : \
     s \in S \setminus \{\goal,\fail\}, \
     \alpha \in \Act(s), \
     y_{s,r} > y_{s,r,\alpha}
     \
  \Bigr\}
  \\
  \\[0ex]
  \cI^{\uparrow}_{\sched,r} & = &
  \bigl\{ \ d \in \cI_{\sched,r} \ : \
            d \ \geqslant \ \CExp{\sched}
  \bigr\}
  \hspace*{1cm}
  \cI^{B}_{\sched,r} \ \ = \ \
   \bigl\{ \ d \in \cI_{\sched,r} \ : \
             d < B \
  \bigr\}
\end{eqnarray*}
If $\cI^{B}_{\sched,r} = \varnothing$ then no further scheduler-improvements
at level $r$ are possible, i.e., $\sched$ is strongly $r$-optimal
(see Lemma \ref{lemma:freeze-level-r}).
In this case:
\begin{itemize}
  \item
    If $r=0$ then the algorithm terminates with the
    answer $\CExp{\max}=\CExp{\sched}$ and $\sched$ as an optimal scheduler.
  \item
    If $r > 0$ then the algorithm goes to the next level $r{-}1$
    and performs the scheduler-improvement step for $\sched$ at level
    $r{-}1$.
\end{itemize}
Suppose now that $\cI^B_{\sched,r}$ is nonempty.
Let $\cK = \cI^{\uparrow}_{\sched,r} \cup \{\CExp{\sched}\}$.
The algorithm seeks for the largest
value $\threshold' \in \cK \cap I$
such that $\CExp{\max}\geqslant \threshold'$.
More precisely, it successively
calls the threshold algorithm for the threshold value
$\max (\cK \cap I)$ and performs the following steps after each call
of the threshold algorithm.
Let $\sched' = \SchedThreshold{\threshold'}$
be the scheduler that has been generated by the
threshold algorithm $\threshold' = \max (\cK \cap I)$.
\begin{itemize}
\item
 Suppose the result of the threshold algorithm is ``no''.
 If $\Pr^{\sched'}_{\cM,\sinit}(\Diamond \goal)$ is positive and
 $\CExp{\sched'} \leqslant \CExp{\max} < \threshold'$, then:
 \begin{itemize}
 \item
   If $\CExp{\sched'} \leqslant A$ then the algorithm
   refines $I$ by putting $B:=\threshold'$.%
\footnote{%
Note
    that the case $\CExp{\sched'} < \CExp{\sched}$ is possible,
    although $\sched'$ has been generated by the threshold algorithm
    for a threshold value $\threshold'$ that is larger than the
    threshold $\threshold$ used for the generation of $\sched$.}
 \item
    If $\CExp{\sched'} > A$
    then the algorithm refines $I$ by putting $A := \CExp{\sched'}$,
    $B := \threshold'$
    and adds $\CExp{\sched'}$ to $\cK$
    (Note that then $\CExp{\sched'}\in \cK \cap I$, while
     $\CExp{\sched} \in \cK \setminus I$.)
 \end{itemize}
\item
  Suppose now that the result of the threshold algorithm is ``yes'',
  i.e., $\CExp{\sched'} \geqslant \threshold '$.
  The algorithm terminates if $\CExp{\sched'} = \threshold'$,
  in which case $\sched'$ is optimal
  (Corollary \ref{corollary:threshold-algorithm-exact}).
  Otherwise, i.e., if $\CExp{\sched'} > \threshold'$, then the algorithm
  aborts the loop by putting $\cK:= \varnothing$,
  refines the interval $I$ by putting $A := \CExp{\sched'}$,
  updates the current scheduler by setting $\sched := \sched'$
  and repeats the scheduler-improvement step.
\end{itemize}
The scheduler $\sched'$ of the last case
($\CExp{\sched'} \geqslant \threshold '$)
will be called the \emph{outcome} of the scheduler-improvement step 
for $\sched$.
Lemma \ref{lemma:scheduler-improvement} (see below)
shows that the scheduler-improvement step
indeed terminates and finds some scheduler $\sched'$ such that:
\begin{center}
    $\CExp{\sched}<\CExp{\sched'}$
    \ \ or \ \ $\CExp{\sched}=\CExp{\sched'} = \CExp{\max}$
\end{center}
Let $\sched_{1},\sched_{2},\sched_{3}\ldots$ be the sequence of
schedulers $\sched$ where the algorithm executes
a scheduler-improvement step for $\sched$.
With $\threshold_1=\CExp{\maxsched}$ and
$\sched_1=\SchedThreshold{\threshold_1}$,
$\sched_{i+1}$ is the outcome of the scheduler-improvement step for
$\sched_{i}$ for $i \geqslant 1$. Furthermore, let $\threshold_{i+1}$
denote the threshold value such that $\sched_{i+1}$ is generated by
calling the threshold algorithm for $\threshold_{i+1}$.
By the choice of the threshold values, we have:
$$
   \CExp{\sched_{i+1}}  \ > \ \threshold_{i+1}  \ \geqslant \
   \CExp{\sched_i}
   \qquad \text{or} \qquad
   \CExp{\max} \ =  \ \CExp{\sched_{i+1}} \ = \ \threshold_{i+1}
   \ \geqslant  \
   \CExp{\sched_i}
$$
All schedulers generated by the algorithm have the form
$\SchedThreshold{\threshold}$ for some threshold value $\threshold$.
In particular, they are deterministic and reward-based
schedulers with fixed values for the last level $\saturation$.
The total number of such schedulers is bounded by
$\md^{\saturation}$ where
$\md$ denotes the total number of memoryless deterministic
schedulers of $\cM$. (Thus, $\md \leqslant |\Act|^{|S|}$.)
Hence, there is some $k\in \Nat$ with
$1\leqslant k \leqslant \md^{\saturation}{+}1$
such that:
$$
  \CExp{\sched_{1}}   \ < \
  \CExp{\sched_{2}}   \ < \
  \CExp{\sched_{3}}   \ < \  \ldots  \ < \
  \CExp{\sched_{k-1}} \ < \ \
  \CExp{\sched_{k}}   \ \leqslant  \
  \CExp{\sched_{k+1}} \ = \  \CExp{\max}
$$
This argument yields a proof sketch for the termination and soundness.
We now provide a more careful analysis of the algorithm with respect to
the correctness and the complexity.
Indeed, we will show that the total number of
scheduler-improvement steps is in
$\cO(\saturation \cdot \md \cdot |S|\cdot |\Act|)$.

\begin{theorem}[Soundness and complexity]
  \label{thm:soundness-CExpmax-algo}
  The above algorithm correctly computes
  $\CExp{\max}$ and a scheduler $\sched$ with $\CExp{\sched}=\CExp{\max}$.
  Its time complexity is exponential in the size of the
  MDP.
\end{theorem}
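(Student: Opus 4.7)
The plan is to establish loop invariants on $(r, I, \sched)$, prove termination of the inner scheduler-improvement loop, show that exiting a level with $\cI^{B}_{\sched,r} = \varnothing$ certifies strong $r$-optimality of the current scheduler, and finally count the total number of calls to the threshold algorithm. At the start of every scheduler-improvement step for level $r$ I would maintain the invariants (i)~$\CExp{\max} \in I = [A,B[$, (ii)~$A = \CExp{\sched}$, and (iii)~$\sched$ is strongly $(r{+}1)$-optimal, so that by Lemma~\ref{lemma:property-sched-threshold-algorithm} the stored values $y_{s,R}$ and $\theta_{s,R}$ for $R > r$ coincide with the ones any threshold-algorithm call at thresholds in $I$ would compute for those levels. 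Initialisation is valid because $B = \CExp{\ub}{+}1 > \CExp{\max}$ and $\SchedThreshold{\CExp{\maxsched}}$ is strongly $\saturation$-optimal by Lemma~\ref{lemma:SchedThreshold-CExpmax} applied at level $\saturation$, where the frozen choices are those of $\maxsched$.

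Termination of the inner loop follows a direct case analysis. Each iteration tries $\threshold' := \max(\cK \cap I)$ and calls the threshold algorithm. A ``no'' answer forces $\CExp{\max} < \threshold'$, so $B := \threshold'$ and, regardless of whether $A$ is lifted to $\CExp{\sched'}$, the maximum element $\threshold'$ is removed from $\cK \cap I$, strictly shrinking this set. A ``yes'' answer with $\CExp{\sched'} = \threshold'$ terminates the whole algorithm optimally by Corollary~\ref{corollary:threshold-algorithm-exact}; with $\CExp{\sched'} > \threshold'$ we exit the inner loop having strictly improved $\CExp{\sched}$. Since $|\cI_{\sched,r}| = O(|S|\cdot|\Act|)$, at most $O(|S|\cdot|\Act|)$ threshold calls occur in any single inner loop.

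The main obstacle is proving that $\cI^{B}_{\sched,r} = \varnothing$ certifies strong $r$-optimality of $\sched$, so that level $r$ can be frozen. I would apply the magic constraint~\eqref{magic-constraint} case by case in $\alpha \in \Act(s)$: when $y_{s,r} > y_{s,r,\alpha}$, emptiness of $\cI^{B}_{\sched,r}$ yields $r + (\theta_{s,r}-\theta_{s,r,\alpha})/(y_{s,r}-y_{s,r,\alpha}) \geqslant B > \CExp{\max}$, so by \eqref{magic-constraint} the decision of $\sched$ at $(s,r)$ beats $\alpha$; the case $y_{s,r} = y_{s,r,\alpha}$ reduces to a direct comparison of partial expectations (and is enforced by the linear program inside the last threshold call); and the case $y_{s,r} < y_{s,r,\alpha}$ is excluded because the threshold call producing $\sched$ selected, at level $r$, a scheduler that maximises the goal probability in the side-MDP $\cM^{*}$ among the tied difference-optimal candidates, and by the stability clause of Lemma~\ref{lemma:property-sched-threshold-algorithm} this choice remains valid for every threshold in $I$. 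Combining the resulting strong optimality at level $r$ with the inductively assumed strong $(r{+}1)$-optimality delivers strong $r$-optimality; iterating down to $r = 0$ gives $\CExp{\sched} = \CExp{\max}$, which proves correctness.

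For the complexity I would argue that within a single level $r$, successive current schedulers must differ in the memoryless decision pattern they induce at level $r$ (otherwise the test $\cI^{B}_{\sched,r} = \varnothing$ would behave identically and the inner loop would have exited without a swap), which bounds the number of scheduler-improvement steps per level by $\md$. Multiplied by the $\saturation$ outer levels and the $O(|S|\cdot|\Act|)$ cost of an inner loop, this yields $O(\saturation \cdot \md \cdot |S|\cdot|\Act|)$ threshold calls in total. Since $\saturation$ has logarithmic length polynomial in $\Size(\cM)$ (Appendix~\ref{appendix:compute-saturation}) while $\md \leqslant |\Act|^{|S|}$, both factors are at most exponential in $\Size(\cM)$; combined with the pseudo-polynomial cost of a single threshold call (Appendix~\ref{appendix:complexity-threshold}) this yields the single-exponential bound claimed in Theorem~\ref{thm:computing-cexpmax}.
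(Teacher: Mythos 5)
Your outline tracks the paper's own proof fairly closely: it decomposes into loop invariants, termination of the inner loop, a freezing lemma certifying strong $r$-optimality when $\cI^{B}_{\sched,r}=\varnothing$, and a counting argument. Two of the steps have genuine gaps, though, and in both places you replaced a subtle argument with an incorrect shortcut.

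First, in the freezing step you claim the case $y_{s,r} < y_{s,r,\alpha}$ is \emph{excluded} because $\psched^*$ maximises goal-probability in $\cM^*$. That exclusion only holds for $\alpha \in \Act^*(s)$; for $\alpha \in \Act(s)\setminus\Act^*(s)$ the probability $y_{s,r,\alpha}$ can very well exceed $y_{s,r}$. The case cannot be ruled out, and it does not need to be: strong $r$-optimality requires, in this case, the inequality $r + (\theta_{s,r}-\theta_{s,r,\alpha})/(y_{s,r}-y_{s,r,\alpha}) \leqslant \CExp{\max}$, which is exactly what the difference-property (Lemma~\ref{lemma:property-sched-threshold-algorithm}) gives for $\SchedThreshold{\threshold}$ with $\threshold \leqslant \CExp{\max}$: rearranging $\theta_{s,r}-(\threshold{-}r)y_{s,r} \geqslant \theta_{s,r,\alpha}-(\threshold{-}r)y_{s,r,\alpha}$ and dividing by the negative quantity $y_{s,r}-y_{s,r,\alpha}$ yields the claim. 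That is how Lemma~\ref{lemma:freeze-level-r} in the appendix disposes of this case.

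Second, your complexity bound rests on the claim that ``successive current schedulers must differ in the memoryless decision pattern they induce at level $r$, otherwise the test $\cI^{B}_{\sched,r}=\varnothing$ would behave identically.'' This is not true: between consecutive scheduler-improvement steps the right endpoint $B$ of the interval shrinks, so the test can yield different results for two schedulers $\sched_i,\sched_{i+1}$ with $\sched_i(\cdot,r)=\sched_{i+1}(\cdot,r)$. The paper's Lemma~\ref{lemma:last-scheduler-improvement-at-level-r} parts (c) and (d) show precisely what happens in this situation: either the next step freezes level $r$ directly, or the step after that ($\sched_{i+2}$) must change the level-$r$ decision pattern, witnessed by a strict drop in some $y_{t,r,i}$. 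This argument (using the monotone nonincreasing sequence $y_{s,r,i} \geqslant y_{s,r,i+1} \geqslant \cdots$) bounds the number of scheduler-improvement steps per level by $2\md$ rather than $\md$, which still gives the same $\cO(\saturation\cdot\md\cdot|S|\cdot|\Act|)$ total and the exponential bound, but your one-line justification does not establish it.
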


The proof of
Theorem \ref{thm:soundness-CExpmax-algo} is splitted into two parts.
Partial correctness will be shown in
Proposition \ref{partial-correctness-CEmax-algo},
while the statement on the complexity will be shown in
Proposition \ref{complexity-CExp-max-algo}.

\begin{lemma}[Correctness of the scheduler-improvement step]
  \label{lemma:scheduler-improvement}
    Let $\sched$ be a scheduler.
    \begin{enumerate}
    \item [{\rm (a)}]
      If $\sched'$ is the outcome
      of the scheduler-improvement step for $\sched$ then
      either $\CExp{\sched}<\CExp{\sched'}$
      or $\CExp{\sched}=\CExp{\sched'}=\CExp{\max}$
      and $\sched'$ is strongly optimal.
    \item [{\rm (b)}]
      The scheduler-improvement step for $\sched$
      terminates after at most $|S|\cdot |\Act|$
      calls of the threshold algorithm.
   \end{enumerate}
\end{lemma}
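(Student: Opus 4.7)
My plan is to handle the two parts separately, since part (a) is essentially a case analysis on the type of outcome while part (b) is a counting argument on the size of $\cK \cap I$.

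For part (a), I would start from the key observation that $\CExp{\sched} \in \cK \cap I$: indeed, $\CExp{\sched} = A$ is the left endpoint of $I = [A,B[$ and $\CExp{\sched} \in \cK$ by the definition $\cK = \cI^{\uparrow}_{\sched,r} \cup \{\CExp{\sched}\}$. Consequently every tested threshold $\threshold' = \max(\cK \cap I)$ satisfies $\threshold' \geqslant \CExp{\sched}$. The outcome $\sched'$ arises in exactly two sub-cases of the loop's ``yes'' branch:
\begin{itemize}
\item If $\CExp{\sched'} > \threshold'$, then $\CExp{\sched'} > \threshold' \geqslant \CExp{\sched}$, giving the strict improvement required.
\item If $\CExp{\sched'} = \threshold'$, then $\sched' = \SchedThreshold{\threshold'}$ satisfies $\CExp{\sched'} = \threshold'$, and Corollary~\ref{corollary:threshold-algorithm-exact} yields $\CExp{\max} = \threshold' = \CExp{\sched'}$. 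Combined with $\threshold' \geqslant \CExp{\sched}$, we obtain either $\CExp{\sched} < \CExp{\sched'} = \CExp{\max}$ or $\CExp{\sched} = \CExp{\sched'} = \CExp{\max}$, and in the latter case Lemma~\ref{lemma:SchedThreshold-CExpmax} immediately gives strong optimality of $\sched'$.
\end{itemize}

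For part (b), I would prove that every call of the threshold algorithm that does not yet produce an outcome strictly decreases $|\cK \cap I|$. In every such iteration the threshold $\threshold' = \max(\cK \cap I)$ is tested and the ``no'' branch resets $B := \threshold'$, so $\threshold'$ is expelled from $\cK \cap I$. If additionally $\CExp{\sched'} > A$, then $A$ is raised to $\CExp{\sched'}$ and $\CExp{\sched'}$ is added to $\cK$; this removes every element of $\cK \cap I$ strictly below $\CExp{\sched'}$ (at least one element, namely the old $A = \CExp{\sched}$) while adding only the single element $\CExp{\sched'}$. In both sub-cases the net change to $|\cK \cap I|$ is strictly negative.

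Since $|\cK \cap I|$ can only strictly decrease per iteration while its initial value is bounded by $|\cI^{\uparrow}_{\sched,r}| + 1 \leqslant |S|\cdot|\Act|$ (recalling that $\cI_{\sched,r}$ has at most one entry per state-action pair $(s,\alpha)$ with $y_{s,r} > y_{s,r,\alpha}$), the loop can execute at most $|S| \cdot |\Act|$ threshold calls before either $\cI^B_{\sched,r}$ becomes empty or the ``yes'' branch is triggered and produces the outcome. The main subtlety I anticipate is the bookkeeping in the decrement argument, in particular verifying that the freshly added value $\CExp{\sched'}$ cannot re-inflate $|\cK \cap I|$ because it enters only when the raise of $A$ simultaneously evicts at least one old element (the previous $\CExp{\sched}$); making this accounting airtight under all combinations of the ``no'' sub-cases is the step that requires the most care.
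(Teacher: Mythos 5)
Your proposal is correct and takes essentially the same approach as the paper: the case split in (a) on the two ``yes'' sub-cases together with Corollary~\ref{corollary:threshold-algorithm-exact} and Lemma~\ref{lemma:SchedThreshold-CExpmax} is exactly what is needed, and your decrement counting in (b) is a repackaging of the paper's observation that each refinement of $B$ permanently expels the tested threshold from $I$. Two details are worth pinning down: (i) the decrement argument rests on the invariant that the current left endpoint $A$ is always an element of $\cK\cap I$ --- this holds because $A$ is either the initial $\CExp{\sched}$ or a value $\CExp{\sched'}$ that was added to $\cK$ in the very step where $A$ was raised, and $A<B$ is preserved since in the ``no'' case $\CExp{\sched'}\leqslant\CExp{\max}<\threshold'=B$ --- so the element you expel when raising $A$ is this current best value, not literally the original $\CExp{\sched}$ once the loop has iterated; and (ii) to land on the stated bound $|S|\cdot|\Act|$ (rather than $|S|\cdot|\Act|+1$) you need $|\cI_{\sched,r}|\leqslant|S|\cdot(|\Act|-1)$, which holds because the chosen action $\sched(s,r)$ never satisfies $y_{s,r}>y_{s,r,\alpha}$.
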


\begin{proof}
We first show statement (a).
Obviously, $\sched'$ is strongly optimal if the
scheduler-improvement step terminates on the basis of
Corollary \ref{corollary:threshold-algorithm-exact}
(see also Lemma \ref{lemma:SchedThreshold-CExpmax}).
Suppose now that $\CExp{\sched'} < \CExp{\max}$.

It is obvious that the refinements of the interval $I = [A,B[$
in the scheduler-improvements are safe in the sense that
$\CExp{\max}\in I$ holds
at any moment of the execution of the scheduler-improvement step.
When the scheduler-improvement step for $\sched$ is called then we have
$A = \CExp{\sched}$.

The set $\cK$ is modified during
the execution of the scheduler-improvement step for
$\sched$. However, at any moment the set $\cK$ only contains elements
of $\cI^{\uparrow}_{\sched,r}$ and values of the form
$\CExp{\tsched}$ for some scheduler $\tsched$
with $\CExp{\tsched}\geqslant \CExp{\sched}$.
Furthermore, if $\cK$ is nonempty then $\cK \cap I$ contains exactly
one value $\CExp{\tsched}$ for some scheduler $\tsched$
with $\CExp{\tsched}\geqslant \CExp{\sched}$.
Note that initially we have $\CExp{\sched}\in \cK \cap I$
and whenever an element $\CExp{\tsched}$ is added to $\cK$ then
$\tsched$ is the best scheduler found so far and
the left border $A$ of $I$ is updated to $A = \CExp{\tsched}$.
Furthermore, $d \geqslant \CExp{\sched}$ for
all $d\in \cI^{\uparrow}_{\sched,r}$.
Thus, as long as $\cK$ is nonempty,
$\CExp{\sched} \leqslant \min |\cK \cap I|$ and $\cK \cap I$ contains
at least one value $\threshold'$ with $\CExp{\max} \geqslant \threshold'$.

The refinements of $I$'s right border $B$ ensure that
each element of $\cI^{\uparrow}_{\sched,r}$ is
selected as the maximal element of $\cK \cap I$ at most once.

Hence, eventually some threshold value $\threshold'\in \cK \cap I$
with $\CExp{\max} \geqslant \threshold'$
will be picked as the maximal element of $\cK \cap I$.
Let $\sched'=\SchedThreshold{\threshold'}$.
But then $\CExp{\sched'} \geqslant \threshold'$.
Hence, $\sched'$ is the outcome of the scheduler-improvement step for
$\sched$. Moreover:
\begin{itemize}
\item
  If $\CExp{\sched'} = \threshold'$
  then the scheduler-improvement step for $\sched$
  returns $\sched'$ as a strongly optimal scheduler.
\item
  If $\CExp{\sched'} > \threshold'$
  then $\CExp{\sched'} > \threshold' = \CExp{\tsched} \geqslant \CExp{\sched}$.
\end{itemize}
Hence, the outcome of the scheduler-improvement step for
$\sched$ is a scheduler $\sched'$ with $\CExp{\sched'} > \CExp{\sched}$
or $\CExp{\sched'}=\CExp{\max}$.

We now turn to the proof of statement (b).
The number of calls of the threshold
algorithm is bounded by $|\cK \cap I|$ for the initial set
$\cK=\cI_{\sched,r}^{\uparrow} \cup \{\CExp{\sched}\}$ and the
current interval $I=[A,B[$ at the beginning of the scheduler-improvement
step for $\sched$. (Thus, $A=\CExp{\sched}$.)
Then:
$$
  |\cK \cap I| \ \ \leqslant \ \ |\cK|
  \ \ \leqslant \ \
  |\cI_{\sched,r}|+1
  \ \ \leqslant \ \
  |S| \cdot (|\Act|{-}1) + 1
  \ \ \leqslant \ \
  |S|\cdot |\Act|
$$
This completes the proof of Lemma \ref{lemma:scheduler-improvement}.
\Ende
\end{proof}

\begin{remark}[Behavior for
               strongly optimal schedulers]
 From the moment on where the algorithm for computing
 $\CExp{\max}$ has generated a strongly optimal
 scheduler $\sched$ on level $r$ there are two possible cases:
 \begin{itemize}
 \item
    If $\cI_{\sched,i}^B = \varnothing$ for $i=0,1,\ldots,r$
    then the algorithm terminates without any further calls of the
    threshold algorithm.
  \item
    If $i\in \{0,1,\ldots,r\}$ is the maximal index
    such that  $\cI_{\sched,i}^B \not= \varnothing$
    then the algorithm first freezes the decisions for levels
    $r,r{-}1,\ldots,i{+}1$ and then generates the final scheduler
    $\sched'$ in the scheduler-improvement step for $\sched$ at level
    $i$ by calling the threshold algorithm for the threshold
    $\threshold'=\CExp{\sched}=\CExp{\max}$.
    Thus, $\CExp{\sched'}=\threshold'$ and
    the algorithm terminates on the basis of
    Corollary \ref{corollary:threshold-algorithm-exact}.
 \end{itemize}
 Hence,
 the scheduler-improvement step for
 a strongly optimal scheduler terminates after at most
 $|S|\cdot |\Act|$ calls of the threshold algorithm
 (see part (b) of Lemma \ref{lemma:scheduler-improvement}).
\Ende
\end{remark}

\begin{remark}[Behavior for
               strongly $r$-optimal schedulers]
  If $\sched$ is a strongly $r$-optimal scheduler
  and $\min \cI_{\sched,r}=\CExp{\max}$ then
  the outcome of the scheduler-improvement step for $\sched$ at level $r$
  is the strongly optimal scheduler
  $\sched' = \SchedThreshold{\CExp{\max}}$.
  However, for $r > 0$ there might be strongly $r$-optimal schedulers
  $\sched$ with
  $$
     \min \cI_{\sched,r} \ \ > \ \ \CExp{\max} %
  $$
  In this case, if $B \leqslant \min \cI_{\sched,r}$
  then $\cI_{\sched,r}^B=\varnothing$ and
  the scheduler-improvement algorithm for $\sched$ at level $r$
  freezes the values $\sched(\cdot,r)$ directly without
  any further calls of the threshold algorithm by switching to
  level $r{-}1$.
  If $B > \min \cI_{\sched,r}$ then  $\cI_{\sched,r}^B$
  is nonempty and the outcome of
  the scheduler-improvement step for $\sched$ at level $r$
  improves $B$ to some $B'$ with
  $\CExp{\max} < B' \leqslant \min \cI_{\sched,r}$.
  However, it does not modify the values $y_{s,r}$ and $\theta_{s,r}$.
  (The latter is a consequence of the second part
  of Lemma \ref{lemma:property-sched-threshold-algorithm}.)
  Thus, if $\sched$ is strongly $r$-optimal and
  $B > \min \cI_{\sched,r}$ then
  the outcome the scheduler-improvement step for $\sched$
  is a scheduler $\sched'$
  with $\cI_{\sched,r} =\cI_{\sched',r}$
  and $\cI_{\sched',r}^{B'}=\varnothing$.
\Ende
\end{remark}

\begin{lemma}
  \label{lemma:freeze-level-r}
  Let $r\in \{0,1,\ldots,\saturation{-}1\}$ and
  $\threshold$ a rational value such that $\threshold \leqslant \CExp{\max}$
  and $\sched = \SchedThreshold{\threshold}$
  is strongly $(r{+}1)$-optimal.
  Then (with $\min \varnothing = \infty$):
  \begin{center}
     $\min \cI_{\sched,r} \geqslant \CExp{\max}$
     \quad iff \quad
     $\sched$ is strongly $r$-optimal
  \end{center}
  In particular, if $\cI_{\sched,r}^{B} = \varnothing$
  at the beginning of the scheduler-improvement step for $\sched$ at
  level $r$
  then $\sched$ is strongly $r$-optimal.
\end{lemma}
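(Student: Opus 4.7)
The plan is to prove the equivalence by pairing the one-step comparison captured in the magic constraint \eqref{magic-constraint} with the difference property of the threshold algorithm (Lemma \ref{lemma:property-sched-threshold-algorithm}) and the hypothesis of strong $(r{+}1)$-optimality. Throughout I write $w_{s,R} := \theta_{s,R} - (\CExp{\max}{-}R)\cdot y_{s,R}$, noting that by $(r{+}1)$-optimality $w_{s,R} \geqslant \Exp{\tsched}{s} - (\CExp{\max}{-}R)\cdot \Pr^{\tsched}_s(\Diamond \goal)$ for every scheduler $\tsched$ and every state $s$ whenever $R\geqslant r{+}1$. Hence strong $r$-optimality only needs to be established at level $r$ itself, i.e.\ for the single inequality $w_{s,r} \geqslant \Exp{\tsched}{s} - (\CExp{\max}{-}r)\cdot \Pr^{\tsched}_s(\Diamond \goal)$ for all $\tsched$, $s$.

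For the backward direction I would pick an arbitrary pair $(s,\alpha)$ with $y_{s,r}>y_{s,r,\alpha}$ and consider the one-step variant $\sched_{s,r,\alpha}$ of $\residual{\sched}{r}$ that plays $\alpha$ in $s$ at level $r$ and then follows $\sched$ at the appropriate higher level $R=\min\{\saturation,r{+}\rew(s,\alpha)\}$. Since $\sched$ is strongly $(r{+}1)$-optimal, the values $(\theta_{s,r,\alpha},y_{s,r,\alpha})$ really are the expectation/probability of $\sched_{s,r,\alpha}$, and strong $r$-optimality applied to $\tsched=\sched_{s,r,\alpha}$ with $R=r$ yields $\theta_{s,r} - (\CExp{\max}{-}r)\cdot y_{s,r} \,\geqslant\, \theta_{s,r,\alpha} - (\CExp{\max}{-}r)\cdot y_{s,r,\alpha}$. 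Since $y_{s,r}>y_{s,r,\alpha}$, \eqref{magic-constraint} converts this into $r + (\theta_{s,r}{-}\theta_{s,r,\alpha})/(y_{s,r}{-}y_{s,r,\alpha}) \,\geqslant\, \CExp{\max}$. Taking the minimum over all admissible $(s,\alpha)$ gives $\min\cI_{\sched,r}\geqslant\CExp{\max}$.

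For the forward direction I would start by invoking the second clause of Lemma \ref{lemma:property-sched-threshold-algorithm}: because $\threshold \leqslant \CExp{\max} \leqslant \min\cI_{\sched,r}=\threshold_r$, the value $\threshold^{*}=\CExp{\max}$ still satisfies the difference property \eqref{threshold-difference-property} at level $r$. In particular, for every state $s$ and every action $\alpha\in\Act(s)$,
\begin{equation*}
  w_{s,r} \ = \ \theta_{s,r} - (\CExp{\max}{-}r)\cdot y_{s,r}
  \ \geqslant \ \theta_{s,r,\alpha} - (\CExp{\max}{-}r)\cdot y_{s,r,\alpha}.
\end{equation*}
This says precisely that $\sched(\cdot,r)$ is Bellman-optimal at level $r$ with respect to the reward function $w$, when continuations from level $R\geqslant r{+}1$ are valued by $w_{\cdot,R}$ and continuations at the same level $r$ are valued by $w_{\cdot,r}$ itself (handling zero-reward actions). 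A standard monotone fixed-point argument on the MDP $\cM^{*}_{r,\CExp{\max}}$ (whose construction in Section~\ref{sec:threshold} already quotients out the zero-reward structure and redirects positive-reward actions to $\goal/\fail$) then yields $w_{s,r}\geqslant \Exp{\tsched}{s} - (\CExp{\max}{-}r)\cdot \Pr^{\tsched}_s(\Diamond \goal)$ for every scheduler $\tsched$, which together with strong $(r{+}1)$-optimality gives strong $r$-optimality. The "in particular" claim is immediate: if $\cI_{\sched,r}^{B}=\varnothing$ then every element of $\cI_{\sched,r}$ is at least $B>\CExp{\max}$, so $\min\cI_{\sched,r}\geqslant \CExp{\max}$.

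The main obstacle is the fixed-point step in the forward direction: the local Bellman inequality must be propagated through potentially long cascades of zero-reward actions at level $r$, and one must rule out that an alternative scheduler could gain value by "looping" in the zero-reward sub-MDP. This is where the LP-based construction of $\sched$ in the threshold algorithm is essential — it ensures that the vector $(w_{s,r})_{s\in S}$ is the unique optimal solution of a linear program on $\cM^{*}_{r,\CExp{\max}}$, so uniqueness of the fixed point, together with assumption (A2) guaranteeing $\Pr^{\min}(\Diamond(\goal\vee\fail))=1$, makes the propagation go through without circularity.
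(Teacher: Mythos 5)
Your proof is correct and follows essentially the same route as the paper's: the forward direction is handled by invoking the second clause of Lemma~\ref{lemma:property-sched-threshold-algorithm} at $\threshold^* = \CExp{\max}$ (possible because $\threshold \leqslant \CExp{\max} \leqslant \min\cI_{\sched,r} = \threshold_r$), and the backward direction, which the paper dismisses as obvious, is exactly your one-step-variant argument. The only difference is that you make explicit the final reduction from memoryless-deterministic comparisons ($\psched$) to arbitrary schedulers $\tsched$ via the LP/fixed-point structure of $\cM^*_{r,\CExp{\max}}$ and strong $(r{+}1)$-optimality, which the paper compresses into the single word ``Hence''.
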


\begin{proof}
It is obvious that $\min \cI_{\sched,r} \geqslant \CExp{\max}$
for each strongly $r$-optimal scheduler $\sched$.
We now suppose $\min \cI_{\sched,r} \geqslant \CExp{\max}$
and show the strong $r$-optimality of $\sched$.
Let $y_{s,r}=\Pr^{\residual{\sched}{r}}_s(\Diamond \goal)$
and $\theta_{s,r}=\Exp{\residual{\sched}{r}}{s}$.
As $\threshold \leqslant \CExp{\max}$ we have
 $\threshold \leqslant \CExp{\sched}$.
Using Lemma \ref{lemma:property-sched-threshold-algorithm}, we get
that for all states $s$ and
for all functions $\psched : S \setminus \{\goal,\fail\} \to \Act$
with $\psched(t)\in \Act(t)$ for all $t$:
\begin{eqnarray*}
    r + \frac{\theta_{s,r}-\theta_{s,r,\psched}}{y_{s,r}-y_{s,r,\psched}}
    \ \ \geqslant \ \ \threshold
    & \ \ & \text{if $y_{s,r} > y_{s,r,\psched}$}
    \\
    \\[0ex]
    r + \frac{\theta_{s,r}-\theta_{s,r,\psched}}{y_{s,r}-y_{s,r,\psched}}
    \ \ \leqslant \ \ \threshold
    & \ \ & \text{if $y_{s,r} < y_{s,r,\psched}$}
    \\
    \\[0ex]
    \theta_{s,r} \ \geqslant \ \theta_{s,r,\psched}
    \ \ \ \ \ \ \
    & \ \ & \text{if $y_{s,r} = y_{s,r,\psched}$}
\end{eqnarray*}
By assumption:
  $$
     r + \frac{\theta_{s,r} - \theta_{s,r,\alpha}}{y_{s,r}-y_{s,r\alpha}}
     \ \ \ \geqslant \ \ \ \CExp{\max}
  $$
for all states $s$ and actions $\alpha \in \Act(s)$ with
$y_{s,r} > y_{s,r,\alpha}$.
With the notations of Lemma \ref{lemma:property-sched-threshold-algorithm}
we obtain $\threshold_r \geqslant \CExp{\max} > \threshold$.
Hence, we can rely on the second part of
Lemma \ref{lemma:property-sched-threshold-algorithm}
with $\threshold^* = \CExp{\max}$ to obtain:
\begin{equation*}
  \theta_{s,r} - (\CExp{\max}{-}r) \cdot y_{s,r}
  \ \ \ \geqslant \ \ \
  \theta_{s,r,\psched} - (\CExp{\max}{-}r) \cdot y_{s,r,\psched}
\end{equation*}
for all states $s \in S \setminus \{\goal,\fail\}$ and
all functions $\psched$.
Hence, $\sched$ is strongly $r$-optimal.

If $\cI^{B}_{\sched,r}$ is empty then
$\min \cI_{\sched,r} \geqslant B > \CExp{\max}$ where we use the
invariance $\CExp{\max}\in I=[A,B[$.
\Ende
\end{proof}

\begin{proposition}[Partial correctness]
  \label{partial-correctness-CEmax-algo}
  If the algorithm returns the value $\threshold$ and the scheduler
  $\sched$ then $\threshold = \CExp{\max}$ and $\sched$ is a strongly optimal
  scheduler.
\end{proposition}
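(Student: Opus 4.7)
The plan is to show that whenever the algorithm terminates, the current scheduler $\sched$ is strongly optimal and the returned value equals $\CExp{\sched}=\CExp{\max}$. There are exactly two exit points in the algorithm: (i) a scheduler-improvement step terminates because the call $\sched'=\SchedThreshold{\threshold'}$ returns a scheduler with $\CExp{\sched'}=\threshold'$, and (ii) the level-wise loop reaches $r=0$ and finds $\cI^{B}_{\sched,0}=\varnothing$ without any further threshold call succeeding at the equality. I would treat these two cases separately.

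For case (i), the argument is essentially immediate from earlier results. Since $\sched'=\SchedThreshold{\threshold'}$ and the threshold algorithm returned it as a successful outcome with $\CExp{\sched'}=\threshold'$, Corollary~\ref{corollary:threshold-algorithm-exact} gives $\threshold'=\CExp{\max}$, and Lemma~\ref{lemma:SchedThreshold-CExpmax} yields strong optimality of $\sched'$. The returned value $\threshold'$ is exactly $\CExp{\max}$, so nothing more needs to be checked.

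For case (ii), I would set up an induction on the level $r$ running from $\saturation$ down to $0$, with the invariant that at the moment the algorithm freezes level $r$ (i.e.\ moves from level $r$ to level $r{-}1$, or terminates at $r=0$) the current scheduler $\sched$ is strongly $r$-optimal and the interval $I=[A,B[$ satisfies $\CExp{\max}\in I$ together with $\CExp{\sched}\in I$. The base case $r=\saturation$ is Proposition~\ref{prop:saturation-maxsched-appendix}: the decisions $\maxsched$ chosen above the saturation point make every extension strongly $\saturation$-optimal. For the inductive step, I would use Lemma~\ref{lemma:scheduler-improvement}\,(a) to verify that each scheduler-improvement step at level $r$ either terminates as case~(i) or produces an outcome $\sched'$ with $\CExp{\sched'}>\CExp{\sched}$, and combine it with the second part of Lemma~\ref{lemma:property-sched-threshold-algorithm} to argue that each such outcome inherits the strong $(r{+}1)$-optimality from its predecessor (the threshold values never exceed $\min\{\threshold_R:R>r\}$ because the invariant $\CExp{\max}\in I$ is maintained by the $B$-updates). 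When the scheduler-improvement step at level $r$ finally ends with $\cI^{B}_{\sched,r}=\varnothing$, Lemma~\ref{lemma:freeze-level-r} upgrades strong $(r{+}1)$-optimality to strong $r$-optimality, completing the inductive step. At $r=0$ the invariant gives strong $0$-optimality, hence $\CExp{\sched}=\CExp{\max}$, which is the returned value.

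The main obstacle will be the carry-over of the strong $(r{+}1)$-optimality invariant across the sequence of scheduler updates performed during the processing of level $r$. Every time the current scheduler is replaced by the outcome $\sched'$ of a threshold call, the values $y_{s,R},\theta_{s,R}$ for $R>r$ must remain unchanged so that $\sched'$ still agrees with a strongly $(r{+}1)$-optimal scheduler above level $r$; this is precisely where I will have to invoke the last clause of Lemma~\ref{lemma:property-sched-threshold-algorithm} to justify that the threshold values selected from $\cK\cap I$ lie in the range $[\,\CExp{\sched},\min\{\threshold_R:R>r\}\,]$, which in turn relies on the invariant $\CExp{\max}\in I$ combined with the fact that $\threshold_R\geqslant \CExp{\max}$ for every already frozen level $R>r$ (a consequence of the strong $R$-optimality established inductively). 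The remaining bookkeeping -- that the refinements of $A$ and $B$ keep $\CExp{\max}$ in $I$, and that the set $\cK$ is exhausted in finitely many calls -- is covered by Lemma~\ref{lemma:scheduler-improvement}\,(b).
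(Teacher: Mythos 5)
Your proposal follows essentially the same route as the paper's proof: splitting into early termination (handled by Corollary~\ref{corollary:threshold-algorithm-exact} and Lemma~\ref{lemma:SchedThreshold-CExpmax}) and normal termination at level~$0$, then running a downward induction on levels whose inductive step combines the preservation of strong $(r{+}1)$-optimality across the treatment of level~$r$ with Lemma~\ref{lemma:freeze-level-r} to upgrade to strong $r$-optimality once $\cI^{B}_{\sched,r}=\varnothing$. Your appeal to the last clause of Lemma~\ref{lemma:property-sched-threshold-algorithm} to certify that the reused $(y_{s,R},\theta_{s,R})$ values for $R>r$ remain correct after each threshold call is a more explicit justification of what the paper asserts directly (namely that decisions at frozen levels do not change), but it is the same argument in substance.
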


\begin{proof}
The statement of Proposition \ref{partial-correctness-CEmax-algo}
is clear in case of an early termination where the algorithm returns a
scheduler $\sched=\SchedThreshold{\threshold}$ with
$\CExp{\sched}=\threshold$
(see Corollary \ref{corollary:threshold-algorithm-exact}).
Let us now suppose that the algorithm treats all levels and
returns $\CExp{\sched}$ for some scheduler $\sched$ considered at level 0
satisfying the constraint $\cI_{\sched,0}^{B}=\varnothing$.
We prove by induction on $i=\saturation {-} r$ that
each scheduler considered in a scheduler-improvement step
at level $r$ is strongly $(r{+}1)$-optimal and that the final scheduler
is strongly optimal.
This yields  $\CExp{\sched}=\CExp{\max}$.

For the first level $r=\saturation$ (basis of induction)
the claim is clear as $\maxsched$ is
strongly $\saturation$-optimal by
Lemma \ref{lemma:maxsched-after-threshold-is-optimal}.
Let now $r < \saturation$ and
$\sched$ be the current scheduler when the algorithm
switches  from level $r$ to $r{-}1$, provided $r>0$, resp.~when
the treatment of level 0 is completed if $r=0$.
Then, $\cI^B_{\sched,r}=\varnothing$.
The task is to show that $\sched$ is strongly $r$-optimal.
We may rely on the induction hypothesis
stating that at the beginning of the treatment of level $r$,
the current scheduler is strongly $(r{+}1)$-optimal.
As the decisions at levels $r{+}1,\ldots,\saturation$
remain unchanged when treating level $r$,
all schedulers where a scheduler-improvement step is
executed at level $r$ are strongly $(r{+}1)$-optimal.
Hence, the remaining task is to prove that for each state $s\in S$
and each function $\psched: S \setminus \{\goal,\fail\} \to \Act$
with $\psched(t)\in \Act(t)$ we have:
\begin{eqnarray*}
  r + \frac{\theta_{s,r} -\theta_{s,r,\psched}}
           {y_{s,r}-y_{s,r,\psched}}
  \ \ \geqslant \ \ \CExp{\max}
  & \ \ & \text{if $y_{s,r}> y_{s,r,\psched}$}
  \\
  \\[0ex]
  r + \frac{\theta_{s,r} -\theta_{s,r,\psched}}
           {y_{s,r}-y_{s,r,\psched}}
  \ \ \leqslant \ \ \CExp{\max}
  & \ \ & \text{if $y_{s,r}< y_{s,r,\psched}$}
  \\
  \\[0ex]
  \theta_{s,r} \ \geqslant \ \theta_{s,r,\psched} \ \ \ \ \
  & & \text{if $y_{s,r}= y_{s,r,\psched}$}
\end{eqnarray*}
The above statement is a consequence of
Lemma \ref{lemma:freeze-level-r}.
Hence, the algorithm correctly freezes the values for level $r$
if $\cI_{\sched_i}^{B} = \varnothing$.
With $r=0$ we get that the final scheduler $\sched$ is strongly optimal
and the returned value is the maximal conditional expectation.
\Ende
\end{proof}

We now address the termination and the complexity of the proposed algorithm.
We start with statements about the scheduler-improvement steps.

\begin{definition}[Indices for the variables of the algorithm]
{\rm
In what follows, we will use the enumeration of the schedulers
and threshold values as explained above.
I.e., $\threshold_1 = \CExp{\maxsched}$,
$\sched_1 = \SchedThreshold{\threshold_1}$
and
$\sched_{i+1}=\SchedThreshold{\threshold_{i+1}}$
is the outcome of the scheduler-improvement step for $\sched_i$.
Moreover, $r_i$ and $I_i = [A_i,B_i[$
denote the current value of the level variable $r$ resp.~the interval
$I=[A,B[$ when the scheduler-improvement
step for $\sched_i$ starts. We often use the fact that
$r_1=\saturation{-}1$, $r_{i+1} \in \{r_i,r_i{-}1\}$,
$A_i=\CExp{\sched_i} < B_i$,
and $B_1 \geqslant B_2 \geqslant B_3 \geqslant \ldots \geqslant \CExp{\max}$.
The latter follows by a careful inspection of the scheduler-improvement step
and the soundness of the threshold algorithm.
Furthermore, let
$y_{s,r,i}$ and $\theta_{s,r,i}$ stand for the current values
of $y_{s,r}$ and $\theta_{s,r}$ when the scheduler-improvement for
$\sched_i$ at level $r=r_i$ starts.
We then have
$y_{s,r,i} = \Pr^{\residual{\sched_i}{r}}_s(\Diamond \goal)
 = y_{s,r,\sched_i(\cdot,r)}$
and
$\theta_{s,r,i}=\Exp{\residual{\sched_i}{r}}{s}
  = \theta_{s,r,\sched_i(\cdot,r)}$
where we use the notations of Section \ref{algo:threshold-cexp}.
  }
\Ende
\end{definition}

The decisions of $\sched_i$ and $\sched_{i+1}$ at level $r_i$
might be the same if $\threshold_{i+1} > \min \cI^{\uparrow}_{\sched_i,r_i}$,
In this case, however, the next scheduler
$\sched_{i+2}$ will differ at level $r_i$.
Moreover, if $\threshold_{i+1} > \min \cI^{\uparrow}_{\sched_i,r_i}$ then
the $\sched_i$ and $\sched_{i+1}$ do not agree at level $r_i$.
This will be shown in the following lemma.

\begin{lemma}
  \label{lemma:last-scheduler-improvement-at-level-r}
    Suppose $r=r_i=r_{i+1}$ and
    $\CExp{\sched_{i}} > \threshold_{i}$.
    Then:
    \begin{enumerate}
    \item [{\rm (a)}]
      $y_{s,r,i} \geqslant y_{s,r,i+1}$ for all states $s$.
    \item [{\rm (b)}]
      If $\cI^{\uparrow}_{\sched_i,r} \cap I_i$ is empty or
      $\threshold_{i+1} > \min \cI^{\uparrow}_{\sched_i,r}$
      then there is at least one state $t$ such that
      $y_{t,r,i} > y_{t,r,i+1}$.
    \item [{\rm (c)}]
      Suppose
      $\threshold_{i+1} < \min \cI^{\uparrow}_{\sched_i,r}$
      and $\sched_i(\cdot,r)=\sched_{i+1}(\cdot,r)$.
      Then, $\cI^{B_{i+1}}_{\sched_{i+1},r}=\varnothing$.
    \item [{\rm (d)}]
      Suppose $\threshold_{i+1} = \min \cI^{\uparrow}_{\sched_i,r}$ and
      $\sched_i(\cdot,r)=\sched_{i+1}(\cdot,r)$ and
      $\CExp{\sched_{i+1}} > \threshold_{i+1}$.
      Then,
      there is at least one state $t$
      where
      $y_{t,r,i} > y_{t,r,i+2}$.
    \end{enumerate}
\end{lemma}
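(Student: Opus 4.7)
The setup that drives the whole argument is that since $r_i = r_{i+1} = r$, the decisions at levels above $r$ are frozen identically in steps $i$ and $i{+}1$, so the cached values $y_{t,R},\theta_{t,R}$ for $R>r$ used in the definitions of $y_{s,r,\psched},\theta_{s,r,\psched}$ are the same in the two runs. Consequently these quantities depend only on the memoryless level-$r$ slice $\psched$, and by Lemma~\ref{lemma:property-sched-threshold-algorithm} the slice $\sched_j(\cdot,r)$ is an LP-optimum, maximising $\theta_{s,r,\psched}-(\threshold_j{-}r)y_{s,r,\psched}$ pointwise in $s$. The choice rule for the scheduler-improvement step forces $\threshold_{i+1}\geqslant A_i=\CExp{\sched_i}>\threshold_i$ strictly.

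For part (a), apply the two LP-optimality inequalities with $\sched_i(\cdot,r)$ and $\sched_{i+1}(\cdot,r)$ used as each other's competitors and add them: the $\theta$'s cancel and $(\threshold_{i+1}{-}\threshold_i)(y_{s,r,i+1}{-}y_{s,r,i})\leqslant 0$ falls out, giving $y_{s,r,i+1}\leqslant y_{s,r,i}$. For part (b) in the sub-case $\threshold_{i+1}>\min\cI^{\uparrow}_{\sched_i,r}$, pick $(t,\alpha)$ realising the minimum: the break-point identity, rewritten via Lemma~\ref{lemma:rho-theta-x-y}, yields $\theta_{t,r,\alpha}-(\threshold_{i+1}{-}r)y_{t,r,\alpha}>\theta_{t,r,i}-(\threshold_{i+1}{-}r)y_{t,r,i}$, which propagates through the derivation of (a) as a strict inequality and delivers $y_{t,r,i+1}<y_{t,r,i}$. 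In the other sub-case $\cI^{\uparrow}_{\sched_i,r}\cap I_i=\varnothing$ we must have $\threshold_{i+1}=A_i$ and $\CExp{\sched_{i+1}}>A_i$; assuming for contradiction $y_{\cdot,r,i+1}=y_{\cdot,r,i}$ pointwise collapses the (a)-calculation into equality of the $\theta$'s as well, and combining this with the LP-optimality of $\sched_{i+1}$ at $\threshold_{i+1}$ (which rules out break-points below $A_i$) with an iterated application of the third clause of Lemma~\ref{lemma:property-sched-threshold-algorithm} downward through levels $r{-}1,\ldots,0$ forces $\CExp{\sched_{i+1}}=\CExp{\sched_i}$, contradicting the strict improvement.

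For part (c), the hypothesis $\sched_i(\cdot,r)=\sched_{i+1}(\cdot,r)$ makes $\cI_{\sched_{i+1},r}=\cI_{\sched_i,r}$ and LP-optimality of $\sched_{i+1}$ at $\threshold_{i+1}$ rules out break-points below $\threshold_{i+1}\geqslant A_i$, so $\cI_{\sched_i,r}=\cI^{\uparrow}_{\sched_i,r}$. Since $\threshold_{i+1}<\min\cI^{\uparrow}_{\sched_i,r}$, the element realising the minimum lies in $\cK\cap I_i$ initially, and because the inner loop's tries are strictly decreasing (each ``no'' shrinks $B$ to the failed threshold), it was tried and failed before $\threshold_{i+1}$; hence $B_{i+1}\leqslant\min\cI^{\uparrow}_{\sched_i,r}$ and $\cI^{B_{i+1}}_{\sched_{i+1},r}\subseteq\cI^{\uparrow}_{\sched_i,r}\cap[0,B_{i+1})=\varnothing$. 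For part (d), the same equality gives $y_{\cdot,r,i+1}=y_{\cdot,r,i}$ and $\cI_{\sched_{i+1},r}=\cI_{\sched_i,r}$; the element $d^{*}=\threshold_{i+1}$ is still in $\cI_{\sched_{i+1},r}$ but, being strictly below $\CExp{\sched_{i+1}}=A_{i+1}$, drops out of $\cI^{\uparrow}_{\sched_{i+1},r}$, so $\min\cI^{\uparrow}_{\sched_{i+1},r}>\threshold_{i+1}$, while $d^{*}<B_{i+1}$ ensures $\cI^{B_{i+1}}_{\sched_{i+1},r}\neq\varnothing$ and the algorithm continues at level $r$ through step $i{+}2$. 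Applying part (b) to the pair $(i{+}1,i{+}2)$, whose hypotheses $r_{i+1}=r_{i+2}=r$ and $\CExp{\sched_{i+1}}>\threshold_{i+1}$ are in force, yields some $t$ with $y_{t,r,i+2}<y_{t,r,i+1}=y_{t,r,i}$.

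The hardest obstacle is the contradiction step inside sub-case (b) with $\cI^{\uparrow}_{\sched_i,r}\cap I_i=\varnothing$: the downward iteration of Lemma~\ref{lemma:property-sched-threshold-algorithm} requires verifying at each lower level that the break-point condition holds, which in turn depends on careful bookkeeping of how $B$ has been refined inside the inner loop and which break-points are available at those levels; this is the step most likely to need a delicate separate lemma.
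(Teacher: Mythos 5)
Your proof of part (a) matches the paper's: the sandwich inequality from Lemma~\ref{lemma:property-sched-threshold-algorithm} applied with both schedulers as each other's competitors, and $\threshold_{i+1}>\threshold_i$, immediately gives $y_{s,r,i}\geqslant y_{s,r,i+1}$. Your parts (c) and (d) are also essentially the paper's arguments: reduce $\cI^{B_{i+1}}_{\sched_{i+1},r}$ via the observation that all elements of $\cI_{\sched_{i+1},r}=\cI_{\sched_i,r}$ lie at or above the break-points already failed or discarded, and then apply (b) to the pair $(i{+}1,i{+}2)$ after verifying that the step continues at level $r$.

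The genuine gap is your treatment of part (b) in the sub-case $\cI^{\uparrow}_{\sched_i,r}\cap I_i=\varnothing$, which you yourself flag as the hard spot. The paper does \emph{not} use a contradiction argument propagated downward through levels $r{-}1,\ldots,0$, and your sketch of that route cannot be completed as stated: the third clause of Lemma~\ref{lemma:property-sched-threshold-algorithm} gives coincidence of $(y_{s,R},\theta_{s,R})$ only for levels $R\in\{r,\ldots,\saturation\}$, and requires $\threshold_{i+1}\leqslant\min\{\threshold_R:r\leqslant R\leqslant\saturation\}$; nothing forces $\threshold_{i+1}$ below the break-points of levels $<r$, which have not yet been frozen, so you cannot conclude $\CExp{\sched_{i+1}}=\CExp{\sched_i}$ this way. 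The paper's observation, which you missed, is that the two sub-cases of (b) admit a \emph{single} argument: since $\sched_i$ was subjected to a scheduler-improvement step at level $r$ (rather than being frozen), $\cI^{B_i}_{\sched_i,r}\neq\varnothing$ by Lemma~\ref{lemma:freeze-level-r}, so there is a break-point $d\in\cI_{\sched_i,r}$ with $d<B_i$; and in the sub-case $\cI^{\uparrow}_{\sched_i,r}\cap I_i=\varnothing$ this $d$ cannot be $\geqslant A_i$ (it would then lie in $\cI^{\uparrow}_{\sched_i,r}\cap I_i$), hence $d<A_i=\threshold_{i+1}$. From there the argument is identical to your first sub-case: the break-point below $\threshold_{i+1}$ at a state $s$ with action $\alpha$ gives $\theta_{s,r,\alpha}-(\threshold_{i+1}{-}r)y_{s,r,\alpha}>\theta_{s,r,i}-(\threshold_{i+1}{-}r)y_{s,r,i}$, LP-optimality of $\sched_{i+1}(\cdot,r)$ at $\threshold_{i+1}$ forces $(y_{s,r,i},\theta_{s,r,i})\neq(y_{s,r,i+1},\theta_{s,r,i+1})$ at that $s$, and the sandwich inequality of part (a) shows $y_{s,r,i}=y_{s,r,i+1}$ would imply $\theta_{s,r,i}=\theta_{s,r,i+1}$, so in fact $y_{s,r,i}>y_{s,r,i+1}$. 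Replacing your contradiction attempt with this unified break-point argument closes the gap and also makes your first sub-case cleaner, since the step you describe as ``propagates through the derivation of (a) as a strict inequality'' becomes the explicit sandwich-plus-uniqueness reasoning just sketched rather than a hand-wave.
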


\begin{proof}
We first observe that $\cI^{B_i}_{\sched_i,r}$ is nonempty as
otherwise $\sched_i$ would be
strongly $r$-optimal by Lemma \ref{lemma:freeze-level-r}.
In what follows, we simply write $y_{s,r}$ and $\theta_{s,r}$
rather than $y_{s,r,i}$ and $\theta_{s,r,i}$.
Similarly, $y_{s,r}'$ and $\theta_{s,r}'$ stand for
$y_{s,r,i+1}$ and $\theta_{s,r,i+1}$.
The notations $y_{s,r,\alpha}$ and $\theta_{s,r,\alpha}$
have the same meaning as in the previous sections, i.e.,
$y_{s,r,\alpha}=\sum_{t\in S} P(s,\alpha,t)\cdot y_{t,R}$
and
$\theta_{s,r,\alpha}= \rew(s,\alpha)\cdot y_{s,r} +
  \sum_{t\in S} P(s,\alpha,t)\cdot \theta_{t,R}$
where $R=\min \{\saturation,r{+}\rew(s,\alpha)\}$.
Furthermore, let $\threshold=\threshold_i$ and
$\threshold'=\threshold_{i+1}$.

Part (a) follows immediately from
Lemma \ref{lemma:property-sched-threshold-algorithm}
as we have:
$$
  (\threshold {-}r) \cdot (y_{s,r}-y_{s,r}')
  \ \ \ \leqslant \ \ \
  \theta_{s,r} - \theta_{s,r}'
  \ \ \ \leqslant \ \ \
  (\threshold' {-}r) \cdot (y_{s,r}-y_{s,r}')
$$
As $\threshold < \threshold'$ we obtain
$y_{s,r} \geqslant y_{s,r}'$ for all states $s$.

To prove part (b), we suppose
$\threshold' > \min \cI^{\uparrow}_{\sched,r}$ or
$\cI^{\uparrow}_{\sched,r} \cap I= \varnothing$.
As $\cI^{B_i}_{\sched,r}$ is nonempty
there is at least one state $s$ and action $\alpha \in \Act(s)$ such that
$$
  r + \frac{\theta_{s,r}-\theta_{s,r,\alpha}}{y_{s,r}-y_{s,r,\alpha}}
  \ \ \ < \ \ \
  \threshold'
$$
and $y_{s,r} > y_{s,r,\alpha}$.
Hence:
$$
  \theta_{s,r} - (\threshold'{-}r)\cdot y_{s,r}
  \ \ < \ \
  \theta_{s,r,\alpha} - (\threshold'{-}r)\cdot y_{s,r,\alpha}
$$
As $\sched'=\SchedThreshold{\threshold'}$ and using
Lemma \ref{lemma:property-sched-threshold-algorithm}
we get that there is at least one state $t\in S$ such that
$$
  (y_{t,r},\theta_{t,r})\ \ \ \not= \ \ \
  (y_{t,r}',\theta_{t,r}')
$$
There is some rational number $\threshold$  with
$\sched = \SchedThreshold{\threshold}$
and $\CExp{\sched} > \threshold$.
By Lemma \ref{lemma:property-sched-threshold-algorithm}:
$$
  (\threshold {-}r) \cdot (y_{s,r}-y_{s,r}')
  \ \ \ \leqslant \ \ \
  \theta_{s,r} - \theta_{s,r,i+1}
  \ \ \ \leqslant \ \ \
  (\threshold' {-}r) \cdot (y_{s,r}-y_{s,r}')
$$
for all states $s$.
As $\threshold' > \threshold$ we obtain
$y_{s,r} \geqslant y_{s,r}'$ for all $s\in S$.
Moreover, $y_{s,r}=y_{s,r}'$ implies
$\theta_{s,r}=\theta_{s,r}'$.
But then $y_{t,r} > y_{t,r}'$.

For statement (c), we suppose
$\threshold' < \min \cI^{\uparrow}_{\sched,r}$
and $\sched(\cdot,r)=\sched'(\cdot,r)$.
Clearly, then $\cI_{\sched,r}=\cI_{\sched',r}$ and
$\cI_{\sched,r}^{\uparrow}=\cI_{\sched',r}^{\uparrow}$.
By Lemma \ref{lemma:property-sched-threshold-algorithm}:
\begin{eqnarray*}
    r + \frac{\theta_{s,r}-\theta_{s,r,\alpha}}{y_{s,r}-y_{s,r,\alpha}}
    \ \ \geqslant \ \ \threshold'
    & \ \ & \text{if $y_{s,r} > y_{s,r,\alpha}$}
    \\
    \\[0ex]
    r + \frac{\theta_{s,r}-\theta_{s,r,\alpha}}{y_{s,r}-y_{s,r,\alpha}}
    \ \ \leqslant \ \ \threshold'
    & \ \ & \text{if $y_{s,r} < y_{s,r,\alpha}$}
    \\
    \\[0ex]
    \theta_{s,r} \ \geqslant \ \theta_{s,r,\alpha}
    \ \ \ \ \ \ \
    & \ \ & \text{if $y_{s,r} = y_{s,r,\alpha}$}
\end{eqnarray*}
for all states $s$ and actions $\alpha \in \Act(s)$.
As $\threshold' \geqslant \CExp{\max}$ we get
$\cI_{\sched,r} = \cI_{\sched,r}^{\uparrow}$.

We now use the fact that the
scheduler-improvement step attempts to find the largest value in
$\threshold'' \in \cI^{\uparrow}_{\sched,r}$ such that
$\CExp{\max}\geqslant \threshold''$ by successively running the
threshold algorithm for the values in
$\cI_{\sched,r}^{\uparrow} = \cI_{\sched,r}$
that are still contained in the current interval $I$.
As $\sched'$ has been generated by the threshold algorithm for the
threshold $\threshold'$, which is strictly less than
$\min \cI_{\sched,r}$,
we conclude:
$$
  \CExp{\max} \ \ < \ \ B_{i+1} \ \  < \ \ \min \cI_{\sched,r}
$$
Hence, for each state-action pair $(s,\alpha)$ with
$\alpha \in \Act(s)$ and $y_{s,r} > y_{s,r,\alpha}$ we have:
$$
    r + \frac{\theta_{s,r}-\theta_{s,r,\alpha}}{y_{s,r}-y_{s,r,\alpha}}
    \ \ \ \geqslant \ \ \ \CExp{\max}
$$
For state-action pair $(s,\alpha)$ with
$y_{s,r} < y_{s,r,\alpha}$ we have:
$$
    r + \frac{\theta_{s,r}-\theta_{s,r,\alpha}}{y_{s,r}-y_{s,r,\alpha}}
    \ \ \ \leqslant \ \ \ \threshold' \ \ \ <  \ \ \ \CExp{\max}
$$
This yields:
$$
  \theta_{s,r} - (\CExp{\max}{-}r)\cdot y_{s,r}
  \ \ \geqslant \ \
  \theta_{s,r,\alpha} - (\CExp{\max}{-}r)\cdot y_{s,r,\alpha}
$$
for all states $s$ and actions $\alpha \in \Act(s)$.
But then $\sched_i$ and $\sched_{i+1}$ are strongly $r$-optimal.
As $B_{i+1} <  \min \cI_{\sched,r}$ (see above)
and $\cI_{\sched,r} =  \cI_{\sched',r}$
we get $\cI_{\sched',r}^{B_{i+1}}=\varnothing$.

For statement (d), we first observe that
$\sched_i(\cdot,r)=\sched_{i+1}(\cdot,r)$ implies
$\cI_{\sched_i,r}=\cI_{\sched_{i+1},r}$.
Although $\cI^{\uparrow}_{\sched_{i+1},r}$ can be a proper superset of
$\cI^{\uparrow}_{\sched_i,r}$, a call of the threshold algorithm
for $\threshold_{i+1}=\min \cI^{\uparrow}_{\sched_i,r}$ is only possible
if the calls of the threshold algorithm for the values
$d\in \cI^{\uparrow}_{\sched,r}\setminus \{\threshold_{i+1}\}$
were not successful or have been dropped as $d$ was known to be larger
than $\CExp{\max}$.
Thus, we have $d \geqslant B_{i+1}$ for all
$d\in \cI^{\uparrow}_{\sched,r}\setminus \{\threshold_{i+1}\}$
Hence:
$$
  \cI^{\uparrow}_{\sched_{i+1},r} \cap [A_{i+1},B_{i+1}[ \ \ \ = \ \ \
  \{\threshold_{i+1}\}
$$
Recall that the interval $I$ at the beginning of the
scheduler-improvement step for $\sched_{i+1}$ is
$I_{i+1}=[A_{i+1},B_{i+1}[$
where $A_{i+1}=\CExp{\sched_{i+1}}$.
We now can rely on statement (b) for $\sched_{i+1}$ rather than $\sched_i$
to derive statement (d).
\Ende
\end{proof}

Recall that $\md$ denotes the number of memoryless deterministic
schedulers in $\cM$.

\begin{lemma}
  \label{lemma:anzahl-scheduler-improvement-steps}
    The algorithm performs at most $2 \cdot \saturation \cdot \md$
    scheduler-improvement steps.
\end{lemma}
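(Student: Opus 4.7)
The approach is to show that at each level $r \in \{0, 1, \ldots, \saturation-1\}$ the algorithm performs at most $2\md$ scheduler-improvement steps; summing over all $\saturation$ levels yields the bound $2 \cdot \saturation \cdot \md$. Fix $r$ and let $i_1 < i_2 < \ldots < i_m$ enumerate the indices with $r_{i_k} = r$. Since the decisions at higher levels $R > r$ have been frozen before level $r$ is entered, the pair $(y_{s,r,i_k}, \theta_{s,r,i_k})_{s \in S}$ is entirely determined by the memoryless slice $\sigma_k := \sched_{i_k}(\cdot, r)$, and therefore takes at most $\md$ distinct values as $k$ varies. Writing $v_k := (y_{s,r,i_k})_{s \in S}$, Lemma \ref{lemma:last-scheduler-improvement-at-level-r}(a) gives $v_1 \geq v_2 \geq \ldots \geq v_m$ componentwise, and the inequality of Lemma \ref{lemma:property-sched-threshold-algorithm} relating $\theta_{s,r}$ and $y_{s,r}$ further shows that $v_k = v_{k+1}$ forces equality of the full $(y, \theta)$-pair.

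\textbf{Key claim.} Each $v$-value occurs at most twice in the sequence $v_1, \ldots, v_m$, and only in consecutive positions. Suppose toward contradiction that $v_k = v_{k+1} = v_{k+2}$. The contrapositive of Lemma \ref{lemma:last-scheduler-improvement-at-level-r}(b) applied to $(i_k, i_{k+1})$ forces $\cI^{\uparrow}_{\sched_{i_k}, r} \cap I_{i_k} \neq \varnothing$ and $\threshold_{i_{k+1}} \leq \min \cI^{\uparrow}_{\sched_{i_k}, r}$. Using the determination of $(y,\theta)$ from the slice plus a canonical tie-breaking rule in the extraction of $\psched^*$ from the LP of Figure~\ref{LP-threshold}, we may assume $\sigma_k = \sigma_{k+1}$. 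Two subcases remain: if $\threshold_{i_{k+1}} < \min \cI^{\uparrow}_{\sched_{i_k}, r}$, then Lemma \ref{lemma:last-scheduler-improvement-at-level-r}(c) gives $\cI^{B_{i_{k+1}}}_{\sched_{i_{k+1}}, r} = \varnothing$, so the algorithm leaves level $r$ after step $i_{k+1}$, contradicting the existence of step $i_{k+2}$ at level $r$; if $\threshold_{i_{k+1}} = \min \cI^{\uparrow}_{\sched_{i_k}, r}$, then $\CExp{\sched_{i_{k+1}}} > \threshold_{i_{k+1}}$ (otherwise the algorithm would have terminated at $i_{k+1}$ by Corollary \ref{corollary:threshold-algorithm-exact}), so Lemma \ref{lemma:last-scheduler-improvement-at-level-r}(d) yields $v_{k+2} < v_k$ strictly in some coordinate, contradicting $v_{k+2} = v_k$. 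Non-consecutive repetitions are ruled out similarly, because monotonicity forces all intermediate vectors to coincide with $v_k$, so the same three-step analysis applies.

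\textbf{Finishing and main obstacle.} From the key claim we obtain $m \leq 2 \cdot |\{v_1, \ldots, v_m\}| \leq 2\md$, and summing over the $\saturation$ levels yields the stated bound. The hard part will be the assertion that equal $(y,\theta)$-pairs force equal slices, which is required to invoke parts (c) and (d) of Lemma \ref{lemma:last-scheduler-improvement-at-level-r} as stated. I would resolve this either by fixing a deterministic tie-breaking rule when selecting $\psched^*(s) \in \Act^*(s)$ in the LP of Figure~\ref{LP-threshold} (e.g.\ the lexicographically smallest feasible action), making the scheduler produced by $\SchedThreshold{\cdot}$ a deterministic function of the frozen higher-level data and the threshold value, or by inspecting the proofs of (c)/(d) to show that slice equality is used only through the induced $(y,\theta)$-equality, permitting the hypothesis to be weakened accordingly.
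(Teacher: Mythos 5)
Your argument follows essentially the same route as the paper's own proof: fix a level $r$, use part~(a) of Lemma~\ref{lemma:last-scheduler-improvement-at-level-r} to get component-wise monotone decrease of the $y$-vector, and then combine the (contrapositive of) part~(b) with parts~(c) and~(d) to force a strict $y$-drop every two steps, so that each of the at most $\md$ distinct values is visited at most twice per level. The paper's proof phrases this slightly differently — it introduces an irreflexive, transitive relation $\lhd_r$ on the schedulers and counts memoryless slices rather than $y$-vectors directly — but since the slice at level $r$ together with the frozen higher-level data determines $(y_{\cdot,r},\theta_{\cdot,r})$, both bookkeepings yield the same bound.

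The gap you flag is real and, notably, is left implicit in the paper's own proof: the contrapositive of part~(b) together with part~(a) only yields $(y_{\cdot,r,i},\theta_{\cdot,r,i}) = (y_{\cdot,r,i+1},\theta_{\cdot,r,i+1})$, not the literal slice equality $\sched_i(\cdot,r) = \sched_{i+1}(\cdot,r)$ that parts~(c) and~(d) formally require as a hypothesis (the paper's case split silently upgrades ``$\sched_i \not\lhd \sched_{i+1}$'' to ``$\sched_i(\cdot,r)=\sched_{i+1}(\cdot,r)$''). Of your two proposed repairs, the second is the one that actually succeeds without touching the algorithm: inspecting the proofs of (c) and (d) shows that the slice-equality hypothesis is used only to conclude $\cI_{\sched_i,r}=\cI_{\sched_{i+1},r}$ and $y_{\cdot,r,i}=y_{\cdot,r,i+1}$, both of which are already consequences of $(y,\theta)$-equality (the sets $\cI_{\sched,r}$ depend on the current scheduler only through the level-$r$ pair $(y_{s,r},\theta_{s,r})_{s\in S}$ and the frozen data at levels $>r$). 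So (c) and (d) remain valid under the weaker hypothesis, and your three-step contradiction argument goes through as written. The tie-breaking alternative would also work but is unnecessary.
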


\begin{proof}
It suffices to show that there are at most $2\cdot \md$
scheduler-improvement steps at each level
$r\in \{0,1,\ldots,\saturation{-}1\}$.

By part (a) of Lemma \ref{lemma:last-scheduler-improvement-at-level-r}
we get that if $r=r_i$ then:
$$
   y_{s,r,i} \ \geqslant \
   y_{s,r,i+1} \ \geqslant \
   y_{s,r,i+2} \ \geqslant \ \ldots \
$$
for each state $s\in S$.
We define the following relation $\lhd = \lhd_r$ on
the set of schedulers
$\sched_i$ where $r_i=r$.
\begin{center}
  $\sched_i \ \lhd \ \sched_j$
  \qquad iff \qquad
  there exists some state $t\in S$ with $y_{t,r,i} > y_{t,r,j}$
\end{center}
Clearly, $\lhd$ is transitive and irreflexive.
Moreover, $\sched_i \lhd \sched_{j}$ implies
$\sched_i(\cdot,r)\not=\sched_{j}(\cdot,r)$.
As a consequence of parts (b), (c) and (d) of
Lemma \ref{lemma:last-scheduler-improvement-at-level-r}
we get that
for each $i$ with $r=r_i=r_{i+1}$:
\begin{center}
 \begin{tabular}{ll}
    &
    $\sched_i \ \lhd \ \sched_{i+1}$
    \\[1ex]
    or \ \ \ &
    $\sched_i(\cdot,r)=\sched_{i+1}(\cdot,r)$ \ and \
    $r=r_{i+2}$ \ and \
    $\sched_i \ \lhd \ \sched_{i+2}$
    \\[1ex]
    or \ \ \ &
    $\sched_i(\cdot,r)=\sched_{i+1}(\cdot,r)$ \ and \
    $r_{i+2}=r{-}1$ (if $r>0$) resp.~the algorithm
    \\
    &
    terminates
    with $\sched_{i+1}$ as a strongly optimal scheduler (if $r=0$)
 \end{tabular}
\end{center}
Each of the function $\sched_i(\cdot,r)$ can be viewed as a
memoryless deterministic scheduler for $\cM$.
The above shows that each memoryless deterministic scheduler for $\cM$
appears at most twice in the sequence
$\sched_{i}(\cdot,r), \sched_{i+1}(\cdot,r), \sched_{i+2}(\cdot,r),
   \ldots$ induced by schedulers
$\sched_j$ where the algorithm performs a scheduler-improvement step
at level $r$.
This yields the claim.
\Ende
\end{proof}

\begin{proposition}[Complexity]
  \label{complexity-CExp-max-algo}
  The algorithm terminates after at most
  $2\cdot \saturation \cdot |S|\cdot |\Act|^{|S|+1}$
  calls of the threshold algorithm.
  The time complexity is exponential in the size of the MDP.
\end{proposition}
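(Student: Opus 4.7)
The plan is to combine the two bounds already established: Lemma~\ref{lemma:anzahl-scheduler-improvement-steps} bounds the number of scheduler-improvement steps by $2\cdot \saturation \cdot \md$, and Lemma~\ref{lemma:scheduler-improvement}(b) bounds the number of threshold-algorithm calls within each scheduler-improvement step by $|S|\cdot |\Act|$. Multiplying these two bounds gives a total of at most $2\cdot \saturation \cdot \md \cdot |S|\cdot |\Act|$ calls of the threshold algorithm. Since the number $\md$ of memoryless deterministic schedulers satisfies $\md \leqslant |\Act|^{|S|}$, this yields the claimed bound $2\cdot \saturation \cdot |S|\cdot |\Act|^{|S|+1}$.

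For the statement about the time complexity, I would argue as follows. By the results of Appendix~\ref{appendix:compute-saturation}, the logarithmic length of the saturation point $\saturation$ is polynomial in $\Size(\cM)$, so the value of $\saturation$ itself is at most exponential in $\Size(\cM)$. The factor $|\Act|^{|S|+1}$ is also (single) exponential in $\Size(\cM)$. Hence the number of calls of the threshold algorithm is bounded by a single exponential in $\Size(\cM)$. Each individual call of the threshold algorithm runs in pseudo-polynomial time in $\Size(\cM)$, i.e., in time polynomial in $\saturation$ and $\Size(\cM)$ (see Appendix~\ref{appendix:complexity-threshold}), which is also single-exponential in $\Size(\cM)$. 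The bookkeeping overhead per scheduler-improvement step (computation of the sets $\cI_{\sched,r}$, $\cI^{\uparrow}_{\sched,r}$, $\cI^{B}_{\sched,r}$ and the values $y_{s,r,\alpha}$, $\theta_{s,r,\alpha}$) is polynomial in $\Size(\cM)$ together with the logarithmic lengths of the stored rational values. The analysis from Appendix~\ref{appendix:complexity-threshold} carries over to these values, showing that their logarithmic lengths stay polynomial in $\saturation$ and $\Size(\cM)$. Altogether, this yields a single-exponential overall time bound.

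There is no real obstacle here; the proposition is essentially a bookkeeping corollary of the two lemmas and the complexity analysis of the threshold algorithm. The only subtlety worth a brief remark is that the termination of the whole algorithm (not just the bound) also follows from Lemma~\ref{lemma:anzahl-scheduler-improvement-steps}, so partial correctness (Proposition~\ref{partial-correctness-CEmax-algo}) together with the bound established here immediately gives total correctness, completing the proof of Theorem~\ref{thm:soundness-CExpmax-algo} and the scheduler-computation part of Theorem~\ref{thm:computing-cexpmax}.
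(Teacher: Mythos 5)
Your proposal is correct and follows exactly the paper's argument: combining Lemma~\ref{lemma:anzahl-scheduler-improvement-steps}, part~(b) of Lemma~\ref{lemma:scheduler-improvement}, the complexity analysis of the threshold algorithm from Appendix~\ref{appendix:complexity-threshold}, and the bound $\md \leqslant |\Act|^{|S|}$. Your write-up is simply more explicit than the paper's one-sentence proof.
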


\begin{proof}
The statement follows by a combination of
Lemma \ref{lemma:anzahl-scheduler-improvement-steps},
part (b) of Lemma \ref{lemma:scheduler-improvement} and
the results of Section \ref{appendix:complexity-threshold}
and using the fact $\md \leqslant |\Act|^{|S|}$.
\Ende
\end{proof}

\section{PSPACE completeness for acyclic MDPs}

\label{appendix:PSPACE}

We now address the complexity of the four variants of the
threshold problem for maximal conditional expectations in MDPs.
The exponential algorithms for the threshold problems presented
in Section~\ref{sec:threshold} and
 Appendix~\ref{appendix:threshold}
yield an exponential upper bound.
The purpose of this section is to show that the threshold problems are
PSPACE-complete for acyclic MDPs.

We start with the observation that even for acyclic MDPs
history-dependent schedulers can be more powerful to maximize or minimize
conditional expected accumulated rewards.
Obviously, if $\cM$ is acyclic then $\cM$ enjoys conditions (A1) and (A2)
and has no critical scheduler. Thus, the maximal conditional expected
accumulated reward for reaching $\goal$ is finite.

\begin{figure}[ht]
  \includeGastex{hd-reward}
\caption{MDP $\cM$ for Example \ref{example:HD-vs-det-schedulers}}
\label{fig:MDP-HD-scheduler}
\end{figure}

\begin{example}[History needed in acyclic MDPs]
 \label{example:HD-vs-det-schedulers}
We regard the acyclic MDP shown in Figure \ref{fig:MDP-HD-scheduler}.
The memoryless schedulers that select always $\alpha$ resp.~$\beta$
for state $s$ have the conditional expectation $3/2$:
\begin{eqnarray*}
      \CExp{\sched_{\alpha}} & \ \ = \ \ &
      \frac{\frac{1}{4} \cdot 2 + \frac{1}{4} \cdot 1}{\frac{1}{4}+\frac{1}{4}}
      \ \ \ = \ \ \
      \frac{\ \frac{3}{4} \ }{\frac{1}{2}}
      \ \ \ = \ \ \
      \frac{3}{2}
      \\
      \\[0ex]
      \CExp{\sched_{\beta}} & \ \ = \ \ &
      \frac{\frac{1}{6} \cdot 2 + \frac{1}{6} \cdot 1}{\frac{1}{6}+\frac{1}{6}}
      \ \ \ = \ \ \
      \frac{\ \frac{1}{2} \ }{\frac{1}{3}}
      \ \ \ = \ \ \
      \frac{3}{2}
\end{eqnarray*}
The scheduler $\tsched$ that chooses $\alpha$ for the path
$\fpath_1 = \sinit \, \gamma \, s_1 \, \gamma_1 \, s$
and $\beta$ for the path
$\fpath_2 = \sinit \, \gamma \, s_2 \, \gamma_2 \, s$
has the conditional expectation $8/5$ as we have:
\begin{eqnarray*}
      \CExp{\tsched} & \ \ = \ \ &
      \frac{\frac{1}{4} \cdot 2 + \frac{1}{6} \cdot 1}{\frac{1}{4}+\frac{1}{6}}
      \ \ \ = \ \ \
      \frac{\ \frac{8}{12} \ }{\frac{5}{12}}
      \ \ \ = \ \ \
      \frac{8}{5}
\end{eqnarray*}
Finally, we regard the scheduler $\usched$ that chooses
$\beta$ for $\fpath_1$ and $\alpha$ for $\fpath_2$.
Its conditional expectation is:
\begin{eqnarray*}
      \CExp{\tsched} & \ \ = \ \ &
      \frac{\frac{1}{6} \cdot 2 + \frac{1}{4} \cdot 1}{\frac{1}{6}+\frac{1}{4}}
      \ \ \ = \ \ \
      \frac{\ \frac{7}{12} \ }{\frac{5}{12}}
      \ \ \ = \ \ \
      \frac{7}{5}
\end{eqnarray*}
As the maximal conditional expectation is achieved by a deterministic scheduler
(see Section~\ref{sec:reward-based}) we get:
$$
     \frac{7}{5} \ \ = \ \ \CExp{\min} \ \ = \ \ \CExp{\usched}
     \ \ < \ \
     \underbrace{\CExp{\sched_{\alpha}} \ \ = \ \
                 \CExp{\sched_{\beta}}}_{=\frac{3}{2}}
     \ \ < \ \ \CExp{\tsched} \ \ = \ \ \CExp{\max} \ \ = \ \ \frac{8}{5}
$$
Thus, the history-dependent schedulers are superior when the
task is to maximize or minimize the conditional expectations.
\Ende
\end{example}

\begin{theorem}[PSPACE-completeness for acyclic MDPs]
  \label{theorem:PSPACE-complete-CExp-acyclic}
  All four variants of the
  threshold problem for maximal conditional expectations in acyclic MDPs
  are PSPACE-complete.
\end{theorem}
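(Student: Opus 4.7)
The plan is to establish both PSPACE-membership and PSPACE-hardness.
The membership in PSPACE is a direct consequence of the acyclicity, which
permits the sketch given at the end of Section~\ref{sec:threshold} to
be implemented as a recursive, polynomially space-bounded procedure.
The hardness follows by a reduction from QBF (true quantified Boolean
formula), which is the route I expect to be substantially more delicate.

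For the upper bound, I would argue as follows. Because $\cM$ is acyclic,
every path from $\sinit$ has length at most $|S|$ and hence accumulated
reward at most $|S|\cdot r_{\mathsf{max}}$, whose logarithmic length is
polynomial in $\Size(\cM)$. There is therefore no need for a saturation
point: every reachable state-reward pair $(s,r)$ has $r$ of polynomial
bit-length. I would then design a recursive procedure $\mathrm{Solve}(s,r)$
that, upon entry, returns the pair $(y_{s,r},\theta_{s,r})$ corresponding
to the optimal residual scheduler from $(s,r)$, by trying each
$\alpha \in \Act(s)$, recursively evaluating the successor pairs
$(t,r{+}\rew(s,\alpha))$ with $P(s,\alpha,t)>0$, and picking the action
maximising the criterion from \eqref{difference} with $\threshold$
replaced by the global threshold value ``probed'' at $(\sinit,0)$.
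Crucially, at any moment only a single root-to-leaf branch of the call
stack is alive; the DAG structure guarantees depth $\leqslant |S|$, and
the values stored in each frame (one rational $y$, one rational $\theta$,
the partial sums for the current action's children, and the accumulated
reward $r$) all have polynomial bit-length when represented via the
standard product-of-denominators trick used in
Appendix~\ref{appendix:complexity-threshold}. The final answer at
$(\sinit,0)$ is compared to $\threshold$. Correctness mirrors the
level-wise threshold algorithm, with the simplification that no linear
program is needed because there are no zero-reward loops (cf.\
Remark~\ref{threshold-problem-MDP-without-zero-reward-cycles}).

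For the lower bound, I would reduce from QBF: given a formula
$\Phi = Q_1 x_1\, Q_2 x_2 \cdots Q_n x_n\, \varphi(x_1,\ldots,x_n)$ with
$\varphi$ in CNF, build an acyclic MDP $\cM_\Phi$ that reads the variables
in order, implementing $\exists x_i$ as a nondeterministic choice
between two actions (setting $x_i$ to $0$ or $1$) and $\forall x_i$ as a
uniform probabilistic branching between two states, with all rewards on
these ``quantifier gadgets'' set to $0$. After all variables have been
assigned, the MDP enters an evaluation gadget that, with appropriate
probabilities and rewards, routes the computation to $\goal$ or $\fail$
in a way that encodes $\varphi$'s truth on the chosen assignment. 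By
tuning rewards and the structure of a final conditioning gadget, I would
ensure that $\CExp{\max} \geqslant \threshold$ (for a cleverly chosen
rational $\threshold$) if and only if $\Phi$ is true. The acyclicity is
preserved since the read-once variable structure forbids revisits, and
the size of $\cM_\Phi$ is polynomial in $|\Phi|$. Hardness for the strict
and non-strict variants follows by slight perturbations of $\threshold$,
and hardness for upper-bounded variants by swapping the roles of $\goal$
and $\fail$.

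The main obstacle is the hardness reduction: the conditional expectation
combines a numerator (partial expectation under $\Diamond \goal$) and a
denominator ($\Pr(\Diamond \goal)$), and both depend on the scheduler,
so naive encodings of quantifier alternation tend to let the scheduler
shift probability mass into the denominator to inflate the ratio
independently of whether $\varphi$ is satisfied, as already illustrated
by Example~\ref{example:running-example}. The core technical work is to
design the evaluation and conditioning gadgets so that the denominator
is (essentially) a fixed constant across all schedulers compatible with
the universally-chosen assignments, reducing the threshold problem on
$\CExp{\max}$ to a weighted-reachability problem that faithfully mirrors
$\exists\forall$ alternation. I expect this to require attaching carefully
chosen rational weights to the two branches of every $\exists$-gadget
(drawn from the techniques used in the reset construction of
Appendix~\ref{appendix:reset-Methode}) so that an ``irresponsible''
existential choice is penalised exactly enough in the numerator to
offset any gain it could bring in the denominator.
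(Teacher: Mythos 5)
Your plan for membership in \textsf{PSPACE} is correct and essentially identical to the paper's argument (Lemma~\ref{lemma:CExp-in-PSPACE}): a recursive, depth-first evaluation of the DAG, using the optimality criterion~\eqref{difference} against the input threshold $\threshold$, with recursion depth bounded by $|S|$ and per-frame data of polynomial bit-length. No issue there.

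Your plan for \textsf{PSPACE}-hardness, however, has a genuine gap, and the route you sketch is unlikely to be repairable in the form you describe. The paper does \emph{not} reduce from QBF directly; it reduces from the problem ``does $\Pr^{\max}_{\cN,s_0}(\Diamond^{\geqslant R}\final) \geqslant \tfrac{1}{2}$ hold?'' for acyclic MDPs, whose \textsf{PSPACE}-completeness was established by Haase and Kiefer, and then spends several pages tuning parameters $p,\lambda,T,X_0,\ldots,X_K$ so that, in a single $\final$-state with $K{+}1$ actions, the optimal scheduler is forced to pick $\accept_i$ precisely according to the dyadic interval containing the accumulated reward, so that the conditional expectation faithfully tracks $\Pr^{\max}(\Diamond^{\geqslant R}\final)$. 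The reason a direct QBF gadget is problematic is twofold. First, there is a state-space explosion you do not address: if the ``evaluation gadget'' checks $\varphi$ on the full assignment after all quantifier gadgets, the MDP must remember the assignment, giving $2^n$ states; if instead the evaluation is interleaved and only a polynomial summary is kept, the scheduler for later $\exists$-choices cannot condition on the full history, breaking the QBF semantics. Second, and more tellingly, your core idea of making the denominator ``(essentially) a fixed constant across all schedulers'' is self-defeating: if $\Pr^{\sched}(\Diamond \goal)$ were a constant $c$ over all admissible schedulers, the threshold question $\CExp{\max}\geqslant\threshold$ would be equivalent to ``is the maximal \emph{unconditional} partial expectation $\geqslant c\,\threshold$?'', which for acyclic MDPs is solvable in polynomial time — so no \textsf{PSPACE}-hardness could emerge. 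The hardness must come precisely from the scheduler's ability to trade numerator against denominator (cf.\ Example~\ref{example:running-example}), in combination with reward magnitudes $R$ that are exponential in the input size; the Haase--Kiefer problem packages exactly this. If you want to do a from-scratch reduction rather than cite their result, you would in effect have to replicate their reward-counting argument inside your gadget, which is a qualitatively different construction from the $\exists/\forall$ gadgetry you propose.
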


As PSPACE is closed under complementation, it suffices to consider the
cases where $\threshold$ serves as a strict or non-strict lower bound.
The proof of Theorem \ref{theorem:PSPACE-complete-CExp-acyclic} for lower
bounds ``$\geqslant \threshold$'' resp.~``$> \threshold$''
is splitted into two parts.
The proof for the PSPACE-hardness will be provided  in
Lemma \ref{lemma:acyclic-CExp-PSPACE-hardness}.
Membership to PSPACE will be shown in
Lemma \ref{lemma:CExp-in-PSPACE}.

\begin{lemma}[PSPACE-hardness for acyclic MDPs]
  \label{lemma:acyclic-CExp-PSPACE-hardness}
  The threshold problem for maximal conditional expectations
  in acyclic MDPs is PSPACE-hard.
\end{lemma}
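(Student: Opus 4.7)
My plan is to reduce from a canonical PSPACE-complete problem such as true quantified Boolean formula (QBF). Given $\Phi = Q_1 x_1 \ldots Q_n x_n\, \psi(\vec{x})$ with $\psi$ in CNF, I would build in polynomial time an acyclic MDP $\cM_\Phi$ with two trap states $\goal,\fail$ (so that the standing assumptions of the paper are met), together with a rational threshold $\threshold$ of polynomial bit-length, such that $\Phi$ holds iff $\CExp{\max} \unrhd \threshold$ for the relevant comparison $\unrhd \in \{>,\geqslant\}$. The other two variants follow by duality and by adjusting $\threshold$ by a computable granularity $\varepsilon$.

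The construction is layered: layer $i$ branches according to $Q_i$. A $\forall$-layer uses a uniform $\tfrac{1}{2}$--$\tfrac{1}{2}$ probabilistic split (the scheduler has no choice), while an $\exists$-layer offers a nondeterministic choice between the two values of $x_i$. After layer $n$ an evaluation gadget for $\psi$ routes the generated full assignment to $\goal$ if $\psi$ is satisfied and otherwise to $\fail$. Rewards are placed only on the $\forall$-transitions, with values growing with the layer depth (for instance powers of two, $2^{n-i}$) so that distinct $\forall$-assignments accumulate distinct total rewards to $\goal$. The MDP is acyclic by the layered shape; its size, the rewards, and $\threshold$ all have polynomial logarithmic length in $|\Phi|$.

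The soundness argument then compares $\CExp{\max}$ to a precomputed $\threshold$ by showing that a scheduler's conditional expectation attains $\threshold$ exactly when its history-dependent choices for the $\exists$-states form a winning Skolem strategy for $\Phi$, i.e.\ when every $\forall$-assignment is satisfied. Because the scheduler has full history, each scheduler corresponds to a family of Skolem functions $y_i = f_i(x_1,\ldots,x_i)$, and conversely; the reduction has to ensure that the maximisation in $\CExp{\max}$ picks out exactly those families that are uniformly winning.

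The main obstacle is precisely this last step, and it is the reason why a more naive encoding into max-reachability does not work: plain max reachability on the same MDP is computable in polynomial time, even though the scheduler already has the Skolem power we want, because nondeterministic choices can be resolved locally per state. What we need the conditional expectation to do is \emph{couple} the branches of the probabilistic tree through the common normaliser $\Pr^{\sched}(\Diamond\goal)$, so that improving an $\exists$-choice in one $\forall$-branch only helps if it is compensated elsewhere (an effect illustrated already by Example~\ref{example:running-example}). Designing rewards and $\threshold$ so that this global ratio $\Exp{\sched}{}/\Pr^{\sched}(\Diamond\goal)$ is strictly monotone in the number of satisfied $\forall$-assignments and jumps past $\threshold$ exactly at ``all $\forall$-assignments satisfied'' is the delicate part of the reduction; the separation can be made of polynomial bit-length because the numerator and denominator are rational numbers with denominator $2^n$ and polynomially-bounded numerators. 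Combined with the polynomial-time upper bound established in Appendix~\ref{appendix:threshold} on the threshold algorithm restricted to acyclic MDPs (which runs in polynomial space via the recursive exploration described in Section~\ref{sec:threshold}, see Lemma~\ref{lemma:CExp-in-PSPACE}), this yields PSPACE-completeness.
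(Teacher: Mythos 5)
You take a genuinely different reduction than the paper: you reduce directly from QBF, whereas the paper reduces from the problem ``given an acyclic MDP $\cN$ and $R \in \Nat$, is $\Pr^{\max}_{\cN,s_0}(\Diamond^{\geqslant R}\final) \geqslant \tfrac{1}{2}$?'', whose PSPACE-hardness is imported from Haase and Kiefer. The paper's gadget wraps $\cN$ with an initial coin-flip into an ``anchor'' branch (a state $t$ moving to $\goal$ with a large reward $T$, chosen so that \emph{every} conditional expectation lies within $\pm\tfrac14$ of $T$), and at $\final$ offers $\reject$ plus a ladder of $\accept_i$ actions with rewards $X_i$ and goal-probabilities $\lambda^{K-i}$ calibrated so that, under the optimal scheduler, $\accept_i$ is forced exactly on the accumulated-reward interval $[R{-}1{+}2^i, R{-}1{+}2^{i+1})$, thereby turning the conditional expectation into a monotone function of $\Pr(\Diamond^{\geqslant R}\final)$. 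The threshold $\threshold$ is then $f(\tfrac12)$ for that monotone function.

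The gap in your sketch is precisely the step you flag as ``the delicate part'', and as stated it cannot be filled. In your layered MDP every $\forall$-assignment $a$ arrives at the evaluation gadget with equal probability $2^{-m}$, and a scheduler corresponds to a Skolem family deciding which $a$ are routed to $\goal$. Hence
\[
  \CExp{\sched}
  \ = \
  \frac{\sum_{a \in \mathrm{Sat}(\sched)} \rew(a)}{|\mathrm{Sat}(\sched)|},
\]
the \emph{average} reward over the satisfied $\forall$-branches. With distinct branch rewards (powers of two) this average is maximised by a scheduler that satisfies only the single highest-reward branch, not by one that satisfies all branches. So $\CExp{\max}$ is not monotone (let alone ``jumps'') in the number of satisfied $\forall$-assignments, and no choice of $\threshold$ can separate ``all $\forall$-branches satisfied'' from ``few, but expensive ones''. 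You would need an additional mechanism that makes failing on any branch strictly worse than succeeding on all of them --- exactly what the paper achieves with the $T$-anchor, the $\lambda$-damped $\accept_i$ ladder, and the identity $\Delta_i = X - 2^i + 1$ that pins the optimal action to the reward interval; nothing of this kind is present in your gadget. As a secondary point, the paper's reduction produces rational rewards and then explains how to realise them with integer rewards via geometric self-loops or the rescaling of Appendix~\ref{appendix:rational}; your construction would need a similar discussion of the bit-lengths of $\threshold$ and the reward values. As written the proposal therefore contains the right moral (conditional expectations couple branches through the shared normaliser, which is why they are harder than plain reachability) but not a working reduction.
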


\begin{proof}
We first address the case where the given threshold value $\threshold$
is a non-strict lower bound for the maximal conditional expectations.
We provide a polynomial reduction from the problem
\begin{center}
 \begin{tabular}{lp{10cm}}
   given: &
   an acyclic MDP $\cN$ with initial state $s_0$ and a trap state $\final$
   such that
   $\Pr^{\min}_{\cN,s_0}(\Diamond \final)=1$
   and a natural number $R$
   \\[1ex]

   question: &
   does $\Pr^{\max}_{\cN,s_0}(\Diamond^{\geqslant R}\final)
        \, \geqslant \, \frac{1}{2}$ hold ?
 \end{tabular}
\end{center}
PSPACE-completeness of the above problem has been shown by
Haase and Kiefer (Theorem 7 in \cite{HaaseKiefer15}).
In the following, we provide a polynomial reduction that transforms $\cN$ into
an acyclic MDP $\cM$ with non-negative rational rewards
and threshold $\threshold \in \Rational$
such that
$$
  \Pr^{\max}_{\cN,s_0}(\Diamond^{\geqslant R}\final) \ \geqslant \ \frac{1}{2}
  \qquad \text{iff} \qquad
  \CExpState{\max}{\cM,\sinit}
       (\, \accdiaplus \goal \, | \, \Diamond \goal \, )
  \ \geqslant \ \threshold
$$
In Section \ref{section:rational-rewards} we explain a general approach
for transforming
MDPs with rational rewards into MDPs with integer rewards of the same
asymptotic size.
An alternative approach will be sketched at the end of the proof.

In the sequel, let $S_{\cN}$ be the state space of $\cN$,
$\Act_{\cN}$ the action set,
$P_{\cN} : S_{\cN}\times \Act_{\cN}\times S_{\cN} \to [0,1]$
the transition probability function,
$\rew_{\cN} : S_{\cN}\times \Act_{\cN} \to \Nat$ the reward function
and $s_0\in S_{\cN}$ the initial state of $\cN$.
Furthermore, there is a distinguished trap state $\final \in S_{\cN}$
(i.e., $\Act_{\cN}(\final)=\varnothing$) with
$\Pr^{\min}_{\cN,s_0}(\Diamond \final)=1$, i.e.,
$\Pr^{\tsched}_{\cN,s_0}(\Diamond \final)=1$ for all schedulers
$\tsched$ for $\cN$. As $\cN$ is acyclic this means that all maximal
paths in $\cN$ end in state $\final$.

The idea of the reduction is to define $\cM$ as the MDP that results from
the given MDP $\cN$ by adding a fresh starting state $\sinit$,
an auxiliary state $t$ and two trap states $\goal$ and $\fail$.
In the initial state $\sinit$, $\cM$ behaves purely probabilistically and moves
with probability $p$ to the new state $t$ and with probability $1{-}p$ to
the initial state $s_0$ of $\cN$. The reward of the initialization step is 0.
From state $t$, $\cM$ moves deterministically to state $\goal$ while
earning some reward $T$.
The probability value $p$ will be chosen in such a way that
the conditional expectation of each scheduler for $\cM$ is between
$T{-}\frac{1}{4}$ and $T{+}\frac{1}{4}$.
As soon as $\cN$ has reached $s_0$, $\cM$ behaves as $\cN$.
For the final state $\final$ of $\cN$, $\cM$ offers
$K{+}1$ actions, called $\reject$ and $\accept_0,\accept_1,\ldots,\accept_K$.
The number $K$ will be chosen in such a way that
$2^{K-1} \leqslant \rew_{\cN}(\fpath) -R < 2^K$
for all paths $\fpath$ in $\cN$.
With action $\reject$, $\cM$ moves with probability 1 and
reward 0 to state $\fail$. Intuitively, the reject action should be the
optimal one for all paths from $\sinit$ to $\final$ with reward
less than $R$.
The actions $\accept_i$ for $0 \leqslant i < K$ are probabilistic and
lead from $\final$ to $\goal$ with probability $\lambda^{K-i}$
and with probability $1{-}\lambda^{K-i}$ to state $\fail$ for some rational
value $\lambda \in ]0,1[$.
The reward of $\accept_i$ is some value $X_i$.
When selecting $\accept_K$ in state $\final$,
$\cM$ moves deterministically to state $\goal$, while earning reward $X_K$.
The parameter $\lambda$
and the reward values $X_0,\ldots,X_K$ will be chosen in such a way
that action $\accept_i$ is optimal for all paths $\fpath$
from $\sinit$ to $\final$ with
$R{-}1{+}2^i \leqslant \rew_{\cM}(\fpath) < R{-}1{+}2^{i+1}$.

Assuming that appropriate values $p,\lambda,T,X_0,\ldots,X_K$
have been defined, the formal definition of $\cM$ is as follows.
The state space of $\cM$ is
$$
  S_{\cM} \ \ = \ \ S_{\cN} \cup \bigl\{ \goal,\fail,\sinit,t\bigr\}
$$
The action set of $\cM$ is
$$
  \Act_{\cM} \ \ = \ \
  \Act_{\cN} \cup \bigl\{ \tau, \reject, \accept_0,\ldots,\accept_K\bigr\}
$$
We have $\Act(\sinit)=\Act(t)=\{\tau\}$ and
$$
 \begin{array}{ll}
   P_{\cM}(\sinit,\tau,t) \ = \ p, \ \ \ \ \
   P_{\cM}(\sinit,\tau,s_0) = 1{-}p, \ \ \ \ \ &
   \rew_{\cM}(\sinit,\tau) \ = \ 0
 \end{array}
$$
and
$$
   P_{\cM}(t,\tau,\goal) \ = \ 1, \qquad
   \rew_{\cM}(t,\tau) \ = \ T
$$
For the states $s\in S_{\cN} \setminus \{\final\}$ we have
$\Act_{\cM}(s)=\Act_{\cN}(s)$ and $P_{\cM}(s,\alpha,t)=P_{\cN}(s,\alpha,t)$
and $\rew_{\cM}(s,\alpha)=\rew_{\cN}(s,\alpha)$
for all states $t\in S_{\cN}$ and actions $\alpha \in \Act_{\cN}(s)$.
States $\goal$ and $\fail$ are trap states in $\cM$, i.e.,
$\Act_{\cM}(\goal)=\Act_{\cM}(\fail)=\varnothing$.
For the state $\final$ we have
$$
  \Act_{\cM}(\final) \ \ = \ \
  \bigl\{\reject \bigr\} \ \cup \
  \bigl\{\accept_0,\accept_1,\ldots,\accept_K\bigr\}
$$
and $P_{\cM}(\final,\reject,\fail)=1$, $\rew_{\cM}(\final,\reject)=0$
and
$$
 \begin{array}{lll}
   \begin{array}{lcl}
      P_{\cM}(\final,\accept_i,\goal) & = & \lambda^{K-i} \\[0.5ex]
      P_{\cM}(\final,\accept_i,\fail) & = & 1{-}\lambda^{K-i}
   \end{array}
  & \ & 
    \begin{array}{lcl}
       \rew_{\cM}(\final,\accept_i) & = & X_i  
    \end{array}
 \end{array}
$$
for $i=0,1,\ldots,K$.
In all remaining cases, we have $P_{\cM}(\cdot)=0$.

\tudparagraph{2ex}{{\it Some auxiliary notations and choice of $\lambda$.}}
For $i \in \Nat$ we define
$$
  R_i \ \ = \ \ R-1+2^i
$$
Note that $R_0=R$ and $R_{i+1} \, = \, R_0 + 2^{i+1}\, = \, R_i+2^i$.
Let
$$
  E \ \ = \ \
  \sum_{\stackrel{s\in S}{s\not= \final}} \rew^{\max}_{\cN}(s)
$$
where
$$
  \rew^{\max}_{\cN}(s) \ \ = \ \
  \max \ \bigl\{ \
           \rew_{\cN}(s,\alpha) \ : \ \alpha \in \Act_{\cN}(s) \
         \bigr\}
$$
Clearly, $\rew_{\cN}(\fpath) \leqslant E$ for all paths $\fpath$ in $\cN$.
W.l.o.g. $R < E$. Let $K\in \Nat$ such that
$$
  2^{K-1} \ \ \leqslant \ \ E-R \ \ < \ \ 2^K
$$
For each state $s$ in $\cN$, let  $m_s$ be the least common multiple of
the denominators of the probability values $P_{\cN}(s,\alpha,t)$ for some
$\alpha \in \Act_{\cN}(s)$ and $\alpha$-successor $t$ of $s$.
Let
$$
   m \ \ = \ \ \prod_{\stackrel{s\in S_{\cN}}{s\not= \final}} m_s
$$
Then, $m$ is a natural number and the number of digits of $m$ in a binary
(or decimal) encoding is bounded by the size of $\cN$.
Moreover, as $\cN$ is acyclic, for each maximal path $\fpath$ from $\sinit$ to
$\final$, the probability $\probability(\fpath)$ can be written in the form
$\ell/m$ for some natural number $\ell$. This yields that for any deterministic
scheduler $\sched$ for $\cN$ we have:
\begin{equation}
  \label{prob-k-m}
  \Pr^{\sched}_{\cN,\sinit}(\Diamond^{\geqslant R} \final)
  \ \in \ \left\{ \ \frac{k}{m} \ : k \in \Nat \ \right\}
  \tag{*}
\end{equation}
The value $\lambda$ is defined by:
$$
  \lambda  \ \ \ = \ \ \ 1 - \frac{1}{2 K m}
$$
By the Bernoulli inequality:
$$
  \lambda^K \ \ \ = \ \ \ \Bigl( 1 - \frac{1}{2 K m} \Bigr)^K
  \ \ \ \geqslant \ \ \
  1 - \frac{K}{2 K m} \ \ \ = \ \ \
  1 - \frac{1}{2 m}
$$

\tudparagraph{2ex}{{\it Reward parameters $X_0,X_1,\ldots,X_K$.}}
The reward values $X_0, X_1,\ldots,X_K$ are defined inductively by:
$$
  X_i \ \ \ = \ \ \
  (1{-}\lambda )\cdot ( X - 2^i+1 ) \ \ + \ \ \lambda X_{i-1}
  \qquad
  \text{for $i=1,\ldots,K$}
$$
where the choice of $X_0=X$ will be explained below.
For the following statements up to and including Claim 2,
it suffices to deal with any value $X$ such that
$X \geqslant 2^K$ and $X \geqslant 2Km = 1/(1{-}\lambda)$.
Note that the constraint $X \geqslant 2^K$ yields
$$
  X_0 \ = \ X \ > \ X_1 \ > \ X_2 \ > \ \ldots \ > \ X_K \ > \ 0.
$$
Moreover:
\begin{eqnarray*}
   X_i & \ = \ &
   \lambda^i X \ \ + \ \
   (1-\lambda) \cdot \sum_{j=0}^i \lambda^j (X-2^{i-j}+1)
   \\
   \\[0ex]
   & = &
   \lambda^i X \ \ + \ \
   (1-\lambda) \cdot \sum_{j=0}^i \lambda^j (X+1)
   \ \ - \ \
   (1-\lambda) 2^i \cdot \sum_{j=0}^i \left(\frac{\lambda}{2}\right)^j
   \\
   \\[0ex]
   & = &
   \lambda^i X \ \ + \ \
   (1-\lambda) (X+1) \cdot \frac{1-\lambda^{i+1}}{1-\lambda}
   \ \ - \ \
   (1-\lambda) 2^i \cdot
    \frac{1-\left(\frac{\lambda}{2}\right)^{i+1}}
         {1-\frac{\lambda}{2}}
   \\
   \\[0ex]
   & = &
   \lambda^i X \ \ + \ \ \bigl(1-\lambda^{i+1}\bigr)(X+1)
   \ \ - \ \
   \frac{1-\lambda}{2-\lambda}\cdot \bigl( 2^{i+1}-\lambda^{i+1}\bigr)
   \\
   \\[0ex]
   & = &
   X \ \ + \ \ \lambda^i(1-\lambda) X \ \ - \ \ \lambda^{i+1}
   \ \ - \ \
   \frac{1-\lambda}{2-\lambda}\cdot \bigl( 2^{i+1}-\lambda^{i+1}\bigr)
   \\
   \\[0ex]
   & = &
   X \ \ + \ \ \lambda^i(1-\lambda) X
   \ \ - \ \
   \frac{1-\lambda}{2-\lambda}\cdot  2^{i+1}
   \ \ - \ \ \frac{\lambda^{i+1}}{2-\lambda}
\end{eqnarray*}
Recall that we require $X \geqslant 2Km$.
This implies:
$$
  X \ \ \ > \ \ \
  \frac{1-\frac{1}{2Km}}{1+\frac{1}{2Km}} \cdot 2Km
  \ \ \ = \ \ \
  \frac{\lambda}{2{-}\lambda} \cdot \frac{1}{1{-}\lambda}
$$
Hence, $(1{-}\lambda) X \ > \ \lambda/(2{-}\lambda)$. Therefore:
$$
 \lambda^i(1-\lambda) X \ \ > \ \ \frac{\lambda^{i+1}}{2-\lambda}
$$
This yields
$$
   X_i \ \ \ > \ \ \ X \ \ - \ \ \frac{1-\lambda}{2-\lambda}\cdot  2^{i+1}
$$
and therefore $X_i + 2^i > X$ as we have:
\begin{eqnarray*}
  X_i + 2^i
  & \ \ > \ \ &
  X \ \ - \ \ \frac{1-\lambda}{2-\lambda}\cdot  2^{i+1} \ \ + \ \ 2^i
  \\
  \\[0ex]
  & = &
  X \ \ + \ \
    2^i \cdot \Bigl( \ 1 \ - \ 2\cdot \frac{1 -\lambda}{2-\lambda} \ \Bigr)
  \\
  \\[0ex]
  & = &
  X \ \ + \ \ 2^i \cdot \frac{2-\lambda - 2 + 2\lambda}{2-\lambda}
  \\
  \\[0ex]
  & = &
  X \ \ + \ \ \frac{2^{i+1} \lambda}{2-\lambda}
  \ \ \ > \ \ \ X
\end{eqnarray*}
Furthermore, the inductive definition of the values $X_i$ yields:
$$
  \Delta_i \ \ \ \eqdef \ \ \
  \frac{\lambda^{K-i} X_i - \lambda^{K-i+1} X_{i-1}}
       {\lambda^{K-i}-\lambda^{K-i+1}}
  \ \ \ = \ \ \
  \frac{X_i - \lambda X_{i-1}}{1-\lambda}
  \ \ \ = \ \ \
  X-2^i+1
$$

\tudparagraph{2ex}{{\it Definition of the parameters of the initialization.}}
We define
$$
   T \ \ = \ \ R+X-\frac{1}{2}
$$
and choose a rational number $p \in ]0,1[$ such that
$$
   pT \ \ \ \geqslant \ \ \ T-\frac{1}{4} \ \ \ = \ \ \ R+X-\frac{3}{4}
$$
and
$$
   \frac{1}{p} \cdot \Bigl( \ p T  \ + \ (1{-}p) (R + 2^K + X) \ \Bigr)
   \ \ \  \leqslant \ \ \ T+\frac{1}{4} \ \ \ = \ \ \ R+X-\frac{1}{4}
$$
The latter constraint is equivalent to
$$
  \frac{1-p}{p} \cdot (R + 2^K + X) \ \ \ \leqslant \ \ \ \frac{1}{4}
$$
which again is equivalent to
$$
 \frac{1}{p} \cdot (R + 2^K + X) \ \ \ \leqslant \ \ \
 \frac{1}{4} \ + \ R+2^K+X
$$
For example, we can deal with
$$
  p \ \ \ = \ \ \
  \max \
  \left\{ \ \ \
     \frac{R+2^K+X}{\frac{1}{4} + R+2^K+X}, \ \ \
     \frac{R+X-\frac{3}{4}}{R+X-\frac{1}{2}} \ \ \
  \right\}
$$
Note that then indeed $0< p < 1$.
Moreover, the choice of $p$ ensures that:
$$
  T-\frac{1}{4} \ \ \ \leqslant \ \ \
  \CExp{\sched} \ \ \ \leqslant \ \ \
  T+\frac{1}{4}
$$
for each scheduler $\sched$ for $\cM$
with $\Pr^{\sched}_{\cM,\sinit}(\Diamond \goal)>0$.
In particular:
\begin{eqnarray*}
  \CExp{\max} & \ \ \leqslant \ \ & T + \frac{1}{4} \ \ \ = \ \ \
  R + X - \frac{1}{4} \ \ \ < \ \ \ R+X
  \\
  \\[0ex]
  \CExp{\min} & \ \ \geqslant \ \ & T - \frac{1}{4} \ \ \ = \ \ \
  R+X-\frac{3}{4} \ \ \ > \ \ \ R+X-1
\end{eqnarray*}

\tudparagraph{2ex}{{\it Optimal decisions in the final state.}}
As before, if $\sched$ is a scheduler for $\cM$ then we write
$\CExp{\sched}$ for
$\CExpState{\sched}{\cM,\sinit}(\accdiaplus \goal |\Diamond \goal)$.
We use here the fact that
$\rew_{\cM}(\fpath) \leqslant E \leqslant R+2^K$ for each
path from $\sinit$ to $\final$ in $\cM$ and $X_i \leqslant X$ for
$i=0,1,\ldots,K$.
By the choice of the reward values $X_0,\ldots,X_K$, we obtain that
action $\accept_i$ is optimal for each finite path $\fpath$ from
$\sinit$ to $\final$ with $R_i \leqslant \rew_{\cM}(\fpath) < R_{i+1}$.
To see this, we rely on Lemma \ref{lemma:rho-theta-x-y}
and the observation:
$$
  \underbrace{\rew_{\cM}(\fpath)}_{\geqslant R_i} \ + \
  \Delta_i
  \ \ \ \geqslant \ \ \
  R_i + X-2^i +1
  \ \ \ = \ \ \
  R+X \ \ \ > \ \ \ \CExp{\max}
$$
Thus, if $\rew_{\cM}(\fpath) \geqslant R_i$ then action $\accept_i$
yields a better (larger) conditional expectation than $\accept_{i-1}$.
Likewise, for the paths $\fpath$ from $\sinit$ to $\final$
with $\rew(\fpath) < R_i$, action $\accept_{i-1}$ is better than
$\accept_i$ as we have:
$$
  \underbrace{\rew_{\cM}(\fpath)}_{\leqslant R_i{-}1} \ + \
  \Delta_i
  \ \ \ \leqslant \ \ \
  R_i -1 + X-2^i+1
  \ \ \ = \ \ \
  R+X-1 \ \ \ < \ \ \ \CExp{\min}
$$
Action $\reject$ is the optimal one for exactly the paths $\fpath$
from $\sinit$ to $\goal$ with $\rew(\fpath) < R$ as we have:
$$
   \underbrace{\rew_{\cM}(\fpath)}_{\leqslant R{-}1} \ + \
   \frac{\lambda^{K-i} X_i - 0}{\lambda^{K-i}-0}
   \ \ \ \leqslant \ \ \
   R-1 + X_i \ \ \ \leqslant  \ \ \ R+X-1
   \ \ \ < \ \ \ \CExp{\min}
$$
Let $\SchedOpt$ denote the class of scheduler
$\sched$ for $\cM$ such that for each finite paths $\fpath$ from
$\sinit$ to $\final$ we have:
$$
  \sched(\fpath) \ \ \ = \ \ \
  \left\{
    \begin{array}{lcl}
      \accept_i & : & \text{if $R_i \leqslant \rew_{\cM}(\fpath) < R_{i+1}$}
      \\[0.8ex]
      \reject & : &  \text{if $\rew_{\cM}(\fpath) < R$}
    \end{array}
  \right.
$$
The above shows that for each scheduler $\usched$ for $\cM$ there is
a scheduler $\sched \in \SchedOpt$ with
$\CExp{\sched}\geqslant \CExp{\usched}$.

\tudparagraph{1ex}{{\it Functions $f$ and $g$.}}
We define functions $f, g : [0,1] \to \Real$ as follows:
\begin{eqnarray*}
  f(q) & \ \ = \ \ &
  \frac{ pT \ + \ (1{-}p) q \lambda^K (R+X)}{p \ + \ (1{-}p)q \lambda^K}
  \\
  \\[0ex]
  g(q) & \ = \ &
  \frac{ pT \ + \ (1{-}p) q (R+2^K+X_K)}{p \ + \ (1{-}p)q}
\end{eqnarray*}
Both $f$ and $g$ are strictly monotonous as their first derivatives
$f'$ and $g'$ are positive. Note that
$$
  \text{if} \ \ h(q) = \frac{a + bq}{c + dq}
  \ \ \text{then} \ \
  h'(q) \ = \
  \frac{bc - ad}{(c+dq)^2}
$$
In the case of $f$, we deal with
$a= pT$, $b = (1{-}p)\lambda^K (R{+}X)$, $c = p$ and
$d = (1{-}p)\lambda^K$. Then:
$$
 \begin{array}{lcl}
  bc - ad & \ \ = \ \ &
   (1{-}p)\lambda^K (R{+}X)p \ - \ pT(1{-}p) \lambda^K
  \\
  \\[-1ex]
  & \ \ = \ \ &
   p(1{-}p)\lambda^K (R{+}X {-} T)
  \\
  \\[-1ex]
  & \ \ = \ \ &
   p(1{-}p)\lambda^K \cdot \frac{1}{2} \ \ > \ \ 0
 \end{array}
$$
In the case of $g$ we deal with
$a= pT$, $b = (1{-}p) (R{+}2^K{+}X_K)$, $c = p$ and
$d = 1{-}p$. Then:
$$
 \begin{array}{lcl}
  bc - ad & \ \ = \ \ &
  (1{-}p) (R{+}2^K{+}X_K)p \ - \ pT(1{-}p)
  \\
  \\[-1ex]
  & \ \ = \ \ &
  p(1{-}p)(R{+}2^K{+}X_K {-} T) \ \ > \ \ 0
\end{array}
$$
Note that $2^K + X_K > X$ and $T = R{+}X{-}\frac{1}{2} < R{+}X$, which yields
$R{+}2^K{+}X_K {-} T > 0$.

\tudparagraph{1ex}{{\it Claim 1:}}
For each scheduler $\sched \in \SchedOpt$ for $\cM$ with
$q \, =\, \Pr^{\sched}_{\cN,s_0}(\Diamond \final)$ we have:
$$
  f(q) \ \ \ \leqslant \ \ \ \CExp{\sched} \ \ \ \leqslant \ \ \ g(q)
$$

\tudparagraph{1ex}{{\it Proof of Claim 1:}}
As before, we write $\CExp{\sched}$ for
$\CExpState{\sched}{\cM,\sinit}(\accdiaplus \goal |\Diamond \goal)$.
Let
$$
  q_r \ \ = \ \ \Pr^{\sched}_{\cM,\sinit}(\Diamond^{=r} \final)
$$
and
$$
  A \ \ = \ \
  \sum_{i=0}^K \ \sum_{r=R_i}^{R_{i+1}-1}
    q_r \cdot \lambda^{K-i} \cdot (r + X_i)
$$
Then:
$$
  \CExp{\sched}
  \ \ \ = \ \ \
  \frac{ pT + (1{-}p)A}
       { p + (1{-}p)(q_0 \lambda^K + q_1\lambda^{K-1} + \ldots + q_K)}
$$
We first show that $f(q) \leqslant \CExp{\sched}$.
Thanks to Lemma \ref{lemma:rho-theta-x-y}, it suffices to show that
for each $i\in \{1,\ldots,K{-}1\}$ and $r \in \Nat$ with
$R_i \leqslant r < R_{i+1}$ we have:
$$
  \frac{\lambda^{K-i} (r+X_i) \ - \ \lambda^{K} (R+X)}
       {\lambda^{K-i} - \lambda^K}
  \ \ \ \geqslant \ \ \ \CExp{\max}
$$
To prove this, we show by induction on $i$:
$$
  \frac{X_i - \lambda^i X}{1-\lambda^i}
  \ \ \ \geqslant \ \ \ X-2^i+1
$$
The case $i=1$ is obvious as
$$
  \frac{X_1 - \lambda X}{1-\lambda}
  \ \ = \ \
  \frac{X_1 - \lambda X_0}{1-\lambda}
  \ \ = \ \ \Delta_1 \ \ = \ \ X-1
$$
Induction step:
\begin{eqnarray*}
  \frac{X_i - \lambda^i X}{1-\lambda^i}
  & \ \ = \ \ &
  \frac{X_i - \lambda X_{i-1}}{1-\lambda^i}
  \ \ + \ \
  \frac{\lambda X_{i-1} - \lambda^i X}{1-\lambda^i}
  \\
  \\[0ex]
  & = &
  \underbrace{\frac{X_i - \lambda X_{i-1}}{1-\lambda}}_{= X{-}2^i+1}
  \cdot \frac{1-\lambda}{1-\lambda^i}
  \ \ \ + \ \ \
  \underbrace{\frac{X_{i-1} - \lambda^{i-1} X}{1-\lambda^{i-1}}}_
       {\geqslant X{-}2^{i-1}+1}
  \cdot \lambda \cdot
  \frac{1-\lambda^{i-1}}{1-\lambda^i}
  \\
  \\[0ex]
  & \geqslant &
  (X-2^i+1) \cdot \frac{1-\lambda}{1-\lambda^i}
  \ \ \ + \ \ \
  (X-2^{i-1}+1) \cdot
   \lambda \cdot \frac{1-\lambda^{i-1}}{1-\lambda^i}
  \\
  \\[0ex]
  & = &
  (X-2^{i-1}+1) \cdot
  \frac{1 - \lambda + \lambda(1-\lambda^{i-1})}{1-\lambda^i}
  \ \ - \ \
  2^{i-1} \cdot \underbrace{\frac{1-\lambda}{1-\lambda^i}}_{\leqslant 1}
  \\
  \\[-2ex]
  & \geqslant &
  (X-2^{i-1}+1) - 2^{i-1} \ \ \ = \ \ \
   X-2^i+1
\end{eqnarray*}
As $R_i \leqslant r < R_{i+1}$, we have $r \geqslant R+2^i-1$.
We obtain:
\begin{eqnarray*}
  \frac{\lambda^{K-i} (r+X_i) \ - \ \lambda^{K} (R+X)}
       {\lambda^{K-i} - \lambda^K}
  & \ \ = \ \ &
  \frac{(r+X_i) \ - \ \lambda^i (R+X)}
       {1 - \lambda^i}
  \\
  \\[0ex]
  & \geqslant &
  \frac{(R+2^i-1 + X_i) \ - \ \lambda^i (R+X)}
       {1 - \lambda^i}
  \\
  \\[0ex]
  & \geqslant &
  R \ \ + \ \ \frac{2^i-1}{1-\lambda^i} \ \ + \ \
  \frac{X_i \ - \ \lambda^i X}
       {1 - \lambda^i}
  \\
  \\[0ex]
  & \geqslant &
  R + 2^i -1 + X - 2^i + 1
  \\[2ex]
  & = & R+X \ \ \ > \ \ \ \CExp{\max}
\end{eqnarray*}
This yields $f(q) \leqslant \CExp{\sched}$.
The proof of the statement $g(q) \geqslant \CExp{\sched}$ is analogous.
We first observe that for $i < K$:
$$
  \frac{X_K - \lambda^i X_{K-i}}{1-\lambda^i}
  \ \ \ \geqslant \ \ \ X - 2^{K-i+1} +1
$$
Thus, if $R_{K-i} \leqslant r < R_{K-i+1} = R-1 +2^{K-i+1}$ then
\begin{eqnarray*}
  & &
  \frac{(R+2^K + X_K) - \lambda^{i} (r+X_{K-i})}
       {1 - \lambda^{i}}
  \\
  \\[0ex]
  & \ > \ &
  \frac{(R+2^K+X_k) \ - \ \lambda^{i} (R-1+2^{K-i+1}+X_{K-i})}
       {1 - \lambda^{i}}
  \\
  \\[0ex]
  & = &
  \frac{(R + 2^{K-i+1}) \ - \ \lambda^i (R  +  2^{K-i+1})}{1-\lambda^i}
  \ \ + \ \
  \underbrace{\frac{X_K - \lambda^{i} X_{K-i}}{1-\lambda^{i}}}_{
     \geqslant X-2^{K-i+1}+1}
  \ \ + \ \
  \underbrace{\frac{2^K - 2^{K-i+1} + \lambda^i}{1-\lambda^{i}}}_{\geqslant 0}
  \\[-1ex]
  & \geqslant &
  R +  2^{K-i+1} %
  \ +  \
  X - 2^{K-i+1} + 1
  \\
  \\[0ex]
  & \ > \ & R+X \ \ > \ \ \CExp{\max}
\end{eqnarray*}
By Lemma \ref{lemma:rho-theta-x-y},
we obtain $g(q) \geqslant \CExp{\max}$.

\tudparagraph{2ex}{{\it Definition of the threshold value.}}
The threshold $\threshold$ for conditional expectations
in $\cM$ is defined as follows:
$$
  \threshold \ \ \ = \ \ \
  f\Bigl(\frac{1}{2}\Bigr) \ \ \ = \ \ \
  \frac{ pT \ + \ (1{-}p) \frac{1}{2} \lambda^K (R+X)}
       { p \ + \ (1{-}p)\frac{1}{2} \lambda^K}
$$
We now prove the first part of the soundness of the reduction.
The precise value of $X$ is still irrelevant, except that we require
$X \geqslant 2^K$ and $X \geqslant 2Km$.

\tudparagraph{2ex}{{\it Claim 2:}}
 \ \ \
 $\Pr^{\max}_{\cN,s_0}(\Diamond^{\geqslant R}\final) \ \geqslant \ \frac{1}{2}$
 \quad implies \quad
 $\CExpState{\max}{\cM,\sinit}(\accdiaplus \goal |\Diamond \goal)
  \ \geqslant \ \threshold$

\tudparagraph{2ex}{{\it Proof of Claim 2.}}
Suppose
$\Pr^{\max}_{\cN,s_0}(\Diamond^{\geqslant R}\final) \ \geqslant \ \frac{1}{2}$.
We pick a deterministic scheduler $\tsched$ for $\cN$ with
$$
 q \ \ \eqdef \ \
 \Pr^{\tsched}_{\cN,s_0}(\Diamond^{\geqslant R}\final) \
 \ \geqslant \ \ \frac{1}{2}
$$
Let $\sched$ be the unique scheduler for $\cM$ in $\SchedOpt$ that extends
$\tsched$ by decisions for the paths ending in $\final$.
More precisely, $\sched(\sinit \, \tau \, \fpath) = \tsched(\fpath)$
for each finite path $\fpath$ in $\cN$ from $s_0$ to some state
$s\in S_{\cN}\setminus \{\final\}$.
For the $\sched$-paths $\fpath$ from
$\sinit$ to $\final$ we have $\sched(\fpath) = \accept_i$ if
$R_i \leqslant \rew_{\cM}(\fpath) < R_{i+1}$ and
$\sched(\fpath)=\reject$ if $\rew_{\cM}(\fpath) < R$.
By Claim 1 and the monotonicity of $f$ we get:
$$
  \CExp{\sched} \ \ \ \geqslant \ \ \ f(q)
  \ \ \ \geqslant \ \ \ f\Bigl(\frac{1}{2}\Bigr) \ \ \ = \ \ \ \threshold
$$
Hence,  $\CExp{\max} \geqslant  \threshold$.

\tudparagraph{2ex}{{\it Definition of the reward value $X$.}}
While the arguments presented so far hold for any value $X$,
an adequate choice of $X$ is crucial for
the reverse implication. We define:
$$
  X \ \ = \ \ m \cdot 2^K
$$
Note that $X$ meets the constraints
$X \geqslant 2^K$ and $X \geqslant 2Km = 1/(1{-}\lambda)$
that have been required before.
We then have $X_0=X > X_1 > \ldots > X_K >0$.

\tudparagraph{2ex}{{\it Claim 3:}}
 \ \ \
 $\CExpState{\max}{\cM,\sinit}(\accdiaplus \goal |\Diamond \goal)
  \ \geqslant \ \threshold$
 \quad implies \quad
  $\Pr^{\max}_{\cN,s_0}(\Diamond^{\geqslant R}\final)
    \ \geqslant \ \frac{1}{2}$

\tudparagraph{1ex}{{\it Proof of Claim 3.}}
We now suppose that $\CExp{\max} \geqslant \threshold$.
Let $\sched$ be a scheduler with
$\CExp{\sched} \geqslant \threshold$.
We may suppose w.l.o.g. that $\sched \in \SchedOpt$.
The goal is to show that
$$
  q \ \ \eqdef \ \
  \Pr^{\sched}_{\cN,s_0}(\Diamond \final)
  \ \ \geqslant \ \ \frac{1}{2}
$$
We suppose by contradiction that $q < \frac{1}{2}$.
By \eqref{prob-k-m} we obtain:
$$
  q \ \ \leqslant \ \ \frac{1}{2} - \frac{1}{m}
$$
Recall that by Claim 1, we have $\CExp{\sched} \leqslant g(q)$.
Using the monotonicity of $f$ and $g$ it suffices to show that
\begin{center}
  $f\bigl(\frac{1}{2} \bigr) \ \ \geqslant \ \
   g\bigl(\frac{1}{2}-\frac{1}{m}\bigr)$
\end{center}
Again we can rely on Lemma \ref{lemma:rho-theta-x-y}.
By the choice of $\lambda$ we have:
$$
  \frac{1}{2}\lambda^K \ \ \ \geqslant \ \ \
  \frac{1}{2} \Bigl( 1 - \frac{1}{2m} \Bigr)
  \ \ \ = \ \ \
  \frac{1}{2} - \frac{1}{4m}
  \ \ \ > \ \ \
  \frac{1}{2} - \frac{1}{m}
$$
Hence (by Lemma \ref{lemma:rho-theta-x-y}), the task is show that:
$$
  \frac{1}{2}\lambda^K (R+X) \ \ - \ \
  \Bigl( \frac{1}{2}-\frac{1}{m} \Bigr) \cdot (R+2^K+X_K)
  \ \ \  > \ \ \
  \Bigl( \frac{1}{2}\lambda^K - \frac{1}{2}-\frac{1}{m} \Bigr)
  \cdot \threshold
$$
Obviously, this is equivalent to the following statement:
$$
  \frac{1}{2}\lambda^K (R+X-\threshold)
  \ \ \ > \ \ \
  \Bigl( \frac{1}{2}-\frac{1}{m} \Bigr) \cdot (R+2^K +X_K -\threshold)
$$
As $X_K \leqslant X$ and
$\frac{1}{2} \lambda^K \geqslant \frac{1}{2} -\frac{1}{4m}$ (see above),
it suffices to show that
$$
  \Bigl(\frac{1}{2} - \frac{1}{4m} \Bigr) \cdot X
  \ \ \ > \ \ \
  \Bigl( \frac{1}{2}-\frac{1}{m} \Bigr) \cdot (2^K +X)
$$
which is equivalent to the statement
$$
  \Bigl(\frac{1}{m} - \frac{1}{4m} \Bigr) \cdot X
  \ \ \ > \ \ \
  \Bigl( \frac{1}{2}-\frac{1}{m} \Bigr) \cdot 2^K
$$
Indeed we have:
$$
   \Bigl(\frac{1}{m} - \frac{1}{4m} \Bigr) \cdot X
   \ \ \ = \ \ \
   \frac{1}{m} \cdot \frac{3}{4} \cdot m \cdot 2^{K}
   \ \ \ > \ \ \ 2^{K-1} \ \ \ > \ \ \
   \Bigl( \frac{1}{2}-\frac{1}{m} \Bigr) \cdot 2^K
$$
This completes the proof of Claim 3.

\tudparagraph{2ex}{{\it Size of the generated MDP.}}
The size of the graph of $\cM$ is linear in the size of the graph structure
of $\cN$.
It remains to check that the length of the binary encoding of
all parameters $p,\lambda,T,X_0,\ldots,X_K$ are polynomially bounded in
the size of $\cN$.
This is indeed the case as the logarithmic lengths of $m$ and $E$
are polynomially bounded in the size of $\cN$ and
$K = \lfloor \log (E{-}R) \rfloor +1$. (Recall that we suppose $E> R$.)
Note that the number of digits required to represent $m$ is bounded by
$\ell \cdot |S_{\cN}| \cdot |\Act_{\cN}|$
where $\ell$ is the logarithmic length
of the largest reward value in $\cN$.

\tudparagraph{2ex}{{\it Transforming $\cM$ into an MDP with integer rewards.}}
In the presented construction, the
reward values $T,X_1,\ldots,X_K$ are non-negative rational numbers.
Section \ref{section:rational-rewards} presents a polynomial
transformation of MDPs with rational rewards into MDPs with integer rewards.
In the setting here, we can rely on an alternative approach. This makes
use of the fact that the non-integer rational rewards only appear
for the state-action pairs $(t,\tau)$ and $(\final,\accept_i)$ that lead to
a trap state.
(Recall that all reward values in the given MDP
$\cN$ are natural numbers.)
In particular, there are no nondeterministic choices in $\cM$
after firing transitions with non-integer rewards.

Instead of moving with probability 1 from $t$ to $\goal$ while earning reward
$T$ we can redefine the transition probabilities and
reward value for $(t,\tau)$ by
$$
   P_{\cM}(t,\tau,\goal) \ = \ \frac{1}{T}, \qquad
   P_{\cM}(t,\tau,t) \ = \ \frac{T{-}1}{T}, \qquad
   \rew_{\cM}(t,\tau) \ = \ 1
$$
Then, the expected total reward for the path fragments from $t$
to state $\goal$ in $\cM$ equals $T$.
Note that with $T = \frac{k}{\ell}$ we have:
\begin{eqnarray*}
  \sum_{i=0}^{\infty}
     \left( \frac{k{-}\ell}{\ell} \right)^i \cdot \frac{\ell}{k} \cdot (i{+}1)
  & \ \ = \ \ &
  \frac{\ell}{k} \cdot \frac{1}{\bigl( 1 - \frac{k{-}\ell}{k}\bigr)^2}
  \ \ \ = \ \ \
  \frac{\ell}{k}\cdot \frac{1}{\bigl( \frac{\ell}{k} \bigr)^2}
  \ \ \ = \ \ \
  \frac{k}{\ell}
\end{eqnarray*}
The treatment of $(\final,\accept_i)$ is analogous.
We can introduce a fresh state $t_i$ such that $\cM$ moves from $\final$
to $t_i$ with probability 1 and reward 0. The behavior in state $t_i$ is
purely probabilistic (say the enabled action is $\tau$) where the
transition probabilities and the reward value are given by:
$$
   P_{\cM}(t_i,\tau,\goal) \ = \ \frac{\lambda^{K-i}}{X_i}, \qquad
   P_{\cM}(t_i,\tau,\fail) \ = \ \frac{1{-}\lambda^{K-i}}{X_i}, \qquad
   P_{\cM}(t_i,\tau,t_i) \ = \ \frac{X_i{-}1}{X_i}
$$
and $\rew_{\cM}(t,\tau)=1$.

\tudparagraph{2ex}{{\it
   Strict lower bound for the maximal conditional expectations.}}
We now turn to the PSPACE-hardness of the question
\begin{center}
  \begin{tabular}{l}
    ``does
      $\CExpState{\max}{\cM,\sinit}(\accdiaplus \goal | \Diamond \goal)
         \, > \, \threshold$ hold?''
   \end{tabular}
\end{center}
We can rely on the strict monotonicity of functions $f$ and $g$
to adapt the statements and proofs of Claims 2 and 3
for the strict bound $f(\frac{1}{2}-\frac{1}{2m})$ rather than the
non-strict bound $f(\frac{1}{2})$.
With \eqref{prob-k-m} we obtain:
\begin{eqnarray*}
  & \ \ \ \ &
  \Pr^{\max}_{\cN,s_0}(\Diamond^{\geqslant R} \final)
  \ \ \geqslant \ \ \frac{1}{2}
  \\
  \\[0ex]
  \text{iff} & &
  \Pr^{\max}_{\cN,s_0}(\Diamond^{\geqslant R} \final)
  \ \ > \ \ \frac{1}{2} - \frac{1}{2m}
  \\
  \\[0ex]
  \text{iff} & &
  \CExp{\max} \ \ > \ \ f\Bigl( \frac{1}{2}-\frac{1}{2m} \Bigr)
\end{eqnarray*}
This yields the PSPACE-hardness of the threshold problem
for strict bounds.
\Ende
\end{proof}

\begin{lemma}[Threshold problem in PSPACE for acyclic MDPs]
  \label{lemma:CExp-in-PSPACE}
  All four variants of the threshold problem for maximal conditional
  expectations in acyclic MDPs
  are solvable by polynomially space-bounded algorithms.
\end{lemma}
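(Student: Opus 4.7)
The plan is to replace the level-wise bottom-up threshold algorithm of Section~\ref{sec:threshold} with a top-down recursive procedure that exploits the acyclicity of $\cM$. Since $\cM$ has no cycles, every maximal path has length at most $|S|{-}1$, and the accumulated reward along any path is bounded by $(|S|{-}1)\cdot \rew^{\max}$ where $\rew^{\max} = \max_{s,\alpha} \rew(s,\alpha)$, hence has logarithmic length polynomial in $\Size(\cM)$. In particular, for any reachable state-reward pair $(s,r)$ both $s$ and $r$ admit a polynomially-sized encoding. No saturation point is needed: for all state-reward pairs $(s,r)$ that are reachable from $(\sinit,0)$, the values $y_{s,r}$ and $\theta_{s,r}$ under an optimal scheduler can be determined directly, and the decision of feasibility at $(s,r)$ depends only on the analogous values at pairs $(t,R)$ with $R = r+\rew(s,\alpha) > r$ (for positive-reward actions) or $R=r$ with $t$ strictly later than $s$ in a topological order of $\cM$ (for zero-reward actions, of which there are no cycles since $\cM$ is acyclic).

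The recursive procedure $\textsc{Compute}(s,r)$ returns the pair $(y_{s,r},\theta_{s,r})$ associated with an optimal choice at $(s,r)$ given the threshold value $\threshold$. The base case handles $s\in\{\goal,\fail\}$ with the trivial values. In the inductive case, we enumerate $\alpha \in \Act(s)$, and for each $\alpha$ we compute
\begin{center}
$y_{s,r,\alpha} \ = \ \sum_{t} P(s,\alpha,t)\cdot y_{t,R}, \qquad \theta_{s,r,\alpha} \ = \ \rew(s,\alpha)\cdot y_{s,r,\alpha} + \sum_{t} P(s,\alpha,t)\cdot \theta_{t,R},$
\end{center}
where $R=r+\rew(s,\alpha)$ and the values $(y_{t,R},\theta_{t,R})$ are obtained by recursive calls $\textsc{Compute}(t,R)$. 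We then pick $\alpha$ so as to maximize $\theta_{s,r,\alpha} - (\threshold{-}r)\cdot y_{s,r,\alpha}$ (breaking ties towards larger $y_{s,r,\alpha}$ as in Remark~\ref{threshold-problem-MDP-without-zero-reward-cycles}), and return that $(\theta,y)$ pair. Correctness is immediate from Lemma~\ref{lemma:rho-theta-LP} and the soundness argument of the threshold algorithm specialised to the acyclic (hence zero-reward-cycle-free) case. After the top-level call $\textsc{Compute}(\sinit,0)$ returns $(y,\theta)$, the algorithm answers ``yes'' iff $y>0$ and $\theta/y \bowtie \threshold$.

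For the PSPACE bound, observe that the call stack of the recursion has depth at most $|S|$: along any path in the recursion tree the second coordinate $r$ cannot decrease, and after each zero-reward call the state $s$ strictly advances in a topological order of $\cM$, so no state-reward pair is ever pushed twice on the same branch. Each stack frame stores a state $s$, a reward value $r$ (both of polynomial bit length), and a constant number of running arithmetic values (the current best pair $(y,\theta)$ and the candidate $(y_{s,r,\alpha},\theta_{s,r,\alpha})$). All four threshold variants $\CExp{\max}\bowtie\threshold$ reduce to this single decision, as the generated scheduler always satisfies $\CExp{\sched}\geqslant\CExp{\max}$ when the returned pair is used.

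The main obstacle is that, while the total running time can be exponential (the recursion may visit exponentially many reachable state-reward pairs, and no memoization is possible within polynomial space), the arithmetic must stay polynomially bit-bounded so that each frame truly uses polynomial space. This is handled by observing that for every reachable pair $(s,r)$, the values $y_{s,r}$ and $\theta_{s,r}$ are rational numbers whose numerator and denominator arise by $\leqslant |S|$ nested sums of products of entries of $P$ and $\rew$; an analogue of Lemma~\ref{rational-LES} applied recursively along the at-most-$|S|$-deep call stack yields a polynomial bound on their bit length, and intermediate sums $\theta_{s,r,\alpha}-(\threshold{-}r)\cdot y_{s,r,\alpha}$ inherit the same bound. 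Hence every stack frame uses $\poly(\Size(\cM),\log\threshold)$ space and the total space consumption is in $\cO(|S|\cdot\poly(\Size(\cM),\log\threshold))$, establishing membership in PSPACE for all four threshold variants.
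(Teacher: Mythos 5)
Your approach is essentially the same as the paper's: a top-down recursive procedure $\REC(s,r)$ over reachable state-reward pairs that returns $(y_{s,r},\theta_{s,r})$ by recursively evaluating successors and selecting the action maximizing $\theta_{s,r,\alpha}-(\threshold{-}r)\cdot y_{s,r,\alpha}$ (tie-breaking on $y$), with recursion depth at most $|S|$ by acyclicity and polynomial space per frame. One minor slip: the generated scheduler satisfies $\CExp{\sched}\leqslant\CExp{\max}$ (not $\geqslant$), and the paper disposes of the other three variants via the dual strict-bound algorithm and PSPACE's closure under complement rather than reducing them all to one decision, but this does not affect the substance of the argument.
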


\begin{proof}
Let $\cM$ be an acyclic MDP. We sketch a polynomially space-bounded algorithm
that decides whether $\CExp{\max} \geqslant \threshold$ for some given positive
rational number $\threshold$.
The treatment of the threshold problem with a strict lower bound
``does $\CExp{\max} > \threshold$ hold?'' is analogous and omitted here.
By duality of non-strict (resp.~strict) upper bounds and strict
(resp.~non-strict) lower bounds and the closedness of PSPACE under complements
we obtain that also the problems
``does $\CExp{\max} \leqslant \threshold$ hold?'' and
``does $\CExp{\max} < \threshold$ hold?'' belong to PSPACE.

To design a (deterministic) polynomially space-bounded
algorithm for the threshold problem
``does $\CExp{\max} \geqslant \threshold$ hold?''
we reuse the same concepts as in the
algorithm for the threshold problem presented in
Section~\ref{sec:threshold} and Appendix~\ref{appendix:threshold},
but now in a recursive procedure that enumerates
only the relevant state-reward pairs $(s,r)$.

Let $\REC(s,r)$ denote a recursive procedure that takes as input a state
$s$ and a reward value $r\in \{0,1,\ldots \saturation\}$.
It returns a triple $(\alpha,y,\theta) = (\action(s,r),y_{s,r},\theta_{s,r})$
where $\alpha$ is the decision of a reward-based scheduler $\sched$ for
the state-reward pair $(s,r)$ and $y=y_{s,r}$ and $\theta =\theta_{s,r}$ 
are the corresponding
probability and expectation values, i.e.,
\begin{center}
   $y = \Pr^{\residual{\sched}{(s,r)}}_{s}(\Diamond \goal)$ \ \ and \ \
   $\theta = \Exp{\residual{\sched}{(s,r)}}{s}$.
\end{center}
The initial call is $\REC(\sinit,0)$. If the returned probability value
$y_{\sinit,0}$ is positive then the algorithm terminates with the answer
``yes'' or ``no'', depending on whether
$\theta_{\sinit,0}/y_{\sinit,0} \geqslant \threshold$.
If $y_{\sinit,0}=0$ then the algorithms terminates with the answer ``no''.

The terminal cases are calls $\REC(s,r)$ where
$s\in \{\goal,\fail\}$.
For the trap states $\goal$ and $\fail$, the returned triple consists
of a dummy action name, probability value 1 for $\goal$ and 0 for $\fail$
and expectation 0.
Recall that by \eqref{assumption:A1} there are no other trap states.%
\footnote{Recursive calls $\REC(s,\saturation)$
   where $\saturation$ is  a precomputed saturation point
   could also be treated as terminal cases that return
   the triple $(\maxsched(s),p^{\max}_s,\Exp{\maxsched}{s})$ where
   $\maxsched$ is as in Lemma \ref{lemma:Sched-max-exp}
   and $p_s^{\max} = \Pr^{\max}_s(\Diamond \goal) =
     \Pr^{\maxsched}_s(\Diamond \goal)$.
   However, the precomputation of a saturation point $\saturation$ is
   irrelevant in the acyclic case and can be omitted.}

Suppose now $s \in S \setminus \{\goal,\final\}$.
The call $\REC(s,r)$ inspects all actions
$\alpha \in \Act(s)$ and recursively calls the procedure
$\REC(t,R)$ for all $\alpha$-successors $t$ of $s$
where $R = \min \{r{+}\rew(s,\alpha),\saturation\}$.
The obtained triples $(\action(t,R),y_{t,R},\theta_{t,R})$ are used
to compute the values
\begin{eqnarray*}
     y_{s,r,\alpha} & \ = \ &
         \sum_{t\in S} P(s,\alpha,t)\cdot y_{t,R}
     \\
     \\[0ex]
     \theta_{s,r,\alpha} & = &
         \rew(s,\alpha) \cdot y_{s,r,\alpha} \ \ + \ \
         \sum_{t\in S} P(s,\alpha,t)\cdot \theta_{t,R}
\end{eqnarray*}
$\REC(s,r)$ then picks one action $\alpha \in \Act(s)$ where
 $$
   \Delta_{s,r,\alpha} \ \ \ \eqdef \ \ \
   \theta_{s,r,\alpha} - (\threshold {-} r)\cdot y_{s,r,\alpha}
 $$
 is maximal.
 If there are two or more candidate actions, it selects an
 action $\alpha$ where $y_{s,r,\alpha}$ is maximal under all
 actions in $\beta \in \Act(s)$ where $\Delta_{s,r,\beta}$ is
 maximal.
Finally, $\REC(s,r)$ returns
$(\action(s,r),y_{s,r},\theta_{s,r})=
 (\alpha,y_{s,r,\alpha},\theta_{s,r,\alpha})$.

The argument for the soundness is fairly
the same as for the algorithm presented in
Section \ref{algo:threshold-cexp}
(see
Lemma \ref{lemma:soundness-threshold-algorithm} and
Remark \ref{threshold-problem-MDP-without-zero-reward-cycles}).
It remains to show that the presented recursive approach for acyclic MDPs
is polynomially space bounded.
The recursion depth is bounded by the length of a longest path in
$\cM$ (where ``length'' refers to the number of transitions rather than the
reward values), and therefore bounded by $|S|$.
The space requirements per recursive call are polynomial in the size of $\cM$.
Thus, the overall space complexity is polynomially bounded.
\Ende
\end{proof}

\begin{remark}[PSPACE-completeness for acyclic MDPs with rational rewards]
\label{remark:PSPACE-completenss-acyclic-rational}
{\rm
The recursive algorithm presented in the proof of 
Lemma \ref{lemma:CExp-in-PSPACE} also works for MDPs where the reward
values $\rew(s,\alpha)$ are (positive or negative) rational numbers.
The asymptotic space requirements are the same.
This yields that all variants of the threshold problem for 
both maximal and minimal
conditional expectations ``does $\CExp{\max}\bowtie \threshold$ hold?''
and ``does $\CExp{\min}\bowtie \threshold$ hold?'' in MDPs with
rational rewards are PSPACE-complete.
\footnote{
The minimal conditional expectations $\CExp{\min}$ are defined
analogously to $\CExp{\max}$, i.e.,
$\CExp{\min} = \inf_{\sched} \ \CExp{\sched}$ 
where $\sched$ ranges over all
schedulers for $\cM$ satisfying the scheduler requirement
\eqref{assumption:SR}.
}
As before, $\bowtie$ is one of the comparison operators 
$<$, $\leqslant$, $>$ or $\geqslant$.
Note that PSPACE-completeness for minimal expectations is a consequence of the
PSPACE-completeness of the threshold problems for maximal expectations
in acyclic MDPs with rational rewards as we can 
multiply all reward values $\rew(s,\alpha)$ with value ${-}1$.
Later in Corollary \ref{corollary:Cmin-PSPACE-completeness for acyclic MDPs} 
we will show that PSPACE-hardness
of the threshold problem for minimal expectations even holds for
acyclic MDPs with non-negative rewards.
\Ende
 }
\end{remark}

\section{Rational and negative rewards}
\label{appendix:rational}

The algorithm presented for the computation of conditional expectations
crucially relies on the assumption that all rewards are
non-negative integer values.
However, MDPs with rational rewards can be easily transformed into
MDPs with integer rewards of the same size. This transformation together
with the presented algorithm for MDPs with non-negative integer rewards
yields a method for computing conditional expectations in MDPs with
non-negative rational rewards.
See Section \ref{section:rational-rewards}.
The treatment of MDPs with positive and negative rewards appears to be harder.
For the case of acyclic MDPs we provide a polynomial transformation to MDPs
with non-negative rewards. The blow-up of the transformed MDP is at most
quadratic.
See Section \ref{section:negative-rewards}.

\subsection{From rational rewards to integer rewards}
\label{section:rational-rewards}
\label{appendix:rational-rewards}

Given an MDP $\cM$ with reward function
$\rew_{\cM} : S \times \Act \to \Rational$,
we take the least common multiple $M$ of the denominators of the values
$\rew_{\cM}(s,\alpha)$ with $s\in S$ and $\alpha \in \Act(s)$.
Let now $\cM'$ be the MDP that results from $\cM$ when the reward function
is replaced with
$$
   \rew_{\cM'}(s,\alpha) \ \ = \ \ M \cdot \rew_{\cM}(s,\alpha)
$$
Then, $\rew_{\cM'}(s,\alpha)\in \Integer$ for all state-action pairs
$(s,\alpha)$ and
$\rew_{\cM'}(\fpath) =M \cdot \rew_{\cM}(\fpath)$ for each finite path
$\fpath$ in $\cM$ resp.~$\cM'$. Hence:
$$
  M \cdot
  \CExpState{\sched}{\cM,\sinit}%
   \ \ \ = \ \ \
  \CExpState{\sched}{\cM',\sinit}%
$$
for each scheduler $\sched$ of $\cM$ resp.~$\cM'$.
In particular:
$$
  \CExpState{\max}{\cM,\sinit}%
   \ \ \ = \ \ \
  \frac{1}{M}
  \cdot
  \CExpState{\max}{\cM',\sinit}%
$$
and the analogous statement for minimal conditional expectations.
Obviously, if all rewards in $\cM$ are non-negative, then so are the rewards
in $\cM'$.
Thus, maximal conditional expectations in MDPs 
with non-negative rational rewards
are computable in exponential time and the corresponding
threshold problem is PSPACE-hard.

\subsection{From integer rewards to non-negative integer rewards
   for acyclic MDPs}
\label{section:negative-rewards}
\label{appendix:negative-rewards}

We now explain how to transform a given acyclic MDP $\cM$ with
a reward function $\rew : S \times \Act \to \Integer$
(that might have negative values)
into an acyclic MDP $\cM'$ with non-negative rational rewards of size
$\cO\bigl( \ \mathit{size}(\cM)^2 \ \bigr)$.
The transformation proceeds in two phases. The first phase transforms
$\cM$ into a layered acyclic MDP $\cM_1$ where all transitions from
a state at layer $i$ move to a state at layer $i{+}1$.
Thus, all maximal paths in $\cM_1$ have the same length.
The MDP $\cM_1$ might still contain negative rewards.
The size of $\cM_1$ is polynomially (quadratically)
bounded in the size of $\cM_1$ and
$\cM$ and $\cM_1$ are equivalent with respect to the random variable
that assigns the accumulated reward to the paths from $\sinit$ to $\goal$ or
$\fail$.
The second phase then replaces $\cM_1$ with an MDP $\cM_2$ that has the same
graph structure as $\cM_1$ (i.e., is also layered) and has non-negative
rewards.

\tudparagraph{1.5ex}{{\it Phase 1: Construction of a layered MDP.}}
Let $s_0,s_1,\ldots,s_N,s_{N+1}$ be a topological sorting of the
states in $\cM$, i.e., if there is a transition from $s_i$ to $s_j$ then
$i<j$ where $s_N=\goal$, $s_{N+1} =\fail$ (and $s_0=\sinit$).
We extend $\cM$ to an MDP $\cM_1$ with $N{+}1$ layers
such that all transitions for any state of layer $i$ lead to state of
layer $i{+}1$. Formally, the state space of $\cM_1$ is
$S_{\cM_1} = L_0 \cup L_1 \cup \ldots \cup L_N$
where $L_i$ stands for the states at layer $i$.
We have $L_0=\{\sinit\}$, $L_N = \{\goal,\fail\}$ and for $1 \leqslant i < N$:
$$
  L_i \ \ = \ \
  \{s_i\} \ \cup \
  \bigl\{\ \extstate{i}{k}{\alpha}{j} \ : \
           0 \leqslant i < j \leqslant N{+}1
  \bigr\}
$$
(The states $t_{N,N+1}$ are not needed and could be dropped.)
Intuitively, state $\extstate{i}{k}{\alpha}{j}$ stands for an
intermediate pseudo-state at layer $i$ that is visited when firing the
transition $s_k \overto{\alpha} s_j$ in $\cM$. More precisely, the transition
$s_k \overto{\alpha} s_j$ in $\cM$ with $k{+}1<j$ is mimicked by the
path
$$
  s_k \ \ \alpha \ \ \extstate{k{+}1}{k}{\alpha}{j}
  \ \ \tau \ \ \extstate{k{+}2}{k}{\alpha}{j} \ \ \tau \ \ \ldots
  \ \ \tau \ \ \extstate{j{-}1}{k}{\alpha}{j} \ \ \tau \ \ s_j
$$
in $\cM_1$. The behaviour in the states $\extstate{i}{k}{\alpha}{j}$
is deterministic and the unique successor of
$\extstate{i}{k}{\alpha}{j}$ is
$\extstate{i{+}1}{k}{\alpha}{j}$ if $i < j{-}1$
and $s_j$ if $i=j{-}1$.
Formally, the action set of $\cM_1$ is $\Act \cup \{\tau\}$ and
\begin{center}
  $\Act_{\cM_1}(s_i) = \Act_{\cM}(s_i)$
  \quad and \quad
  $\Act_{\cM_1}(\extstate{i}{k}{\alpha}{j}) = \{\tau\}$ \ if $k < i < j$
\end{center}
For the initial state $s_0=\sinit$ we have
$\Act_{\cM_1}(s_0)=\Act_{\cM}(s_0)$,
while $\Act_{\cM_1}(\goal) = \Act_{\cM_1}(\fail)=\varnothing$.
The reward function of $\cM_1$ is defined by:
\begin{center}
   $\rew_{\cM_1}(s_i,\alpha)=\rew(s_i,\alpha)$ \quad and \quad
   $\rew_{\cM_1}(\extstate{i}{k}{\alpha}{j},\tau) = 0$ if $k < i< j$.
\end{center}
The transition probabilities are defined as follows.
Let $i,k\in \{0,1,\ldots,N{-}1\}$ and $j\in \{1,\ldots,N\}$.
\begin{center}
 \begin{tabular}{l}
  \begin{tabular}{l}
    $P_{\cM_1}(s_i,\alpha,s_{i+1}) \ = \ P_{\cM}(s_i,\alpha,s_{i+1})$
  \end{tabular}
  \\[1.5ex]

  \begin{tabular}{ll}
      $P_{\cM_1}(s_i,\alpha,\extstate{i{+}1}{i}{\alpha}{j}) \  =  \
       P_{\cM}(s_0,\alpha,s_j)$
      & if $j > i{+}1$
    \\[1.5ex]

    $P_{\cM_1}(\extstate{i}{k}{\alpha}{j}, \tau,
            \extstate{i{+}1}{k}{\alpha}{j})
     \ = \ 1$
    & if $j > i{+}1$
    \\[1ex]

    $P_{\cM_1}(\extstate{j{-}1}{k}{\alpha}{j}, \tau, s_j) \ = \ 1$
  \end{tabular}
 \end{tabular}
\end{center}
The transition probabilities of the incoming transitions of state
$s_{N+1}=\fail$
are defined accordingly (recall that layer $N$ consists of the two states
$s_N=\goal$ and $s_{N+1}=\fail$):
\begin{center}
 \begin{tabular}{l@{\hspace*{0.2cm}}l}

    $P_{\cM_1}(s_{N-1},\alpha,\fail) \ = \ P_{\cM}(s_{N-1},\alpha,\fail)$
    \\[1.5ex]

    $P_{\cM_1}(s_{i},\alpha,\extstate{i{+}1}{k}{\alpha}{N{+}1})
    \ = \ P_{\cM}(s_i,\alpha,\fail)$
    & if $i < N{-}1$
    \\[1.5ex]

    $P_{\cM_1}(\extstate{i}{k}{\alpha}{N{+}1},\tau,
               \extstate{i{+}1}{k}{\alpha}{N{+}1})
    \ = \ 1$
    & if $i < N{-}1$
    \\[1.5ex]

    $P_{\cM_1}(\extstate{N{-}1}{k}{\alpha}{N{+}1},\tau,\fail) \ = \ 1$
  \end{tabular}
\end{center}
and $P_{\cM_1}(\cdot)=0$ in all remaining cases.
This construction ensures that
$s \in L_i$ and $P_{\cM_1}(s,\alpha,t) > 0$ implies $t\in L_{i+1}$.
Thus, $\cM_1$ is indeed layered.
In particular, $\cM_1$ is acyclic and all maximal paths
from $s_0=\sinit$ to $\goal$ or $\fail$ have the length $N$.

Clearly, the size of $\cM_1$ is quadratic in the size of $\cM$.
There is a one-to-one correspondence between the schedulers in
$\cM_1$ and $\cM$ and
$$
  \Pr^{\sched}_{\cM_1,\sinit}(\varphi) \ \ = \ \
  \Pr^{\sched}_{\cM,\sinit}(\varphi)
$$
for each scheduler $\sched$ and stutter-invariant measurable
property $\varphi$, where we suppose a labeling function for $\cM_1$
such that the labels of the fresh states $\extstate{i}{k}{\alpha}{j}$
agree with the label of state $s_k$.
In particular, 
$\CExpState{\sched}{\cM_1,\sinit}
 = 
 \CExpState{\sched}{\cM,\sinit}
$
for each scheduler $\sched$ for $\cM$ resp.~$\cM_1$.
Therefore:
\begin{center}
  $\CExpState{\max}{\cM_1,\sinit}
   = \
   \CExpState{\max}{\cM,\sinit}$
  \ \ \ and \ \ \
  $\CExpState{\min}{\cM_1,\sinit}
  = \ 
 \CExpState{\min}{\cM,\sinit}$
\end{center}

\tudparagraph{1.5ex}{{\it Phase 2: from integer rewards in
   layered MDPs to non-negative integer rewards.}}
Let
$$
   Y \ \ = \ \
   - \min \
   \bigl\{ \ \rew_{\cM_1}(s,\alpha) \ : \
             s\in S_{\cM_1}, \ \alpha \in \Act_{\cM_1}(s) \
   \bigr\}
$$
where we suppose that $\cM_1$ contains indeed negative reward values.
Then:
$$
  \rew_{\cM_1}(s,\alpha) + Y  \ \ \geqslant \ \ 0
$$
for all states $s$ in $\cM_1$ and $\alpha \in \Act_{\cM_1}(s)$.
We define $\cM_2$ as the MDP that results from $\cM_1$ by replacing the
reward function with
$$
  \rew_{\cM_2}(s,\alpha) \ \ = \ \ \rew_{\cM_1}(s,\alpha) + Y
$$
Clearly, $\cM_1$ and $\cM_2$ have the same paths and
the same schedulers.
As  all maximal paths in $\cM_1$ have the same length $N$,
we have
$$
  \rew_{\cM_2}(\fpath) \ \ = \ \
  \rew_{\cM_1}(s,\alpha) \ + \ N \cdot Y
$$
for each path from $\sinit$ to $\goal$.
This yields
$\CExpState{\sched}{\cM_2,\sinit}
 = 
 \CExpState{\sched}{\cM_1,\sinit}$
for each scheduler $\sched$ for $\cM_1$ and $\cM_2$.
Hence:
$$
 \begin{array}{lcl}
    \CExpState{\max}{\cM_2,\sinit}
    & \ = \ \ &
    \CExpState{\max}{\cM_1,\sinit}
      \ + \ N \cdot Y
  \\[2ex]
    \CExpState{\min}{\cM_2,\sinit}
    & \ = \ \ &
    \CExpState{\min}{\cM_1,\sinit}
      \ + \ N \cdot Y
 \end{array}
$$

\begin{corollary}
  \label{corollary:Cmin-PSPACE-completeness for acyclic MDPs}
  All four variants of the
  threshold problem for minimal conditional expectations 
  (``does $\CExpState{\min}{\cM,\sinit} \bowtie \ \threshold$ hold ?''
  where $\bowtie \ \in \{>,\geqslant,<,\leqslant\}$)
  in acyclic MDPs with rational rewards are PSPACE-complete.
  PSPACE-hardness even holds for 
  acyclic MDPs with non-negative 
  integer rewards.
\end{corollary}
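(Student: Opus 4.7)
The plan is to establish the two assertions separately. PSPACE membership for rational rewards is already recorded in Remark~\ref{remark:PSPACE-completenss-acyclic-rational}: the recursive procedure of Lemma~\ref{lemma:CExp-in-PSPACE} carries over verbatim to rational (and even negative) rewards, since the $\Delta_{s,r,\alpha}$-selection step only requires arithmetic whose bit-size is polynomial in $\Size(\cM)$, and the recursion depth is still bounded by $|S|$. Consequently, all four threshold variants for $\CExp{\min}$ in acyclic MDPs with rational rewards lie in PSPACE. The matching PSPACE-hardness for rational rewards also follows immediately by sign-flipping $\rew \mapsto -\rew$ in an instance of the PSPACE-hard $\CExp{\max}$-threshold problem of Lemma~\ref{lemma:acyclic-CExp-PSPACE-hardness}, since this reverses the role of maximum and minimum while keeping the MDP acyclic.

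The substantive new content is strengthening hardness to the case of non-negative integer rewards. I would reduce from the threshold problem for $\CExp{\max}$ in acyclic MDPs with non-negative integer rewards (Lemma~\ref{lemma:acyclic-CExp-PSPACE-hardness}). Given such an instance $(\cM,\threshold)$, first form $\widetilde{\cM}$ by negating every reward: $\rew_{\widetilde{\cM}}(s,\alpha) = -\rew_{\cM}(s,\alpha) \leqslant 0$. Since $\widetilde{\cM}$ and $\cM$ share the same transition probability function and the same set of schedulers, the random variable $\accdiaplus \goal$ in $\widetilde{\cM}$ is just the negation of the one in $\cM$, so
$$
  \CExpState{\sched}{\widetilde{\cM},\sinit} \ = \ -\,\CExpState{\sched}{\cM,\sinit}
$$
for every $\sched$ satisfying \eqref{assumption:SR}, and therefore $\CExp{\min}(\widetilde{\cM}) = -\CExp{\max}(\cM)$. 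In particular
$$
  \CExp{\max}(\cM) \, \geqslant \, \threshold \quad \text{iff} \quad
  \CExp{\min}(\widetilde{\cM}) \, \leqslant \, -\threshold,
$$
and analogously for strict comparisons.

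Next I would apply the two-phase transformation of Appendix~\ref{appendix:negative-rewards} to $\widetilde{\cM}$. Phase~1 produces a layered acyclic MDP $\widetilde{\cM}_1$ of size $\cO(\Size(\widetilde{\cM})^2)$ in which every maximal path from $\sinit$ to $\{\goal,\fail\}$ has the same length $N$, with $\CExpState{\sched}{\widetilde{\cM}_1,\sinit} = \CExpState{\sched}{\widetilde{\cM},\sinit}$ under the canonical scheduler correspondence. Phase~2 shifts all rewards by the constant $Y \eqdef \max_{s,\alpha} \rew_{\cM}(s,\alpha)$ (so $Y \geqslant -\min \rew_{\widetilde{\cM}_1}$), giving an acyclic MDP $\widetilde{\cM}_2$ with non-negative integer rewards and
$$
  \CExpState{\sched}{\widetilde{\cM}_2,\sinit} \ = \ \CExpState{\sched}{\widetilde{\cM}_1,\sinit} \, + \, N\!\cdot\!Y
$$
for every eligible $\sched$. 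Taking the infimum over all such $\sched$ gives $\CExp{\min}(\widetilde{\cM}_2) = \CExp{\min}(\widetilde{\cM}) + NY$. Setting $\threshold' \eqdef -\threshold + NY \in \Rational$, the composed reduction satisfies $\CExp{\max}(\cM) \geqslant \threshold$ iff $\CExp{\min}(\widetilde{\cM}_2) \leqslant \threshold'$, and analogously for the other three comparison operators; since $\Size(\widetilde{\cM}_2)$ is polynomial in $\Size(\cM)$ and $\threshold'$ has polynomial bit-length, this is a polynomial-time reduction.

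The main obstacle I anticipate is bookkeeping around the scheduler class: the definition of $\CExp{\min}$ ranges only over schedulers with $\Pr^{\sched}_{\sinit}(\Diamond \goal)>0$ and $\Pr^{\sched}_{\sinit}(\Diamond F | \Diamond G)=1$, so I have to verify that both the sign-flip and the two-phase transformation induce bijections between these scheduler classes in the source and target MDPs. This is immediate because none of the three transformations alters the probability measure $\Pr^{\sched}_{\cdot,\sinit}$ on maximal paths (sign-flipping changes only rewards, while the layering inserts deterministic intermediate steps and the constant shift leaves probabilities untouched), so every scheduler of $\cM$, $\widetilde{\cM}$, $\widetilde{\cM}_1$ and $\widetilde{\cM}_2$ is eligible or ineligible simultaneously.
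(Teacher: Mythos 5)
Your proposal is correct and follows essentially the same route as the paper's proof: PSPACE membership via Remark~\ref{remark:PSPACE-completenss-acyclic-rational}, then PSPACE-hardness for the non-negative integer case by negating the rewards of a $\CExp{\max}$-hard instance from Lemma~\ref{lemma:acyclic-CExp-PSPACE-hardness} and applying the layering-plus-shift transformation of Appendix~\ref{appendix:negative-rewards} to eliminate the negative rewards, with the shifted threshold $C - \threshold$ where $C = N\cdot Y$. The only additions beyond the paper's (terse) proof are your explicit remarks that the rational-reward hardness already follows by sign-flipping alone and that all three transformations preserve the eligible-scheduler class, both of which are correct bookkeeping observations already implicit in the paper's appendix.
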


\begin{proof}
Membership to PSPACE is a consequence of
Remark \ref{remark:PSPACE-completenss-acyclic-rational}.
For the PSPACE-hardness, we provide a reduction from the threshold
problems for maximal conditional expectations in acyclic MDPs
with non-negative integer rewards
(see Theorem \ref{theorem:PSPACE-complete-CExp-acyclic}).
Given an acyclic MDP $\cM$, let $\cM^-$ denote the MDP arising from $\cM$
by multiplying all rewards with ${-}1$. Obviously:
$$
 \begin{array}{lcl}
   \CExpState{\max}{\cM,\sinit}
   & \ \ = \ \ \ &
   - \, \CExpState{\min}{\cM^{-},\sinit}
 \end{array}
$$
We now apply the above approach to transform $\cM^-$ into a layered
MDP $\cM^-_2$ with non-negative integer rewards such that
$$
 \begin{array}{lcl}
   \CExpState{\min}{\cM^{-},\sinit}
   & \ \ = \ \ \ &
   \CExpState{\min}{\cM^{-}_2,\sinit}
   - \ C 
 \end{array}
$$
where $C$ is the constant $N \cdot Y$ with $N$ and $Y$ being as above.
Hence:
$$
  \CExpState{\max}{\cM,\sinit} > \threshold
  \quad \text{iff} \quad
  \CExpState{\min}{\cM^{-},\sinit} < -\threshold
  \quad \text{iff} \quad
  \CExpState{\min}{\cM^{-}_2,\sinit} < \ C-\threshold
$$
and analogous statements for other comparison operators.
\Ende
\end{proof}

\section{Implementation and experimental results}

\label{appendix:implementation}

We have extended the popular model checker PRISM~\cite{prism40,PrismWebSite} by a
prototypical implementation of the algorithms presented in this paper
to facilitate initial experiments.%
\footnote{%
We would like to thank Steffen M\"arcker for his work
on the infrastructure and algorithms
presented in~\cite{BKKM14} in PRISM.
For the implementation and additional information on the
experiments (performed on
a computer with two Intel Xeon L5630 4-core CPUs at 2.13GHz, no Turbo,
192GB RAM,
running Linux)
see
\mbox{\url{https://wwwtcs.inf.tu-dresden.de/ALGI/PUB/TACAS17/}}
}
Our implementation supports checking for finiteness, computing an
upper bound and saturation point and solving the
threshold problems as well as the computation of $\CExp{\max}$. It is
currently limited to the case $F=G$ and does not yet support MDPs with
zero-reward cycles in the threshold/scheduler computation
phases. The latter allows to avoid having to solve a linear program
for each level in the threshold algorithm. The implementation uses
both the symbolic, MTBDD-based engine as well as the explicit engine
of PRISM: in the first phase of checking finiteness of $\CExp{\max}$,
in the computation of an upper bound and for computing $\cM$ the
symbolic engine is used.
This phase relies heavily on multiple model transformations,
e.g., a newly introduced symbolic implementation
for collapsing end components.
For the second phase, i.e., the threshold algorithm and the scheduler
improvement algorithm, we convert the symbolically represented MDP
and related information
(rewards, $\cM$ and the probabilities and expected rewards for
reaching $\goal$ in $\cM$) into an explicit representation which
facilitates an easy manipulation during those algorithms. In future
work, we are however interested in an ``explicit'' implementation of
the first phase and a ``symbolic'' implementation of the second phase,
as that might be beneficial in certain cases in practice as well.
We have also implemented the specialized algorithm for
the threshold problem for acyclic MDPs
(Appendix~\ref{appendix:PSPACE}),
using a computed table to avoid
identical recursions.
We use a double-precision floating point representation for the
numerical values and the approximative computations provided by PRISM.
It would be desirable to use exact representations and computations, which remains future work.%
\footnote{%
  \cite{HadMon14} raised issues
  with the termination criterion commonly
  used in value iteration computations.
  In separate work, we are implementing their proposed improvement,
  which can then be applied to our implementation for better precision.}

\begin{table}[t]
\caption{Statistics for selected experiments: number of states of the
  original MDP $\cM$ and of the ``cleaned-up'' MDP $\hat{\cM}$ used in
  the second phase; the value $R$ used in the upper
  bound computation, the saturation point $\saturation$, the lower
  ($\CExp{\maxsched}$) and upper ($\CExp{\ub})$ bound on
  $\CExp{\max}$ and the computed value $\CExp{\max}$;
  computation time $t$ (in seconds), of which
  $t_1$ for the first computation phase (finiteness,
  computation of bounds and $\maxsched$) and $t_2$ for the scheduler
  improvement algorithm; number of calls to the threshold algorithm.}
\label{tab:results}
\centering
\begin{tabular}{l|r|r|r|r|r|r|r|r|r|r|r}
 & \tableCL{$\cM$}
 & \tableCL{$\hat{\cM}$}
 & \tableCL{R}
 & \tableCL{$\saturation$}
 & \tableCL{$\CExp{\maxsched}$}
 & \tableCL{$\CExp{\ub}$}
 & \tableCL{$\CExp{\max}$}
 & \tableCL{$t$}
 & \tableCL{$t_1$}
 & \tableCL{$t_2$}
 & \tableC{\,calls}\\\hline
\multicolumn{11}{l}{\consensus{} case study}\\\hline
N=2,K=2\, & 
 \numStates{272} & 
 \numStates{189} & 
 187 &
 272 &
 \numValue{56.00} &  %
 \numValue{87.97} &  %
 \numValue{75.10} &  %
 \numSec{1.24} &  %
 \numSec{0.84} &  %
 \numSec{0.22} &  %
 7 \\\hline
N=2,K=8 & 
 \numStates{1040} & 
 \numStates{765} & 
 763 &
 3763 &
 \numValue{799.57} &  %
 \,\numValue{1491.18} &  %
 \numValue{867.30} &  %
 \numSec{60.59} &  %
 \numSec{53.33} &  %
 \numSec{6.52} &  %
 8 \\\hline
N=3,K=3 & 
 \numStates{3968} & 
 \,\numStates{2292} & 
 \,2290 &
 \,1279 &
 \,\numValue{278.95} &  %
 \,\numValue{364.18} &  %
 \,\numValue{363.46} &  %
 \,\numSec{308.02} &  %
 \,\numSec{301.90} &  %
 \,\numSec{3.69} &  %
 3 \\\hline
N=3,K=4 & 
 \numStates{5216} & 
 \numStates{3036} & 
 3034 &
 2097 &
 \numValue{479.84} &  %
 \numValue{594.97} &  %
 \numValue{588.56} &  %
 \numSec{710.82} &  %
 \numSec{699.26} &  %
 \numSec{8.43} &  %
 3 \\\hline
\multicolumn{11}{l}{\wlan{} case study}\\\hline
b=2,k=2 & 
 \numStates{28598} & 
 \numStates{821} & 
 290 &
 68 &
 \numValue{32.00} &  %
 \numValue{40.00} &  %
 \numValue{40.00} &  %
 \numSec{91.05} &  %
 \numSec{7.50} &  %
 \numSec{0.19} &  %
 4 \\\hline
b=2,k=3 & 
 \,\numStates{35197} & 
 \numStates{2435} & 
 844 &
 152 &
 \numValue{67.34} &  %
 \numValue{92.00} &  %
 \numValue{92.00} &  %
 \numSec{96.21} &  %
 \numSec{16.95} &  %
 \numSec{0.81} &  %
 4 \\\hline
\end{tabular}
\vspace{-2ex}
\end{table}

\tudparagraph{1ex}{\it Experiments.}
We have carried out experiments with adapted
case studies from the PRISM benchmark suite~\cite{KwiatkowskaNP12},
with Table~\ref{tab:results} showing statistics and results for
selected instances.
We performed experiments with the
\consensus{} (parameters: number of processors $N$ and factor $K$
influencing the range of the random walk)
and \wlan{} (parameters: backoff counter maximum $b$ and number of
collisions $k$).
For \consensus{}, we consider the
maximal conditional expectation for the number of steps
with state set $F=G$ being ``finished
and all coins are 1''. For \wlan{}, we consider the
maximal conditional expectation for the accumulated time
with state set $F=G$ being ``$k$ collisions have occurred''.
Originally, each time step had a reward of $50$, which we
rescaled to $1$. In general, scaling the rewards by dividing by the
greatest common divisor of all the reward values and rescaling of the
result is beneficial for performance in our setting.

As can be seen, the time for the first phase tends to dominate. This
is mostly due to the upper bound computation (both building the
symbolic reward counter product and the unconditional
reward computation), dealing with an MDP of size
$\mathcal{O}(\Size(\hat{\cM}) \cdot R)$.
The scheduler-improvement algorithm 
(where the saturation point $\saturation$
provides the scheduler memory
requirements per state of $\hat{\cM}$)
has to call the threshold algorithm only a few times, as $\saturation$
 is of reasonable size and the optimal scheduler is discovered
after optimizing just a few top levels.
For the \wlan{} case study, the total computation times are
artificially inflated due to an inefficient conversion between the
symbolic and explicit MDP, which we will fix in the next version of
the implementation. Overall, these first experiments are encouraging,
indicating that improvements in the computation of an upper bound
would be particularly worthwhile.
As it can be the case that $\CExp{\ub} = \CExp{\max}$, as for the
\wlan{} results, an additional heuristic would be to prepend a first
threshold check for threshold $\threshold = \CExp{\ub}$
to catch these cases without having to start the full
scheduler-improvement algorithm.

\end{document}